\def\A{\CMcal{A}}
\def\B{\CMcal{B}}
\def\H{\CMcal{H}}
\def\N{\CMcal{N}}
\def\P{\CMcal{P}}
\theoremstyle{plain}
\newtheorem{theorem}{Theorem}[section]
\newtheorem{lemma}[theorem]{Lemma}
\newtheorem{prop}[theorem]{Proposition}
\newtheorem{cor}[theorem]{Corollary}
\theoremstyle{definition}
\newtheorem{definition}[theorem]{Definition}
\newtheorem{remark}[theorem]{Remark}
\newenvironment{proofof}[1]{\noindent{\emph{Proof of #1:}}}{\qed}
\newcommand {\minusspace} {\: \! \!}
\newcommand {\fn} [2] {\ensuremath{ #1 \minusspace \br{ #2 } }}
\newcommand {\Fn} [2] {\ensuremath{ #1 \minusspace \Br{ #2 } }}
\newcommand {\set} [1] {\ensuremath{ \left\lbrace #1 \right\rbrace }}
\newcommand {\ceil} [1] {\ensuremath{ \left\lceil #1 \right\rceil }}
\newcommand{\eps}{\epsilon}
\newcommand {\br} [1] {\ensuremath{ \left( #1 \right) }}
\newcommand {\Br} [1] {\ensuremath{ \left[ #1 \right] }}
\newcommand {\abs} [1] {\ensuremath{ \left| #1 \right| }}
\newcommand {\bra} [1] {\ensuremath{ \left\langle #1 \right| }}
\newcommand {\ket} [1] {\ensuremath{ \left| #1 \right\rangle }}
\newcommand {\ketbratwo} [2] {\ensuremath{ \left| #1 \middle\rangle \middle\langle #2 \right| }}
\newcommand {\ketbra} [1] {\ketbratwo{#1}{#1}}
\newcommand{\stringx}{\texttt{X}}
\newcommand{\stringy}{\texttt{Y}}
\newcommand{\strings}{\texttt{S}}
\newcommand {\defeq} {\ensuremath{ \stackrel{\mathrm{def}}{=} }}
\newcommand {\prob} [1] {\Fn{\Pr}{#1}}
\newcommand {\Tr} {\ensuremath{ \mathrm{Tr} }}
\newcommand {\partrace} [2] {\fn{\Tr_{#1}}{#2}}
\newcommand {\id} {\ensuremath{\mathds{1}}}
\newcommand {\email} [1] {\href{mailto:#1}{\texttt{#1}}}
\newcommand {\mytitle} {Capacity Approaching Coding for Low Noise Interactive Quantum Communication \\ Part I: Large Alphabets}
\newcommand{\suppress}[1]{}
\newcommand {\authorblock} [3] {
	\begin{minipage}[t]{0.3\linewidth}
		\centering
		{#1}\\[0.8ex]
		{\footnotesize {#2}\\[-0.7ex]
		\email{#3}}
	\end{minipage}\vspace{1ex}
}
\newcommand{\rI}{{\mathrm I}}
\newcommand{\kb}[2]{| #1\rangle\!\langle #2 |}
\newcommand{\rX}{{\mathrm X}}
\newcommand{\rZ}{{\mathrm Z}}
\newcommand{\rF}{{\mathrm F}}
\newcommand{\complexi}{{\mathrm i}}
\def\sA{{\mathsf A}}
\def\sB{{\mathsf B}}
\newcommand {\Debbie}  {Debbie Leung}
\newcommand {\Ashwin}  {Ashwin Nayak}
\newcommand {\Ala}  {Ala Shayeghi}
\newcommand {\Dave}   {Dave Touchette}
\newcommand {\Penghui} {Penghui Yao}
\newcommand {\Nengkun} {Nengkun Yu}
\newcommand {\CNO} {Department of Combinatorics and Optimization}
\newcommand {\IQC} {Institute for Quantum Computing}
\newcommand {\UW} {University of Waterloo}
\newcommand {\PI} {Perimeter Institute}
\newcommand{\CQSI}{Centre for Quantum Software and Information,
   Faculty of Engineering and Information Technology, University of
   Technology Sydney}
\newcommand{\NJU}{Nanjing University}
\newcommand{\SKL}{State Key Laboratory for Novel Software Technology}
\newcommand{\US}{Universit\'e de Sherbrooke}
\newcommand{\CS}{Department of Computer Science}
\begin{document}

\begin{titlepage}
\title{\textbf{\mytitle}\\[2ex]}

\author{
 \authorblock{\Debbie}{\CNO, and \IQC, \UW, and \PI}{wcleung@uwaterloo.ca}
 \authorblock{\Ashwin}{\CNO, and \IQC, \UW}{anayak@uwaterloo.ca}
 \authorblock{\Ala}{\CNO, and \IQC, \UW}{ashayeghi@uwaterloo.ca}\\
 \authorblock{\Dave}{\CS, and Institut Quantique, \US, and \IQC, \UW, and \PI}{dave.touchette@usherbrooke.ca}
 \authorblock{\Penghui}{\SKL, \NJU}{pyao@nju.edu.cn}
 \authorblock{\Nengkun}{\CQSI}{nengkunyu@gmail.com}
}

\clearpage\maketitle
\thispagestyle{empty}

\abstract{
We consider the problem of implementing two-party interactive quantum
communication over noisy channels, a necessary endeavor if we wish to
fully reap quantum advantages for communication.
For an arbitrary protocol with $n$ messages, designed for a
\emph{noiseless\/} qudit channel over a $\mathrm{poly}\br{n}$ size alphabet, 
our main result is a simulation method that fails with probability less than
$2^{-\Theta(n\epsilon)}$ and uses a qudit channel over the same alphabet $n(1 + \Theta
(\sqrt{\epsilon}))$ times, of which an $\epsilon$ fraction can be
corrupted adversarially.
The simulation is thus capacity achieving to leading order, and
we conjecture that it is optimal up to a constant factor in
the $\sqrt{\epsilon}$ term.
Furthermore, the simulation is in a model that does not require
pre-shared resources such as randomness or entanglement between the
communicating parties.
\suppress{Surprisingly, this outperforms the best known overhead of $1 +
O(\sqrt{\epsilon \log \log 1/\epsilon})$ in the corresponding
\emph{classical} model, which is also conjectured to be optimal [Haeupler, FOCS'14].
}
Our work improves over the best previously known quantum result
where the overhead is a non-explicit large constant [Brassard \emph{et
    al.}, FOCS'14] for low $\epsilon$.
}

\end{titlepage}

\thispagestyle{empty}

\tableofcontents

\newpage

\setcounter{page}{1}

\newcommand{\longpaper}[1]{#1}
\newcommand{\blurb}[1]{}


\newcommand{\NewMetaA}{\mathit{NewMetaA}}
\newcommand{\NewMetaAtilde}{\widetilde{\NewMetaA}}
\newcommand{\MA}{\mathit{MA}}
\newcommand{\MAone}{\MA_1}
\newcommand{\MAtwo}{\MA_2}
\newcommand{\mponeMA}{mp1_{\MA}}
\newcommand{\mptwoMA}{mp2_{\MA}}
\newcommand{\MAtilde}{\widetilde{\MA}}
\newcommand{\MAtildeone}{\MAtilde_1}
\newcommand{\MAtildetwo}{\MAtilde_2}
\newcommand{\mponeMAtilde}{mp1_{\MAtilde}}
\newcommand{\mptwoMAtilde}{mp2_{\MAtilde}}
\newcommand{\FullMA}{\mathit{FullMA}}
\newcommand{\ellMA}{\ell_{\MA}}
\newcommand{\ellMAtilde}{\ell_{\MAtilde}}
\newcommand{\HMA}{H_{\MA}}
\newcommand{\HMAone}{H_{\MAone}}
\newcommand{\HMAtwo}{H_{\MAtwo}}
\newcommand{\vMAone}{v_{\MAone}}
\newcommand{\vMAtwo}{v_{\MAtwo}}
\newcommand{\HMAtilde}{H_{\MAtilde}}
\newcommand{\HMAtildeone}{H_{\MAtildeone}}
\newcommand{\HMAtildetwo}{H_{\MAtildetwo}}
\newcommand{\vMAtildeone}{v_{\MAtildeone}}
\newcommand{\vMAtildetwo}{v_{\MAtildetwo}}
\newcommand{\kMA}{k_{\MA}}
\newcommand{\khatMA}{\hat{k}_{\MA}}
\newcommand{\kMAtilde}{k_{\MAtilde}}
\newcommand{\khatMAtilde}{\hat{k}_{\MAtilde}}
\newcommand{\EMA}{E_{\MA}}
\newcommand{\EMAtilde}{E_{\MAtilde}}

\newcommand{\NewMetaB}{\mathit{NewMetaB}}
\newcommand{\NewMetaBtilde}{\widetilde{\NewMetaB}}
\newcommand{\MB}{\mathit{MB}}
\newcommand{\MBone}{\MB_1}
\newcommand{\MBtwo}{\MB_2}
\newcommand{\mponeMB}{mp1_{\MB}}
\newcommand{\mptwoMB}{mp2_{\MB}}
\newcommand{\MBtilde}{\widetilde{\MB}}
\newcommand{\MBtildeone}{\MBtilde_1}
\newcommand{\MBtildetwo}{\MBtilde_2}
\newcommand{\mponeMBtilde}{mp1_{\MBtilde}}
\newcommand{\mptwoMBtilde}{mp2_{\MBtilde}}
\newcommand{\FullMB}{\mathit{FullMB}}
\newcommand{\ellMB}{\ell_{\MB}}
\newcommand{\ellMBtilde}{\ell_{\MBtilde}}
\newcommand{\HMB}{H_{\MB}}
\newcommand{\HMBone}{H_{\MBone}}
\newcommand{\HMBtwo}{H_{\MBtwo}}
\newcommand{\vMBone}{v_{\MBone}}
\newcommand{\vMBtwo}{v_{\MBtwo}}
\newcommand{\HMBtilde}{H_{\MBtilde}}
\newcommand{\HMBtildeone}{H_{\MBtildeone}}
\newcommand{\HMBtildetwo}{H_{\MBtildetwo}}
\newcommand{\vMBtildeone}{v_{\MBtildeone}}
\newcommand{\vMBtildetwo}{v_{\MBtildetwo}}
\newcommand{\kMB}{k_{\MB}}
\newcommand{\khatMB}{\hat{k}_{\MB}}
\newcommand{\kMBtilde}{k_{\MBtilde}}
\newcommand{\khatMBtilde}{\hat{k}_{\MBtilde}}
\newcommand{\EMB}{E_{\MB}}
\newcommand{\EMBtilde}{E_{\MBtilde}}

\newcommand{\mdAplus}{md_\mathrm{A}^+}
\newcommand{\mdAminus}{md_\mathrm{A}^-}
\newcommand{\mdBplus}{md_\mathrm{B}^+}
\newcommand{\mdBminus}{md_\mathrm{B}^-}

\newcommand{\NewPauliA}{\mathit{NewPauliA}}
\newcommand{\NewPauliAtilde}{\widetilde{\NewPauliA}}
\newcommand{\PA}{\mathit{PA}}
\newcommand{\PAone}{\PA_1}
\newcommand{\PAtwo}{\PA_2}
\newcommand{\mponePA}{mp1_{\PA}}
\newcommand{\mptwoPA}{mp2_{\PA}}
\newcommand{\PAtilde}{\widetilde{\PA}}
\newcommand{\PAtildeone}{\PAtilde_1}
\newcommand{\PAtildetwo}{\PAtilde_2}
\newcommand{\mponePAtilde}{mp1_{\PAtilde}}
\newcommand{\mptwoPAtilde}{mp2_{\PAtilde}}
\newcommand{\FullPA}{\mathit{FullPA}}
\newcommand{\ellPA}{\ell_{\PA}}
\newcommand{\ellPAtilde}{\ell_{\PAtilde}}
\newcommand{\HPA}{H_{\PA}}
\newcommand{\HPAone}{H_{\PAone}}
\newcommand{\HPAtwo}{H_{\PAtwo}}
\newcommand{\vPAone}{v_{\PAone}}
\newcommand{\vPAtwo}{v_{\PAtwo}}
\newcommand{\HPAtilde}{H_{\PAtilde}}
\newcommand{\HPAtildeone}{H_{\PAtildeone}}
\newcommand{\HPAtildetwo}{H_{\PAtildetwo}}
\newcommand{\vPAtildeone}{v_{\PAtildeone}}
\newcommand{\vPAtildetwo}{v_{\PAtildetwo}}
\newcommand{\kPA}{k_{\PA}}
\newcommand{\khatPA}{\hat{k}_{\PA}}
\newcommand{\kPAtilde}{k_{\PAtilde}}
\newcommand{\khatPAtilde}{\hat{k}_{\PAtilde}}
\newcommand{\EPA}{E_{\PA}}
\newcommand{\EPAtilde}{E_{\PAtilde}}

\newcommand{\NewPauliB}{\mathit{NewPauliB}}
\newcommand{\NewPauliBtilde}{\widetilde{\NewPauliB}}
\newcommand{\PB}{\mathit{PB}}
\newcommand{\PBone}{\PB_1}
\newcommand{\PBtwo}{\PB_2}
\newcommand{\mponePB}{mp1_{\PB}}
\newcommand{\mptwoPB}{mp2_{\PB}}
\newcommand{\PBtilde}{\widetilde{\PB}}
\newcommand{\PBtildeone}{\PBtilde_1}
\newcommand{\PBtildetwo}{\PBtilde_2}
\newcommand{\mponePBtilde}{mp1_{\PBtilde}}
\newcommand{\mptwoPBtilde}{mp2_{\PBtilde}}
\newcommand{\FullPB}{\mathit{FullPB}}
\newcommand{\ellPB}{\ell_{\PB}}
\newcommand{\ellPBtilde}{\ell_{\PBtilde}}
\newcommand{\HPB}{H_{\PB}}
\newcommand{\HPBone}{H_{\PBone}}
\newcommand{\HPBtwo}{H_{\PBtwo}}
\newcommand{\vPBone}{v_{\PBone}}
\newcommand{\vPBtwo}{v_{\PBtwo}}
\newcommand{\HPBtilde}{H_{\PBtilde}}
\newcommand{\HPBtildeone}{H_{\PBtildeone}}
\newcommand{\HPBtildetwo}{H_{\PBtildetwo}}
\newcommand{\vPBtildeone}{v_{\PBtildeone}}
\newcommand{\vPBtildetwo}{v_{\PBtildetwo}}
\newcommand{\kPB}{k_{\PB}}
\newcommand{\khatPB}{\hat{k}_{\PB}}
\newcommand{\kPBtilde}{k_{\PBtilde}}
\newcommand{\khatPBtilde}{\hat{k}_{\PBtilde}}
\newcommand{\EPB}{E_{\PB}}
\newcommand{\EPBtilde}{E_{\PBtilde}}

\newcommand{\pdAplus}{pd_\mathrm{A}^+}
\newcommand{\pdAminus}{pd_\mathrm{A}^-}
\newcommand{\pdBplus}{pd_\mathrm{B}^+}
\newcommand{\pdBminus}{pd_\mathrm{B}^-}

\newcommand{\RA}{\mathit{RA}}
\newcommand{\RAtilde}{\widetilde{\RA}}
\newcommand{\ellRA}{\ell_\RA}
\newcommand{\ellRAtilde}{\widetilde{\ellRA}}
\newcommand{\FullRA}{\mathit{FullRA}}
\newcommand{\RB}{\mathit{RB}}
\newcommand{\RBtilde}{\widetilde{\RB}}
\newcommand{\ellRB}{\ell_\RB}
\newcommand{\ellRBtilde}{\widetilde{\ellRB}}
\newcommand{\FullRB}{\mathit{FullRB}}


\newcommand{\Rtotal}{R_\mathrm{total}}
\newcommand{\Itertype}{\mathit{Itertype}}
\newcommand{\RewindExtend}{\mathit{RewindExtend}}	
\newcommand{\IndexA}{\mathit{IndexA}}
\newcommand{\IndexB}{\mathit{IndexB}}
\newcommand{\NextMESIndexA}{\mathit{NextMESIndexA}}
\newcommand{\NextMESIndexB}{\mathit{NextMESIndexB}}
\newcommand{\ellNextMESA}{\ell_\mathit{NextMESA}}
\newcommand{\ellNextMESB}{\ell_\mathit{NextMESB}}
\newcommand{\ellQVCA}{\ell_\mathrm{QVC}^\mathrm{A}}        
\newcommand{\ellQVCAtilde}{\widetilde{\ellQVCA}}
\newcommand{\ellQVCB}{\ell_\mathrm{QVC}^\mathrm{B}}
\newcommand{\ellQVCBtilde}{\widetilde{\ellQVCB}}
\newcommand{\QHA}{\mathit{QHA}}                            
\newcommand{\QHB}{\mathit{QHB}}
\newcommand{\LQVC}{\mathit{L}_{\mathrm{QVC}}}
\newcommand{\msg}{\mathit{msg}}
\newcommand{\PCorr}{\mathit{P}_\mathrm{Corr}}
\newcommand{\PCorrtilde}{\widetilde{\mathit{P}_\mathrm{Corr}}}
\newcommand{\JSone}{\mathit{JS}1}
\newcommand{\JStwo}{\mathit{JS}2}
\newcommand{\JSoneA}{\mathit{JS}1^\mathrm{A}}             
\newcommand{\JStwoA}{\mathit{JS}2^\mathrm{A}}             
\newcommand{\JSoneB}{\mathit{JS}1^\mathrm{B}}
\newcommand{\JStwoB}{\mathit{JS}2^\mathrm{B}}

\newcommand{\MD}{\mathrm{MD}}
\newcommand{\RD}{\mathrm{RD}}
\newcommand{\PD}{\mathrm{PD}}
\newcommand{\QH}{\mathrm{QH}}
\newcommand{\MES}{\mathrm{MES}}
\newcommand{\SIM}{\mathrm{SIM}}

\newcommand{\sR}{\mathsf{R}}                      
\newcommand{\sE}{\mathsf{E}}                      

\newcommand{\sC}{\mathsf{C}}                      
\newcommand{\OED}{\mathsf{0_{ED}}}                

\newcommand{\sS}{\mathsf{S}}					  
\newcommand{\sM}{\mathsf{M}}					  

\let\oldnl\nl 
\newcommand{\nonl}{\renewcommand{\nl}{\let\nl\oldnl}} 

\newcommand{\control}[1] {\textrm{c-}#1}      




\longpaper{\section{Introduction}}

\longpaper{\subsection{Motivation}}

\subsubsection{The main questions.}

Quantum communication offers the possibility of distributed
computation with extraordinary \emph{provable\/} savings in
communication as compared with classical communication (see, e.g.,
\cite{Regev:2011} and the references therein). Most often, if not
always, the savings are achieved by protocols that assume access to
\emph{noiseless\/} communication channels.  In practice, though,
imperfection in channels is inevitable. Is it possible to make the
protocols robust to noise while maintaining the advantages offered by
quantum communication?  If so, what is the cost of making the
protocols robust, and how much noise can be tolerated?
In this article, we address these questions in the context of quantum
communication protocols involving two parties, in the low noise
regime. Following convention, we call the two parties Alice and Bob.




\subsubsection{ Channel coding theory as a special case.}
\label{sec:channel-coding}

In the
special case when the communication is one-way (say, from Alice to
Bob), techniques for making the message noise-tolerant, via error
correcting codes, have been studied for a long time.  Coding allows us
to simulate a noiseless communication protocol using a noisy channel,
under certain assumptions about the noise process (such as having a
memoryless channel).
Typically, such simulation is possible when the \emph{error rate\/}
(the fraction of the messages corrupted)
is lower than a certain threshold.
A desirable goal is to also maximize the
\emph{communication rate\/} (also called the \emph{information rate\/}),
which is the length of the original
message, as a fraction of the length of its encoding.
In the classical setting, Shannon established
the capacity (i.e., the optimal communication rate) of
\emph{arbitrarily accurate\/} transmission,
in the limit of \emph{asymptotically large\/} number of channel uses,
through the Noisy Coding Theorem~\cite{Shannon48a}.
Since then, researchers have discovered many explicit codes with desirable
properties such as good rate, and efficient encoding and decoding
procedures (see, for example,~\cite{Stolte:2002,Arikan:2009}).
%
%
Analogous results have been developed over the
past two decades in the quantum setting.
In particular, capacity expressions for a quantum channel
transmitting classical data~\cite{Hol98,SW97} or
quantum data~\cite{Lloyd97,Shor02,Dev05} have been derived.
Even though it is not known
how we may evaluate these capacity expressions for a general quantum
channel, useful error correcting codes have been developed for many
channels of interest (see, for example,~\cite{CS96-goodcode,CRSS97,BDSW96,Bombin15}). Remarkably,
quantum effects give rise to surprising phenomena without classical
counterparts, including
\emph{superadditivity\/}~\cite{DSS98,Hastings2009},
and \emph{superactivation\/}~\cite{SY09}. All of these highlight the
non-trivial nature of coding for noisy quantum channels.


\subsubsection{Communication complexity as a special case.}

In general two-party protocols, data are transmitted in each direction
alternately, potentially over a number of rounds.  In a computation
problem, the number of rounds may grow as a function of the input
size. Such protocols are at the core of several important areas including
distributed computation, cryptography, interactive proof systems, and
communication complexity.
For example, in the case of the Disjointness function, a canonical
task in the two-party communication model, an $n$-bit input is given
to each party, who jointly compute the function with as little
communication as possible. The optimal quantum protocol for this task
consists of~$\Theta\br{\sqrt{n\,}}$ rounds of communication, each with a
constant length message~\cite{BCW98, HdW:2002, aaronson:2003}, and such a high
level of interaction has been shown to be necessary~\cite{KNTZ07, JainRS:09, BGK+15}.
Furthermore, quantum communication leads to provable advantages over
the classical setting, without any complexity-theoretic assumptions.
For example, some specially crafted problems (see, for example,~\cite{Raz99,Regev:2011})
exhibit exponential quantum advantages, and others display the power
of quantum interaction by showing that just one additional
round can sometimes lead to exponential savings~\cite{KNTZ07}.
%


\subsubsection{The problem, and motivation for the investigation.}
\label{sec:problem}

In this paper, we consider two-party interactive communication
protocols using \emph{noisy\/} communication.
%
%
%
The goal is to effectively implement an interactive communication protocol to
arbitrary accuracy despite noise in the available channels.
We want to minimize the number of uses of the noisy channel,
and the complexity of the coding operations.
%
%
The motivation is two-fold and applies to both the classical and the quantum
setting.
First, this problem is a natural generalization of channel coding from
the 1-way to the 2-way setting, with the ``capacity'' being the best ratio
of the number of channel uses in the original protocol divided by that
needed in the noisy implementation.  Here, we consider the combined
number of channel uses in both
directions.
Note that this scenario is different from ``assisted
  capacities'' where some auxiliary noiseless resources such as a
  classical side channel for quantum transmission are given to the
  parties for free.
%
%
Second, we would like to generalize interactive protocols to the noisy
communication regime.  If an interactive protocol can be implemented
using noisy channels while preserving the complexity, then the
corresponding communication complexity results become robust against
channel noise.
%
In particular, an important motivation is
to investigate whether the quantum advantage in interactive
communication protocols is robust against quantum noise.
%
Due to the ubiquitous nature of quantum noise and fragility of quantum data,
noise-resilience is of fundamental importance for the realization of
quantum communication networks.
The coding problem for interactive quantum communication was first studied in~\cite{BNTTU14}. In
Section~\ref{sec:intro-prior},
we elaborate on this work and the questions that arise from it.

\longpaper{\subsection{Fundamental difficulties in coding for quantum interactive
  communication}\label{sec:intro-diff}}


For some natural problems the optimal interactive protocols require a
lot of interaction.  For example, distributed quantum search over $n$
items~\cite{BCW98, HdW:2002, aaronson:2003} requires $\Theta
\br{\sqrt{n}}$ rounds of constant-sized messages~\cite{KNTZ07,JainRS:09,
BGK+15}.  How can we implement such highly interactive
protocols over noisy channels?  What are the major obstacles?

\subsubsection{Standard error correcting codes are inapplicable.}
\label{sec:ecc-inapp}

In both the
classical and quantum settings, standard error correcting codes are inapplicable.
%
To see this, first suppose we encode each message separately.  Then
the corruption of even a single encoded message can already derail the
rest of the protocol. Thus, for the entire protocol to be simulated
with high fidelity, we need to reduce the decoding error for each message to be
inversely proportional to the length of the protocol, say $n$.
For constant size messages,
the
overhead of coding then grows with the problem size $n$, increasing
the complexity and suppressing the rate of simulation to $0$ as $n$ increases.
The situation is even worse with adversarial errors: the adversary can
invest the entire error budget to corrupt the shortest
critical message, and it is impossible to tolerate an
error rate above $\approx 1$/number of rounds, no matter what the rate of
communication is.
To circumvent this barrier, one must employ a coding strategy acting collectively over
many messages. However, most of these are generated dynamically during the
protocol and are unknown to the sender earlier.
Furthermore, error correction or detection
may require communication between the parties, which is also corruptible.
The problem is thus reminiscent of fault-tolerant computation in that the
steps needed to implement error correction are themselves subject to errors.

\subsubsection{The no-cloning quantum problem.}
A fundamental property of quantum mechanics is that learning about an
unknown quantum state from a given specimen disturbs the
state~\cite{BBJMPSW94}. In particular, an unknown
\longpaper{quantum} state cannot be cloned~\cite{Dieks82, WZ82}. This affects our
problem in two fundamental ways. First, any logical quantum data leaked
into the environment due to the noisy channel cannot be recovered
by the communicating parties. Second, the parties hold a joint
quantum state that evolves with the protocol, but they cannot make
copies
\longpaper{of the joint state}
without corrupting it.


\longpaper{\subsection{Prior classical and quantum work }\label{sec:intro-prior}}

\blurb{\vspace*{1.5ex}{\bf \large 3. Prior classical and quantum work}\vspace*{0.5ex}}


Despite the difficulties in coding for interactive communication,
many interesting results have been discovered over the last 25 years, with a
notable extension in the quantum setting.

\subsubsection{Classical results showing positive rates.}
Schulman first raised the question of simulating noiseless interactive
communication protocols using noisy channels in the classical
setting~\cite{Sch92,Sch93,Sch96}.
He developed \emph{tree codes\/} to work with messages that are
determined one at a time, and generated dynamically during the course
of the interaction.  These codes have constant overhead, and the
capacity is thus a positive constant.
Furthermore, these codes protect data against
adversarial noise that corrupts up to a $\frac{1}{240}$ fraction of the channel
uses.
This tolerable noise rate was improved by subsequent work, culminating
to the results by Braverman and Rao~\cite{BR11}. They showed
that $<\frac{1}{4}$ adversarial errors can be tolerated provided one
can use large constant alphabet sizes and that this bound on noise rate is optimal.

\subsubsection{Classical results with efficient encoding and decoding.}
The aforementioned coding schemes are not known to be computationally
efficient, as they are built on tree codes; the computational
complexity of encoding and decoding tree codes is unknown.
Other computationally efficient encoding schemes have been
developed~\cite{BK12,BN13,BrakerskiKN:2014,GMS12,GellesMS:2014,GH14}.
%
%
The communication rates under various scenarios
have also been studied
\cite{BravermanK:2017,GhaffariHS:2014,EfremenkoGH:2015,FranklinGOS:2015}.
%
However, the rates do not approach the capacity expected of the noise rate.


\subsubsection{Classical results with optimal rates.}
Kol and Raz~\cite{KR13} first established coding with
rate approaching $1$ as the noise parameter goes to $0$, for the binary
symmetric channel.
Haeupler~\cite{Haeupler:2014} extended the above result to
adversarial binary channels corrupting at most an $\epsilon$ fraction
of the symbols, with communication rate $1 - O\,\br{\!\sqrt{\epsilon \log
  \log \br{\frac{1}{\epsilon}}\!}\,}$, which is conjectured to be optimal.
For oblivious adversaries, this increases to $1 - O(\sqrt{\epsilon})$.
Further studies of capacity have been conducted,
for example, in~\cite{HaeuplerV:2017,BenYishaiSK:2017}.
For further details about recent results on interactive coding, see the extensive survey by Gelles~\cite{Gelles17}.


\subsubsection{Quantum results showing positive rates.}
All coding for classical interactive protocols relies on ``backtracking'': if an error
is detected, the parties go back to an earlier stage of the protocol
and resume from there.
Backtracking is impossible in the quantum setting due to the no
cloning principle described in the previous subsection.  There is no generic
way to make copies of the quantum state at earlier stages
without restarting the protocol.
Brassard, Nayak, Tapp, Touchette, and Unger~\cite{BNTTU14} provided the first coding scheme with constant
overhead by using two ideas.
The first idea is to teleport each quantum message. This splits
the quantum data into a protected quantum share and an unprotected
classical share that is transmitted through the noisy channels
using tree codes.
Second, backtracking is replaced by \emph{reversing\/} of steps
to \emph{return\/} to a desirable earlier stage; i.e., the joint quantum state
is evolved back to that of an earlier stage, which circumvents the
no-cloning theorem. This is possible since local operations can be
made unitary, and communication can be reversed (up to more noise).  Together, a positive
simulation rate (or constant overhead) can be achieved.  In the noisy analogue to the
Cleve-Buhrman communication model where entanglement is free, error rate $<\frac{1}{2}$ can
be tolerated.  In the noisy analogue to the Yao (plain) model, a noisy quantum channel with
one-way quantum capacity $Q > 0$ can be used to simulate an
$n$-message protocol given $O \br{\frac{1}{Q} n}$ uses.  However, the
rate can be suboptimal and the coding complexity is unknown due to the
use of tree codes.  The rate is further reduced by a large constant in
order to match the quantum and classical data in teleportation, and in
coordinating the action of the parties (advancing or reversing the
protocol).



\longpaper{\subsection{Results in this paper, overview of techniques, and our contributions}}

\blurb{\vspace*{1.5ex}{\bf \large 4. Results in this paper, overview of techniques,
and our contributions}\vspace*{0.5ex}}

Inspired by the recent results on rate
optimal coding for the classical setting~\cite{KR13,Haeupler:2014} and the rate suboptimal coding
in the quantum setting~\cite{BNTTU14}, a fundamental question is: can
we likewise avoid the loss of communication rate for
interactive \emph{quantum\/} protocols?
In particular, is it possible to protect quantum data without pre-shared
free entanglement, and if we have to generate it at a cost, can we
still achieve rate approaching $1$ as the error rate vanishes?
Further, can erroneous steps be reversed with noisy
resources, and with negligible overhead as the error rate vanishes?
What is the complexity of rate optimal protocols, if
one exists?
Are there other new obstacles?

\suppress{Our main result
~addresses all these questions.}
To address all these questions, in this paper we start by studying a simpler setting where the input protocol $\Pi$ and the noisy communication channel operate on the same communication alphabet of polynomial size in the length of $\Pi$. This simplifies the algorithm while still capturing the main challenges we need to address. The analysis is easier to follow and shares the same outline and structure with our main result, namely simulation of noiseless interactive communication over constant-size alphabets, which we will present in an upcoming paper. The framework we develop in this work, sets the stage for a smooth transition to the small alphabet case. We focus on alternating protocols, in which Alice and Bob exchange qudits back and forth in alternation. Our main result in this paper is the following: 


\begin{theorem}
    Consider any alternating communication protocol $\Pi$ in the plain quantum model, communicating $n$ messages over a noiseless channel with an alphabet $\Sigma$ of bit-size $\Theta\br{\log n}$. We provide a simulation protocol $\Pi'$ which given $\Pi$, simulates it with probability at least $1-2^{-\Theta\br{n\epsilon}}$, over any fully adversarial error quantum channel with alphabet $\Sigma$ and error rate $\epsilon$. The simulation uses $n\br{1+\Theta\br{\sqrt{\epsilon}}}$ rounds of communication, and therefore achieves a communication rate of $1-\Theta\br{\sqrt{\epsilon}}$. 
\end{theorem}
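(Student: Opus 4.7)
The plan is to combine the teleportation-based reduction of \cite{BNTTU14} with a Haeupler-style \cite{Haeupler:2014} block-structured simulation, exploiting the $\Theta(\log n)$-bit alphabet to carry hash-based consistency checks cheaply. First I would restructure any alternating input protocol $\Pi$ so that each of its $n$ qudit exchanges is transmitted by quantum teleportation: the parties generate halves of EPR pairs by sending through the noisy channel, and then exchange the classical teleportation outcomes. After teleportation the unknown quantum information lives in the locally held entangled shares (which never touch the channel once distributed), and only classical data (teleportation outcomes, block indices, hashes) must be protected against noise. This reduces the problem to coordinating two classical interactive records while preserving a shared quantum invariant.

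Next, partition the teleported simulation into blocks of length $\ell = \Theta(1/\sqrt{\epsilon})$. At the end of each block Alice and Bob each compute a short hash of their accumulated classical state and exchange these hashes; because one symbol of $\Sigma$ holds $\Theta(\log n)$ bits, a single transmission carries a hash with collision probability $1/\mathrm{poly}(n)$, so the per-block overhead is $O(1)$ rounds. If the hashes agree the block is confirmed; if they disagree the parties perform a coordinated \emph{reversal}, unitarily undoing the last few blocks of the simulation so that the joint quantum state flows back to a previously agreed checkpoint. Using reversal rather than backtracking is exactly what sidesteps no-cloning: the inverse of the simulation unitary is well-defined on the present state and requires no copy of any earlier state. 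To make this work when Alice's and Bob's ``current block'' estimates disagree, I would embed a meeting-points mechanism that transmits the index at exponentially spaced intervals, so that after $O(1)$ uncorrupted exchanges following any divergence the two parties re-sync on a common checkpoint.

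The total round count then has three parts: the $n$ original transmissions, $O(n/\ell) = O(n\sqrt{\epsilon})$ rounds of hash/index exchange for the regular block structure, and the cost of recovering from adversarial errors. Each adversarial error can be charged to at most a constant number of ``bad'' blocks, each of which costs $O(\ell)$ rounds to rewind and redo, giving a total rewind cost of $O(\epsilon n \cdot \ell) = O(n\sqrt{\epsilon})$. Summing yields $n(1+\Theta(\sqrt{\epsilon}))$ rounds, and since the teleported quantum shares are never exposed to the channel the simulation is exact on the quantum side, conditional on the classical coordination succeeding.

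The hard part, which drives most of the analysis, is proving correctness against an adversary who can corrupt the hash symbols themselves, manufacturing false mismatches (causing unnecessary rewinds) or, more dangerously, false matches that let a genuine desynchronization persist undetected. The plan is a potential-function style charging argument in which each bad block (whether caused by payload corruption, hash corruption, index corruption, or a rare hash collision) is charged to at least one unit of the adversary's budget $\epsilon N$, bounding the total number of bad blocks by $O(\epsilon n)$ and thus keeping the rewind overhead within the promised $\Theta(\sqrt{\epsilon})$ envelope; a union bound over hash instances then yields the $2^{-\Theta(n\epsilon)}$ failure probability. A subtlety to manage carefully is the bookkeeping that ensures hash computations happen at moments where both parties have a definite shared classical record to hash, so that a hash agreement really does certify a common state on which the reversal and meeting-points mechanisms can operate coherently on the quantum registers; getting the scheduling of teleportation corrections, block boundaries, and index exchanges to interlock correctly is where I expect the bulk of the technical work to lie.
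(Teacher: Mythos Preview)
Your proposal has a genuine gap that prevents it from achieving the claimed rate. The theorem is stated in the \emph{plain quantum model}, where Alice and Bob have no pre-shared entanglement and communicate over a noisy quantum channel. Your plan is to distribute EPR pairs through this channel and then teleport each of the $n$ qudit messages. But teleportation consumes its MES irrevocably (the Bell measurement destroys it), so you need a fresh MES for every message. Distributing $n$ MESs costs at least $n$ channel uses, and transmitting the $n$ teleportation outcomes (each a pair in $\Sigma^2$) costs another $2n$, for a total of at least $3n$ channel uses before any error-correction overhead. The resulting rate is at most $1/3$, not $1 - \Theta(\sqrt{\epsilon})$. The paper makes exactly this point: ``teleportation would be too expensive to use, since it incurs an overhead of at least $3$.'' Your accounting (``the $n$ original transmissions'') implicitly treats the entanglement as free, which is the teleportation-based (Cleve--Buhrman) model of Section~\ref{sec:BKlarge}, not the plain model the theorem addresses.

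The paper's route around this obstacle is the Quantum Vernam Cipher: each qudit message is encrypted with two MESs (via controlled-$\rX$ and controlled-$\rZ$), sent through the noisy channel as a single qudit, and decrypted on arrival. If the transmission is uncorrupted the MES pair survives intact and can be \emph{recycled} for later messages; any channel error shows up as a phase error on the MESs and can be detected later by a quantum-hashing procedure that costs only $O(1)$ fresh MESs per block. With blocks of length $r = \Theta(1/\sqrt{\epsilon})$ and an adversary budget of $O(n\epsilon)$, only $\Theta(n\sqrt{\epsilon})$ MESs ever need to be distributed (done once at the outset using a high-rate quantum code), and the per-message channel cost becomes $1 + O(1/r) = 1 + O(\sqrt{\epsilon})$. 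Most of the paper's analysis is then devoted to showing that recycling is reliable---that Alice and Bob agree on which MES blocks are measured versus still usable, and that recycled blocks really are in the $\ket{\phi^{0,0}}^{\otimes 4r}$ state---which requires additional ``recycling data'' in the classical bookkeeping and an inductive argument intertwining the hash-collision bounds with the recycling-success guarantee. This entanglement-recycling mechanism, not teleportation, is the missing idea in your proposal.
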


Our rate optimal protocol requires a careful
combination of ideas to overcome various obstacles. Some of these
ideas are well-established, some are not so well known, some require significant modifications, and some are new.
A priori, it is not clear whether these previously developed tools would be useful in the context of the problem.
For the clarity of presentation, we first introduce our main ideas in a simpler communication model, where Alice and Bob have access to free entanglement and communicate over a fully adversarial error classical channel. We introduce several key ideas while developing a basic solution to approach the optimal rate in this scenario. Inspired by~\cite{BNTTU14}, we use teleportation to protect the communication and the simulation is actively rewound whenever an error is detected. We develop a framework which allows the two parties to obtain a global view of the simulation by locally maintaining a classical data structure. We adapt ideas due to Haeupler~\cite{Haeupler:2014} to efficiently update this data structure over the noisy channel and evolve the simulation. Then, we extend these ideas to the plain model of quantum communication with large alphabet size. In the plain quantum model, Alice and Bob communicate over a fully adversarial error quantum channel and do not have access to any pre-shared resources such as entanglement or shared randomness. As a result any such resources need to be established through extra communication. This in particular makes it more challenging to achieve a high communication rate in this setting. Surprisingly, an adaptation of an old technique called the Quantum Vernam Cipher (QVC)~\cite{Leung:2002} turns out to be the perfect method to protect quantum data in our application. QVC allows the two parties to recycle and reuse entanglement as needed throughout the simulation. Building on the ideas introduced in the teleportation-based protocol, one of our main contributions in this model is developing a mechanism to reliably recycle entanglement in a communication efficient way.

\suppress{
\subsubsection{Remarks on our main result.}
Besides resolving the question concerning rate optimal coding for
quantum interactive communication in the low-noise regime, our work
achieves a few additional goals.  First, the above result is
achieved in the plain quantum model, where the two parties have no
pre-shared resource (such as secret key or entanglement).  Remarkably,
our rate outperforms the conjectured optimal bound in the
corresponding plain classical model!  Intuitively, this is possible in the quantum
setting because a secret key can be obtained from low noise quantum
communication (or from entanglement) and then more efficient hashing can be
performed.  Second, our work provides the first computationally
efficient interactive coding scheme in the quantum setting.  Third, our result
is the first of its kind for establishing the capacity for a noisy quantum channel
used in both directions to leading order.

\subsubsection{Outline of the ideas
and our contributions.}
Our rate optimal protocol requires a careful
combination of ideas to overcome various obstacles. Some of these
ideas are well-established, some are not so well known, some require significant
modifications, and some are new.
A priori, it is not clear whether
any of the previously developed tools would be useful in the context of the problem.
For the clarity of presentation, we start with two simplifications, namely
free entanglement and large alphabet size $d={\rm poly}\br{n}$. We
introduce several key ideas while developing a basic solution to approach the
optimal rate in this scenario.
Then, we extend these ideas to the plain model with large alphabet size.
Finally, we adapt our protocols to the binary alphabet in both settings.
In the process, we solve the coding problem in all 4 scenarios.
The relations between
these scenarios and what ideas go in which scenarios are summarized
\longpaper{in Figure~\ref{fig:diagram}.}
\blurb{in Figure~1 at the end of this extended abstract.}
We also illustrate our contribution vs
existing techniques by different colors.  

\longpaper{
\begin{figure}[!t]
\centering
\includegraphics[width=400pt]{./../drawings/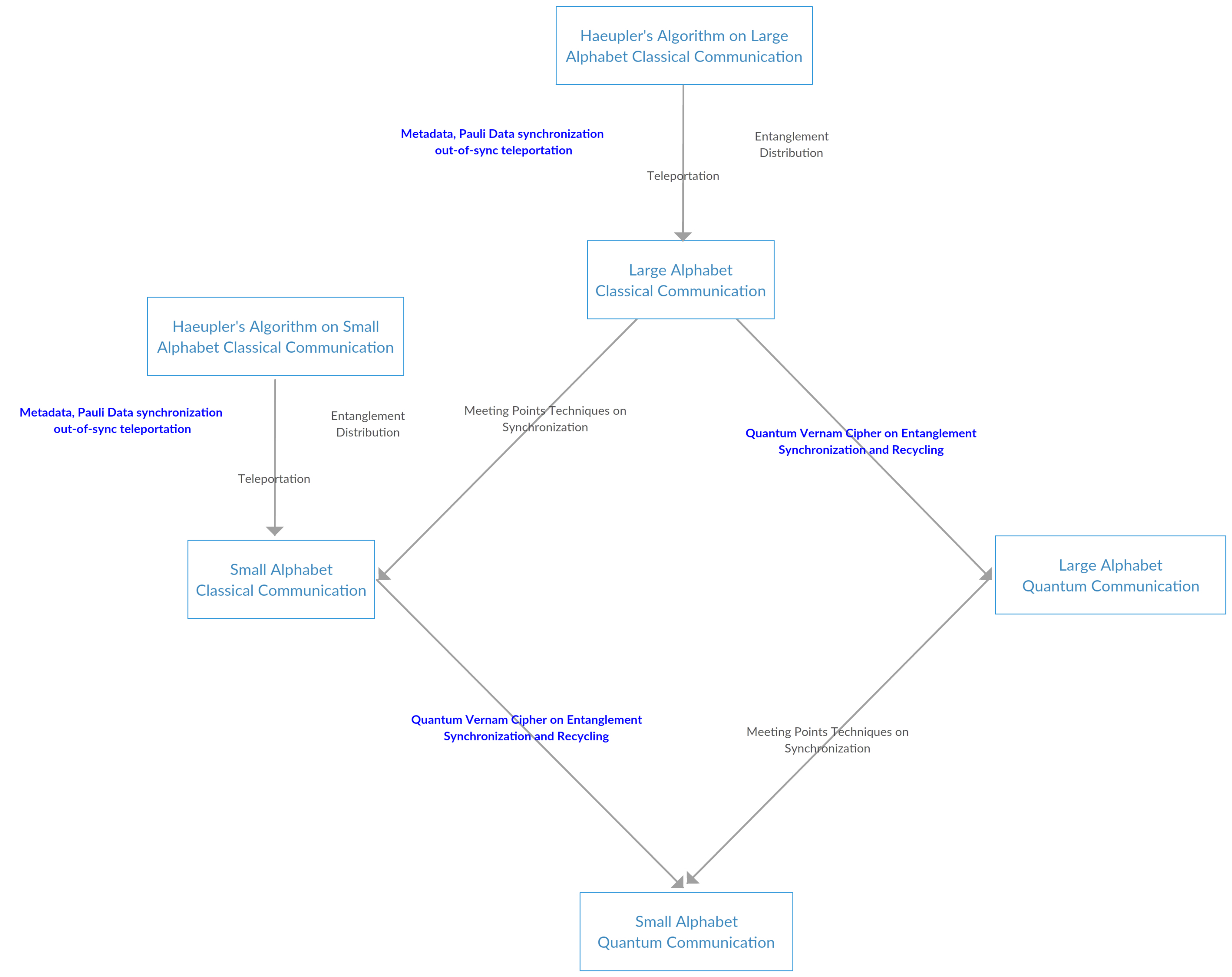}
\caption{Diagram representing the sequence of cases that we prove, and what main ingredients go into each case.}
\label{fig:diagram}
\end{figure}
}



A priori, there is little reason to expect that the simulation framework
and the tools developed for each successive case extend to the next.
However, the extensions are surprisingly seamless and without serious obstacles,
culminating in the final result. This testifies to the power of the framework and
choice of tools we deploy.

\ul{\bf Ideas and solution with free entanglement and large alphabet size}~We adapt from ~\cite{BNTTU14} the ideas to teleport each quantum
message and to rewind the protocol instead of backtracking.  (These
techniques are relatively well known.)

We also adapt Haeupler's template \cite{Haeupler:2014} to make a
conversation robust to noise: Both parties conduct their original
conversation as if there were no noise, except for the following:
\vspace*{-2ex}
\begin{itemize}
\item At regular intervals they exchange concise summaries (a $\Theta (1)$ or
  $\Theta (\log \log n)$-bit hash value) of the conversation up to the 
  point of the exchange.
\vspace*{-1ex}
\item If the summary is consistent, they continue the conversation.
\vspace*{-1ex}
\item If the summary is inconsistent, an error is detected. The parties backtrack to an earlier
  stage of the conversation and resume from there.
\end{itemize}
\vspace*{-2ex}
This template can be interpreted as an error correcting
code over many messages, with trivial (and most importantly \emph{message-wise\/})
encoding.  The 2-way summaries measure the error syndromes over a large number
of messages, thereby preserving the rate.  It
works (in the classical setting) by limiting the maximum amount of
communication wasted by a single error to $O_\epsilon (1)$.  The worst case
error disrupts the consistency checks, but Alice and Bob agree to
backtrack a constant amount when an inconsistency is detected.
As the error fraction vanishes, the communication rate goes to $1$.
In addition, these consistency tests are efficient, consisting of 
evaluation of hash functions.  

{\bf Insufficiency of simply combining
\cite{BNTTU14} and \cite{Haeupler:2014}.}~
Suppose we have to simulate an interactive protocol $\Pi$ that uses
noiseless classical channels in the Cleve-Burhman model.
When implementing $\Pi$ with noisy classical channels, it is \emph{not sufficient\/}
to apply Haeupler's template to the classical messages used in teleportation,
and rewind as in \cite{BNTTU14} when an error is detected.
%
%
%
The reason is that, in \cite{BNTTU14}, each message is expanded to
convey different types of actions in one step (simulating the protocol
forward or reversing it). This also maintains the matching between
classical data with the corresponding MES, and the matching between
systems containing MESs.  However, this method incurs a large constant factor
overhead which we cannot afford to incur.

{\bf New difficulty in rate-optimal simulations.}~ Due to errors in
communication the parties need to actively rewind the
simulation to correct errors on their joint quantum state. This itself
can lead to a situation where the parties may not agree on how they
proceed with the simulation (to rewind simulation or to proceed forward).
In order to move on, both parties first need to know what the other
party has done so far in the simulation. This allows them to obtain a
global view of the current joint state and decide on their next
action.
In Ref. [BNT+14], this reconciliation step was facilitated 
by the extra information sent by each party and the use of tree codes. 
This mechanism is not available to us.

{\bf Our first new idea (a framework)} is to introduce sufficient yet
concise data structure so that the parties can detect inconsistencies
in (1)  the stage in which they are in the protocol, (2) what type of action
they should be taking, (3) histories leading to the above, (4)
histories of measurement outcomes generated by one party versus the
potentially different (corrupted) received instruction for teleportation decoding,
(5) which system contains the next MES to be used, (6) a classical
description of the joint quantum state, which is only partially known
to each party.  Each of Alice and Bob maintain her/his data (we
collectively call these $D_A, D_B$ respectively, here), and also an
estimate of the other party's data ($\widetilde{D_B}, \widetilde{D_A}$
respectively).  Without channel noise, these data are equal to their
estimates. 
\longpaper{Figure~fig:\ref{fig:teleportation-representation} in Section~\ref{sec:general-descrpition-largeclasscial}}
\blurb{Figure~3 at the end of this extended abstract}
contains a graphical representations of the operations being performed in one block of simulation, and
\longpaper{Figure~\ref{fig:flow-telep} in Section~\ref{sec:general-descrpition-largeclasscial}}
\blurb{Figure~4 at the end of this extended abstract}
contains a flowchart depicting the main interactions in our framework.

{\bf A major new obstacle: out of sync teleportation.}~ Now, at
every step in the simulation protocol $\Pi'$, Alice and Bob may engage in
one of three actions: a forward step in $\Pi$, step in reverse, or
the exchange of classical summaries.
However, the summaries can also be corrupted.  This leads to a new
difficulty: errors in the summaries can trigger Alice and Bob
to engage in different actions.  In particular, it is possible that
one party tries to teleport while the other expects classical
communication, with only one party consuming his/her half of an MES.
They then become out-of-sync over which MESs to use.
This kind of problem, to the best of our knowledge, 
has not been encountered before, and it is not
clear if quantum data can be protected from such error.  (For example, Alice may
try to teleport a message into an MES that Bob already ``used''
earlier.) 
One of our main technical contributions is to show that the quantum
data can always be located and recovered when Alice and Bob resolve
the inconsistencies in their data $(D_A, \widetilde{D_B})$ and
$(\widetilde{D_A},D_B)$ in the low noise regime.
%
This is particularly surprising since quantum data 
can potentially leak irreversibly to the environment (or the adversary): 
Alice and Bob potentially operate in an open system due to channel noise, 
and out-of-sync teleportation a priori does not protect the messages so sent.
%


{\bf The intuition why this work} is as follows.  Even mismatched
MESs form a ``closed system'' where the quantum data continue
to reside; the quantum data are simply mislocated and rotated. (See 
\longpaper{Figure~\ref{fig:EPR_registers}.)}
\blurb{Figure~2 at the end of this extended abstract.)}
Our data
structure endows each of Alice and Bob with his/her half of the
correct classical information ($D_A, D_B$) at every stage.  They
estimate the other half of the information $\widetilde{D_A}, \widetilde{D_B}$
based on noisy communication and these can potentially be corrupted.
We then apply Haeupler's template to regain consistency in their
views, so that both views ($(D_A, \widetilde{D_B}$ for Alice and
$(\widetilde{D_A},D_B)$ for Bob) can be progressively corrected towards $(D_A, D_B)$
as the protocol proceeds.  The shared joint
quantum state, including the MESs, is
completely determined by $(D_A, D_B)$ so that they can resynchronize their
moves.

{\bf Tight rope between robustness and rate.}~The simulation maintains sufficient data
structures to store information about each party's view
 so that Alice and Bob can 
overcome all the obstacles described above.
The simulation makes progress so long as Alice's and Bob's views are consistent.
The robustness of the simulation requires that the consistency checks be frequent 
and sensitive enough so that errors are caught quickly. On the other hand, 
to optimize interactive channel capacity, the checks have to remain communication efficient and not too frequent neither. 
This calls for delicate analysis in which we balance the two. We also put in some redundancy 
in the data structures to simplify the analysis.



A high-level description of this solution is included in 
\longpaper{Section~\ref{sec:general-descrpition-largeclasscial}.  Section~\ref{sec:general-discription-large-alphabet-cleveburhman} gives the general
description of our approach, while Section~\ref{sec:Out-of-sync teleportation} gives more details
about out-of-sync teleportation and how to correct for it.
Section~\ref{subsec:polysizeclassicalanalysis} contains a detailed analysis and proof of correctness.}
\blurb{Section 3.2
of the full manuscript:  Section 3.2.1 gives the general
description of our approach, while Section 3.2.2 gives more details
about out-of-sync teleportation and how to correct for it.
Section~3.4 contains a detailed analysis and proof of correctness.}

\ul{\bf Ideas and solution in the plain model with large alphabet size}

{\bf Teleportation is inapplicable.}~ Switching from the Cleve-Burhman
model to the Yao model, suppose we are given a protocol $\Pi$ using
noiseless quantum communication, and we are asked to provide a protocol $\Pi'$
using noisy quantum channels under the strongly adversarial model
described earlier.  In the absence of free entanglement, how can we
protect quantum data from leaking to the environment without
incurring a non-negligible overhead?  First, note that some form of
protection is necessary, as discussed in 
\longpaper{Section~\ref{sec:intro-diff}.}
\blurb{Section~2 of this extended abstract.}
  Second,
teleportation would be too expensive to use, since it incurs an overhead of
at least $3$: we have to pay for the MES as well as the
classical communication required.

Surprisingly, an old and relatively unknown idea called
the Quantum Vernam Cipher (QVC)~\cite{Leung:2002} turns out to
be a perfect alternative method to protect quantum data with
negligible overhead as the noise rate approaches 0.

{\bf The Quantum Vernam Cipher (QVC)~\cite{Leung:2002}.}~ Suppose
Alice and Bob share two copies of MESs, each over two $d$-dimensional
systems.  For Alice to send a message to Bob, she applies a controlled
$X^k$ Pauli operation with her half of the first MES as control (when the control qudit is in state $k$), and
the message as the target.  She applies a controlled $Z^k$ Pauli
operation from her half of the second MES to the message.  When Bob
receives the message, he reverses the controlled operations using his
halves of the MESs.  (The operations are similar for the opposite direction of
communication). A detailed description is provided in 
\longpaper{Section~\ref{subsec:protocoloverqudits}.}
\blurb{Section~2.5 of the full manuscript; Figure~7 at the end of this extended abstract depicts the protocol.}

The QVC is designed so that {\bf if} Alice and Bob have access to an
authenticated classical channel from Alice to Bob, they can determine
and correct any error in the transmission.  This can simply be
done by measuring $Z^l$ type changes to one half of the two MES.  
They can also run the QVC many times, determine the errors in a large
block using a method called ``random hashing'', and recycle the MESs
if the error rate (as defined in our adversarial model) is low.  This is
a crucial property of QVC and leads to one of the earliest (quantum) key
recycling results known. In fact, this was the reason why it was studied
in Ref.~\cite{Leung:2002}.
What makes QVC particularly suitable for our problem is that encoding and decoding
are performed message-wise, while error detection can be done in large blocks, and
entanglement can be recycled if no error is detected. It may thus be viewed as
a natural quantum generalization to Haeupler's consistency
checks.
%

As an aside, in Appendix E of Ref.~\cite{Leung:2002}, the relative merits of
teleportation and QVC were compared (those are the only two ciphers
with complete quantum reliability), and it was determined that
entanglement generation over an insecure noisy quantum channel
followed by teleportation is more entanglement efficient than QVC
with entanglement recycling in some test settings.  However, this
difference vanishes for low noise.  Furthermore, the comparison
assumes authenticated noiseless classical communication to be free.
QVC requires an amount of classical communication for the
consistency checks which vanishes with the noise parameter (but this
cost was not a concern in that study).  Furthermore, QVC was also
proposed as an authentication scheme, but the requirement for
interaction to authenticate and to recycle the key or entanglement was
considered a disadvantage, compared to non-interactive schemes. 
(Those are only efficient for large block length, and cannot identify
the error when one is detected. So, these authentication schemes are
inapplicable).  We thus provide renewed insight into QVC when
interaction is natural (while it is considered expensive in many other
settings).


{\bf Adaptations of QVC  for the current problem.}~ In the
current scenario, we have neither free MESs nor an authenticated
classical channel.
Instead, Alice and Bob start the protocol by generating
$O(\sqrt{\epsilon}n)$ near-perfect MESs, using high rate quantum error
correcting codes over the low-noise channel, where $n$ is the total
length of the original protocol $\Pi$, and $\epsilon$ is the noise parameter.
Then, they occasionally check for errors and recycle MESs in a 
communication efficient way, using noisy quantum channels instead 
of an authenticated classical channel.  
If they detect an inconsistency, they try to determine the error
in a small block in the recent past, and rewind to correct the error.
Otherwise, they perform ``quantum hashing''~\cite{BDSW96,Leung:2002} to efficiently recycle
the entanglement to be reused.  

{\bf Additional out-of-sync problems using QVC and entanglement
recycling.}~ As in the previous scenario, one of Alice and Bob can
make a step forward, and the other a step in reverse.  They can also
go out of sync about which MESs they are using.  Furthermore, the
parties may not agree on which MESs to recycle, how much to recycle,
and whether they can even recycle!  In particular, corruptions that
lead only one party to recycle can cause a significant discrepancy in
how many MESs the two parties are holding.  It is much more involved
to analyse the joint quantum state.
\longpaper{Figure~\ref{fig:flow-qvc} in Section~\ref{sec:description-large-quantum}}
\blurb{Figure~6 at the end of this extended abstract}
contains a flowchart depicting the main modifications versus our Cleve-Buhrman framework which are needed for entanglement recycling.
To tackle these problems, we develop further data structures and adapt
the ``quantum hashing'' procedure of Ref.~\cite{BDSW96,Leung:2002} to
our setting.

%

Surprisingly, once again, the quantum data can be recovered as Alice
and Bob reconcile the differences in the data structures developed for
the task. This is in spite of the fact that there is no reason to expect the out-of-sync
QVC to be sufficient to protect the potentially incorrectly encoded quantum data
sent via noisy quantum channels.   (See 
\longpaper{Figure~\ref{fig:EPR_circle}.)}
\blurb{Figure~5 at the end of this extended abstract.)}


We note that entanglement generation of $O(\sqrt{\epsilon} n)$ MES is
sufficient to last through the whole protocol.  Intuitively, this
amount of MES is still much more than the number of adversarial errors
allowed, even after taking 
into account 
the entanglement lost due to a single channel error.
Due to
entanglement recycling, there is no need to generate more entanglement
mid-protocol, unlike the randomness generated in the plain classical
setting that has to be regenerated mid-protocol.

A high-level description of the solution in this case can be found in 
\longpaper{Section~\ref{sec:description-large-quantum}.   Section~\ref{subsec:description-large-quantum}
gives the general description of our approach, while Sections~\ref{sec:Qhashing} and~\ref{sec:out-of-sync QVC} }
\blurb{Section 4.2
in the full manuscript:  Section
4.2.1 gives the general description of our approach, while Sections
4.2.3 and 4.2.4 }
give more detail about out-of-sync Quantum Hashing and
Quantum Vernam Cipher, respectively.

\underline{\smash{\bf Transitioning to small alphabet size}}

We can witness the power of the framework when going from the two
previous cases to work with small alphabet size.
Great care is taken when establishing the framework in the large
alphabet setting so as to make the transition to small alphabet
largely seamless.
One difficulty of applying the large alphabet coding scheme in the small
alphabet case is that $O(\log n)$ messages are now required to
exchange position information that is used for resynchronization.
Following~\cite{Haeupler:2014}, we instead use the meeting point
mechanism.

{\bf Haeupler's meeting point mechanism.}~ In Haeupler's meeting point mechanism, 
a set of positions  (called meeting points) is specified, and
Alice and Bob can rewind to these.  In the presence of an observed
inconsistency, the error is more likely to be recent than far back in
the past.  So, accordingly, the meeting points are spaced more closely
near the current position, and are sparse back in the past so Alice
and Bob typically only rewind a small number of steps (this is needed
to limit the wasted communication caused by one error, as in Haeupler's general template
described above).  
At the same time, there are only two meeting points considered at once
by each party (with more distant ones considered iteratively if closer
ones are believed to be \emph{invalid\/}), so, they can be compared with $O(1)$
hashes.

{\bf Combining the meeting point mechanism with our framework.}
Combining this meeting point idea with the framework we developed to solve
the large alphabet cases leads to solutions for the small alphabet cases.
The protocol for the Cleve-Buhrman model is described in detail in 
\longpaper{Section~\ref{subsec:alg-small-alphabet-classical}}
\blurb{Section 6.2 of the full manuscript} 
and fully analysed in
\longpaper{Section~\ref{sec:small-alphabet-CB-analysis},}
\blurb{Section 6.3,}
while the protocol for the Yao model is described in detail in
\longpaper{Section~\ref{sec:small-alphabet-Yao-algo}}
\blurb{Section 7.1} 
and fully analysed in
\longpaper{Section~\ref{sec:small-alphabet-Yao-analysis}.}
\blurb{Section 7.2.}
When entanglement is free, we have used the given entanglement to
generate useful secret keys. In the plain model, we adapt the protocol
to prevent the adversary from injecting too many collisions in the
hashes.




}

\newpage
\section{Preliminaries}

We assume that the reader is familiar with the quantum formalism for
finite dimensional systems; for a thorough treatment, we refer
the interested reader to good introductions in a quantum information
theory context \cite[Chapter 2]{NC00}, \cite[Chapter 2]{Wat08} \cite[
  Chapters 3, 4, 5]{Wilde11}.
  
Let $\A$ be a~$d$-dimensional Hilbert space with computational basis~$\set{\ket{0},\ldots, \ket{d-1}}$. Let $\rX$ and $\rZ$ be the operators such that~$\rX\ket{k}\defeq\ket{k+1}$ and~$\rZ\ket{k}\defeq e^{\complexi \cdot 2\pi\frac{k}{d}}\ket{k}$. The generalized Pauli operators, also known as the Heisenberg-Weyl operators, are defined as~$\set{\rX^j\rZ^k}_{0\leq j,k\leq d-1}$. Let~$\Sigma=\set{0,\ldots,d-1}$. For~$N\in \mathbb{N}$, the operators in 
\begin{equation}\label{eqn:set-of-Pauli-ops}
    \P_{d,N} \defeq \{ \rX^{j_1} \rZ^{k_1} \otimes \cdots \otimes \rX^{j_N} \rZ^{k_N} \}_{j_l k_l \in \Sigma^2,l\in \Br{N}}
\end{equation}
form a basis for the space of operators on $\A^{\otimes N}$. For~$E\in \P_{d,N}$, We denote by~$\mathrm{wt}\br{E}$ the weight of~$E$, i.e., the number of~$\A$ subsystems on which~$E$ acts non-trivially.
For $j,k\in\Sigma$, we represent the single qudit Pauli error $\rX^j\rZ^k$ by the string $jk\in\Sigma^2$. Similarly, a Pauli error on multiple qudits is represented by a string in ${\br{\Sigma^2}}^*$. The Fourier transform operator $F$ is defined to be the operator such that~$F\ket{j}\defeq\frac{1}{\sqrt{d}}\sum_{k=0}^{d-1}e^{\complexi\cdot 2\pi \frac{jk}{d}}\ket{k}$.

\begin{prop}\label{fac:paulioperatorcommute}
	Let $\set{\rX^j\rZ^k}_{0\leq j,k\leq d-1}$ be the set of generalized Pauli operators on a $d$-dimensional Hilbert space. It holds that $\rX^j\rZ^k=e^{-\complexi \cdot 2\pi\frac{jk}{d}}\rZ^k\rX^j, F\rX^jF^{\dagger}=\rZ^j$ and $F\rZ^jF^{\dagger}=\rX^j$ for every $j,k\in\set{\ket{0},\ldots, \ket{d-1}}$.
\end{prop}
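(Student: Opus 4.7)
My plan is to verify each of the three identities by direct computation on the computational basis vectors $|m\rangle$, extending by linearity. Each identity is a standard Heisenberg--Weyl / Fourier-transform relation and no nontrivial machinery is needed.

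For the commutation identity $X^j Z^k = e^{-i 2\pi jk/d} Z^k X^j$, I would apply both sides to $|m\rangle$. The left side produces $X^j \cdot e^{i 2\pi km/d}|m\rangle = e^{i 2\pi km/d}|m+j\rangle$, while $Z^k X^j|m\rangle = Z^k|m+j\rangle = e^{i 2\pi k(m+j)/d}|m+j\rangle$. The extra factor $e^{i 2\pi kj/d}$ on the right is cancelled by $e^{-i 2\pi jk/d}$, so both sides agree on every basis vector, and hence as operators.

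For $FX^jF^\dagger = Z^j$, I would expand $F^\dagger|m\rangle = \frac{1}{\sqrt{d}}\sum_k e^{-i 2\pi km/d}|k\rangle$, apply $X^j$ to relabel $|k\rangle \mapsto |k+j\rangle$, shift the summation index by $-j$, and then apply $F$. The double sum collapses via the orthogonality relation $\frac{1}{d}\sum_k e^{i 2\pi k(l-m)/d} = \delta_{l,m}$ to $e^{i 2\pi jm/d}|m\rangle = Z^j|m\rangle$, as required.

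For $FZ^jF^\dagger = X^j$, I would either repeat the analogous direct computation, or instead deduce it from the previous identity together with the observation that $F^2$ acts (up to a global phase) as $|k\rangle \mapsto |{-k}\rangle$ on the basis, which conjugates $X$ to its inverse and $Z$ to its inverse; combined with $FXF^\dagger = Z$ this yields the statement. No step is a real obstacle; the only care required is phase-convention bookkeeping in the exponents (signs of $2\pi/d$ in the definitions of $X$, $Z$, and $F$, and whether shifts are indexed as $+j$ or $-j$) so that the final signs match those in the statement of the proposition.
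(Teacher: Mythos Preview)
The paper states this proposition without proof, treating it as a standard fact about the Heisenberg--Weyl operators and the discrete Fourier transform; there is therefore no proof in the paper to compare against. Your direct computation on computational-basis vectors is the canonical way to establish these identities and is correct. Your closing caveat about phase-convention bookkeeping is well placed: with the paper's conventions ($X|k\rangle=|k{+}1\rangle$, $Z|k\rangle=e^{i2\pi k/d}|k\rangle$, $F|j\rangle=\tfrac{1}{\sqrt d}\sum_k e^{i2\pi jk/d}|k\rangle$) one in fact gets $FX^jF^\dagger=Z^j$ but $FZ^jF^\dagger=X^{-j}$, so the third identity as printed carries a sign that your computation would (correctly) expose rather than confirm.
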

\begin{definition}\label{def:Bellstates}
	Let $\A,\B$ be $d$-dimensional Hilbert spaces with computational bases~$\set{\ket{i}_A}_{0\leq i\leq d-1}$ and~$\set{\ket{i}_B}_{0\leq i\leq d-1}$, respectively. The set of Bell states in $\A\otimes\B$ is defined as
	\[\set{\ket{\phi^{j,k}}_{AB}\defeq\br{\rX_A^j\rZ_A^k\otimes\id}\ket{\phi}_{AB}:0\leq j, k\leq d-1} \enspace,\]
	where $\ket{\phi}_{AB}\defeq\frac{1}{\sqrt{d}}\sum_{i=0}^{d-1}\ket{i}_A\ket{i}_B$.
For~$j,k\in\mathbb{\rZ}$, we define $\ket{\phi^{j,k}}\defeq\ket{\phi^{j~\textsf{mod}~d, k~\textsf{mod}~d}}$.
\end{definition}

It is easy to see that $\ket{\phi^{j,k}}=\frac{1}{\sqrt{d}}\sum_{t=0}^{d-1}e^{\complexi \cdot 2\pi\frac{tk}{d}}\ket{t+j, t}$.

\begin{prop}\label{fac:bellbasis}
	The Bell states $\set{\ket{\phi^{j,k}}}_{0\leq j,k\leq d-1}$ form an orthonormal basis in $\A\otimes\B$.
\end{prop}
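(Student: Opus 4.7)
The plan is to prove the proposition by a direct computation using the explicit expansion of the Bell states stated just before the proposition, namely
\[
\ket{\phi^{j,k}} \;=\; \frac{1}{\sqrt{d}} \sum_{t=0}^{d-1} e^{\complexi \cdot 2\pi \frac{tk}{d}} \ket{t+j, t},
\]
where addition in the first register is understood modulo $d$. First I would observe that the family is indexed by pairs $(j,k) \in \set{0,\ldots,d-1}^2$, so there are exactly $d^2$ vectors in the claimed basis, which matches $\dim\br{\A \otimes \B} = d^2$. Hence it is enough to verify orthonormality; the spanning property then follows by a dimension count.

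For orthonormality, I would compute, for any $j,j',k,k' \in \set{0,\ldots,d-1}$,
\[
\braket{\phi^{j,k}}{\phi^{j',k'}} \;=\; \frac{1}{d} \sum_{t,s=0}^{d-1} e^{-\complexi \cdot 2\pi \frac{tk}{d}} \, e^{\complexi \cdot 2\pi \frac{sk'}{d}} \, \braket{t+j, t}{s+j', s}.
\]
The overlap factor equals $\delta_{t,s}\,\delta_{t+j,\, s+j'}$, which after substituting $t=s$ reduces to $\delta_{j,j'}$ (with equality in $\mathbb{Z}/d\mathbb{Z}$, which matches the convention in the definition). This collapses the double sum to
\[
\braket{\phi^{j,k}}{\phi^{j',k'}} \;=\; \frac{\delta_{j,j'}}{d} \sum_{t=0}^{d-1} e^{\complexi \cdot 2\pi \frac{t(k'-k)}{d}} \;=\; \delta_{j,j'}\,\delta_{k,k'},
\]
since the remaining geometric sum of $d$-th roots of unity equals $d$ when $k=k'$ and vanishes otherwise.

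Combining orthonormality with the cardinality count yields the proposition. I do not expect any real obstacle: the only subtlety is keeping the arithmetic modulo $d$ straight in the first register, which is already baked into the definition. As a cross-check, one could also derive the same identity more abstractly via the ``ricochet'' relation $\br{M \otimes \id}\ket{\phi} \cdot \br{N \otimes \id}\ket{\phi} = \frac{1}{d}\Tr\br{M^{*} N}$, combined with the fact from \Cref{fac:paulioperatorcommute} that the generalized Pauli operators are orthogonal in the Hilbert--Schmidt inner product; but the direct computation above is shorter and self-contained.
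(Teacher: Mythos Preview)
Your proof is correct. The paper itself states this proposition without proof (it is a standard fact about generalized Bell states), so there is nothing to compare against; your direct computation via the explicit expansion $\ket{\phi^{j,k}} = \frac{1}{\sqrt{d}}\sum_t e^{\complexi \cdot 2\pi tk/d}\ket{t+j,t}$ together with the dimension count is exactly the expected verification.
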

\begin{prop}\label{fac:uuepr}
	For any unitary operator $U$ on register $A$, it holds that $$(U\otimes\id)\ket{\phi}_{AB}=(\id\otimes U^T)\ket{\phi}_{AB}\enspace,$$
	where $U^T=\sum_{j,k}\bra{j}U\ket{k}\ketbratwo{k}{j}$. In particular, $(F\otimes\id)\ket{\phi}_{AB}=(\id\otimes F)\ket{\phi}_{AB}$.
\end{prop}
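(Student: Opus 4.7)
The plan is to verify the identity by a direct calculation in the computational basis, and then specialize to the Fourier transform. First, I would substitute $\ket{\phi}_{AB}=\frac{1}{\sqrt{d}}\sum_{i=0}^{d-1}\ket{i}_A\ket{i}_B$ into the left-hand side and insert a resolution of the identity $\sum_{k}\ketbra{k}$ on register $A$ to obtain $(U\otimes\id)\ket{\phi}_{AB}=\frac{1}{\sqrt{d}}\sum_{i,k}\bra{k}U\ket{i}\,\ket{k}_A\ket{i}_B$.

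Next, I would compute the right-hand side by applying the stated definition $U^T=\sum_{j,k}\bra{j}U\ket{k}\ketbratwo{k}{j}$ directly to $\ket{i}_B$, yielding $U^T\ket{i}_B=\sum_k\bra{i}U\ket{k}\ket{k}_B$, and therefore $(\id\otimes U^T)\ket{\phi}_{AB}=\frac{1}{\sqrt{d}}\sum_{i,k}\bra{i}U\ket{k}\,\ket{i}_A\ket{k}_B$. A swap of the summation indices $i\leftrightarrow k$ then matches this expression term-for-term with the one derived for the left-hand side, which establishes the general identity.

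For the ``in particular'' claim, I would simply note that the matrix entries $\bra{j}F\ket{k}=\frac{1}{\sqrt{d}}e^{\complexi\cdot 2\pi jk/d}$ are manifestly symmetric under $j\leftrightarrow k$, so $F^T=F$, and the general identity specializes to $(F\otimes\id)\ket{\phi}_{AB}=(\id\otimes F)\ket{\phi}_{AB}$.

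There is no real obstacle here, since this is the standard ``transpose trick'' or ricochet property of maximally entangled states. The only minor subtlety worth flagging is bookkeeping around the convention in the statement: $U^T$ is defined with bra and ket indices swapped (namely $\bra{j}U\ket{k}\ketbratwo{k}{j}$ rather than $\ketbratwo{j}{k}$), so that $\bra{k}U^T\ket{j}=\bra{j}U\ket{k}$. Keeping this convention straight is exactly what makes the relabeling step in paragraph two line up cleanly.
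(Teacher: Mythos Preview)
Your proof is correct; this is exactly the standard transpose-trick computation. The paper does not actually supply a proof of this proposition---it is stated as a basic fact in the preliminaries and used later (e.g., in the proof of Lemma~\ref{lem:quantumhash}) without justification---so there is nothing further to compare.
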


\subsection{Quantum Communication Model} 
	\label{sec:qucomm}

The definitions for the noiseless and noisy quantum communication models are copied from Ref.~\cite{BNTTU14}. We refer the reader there for a more formal definition of the noisy quantum communication model, as well as the relationship of the noiseless quantum communication model to well-studied quantum communication complexity models such a Yao's model and the Cleve-Buhrman model.

\subsubsection{Noiseless Communication Model}
	\label{sec:nslss}

In the \emph{noiseless quantum communication model\/} that we want to  
simulate, there are five quantum registers: the $A$ register held by 
Alice, the $B$ register held by Bob, the $C$ register, which is the 
communication register exchanged back-and-forth between Alice and Bob and 
initially held by Alice, the $E$ register held by a potential adversary Eve, and finally the $R$ register, a reference system which purifies the 
state of the $ABCE$ registers throughout the protocol. The initial state 
$\ket{\psi_\mathrm{init}}^{ABCER} \in \H (A \otimes B \otimes C \otimes E \otimes R)$ is chosen arbitrarily 
from the set of possible inputs, and is fixed at the outset 
of the protocol, but possibly unknown (totally or partially) to Alice and 
Bob. Note that to allow for composition of quantum protocols in an 
arbitrary environment, we consider arbitrary quantum states as input, 
which may be entangled with systems $RE$. A protocol $\Pi$ is then 
defined by the sequence of unitary operations $U_1, U_2, \cdots , U_{n + 
1}$, with $U_{i}$ for odd~$i$ known at least to Alice (or given to her 
in a black box) and acting on registers~$AC$, and $U_{i}$ for even~$i$ 
known at least to Bob (or given to him in a black box) and acting on
registers~$BC$. For simplicity, we assume that $n$ is even. We can
modify any protocol to satisfy this property, while increasing
the total cost of communication by at most one communication of the $C$ 
register. 
The unitary operators of protocol $\Pi$ can be assumed to be public information, known to Eve.
On a particular input state $\ket{\psi_\mathrm{init}}$, the 
protocol generates the final state $\ket{\psi_\mathrm{final}}^{ABCER} = 
U_{n + 1} \cdots U_1 \ket{\psi_\mathrm{init}}^{ABCER}$, for which at the 
end of the protocol the $A$ and $C$ registers are held by Alice, the $B$ 
register is held by Bob, and the $E$ register is held by Eve. The reference register $R$ is left untouched throughout the protocol.
The output of the protocol resides in systems $ABC$, i.e.,  $\Pi (\ket{\psi_\mathrm{init}}) =
\partrace{ER}{\kb{\psi_\mathrm{final}}{\psi_\mathrm{final}}^{ABCER}}$, and by  a slight
abuse of notation we also represent the induced quantum channel from $ABCE$ 
to $ABC$ simply by $\Pi$. This is depicted in Figure~\ref{fig:int_mod}.
Note that while the protocol only acts on $ABC$, we wish to maintain correlations with the reference system $R$, while we simply disregard
what happens on the $E$ system assumed to be in Eve's hand.
Since we consider local computation to be free, 
the sizes of $A$ and $B$ can be arbitrarily large, but still of finite 
size, say $m_A$ and $m_B$ qubits, respectively. 
Since we are interested in high communication rate, we do not want to restrict ourselves to
the case of a single-qubit communication register~$C$, since converting a general protocol  to one of this form can incur a factor of two overhead. We thus consider alternating protocols in which the register $C$ is of fixed size, say $d$ dimensions, and is exchanged back-and-forth. We believe that non-alternating protocols can also be simulated by adapting our techniques, but we leave this extension to future work.
Note that both the Yao and the
Cleve-Buhrman models of quantum communication complexity can be recast 
in this framework; see Ref.~\cite{BNTTU14}. 

\suppress{
		\begin{figure}
		\begin{overpic}[width=1\textwidth]{int_noiseless_prot_no_label.pdf}
		  \put(0,66.5){Reference}
		  \put(0,45){Alice}
		  \put(0,22){Bob}
		  \put(0,4.8){Eve}
		  \put(6,31){ $\ket{\psi_{\mathrm{init}}}$}
		  \put(15,66.5){\footnotesize$R$}
		  \put(15,4.8){\footnotesize$E$}
		  \put(15,54){\footnotesize$A$}
		  \put(15,48){\footnotesize$C$}
		  \put(15,13.7){\footnotesize$B$}
		  \put(22.1,49.5){\footnotesize$U_1$}
		  \put(26.2,54){\footnotesize$A$}
		  \put(26.2,48){\footnotesize$C$}
		  \put(33,15.5){\footnotesize$U_2$}
		  \put(37.2,17.4){\footnotesize$C$}
		  \put(37.2,13.7){\footnotesize$B$}
		  \put(44.2,49.5){\footnotesize$U_3$}
		  \put(48.2,54){\footnotesize$A$}
		  \put(48.2,48){\footnotesize$C$}
		  \put(55,33){\footnotesize$\cdots$}
		  \put(59.5,54){\footnotesize$A$}
		  \put(59.5,48){\footnotesize$C$}
		  \put(59.5,13.7){\footnotesize$B$}
		  \put(66.8,15.5){\footnotesize$U_{N}$}
		  \put(73.5,23){\footnotesize$C$}
		  \put(73.5,13.7){\footnotesize$B$}
		  \put(75.9,49.5){\footnotesize$U_{N+1}$}
		  \put(81.3,54){\footnotesize$A$}
		  \put(81.3,48){\footnotesize$C$}
		  \put(92,31){ $\ket{\psi_{\mathrm{final}}}$}
		\end{overpic}
		  \caption{Depiction of a quantum protocol in the noiseless communication model, adapted from the long version of~\cite[Figure 1]{Tou15}.}
		  \label{fig:int_mod}
		\end{figure}
}

\begin{figure}[ht]
	\begin{center}
\includegraphics[width=440pt]{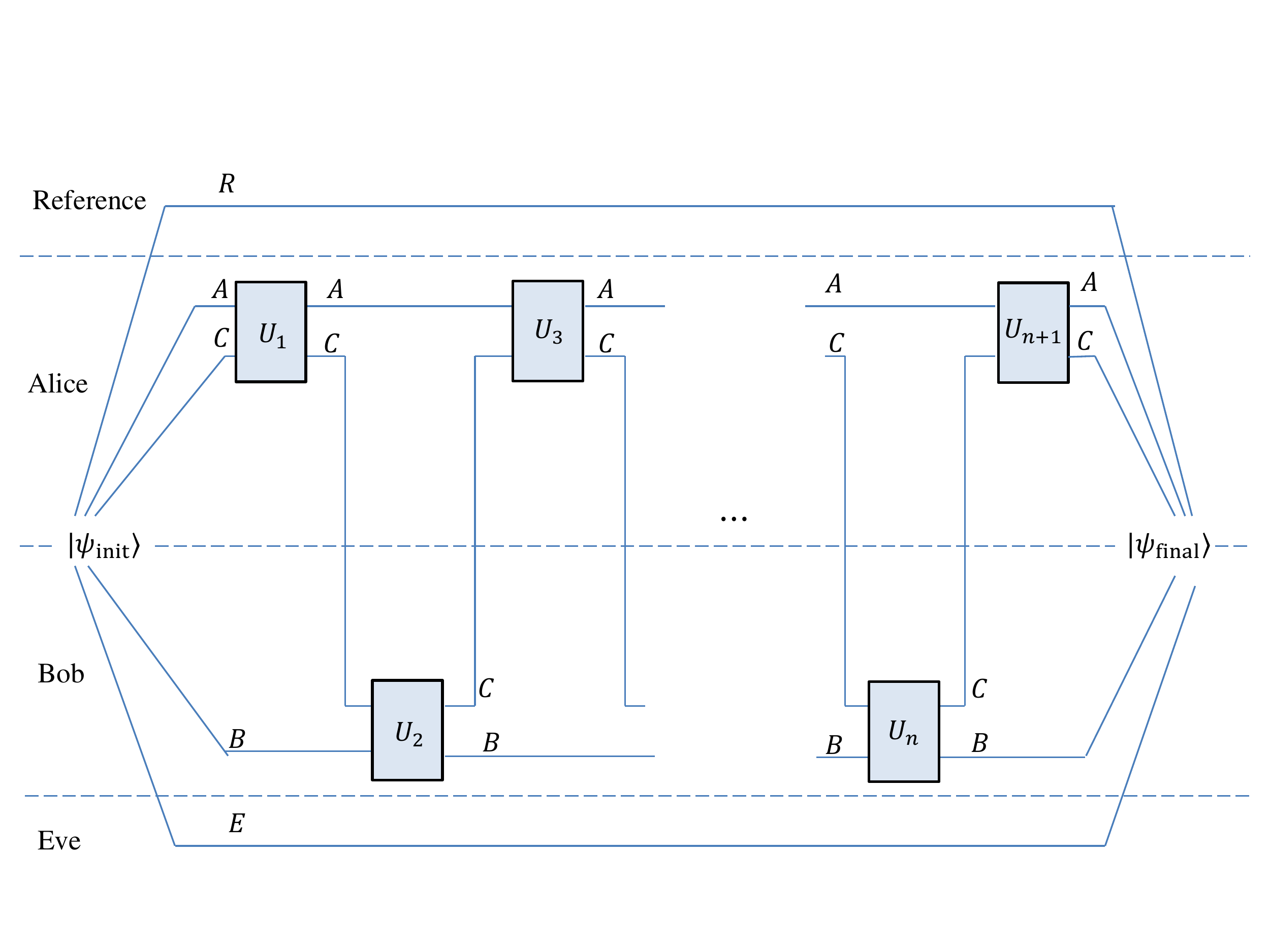}
		\vspace*{2ex}
		\caption{Depiction of a quantum protocol in the noiseless communication model.}
		\label{fig:int_mod}
	\end{center}
\end{figure}

We later embed length $n$ protocols into others of larger length 
$n^\prime > n$. To perform such \emph{noiseless protocol embedding\/}, we 
define some dummy registers $\tilde{A}$, $\tilde{B}$, $\tilde{C}$ isomorphic 
to $A$, $B$, $C$, respectively. $\tilde{A}$ and $\tilde{C}$  are part of 
Alice's scratch register and $\tilde{B}$ is part of Bob's scratch 
register. Then, for any isomorphic quantum registers $D$, $\tilde{D}$, let 
SWAP$_{D \leftrightarrow \tilde{D}}$ denote the unitary operation that 
swaps the $D, \tilde{D}$ registers. Recall that~$n$ is assumed to be
even. In a noiseless protocol embedding, 
for $i \in \{1, 2, \cdots n-1 \}$, we leave $U_i$ untouched. We replace 
$U_{n}$ by (SWAP$_{B \leftrightarrow \tilde{B}} U_{n})$ and $U_{n + 
1}$ by (SWAP$_{AC \leftrightarrow \tilde{A} \tilde{C}} U_{n + 1})$. 
Finally, for $i \in \{n+2, n+3, \cdots n^\prime + 1 \}$, we define $U_i = 
\rI$, the identity operator.
This embedding is important in the setting of interactive quantum coding for the following reasons.
First, adding these $U_i$ for $i > n$ makes the protocol well defined for $n^\prime +1$ steps.
Then,  swapping the important registers into the safe registers $\tilde{A}$, $\tilde{B}$, $\tilde{C}$
ensures that the important registers are never affected by noise arising after the first $n+1$ steps have been applied.
Hence, in our simulation, as long as we succeed in implementing the first $n+1$ steps without errors, the simulation will succeed since the
$\tilde{A}$, $\tilde{B}$, $\tilde{C}$ registers will then contain the output of the simulation, with no error acting on these registers.

\subsubsection{Noisy Communication Model} \label{sec:noisy_comm_model}

There are many possible models for noisy communication. For our main results, we focus on one in particular, 
analogous to the Yao model with no shared entanglement 
but noisy quantum communication, which we call the \emph{plain quantum model\/}.
In Section~\ref{sec:BKlarge}, we consider and define an alternative model.

For simplicity, we formally define in this 
section what we sometimes refer to as \emph{alternating\/} communication models, 
in which Alice and Bob take turns in transmitting the 
communication register to each other, and this is the model in which most of our protocols are 
defined. Our definitions easily adapt to somewhat more general models 
which we call \emph{oblivious\/} communication models, following Ref.~\cite{BR11}.
In these models, Alice and Bob do not necessarily 
transmit their messages in alternation, but nevertheless in a fixed order and
of fixed sizes
known to all (Alice, Bob and Eve) depending only on the round, and not on 
the particular input or the actions of Eve. 
Communication models with a dependence on inputs or 
actions of Eve are called \emph{adaptive\/} communication models.

\paragraph{Plain Quantum Model}


In the \emph{plain quantum model\/}, Alice has workspace $A^\prime$, Bob has workspace $B^\prime$, the adversary Eve has workspace $E^\prime$, and there is some quantum communication register $C^\prime$ of some fixed size $d^\prime$ dimensions (we will consider only $d^\prime = d$ in this work), exchanged back and forth between them $n^\prime$ times, passing through Eve's hand each time. Alice and Bob can perform arbitrary local processing between each transmission, whereas Eve's processing when the $C^\prime$ register passes through her hand is limited by the noise model as described below. The input registers $ABCE$ are shared between Alice ($AC$), Bob ($B$) and Eve ($E$) and the output registers $\tilde{A} \tilde{B} \tilde{C}$ are shared between Alice ($\tilde{A} \tilde{C}$) and Bob ($\tilde{B}$). The reference register $R$ containing the purification of the input is left untouched throughout. Alice and Bob also possess registers $C_\sA$ and $C_\sB$, respectively, acting as virtual communication register $C$ from the original protocol $\Pi$ of length $n$ to be simulated. The communication rate of the simulation is given by the ratio $\frac{n \log d}{n^\prime \log d^\prime}$.

We are interested in two models of errors, adversarial and random noise. In the \emph{adversarial\/} noise model, we are mainly interested in an adversary Eve with a bound $\delta n^\prime$ on the number of errors that she introduces on the quantum communication register $C^\prime$ that passes through her hand. The fraction $\delta$ of corrupted transmissions is called the error rate. More formally, an adversary in the quantum model with error rate bounded by~$\delta\in \Br{0,1}$ is specified by a sequence of instruments~$\N_1^{E^\prime C_1^\prime},\ldots,\N_{n^\prime}^{E^\prime C_{n^\prime}^\prime}$ acting on register $E^\prime$ of arbitrary dimension $d''$ and the communication register $C^\prime$ of dimension $d^\prime$ in protocols of length $n^\prime$. For any density operator~$\rho$ on~$\H( E^\prime \otimes C^{\prime \otimes {n^{\prime}}} )$, the action of such an adversary is 
\begin{equation}\label{eqn:noise-model-1}
	\N_1^{E^\prime C_1^\prime}\cdots\N_{n^\prime}^{E^\prime C_{n^\prime}^\prime} \br{\rho} = \sum_{i} G_i \rho G_i^\dagger \enspace,
\end{equation}
for~$i$ ranging over some finite set, subject to~$\sum_i G_i^\dagger G_i = \id^{E^\prime C^{\prime \otimes {n^{\prime}}}}$, where each $G_i$ is of the form
\begin{equation}\label{eqn:noise-model-2}
	G_i = \sum_{F\in \P_{d'',1}} \sum_{\substack{H\in \P_{d',n'} \\ \mathrm{wt(H)\leq \delta n^\prime}}} \alpha^i_{F,H} F^{E^\prime} \otimes H^{C^{\prime \otimes n^\prime}} \enspace.
\end{equation}
\suppress{and is assessed by requiring that there exists a representation of the global action of Eve on the $n^\prime$ quantum communication registers with Kraus operators of weight at most $\delta n^\prime$.}
In the random noise model, we consider $n^\prime$ independent and identically distributed uses of a noisy quantum channel acting on register $C^\prime$, half the time in each direction. Eve's workspace register $E^\prime$ (including her input register $E$) can be taken to be trivial in this noise model. Note that the adversarial noise model includes the random noise model as a special case.

For both noise models, we say that the simulation succeeds with error $\epsilon$ if for any input, the output in register $\tilde{A} \tilde{B} \tilde{C}$ corresponds to that of running protocol $\Pi$ on the same input, while also maintaining correlations with system $R$, up to error $\epsilon$ in trace distance.

Note that adversaries in the quantum model can inject fully quantum errors since the messages are quantum, in contrast to adversaries corrupting classical messages which are restricted to be modifications of classical symbols. On the other hand, for classical messages the adversary can read all the  messages without the risk of corrupting them, whereas in the quantum model, any attempt to ``read'' messages will result in an error in general on some quantum message.

\subsection{Entanglement distribution}\label{subsec:entanglement-dist}

In our algorithm in the plain quantum model, Alice and Bob need to use MESs as a resource in order to simulate the input protocol. To establish the shared MESs, one party creates the states locally and sends half of each MES to the other party using an appropriate error correcting code of distance~$4n\epsilon$, as described in Algorithm~\ref{algo:Robust Entanglement Distribution}.

\begin{algorithm}
	
	\Input{$\ell$ (desired number of MESs)}	
	
	$C\leftarrow$ Error Correcting Code with rate $1-\Theta(H(\epsilon))$ guaranteed by quantum Gilbert-Varshamov bound~\cite{FMa:2004}\;
	
	\If {$\mathrm{Alice}$}
	{
		Prepare $\ell$ MESs in registers $A,B'$ each holding half of every MES\;
		
		Transmit $C\br{B'}$ to Bob\;
	}
	\ElseIf {$\mathrm{Bob}$}
	{
		Receive $C'(B')$\;
		Decode $C'(B')$ into register $B$\;
	}
	
	\Return{\textup{\textbf{\textsf{Robust Entanglement Distribution}}}}\;
	\caption{\textbf{\textsf{Robust Entanglement Distribution}($\ell$)}}
	\label{algo:Robust Entanglement Distribution}
\end{algorithm}

\subsection{Hashing for string comparison}\label{subsec:hash}

We use randomized hashes to compare strings and catch disagreements probabilistically. The hash values can be viewed as summaries of the strings to be compared. A random bit string called the \emph{seed\/} is used to select a function from the family of hash functions. We say a \emph{hash collision\/} occurs when a hash function outputs the same value for two unequal strings. In this paper we use the following family of hash functions based on the $\eps$-biased probability spaces constructed in \cite{NaorNaor}.

\begin{lemma}[from \cite{NaorNaor}] \label{lem:hashes}
For any $l$, any alphabet $\Sigma$, and any probability $0<p<1$, there exist $s = \Theta(\log (l \log |\Sigma|) + \log \frac{1}{p})$, $o = \Theta(\log \frac{1}{p})$, and a simple function $h$, which given an $s$-bit uniformly random seed $S$ maps any string over $\Sigma$ of length at most $l$ into an $o$-bit output, such that the collision probability of any two $l$-symbol strings over $\Sigma$ is at most $p$. In short:
{%
$$\forall l,\Sigma,0<p<1: \\ 
\quad \exists s = \Theta(\log (l \log |\Sigma|) + \log \frac{1}{p}) \;,\; o = \Theta(\log \frac{1}{p}) \;,\; h: \{0,1\}^s \times \Sigma^{\leq l} \mapsto \{0,1\}^o\ \mathrm{s.t.}$$
$$\forall \stringx,\stringy \in \Sigma^{\leq l}, \stringx \neq \stringy, \strings \in \{0,1\}^s \ \mathrm{i.i.d.}\ \mathrm{Bernoulli}(1/2): \\ 
\qquad P[h_\strings(\stringx) = h_\strings(\stringy)] \leq p$$
}
\end{lemma}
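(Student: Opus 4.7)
The plan is to derive the lemma from Naor and Naor's explicit construction of small-bias sample spaces, combined with a standard Fourier-analytic collision bound. First, I would encode each string $x \in \Sigma^{\leq l}$ as a binary vector of length $L = \Theta(l \log |\Sigma|)$, prepending the length so that distinct inputs map to distinct vectors in $\{0,1\}^L$. This reduces the task to constructing a random linear map $h_S : \{0,1\}^L \to \{0,1\}^o$ such that for every nonzero $\Delta = x \oplus y$, $\Pr_S[h_S(\Delta) = 0] \leq p$.

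Next, I would invoke the Naor--Naor theorem to obtain, for any bias $\epsilon > 0$, an explicit $\epsilon$-biased probability space on $\{0,1\}^N$ with seed length $\Theta(\log N + \log(1/\epsilon))$. Setting $N = oL$, I parse a single sampled vector as $o$ blocks $r_1, \ldots, r_o \in \{0,1\}^L$ and define $h_S(x) = (\langle r_1, x\rangle, \ldots, \langle r_o, x\rangle) \bmod 2$. Choosing $o = \lceil \log(2/p) \rceil$ and $\epsilon = p/2$ gives seed length $\Theta(\log(oL) + \log(1/p)) = \Theta(\log(l \log|\Sigma|) + \log(1/p))$ and output length $\Theta(\log(1/p))$, matching the claimed parameters.

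The core calculation is the collision analysis. For any fixed $\Delta \neq 0$, I would expand the collision indicator via a Fourier expansion,
\[
\mathbf{1}[h_S(x) = h_S(y)] \;=\; \frac{1}{2^o} \sum_{T \subseteq [o]} (-1)^{\langle \bigoplus_{i \in T} r_i,\, \Delta\rangle},
\]
and then take expectations over $S$. The $T = \emptyset$ term contributes exactly $2^{-o}$. For each nonempty $T$, the exponent is the parity of a fixed nonempty subset of the $N$ seed-determined bits (namely the positions $(i-1)L + j$ with $i \in T$ and $\Delta_j = 1$), so the $\epsilon$-bias property bounds the corresponding expected character by $\epsilon$ in absolute value. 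Summing the $2^o - 1$ nontrivial contributions yields $\Pr_S[h_S(x) = h_S(y)] \leq 2^{-o} + \epsilon \leq p$.

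The main subtlety I expect lies in bookkeeping: verifying that both logarithmic terms, $\log(l \log|\Sigma|)$ and $\log(1/p)$, appear correctly inside the seed length once the parameters $o$ and $\epsilon$ are fixed, and checking that $h$ qualifies as a ``simple function'' because each output bit is a fixed XOR of input symbols that can be computed directly from $S$ via the Naor--Naor generator. The existence of the underlying $\epsilon$-biased sample space is taken as a black box from~\cite{NaorNaor}, so no further construction is needed.
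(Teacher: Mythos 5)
The paper states Lemma~\ref{lem:hashes} as an imported result citing Naor--Naor and gives no proof of its own; the only hint of the intended construction is a suppressed passage reproducing Haeupler's inner-product hash family seeded by a small-bias generator. Your reconstruction is exactly that argument and is correct: encoding inputs (with length prefix) into $\{0,1\}^L$, parsing an $\epsilon$-biased string of length $N = oL$ into $o$ inner-product seeds, and Fourier-expanding the collision indicator to bound the collision probability by $2^{-o} + \epsilon$ is the standard derivation, and your parameter choices $o = \lceil \log(2/p)\rceil$, $\epsilon = p/2$ give seed length $\Theta(\log(oL) + \log(1/\epsilon)) = \Theta(\log(l\log|\Sigma|) + \log(1/p))$ as claimed. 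This matches the paper's (implicit) route.
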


In our application, the hash family of Lemma~\ref{lem:hashes} is used to compare~$\Theta\br{n}$-bit strings, where $n$ is the length of the input protocol. Therefore, in the large alphabet setting, the collision probability can be chosen to be as low as~$p=1/\mathrm{poly}\br{n}$, while still allowing the hash values to be exchanged using only a constant number of symbols. In the teleportation-based model, where Alice and Bob have access to free pre-shared entanglement, they generate the seeds by measuring the MESs they share in the computational basis. In our simulation protocol in the plain quantum model, where Alice and Bob do not start with pre-shared entanglement, they use Algorithm~\ref{algo:Robust Entanglement Distribution} at the outset of the simulation to distribute the MESs they need for the simulation. A fraction of these MESs are measured by both parties to obtain the seeds. 

One advantage of generating the seeds in this way is that the seeds are unknown to the adversary. This is in contrast to the corresponding classical model with no pre-shared randomness, were the seeds need to be communicated over the classical channel and the adversary gets to know the seeds. The knowledge of the seeds enables the adversary to introduce errors which remain undetected with certainty. As a result, Haeupler~\cite{Haeupler:2014} adds another layer of hashing to his algorithm for the oblivious noise model to protect against fully adversarial noise, dropping the simulation rate from~$1-\Theta\br{\sqrt{\epsilon}}$ to~$1-\Theta\br{\sqrt{\epsilon \log \log 1/\epsilon}}$. 

\subsection{Extending randomness to pseudo-randomness}\label{subsec:extending-randomness}

In our simulation algorithm in the plain quantum model, Alice and Bob need to share a very long random string which they need to establish through communication. A direct approach would be for them to distribute enough MESs and measure them in the computational basis to obtain a uniformly random shared string. However, this would lead to a vanishing simulation rate. Instead, they distribute a much shorter  i.i.d. random bit string~$R'$ and using Lemma~\ref{lem:stretch} below stretch it to a pseudo-random string~$R$ of the desired length which is statistically indistinguishable from being independent. Before stating Lemma~\ref{lem:stretch}, we need the following definitions and propositions.

\begin{definition}
	Let $X$ be a random variable distributed over $\{0,1\}^n$ and $J\subseteq \Br{n}$ be a non-empty set. The \emph{bias\/} of $J$ with respect to distribution $X$, denoted $\mathrm{bias}_J\br{X}$, is defined as
	\[
	\mathrm{bias}_J\br{X} \defeq \left| \Pr\br{\sum_{i\in J}X_i=1}-\Pr\br{\sum_{i\in J}X_i=0} \right|\enspace,
	\]
	where the summation is mod $2$. For $J=\emptyset$, bias is defined to be zero, i.e., $\mathrm{bias}_{\emptyset} \br{X}=0$.
\end{definition}

\begin{definition}
	Let $\delta\in\Br{0,1}$. A distribution $X$ over $\{0,1\}^n$ is called a \emph{$\delta$-biased sample space\/} if $\mathrm{bias}_J\br{X}\leq \delta$, for all non-empty subsets $J\subseteq \Br{n}$.
\end{definition}

Intuitively, a small-bias random variable is statistically close to being uniformly distributed. The following lemma quantifies this statement.

\begin{prop} \label{prop:uniform-vs-delta-biased}
	Let $X$ be an arbitrary distribution over $\{0,1\}^n$ and let $U$ denote the uniform distribution over $\{0,1\}^n$. Then we have 
	\[
	\|X-U\|_2^2 \leq 2^{-n}\sum_{J\subseteq\Br{n}} \mathrm{bias}^2_J\br{X}\enspace.
	\]
    In particular, for $\delta$-biased $x$ we have~$\|X-U\|_2 \leq \delta$.
\end{prop}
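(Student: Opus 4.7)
I would use Fourier (Walsh-Hadamard) analysis on $\{0,1\}^n$. For each $J\subseteq[n]$, define the character $\chi_J:\{0,1\}^n\to\{-1,+1\}$ by $\chi_J(x)=(-1)^{\sum_{i\in J}x_i}$. These form an orthogonal basis for real-valued functions on $\{0,1\}^n$, satisfying $\sum_x \chi_J(x)\chi_{J'}(x)=2^n\delta_{J,J'}$. For any function $f:\{0,1\}^n\to\reals$, define $\hat f(J)\defeq\sum_x f(x)\chi_J(x)$; then the inverse transform $f(x)=2^{-n}\sum_J \hat f(J)\chi_J(x)$ and Parseval's identity $\sum_x f(x)^2 = 2^{-n}\sum_J \hat f(J)^2$ both hold.

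The plan is to apply Parseval to the signed measure $f\defeq X-U$. I first compute $\hat f(J)=\hat X(J)-\hat U(J)$ case by case. For $J=\emptyset$, both $X$ and $U$ are probability distributions, so $\hat X(\emptyset)=\hat U(\emptyset)=1$ and the term vanishes. For $J\neq\emptyset$, I have $\hat U(J)=0$ by direct computation (since $\sum_{x\in\{0,1\}^n}\chi_J(x)=0$), while
\[
\hat X(J) \;=\; \bigE_{x\sim X}\bigl[\chi_J(x)\bigr] \;=\; \Pr\Bigl[\sum_{i\in J}X_i=0\Bigr]-\Pr\Bigl[\sum_{i\in J}X_i=1\Bigr],
\]
whose absolute value is exactly $\mathrm{bias}_J(X)$ by definition. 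Therefore $|\hat f(J)|^2=\mathrm{bias}_J^2(X)$ for all $J$ (using the convention $\mathrm{bias}_\emptyset(X)=0$).

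Plugging into Parseval gives the claimed identity with equality:
\[
\|X-U\|_2^2 \;=\; \sum_x (X(x)-U(x))^2 \;=\; 2^{-n}\sum_{J\subseteq[n]} \hat f(J)^2 \;=\; 2^{-n}\sum_{J\subseteq[n]} \mathrm{bias}_J^2(X).
\]
For the ``in particular'' claim, if $X$ is $\delta$-biased then $\mathrm{bias}_J(X)\leq\delta$ for every non-empty $J$, so the right-hand side is bounded by $2^{-n}(2^n-1)\delta^2\leq \delta^2$, giving $\|X-U\|_2\leq\delta$.

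I do not expect any substantive obstacle: the proof is essentially a one-line application of Parseval once the Fourier coefficients of $X$ and $U$ are identified with the biases. The only mild care needed is bookkeeping the $J=\emptyset$ term (which is automatically zero since both are probability distributions and by the stated convention) and verifying the normalization constant $2^{-n}$ in Parseval, which matches the factor appearing in the statement.
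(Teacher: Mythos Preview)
Your proof is correct and is the standard Fourier/Parseval argument for this fact; in fact you establish the statement with equality, which is stronger than the inequality stated. The paper does not include its own proof of this proposition---it is stated as a known result in the preliminaries---so there is nothing to compare against, but your argument is exactly the canonical one.
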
	

We will make use of the following proposition providing an alternative characterization of the $L_1$-distance between two probability distributions.

\begin{prop} \label{prop:l1-distance}
	Let $p$ and $q$ be probability distributions over some (countable) set $\mathcal{Z}$, then 
	\[
	\|p-q\|_1 = 2 \sup_{A\subseteq \mathcal{Z}} \left|p(A)-q(A)\right|\enspace. 
	\]
\end{prop}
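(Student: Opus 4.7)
The plan is to prove both inequalities $\|p-q\|_1 \geq 2\sup_A |p(A)-q(A)|$ and $\|p-q\|_1 \leq 2\sup_A |p(A)-q(A)|$, and the key device is to partition $\mathcal{Z}$ according to the sign of $p(z)-q(z)$. Concretely, I will define
\[
P \defeq \{z\in \mathcal{Z} : p(z) \geq q(z)\}, \qquad N \defeq \mathcal{Z}\setminus P,
\]
so that $|p(z)-q(z)| = (p(z)-q(z))\cdot \mathbf{1}[z\in P] + (q(z)-p(z))\cdot \mathbf{1}[z\in N]$. Splitting the $\ell_1$ sum along this partition gives $\|p-q\|_1 = (p(P)-q(P)) + (q(N)-p(N))$.

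Next I will exploit the fact that $p$ and $q$ are probability distributions, so $p(P)+p(N)=q(P)+q(N)=1$, which forces $p(P)-q(P) = q(N)-p(N)$. Substituting back yields $\|p-q\|_1 = 2(p(P)-q(P))$, which already witnesses a set $A=P$ achieving $|p(A)-q(A)| = \tfrac{1}{2}\|p-q\|_1$, establishing the lower bound $\sup_A |p(A)-q(A)| \geq \tfrac{1}{2}\|p-q\|_1$.

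For the matching upper bound, I will take an arbitrary $A\subseteq \mathcal{Z}$ and bound $p(A)-q(A)$ by restricting attention to the sub-sum where $p-q$ is positive: dropping the summands in $A\cap N$ and including the extra summands in $P\setminus A$ can only increase the sum, so $p(A)-q(A) \leq p(P)-q(P) = \tfrac{1}{2}\|p-q\|_1$. The symmetric argument applied to $q-p$ on the set $N$ yields $q(A)-p(A) \leq \tfrac{1}{2}\|p-q\|_1$, hence $|p(A)-q(A)| \leq \tfrac{1}{2}\|p-q\|_1$ uniformly in $A$, which gives $\sup_A |p(A)-q(A)| \leq \tfrac{1}{2}\|p-q\|_1$.

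No step looks like a genuine obstacle; the only thing to be slightly careful about is that $\mathcal{Z}$ is only assumed countable, so I should note that the rearrangement of the sum into the contributions from $P$ and $N$ is justified by absolute convergence (both $p$ and $q$ are nonnegative and sum to $1$, hence $\sum_z|p(z)-q(z)|\leq 2$). Combining the two inequalities gives the claimed identity.
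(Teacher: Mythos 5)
Your proof is correct and complete; the paper itself states this proposition without proof, as it is a standard fact about total variation distance. Your argument — split $\mathcal{Z}$ into $P=\{p\geq q\}$ and its complement, use normalization to show the two halves of the $\ell_1$ sum are equal, and note the supremum is attained at $A=P$ — is the canonical proof, and your remark about absolute convergence justifying the rearrangement over a countable set is a correct and appropriate detail.
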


We use the following lemma in our algorithms.

\begin{lemma}[\cite{NaorNaor}]\label{lem:stretch}
	For every $\delta\in(0,1)$, there exists a deterministic algorithm which given $O\br{\log n+\log\frac{1}{\delta}}$ uniformly random bits outputs a $\delta$-biased pseudo-random string of $n$ bits. Any such $\delta$-biased string is also $\epsilon$-statistically close to being $k$-wise independent for $\epsilon=\delta^{\Theta\br{1}}$ and $k=\Theta\br{\log \frac{1}{\delta}}$.
\end{lemma}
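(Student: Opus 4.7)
The plan is to prove the two parts of the lemma separately: first, construct an explicit generator producing a $\delta$-biased distribution from a short seed; second, deduce via the $L_2$ bound in Proposition~\ref{prop:uniform-vs-delta-biased} that $\delta$-biased distributions are statistically close, on every small window of coordinates, to the uniform distribution (hence to a $k$-wise independent distribution). Both steps are classical, but the second requires careful choice of constants so that the $L_2$-to-$L_1$ conversion loss is absorbed.

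For the generator, I would use the standard powering construction over $\mathrm{GF}\br{2^m}$ with $m=\left\lceil\log_2\br{n/\delta}\right\rceil+1=\Theta\br{\log n+\log\br{1/\delta}}$. Using the $2m$ seed bits, sample $\alpha,x\in\mathrm{GF}\br{2^m}$ uniformly and independently, fix any $\mathrm{GF}\br{2}$-basis of $\mathrm{GF}\br{2^m}$ to define an inner product $\langle\cdot,\cdot\rangle\colon\mathrm{GF}\br{2^m}\times\mathrm{GF}\br{2^m}\to\mathrm{GF}\br{2}$, and output $Y\in\set{0,1}^n$ with $Y_i\defeq\langle x,\alpha^i\rangle$. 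To bound $\mathrm{bias}_J\br{Y}$ for any nonempty $J\subseteq\Br{n}$, note that $\sum_{i\in J}Y_i=\langle x,\beta_J\rangle$ for $\beta_J\defeq\sum_{i\in J}\alpha^i\in\mathrm{GF}\br{2^m}$. Conditioned on $\beta_J\neq 0$, the map $x\mapsto\langle x,\beta_J\rangle$ is a nonzero $\mathrm{GF}\br{2}$-linear functional, so over the uniform choice of $x$ the sum is an unbiased bit and contributes zero bias. The event $\beta_J=0$ means $\alpha$ is a root of $p_J\br{z}\defeq\sum_{i\in J}z^i$, a nonzero polynomial of degree at most $n$, which has at most $n$ roots in $\mathrm{GF}\br{2^m}$; hence this event has probability at most $n/2^m\leq\delta$. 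Combining the two cases gives $\mathrm{bias}_J\br{Y}\leq\delta$, as required.

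For the second part, let $X$ be any $\delta$-biased distribution over $\set{0,1}^n$ and fix a subset $S\subseteq\Br{n}$ of size $k$. The marginal $X|_S$ is again $\delta$-biased on $\set{0,1}^k$, because for every nonempty $J\subseteq S$ the bias $\mathrm{bias}_J\br{X|_S}$ equals $\mathrm{bias}_J\br{X}\leq\delta$. Applying Proposition~\ref{prop:uniform-vs-delta-biased} to $X|_S$ yields
\[
\twonorm{X|_S-U_k}^{2}\leq 2^{-k}\sum_{J\subseteq\Br{k}}\mathrm{bias}_J^{2}\br{X|_S}\leq \delta^{2}\,,
\]
so $\twonorm{X|_S-U_k}\leq\delta$. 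Since $X|_S-U_k$ is supported on a space of $2^k$ points, Cauchy--Schwarz converts this to $\onenorm{X|_S-U_k}\leq 2^{k/2}\twonorm{X|_S-U_k}\leq 2^{k/2}\delta$. Choosing $k=c\log\br{1/\delta}$ for a fixed constant $c<2$ (e.g., $c=1$) gives $\onenorm{X|_S-U_k}\leq\delta^{1-c/2}=\delta^{\Theta\br{1}}$, uniformly over $S$. By Proposition~\ref{prop:l1-distance}, this means that on every window of $k$ coordinates, $X$ is $\delta^{\Theta\br{1}}$-close in total variation to the uniform distribution, which is exactly the claim that $X$ is $\epsilon$-statistically close to being $k$-wise independent with $\epsilon=\delta^{\Theta\br{1}}$ and $k=\Theta\br{\log\br{1/\delta}}$.

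The main obstacle, and really the only delicate point, is calibrating the constants: one needs the Cauchy--Schwarz blow-up $2^{k/2}$ to remain a subpolynomial loss compared to $\delta$, which forces $k=c\log\br{1/\delta}$ with $c<2$, in turn forcing the generator to be $\delta$-biased rather than merely $\mathrm{poly}\br{\delta}$-biased; this is why the seed length must depend on $\log\br{1/\delta}$ rather than a weaker quantity. The polynomial-counting bias argument for the powering construction is robust and extends essentially unchanged to any power-of-two alphabet, which is the form in which the lemma will be invoked in the simulation protocol.
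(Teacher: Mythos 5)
The paper attributes this lemma to \cite{NaorNaor} and does not give its own proof, so there is nothing internal to compare against; this is a review of your derivation on its own terms. Both halves of your argument are correct and follow the standard route. For the generator, the powering construction over $\mathrm{GF}(2^m)$ with seed $(\alpha,x)$ and $m=\Theta\br{\log n+\log\frac{1}{\delta}}$, together with the root-counting argument (the nonzero polynomial $p_J$ of degree at most $n$ has at most $n$ roots, so $\Pr[\beta_J=0]\leq n/2^m\leq\delta$, and conditioned on $\beta_J\neq 0$ the bit $\langle x,\beta_J\rangle$ is unbiased), is exactly the classical construction and your bias accounting is right. For the second half, the chain of observations — any restriction $X|_S$ to $|S|=k$ coordinates is itself $\delta$-biased because $\mathrm{bias}_J(X|_S)=\mathrm{bias}_J(X)$ for $J\subseteq S$; Proposition~\ref{prop:uniform-vs-delta-biased} applied on the $k$-bit window gives $\| X|_S-U_k\|_2\leq\delta$; Cauchy--Schwarz over $2^k$ atoms gives $\| X|_S-U_k\|_1\leq 2^{k/2}\delta$; and choosing $k=c\log\frac{1}{\delta}$ with $c<2$ gives $\epsilon=\delta^{1-c/2}=\delta^{\Theta(1)}$ — is also correct.

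One point worth being explicit about: the phrase ``$\epsilon$-statistically close to being $k$-wise independent'' admits two readings — (a) every $k$-coordinate marginal is within statistical distance $\epsilon$ of uniform, versus (b) the whole distribution is within statistical distance $\epsilon$ of some genuinely $k$-wise independent distribution. You prove (a). That is the notion that \cite{NaorNaor} establishes with these parameters, and it is the only tenable reading here: passing from (a) to (b) via the known reductions costs an extra $n^{\Theta(k)}$ factor, which would make the bound vacuous at the parameter settings in which the lemma is later invoked ($\delta=2^{-\Theta(n\sqrt{\epsilon})}$, $k=\Theta(n\sqrt{\epsilon})$). Since the downstream use of this lemma in Lemma~\ref{lem:QH collisions} asserts that ``all hashing steps are statistically close to being fully independent'' and then applies a Chernoff bound, it would be cleaner to state that what is actually being used is a concentration inequality for almost-$k$-wise-independent Bernoulli variables (a standard companion fact), not literal closeness to full independence.
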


\suppress{
We use the same hash functions as in~\cite{Haeupler:2014}.

\begin{definition}~\cite{Haeupler:2014} (\textbf{Inner Product Hash Function})\label{def:innerproducthash}
	For any input of length $s$ and any output length $o$, the inner product hash function $h_S\br{\cdot}$ is defined as follows: for a given binary seed $S$ of length at least $2os$ it takes any binary input $X$ of length $\ell\leq s$, concatenates this input with its length $\tilde{X}=\br{X,\abs{X}}$ to form a string of length $\tilde{\ell}\leq2s$ and the outputs the $o$ inner products $\left\langle \tilde{X}, S[i\cdot 2s+1, i\cdot 2s+\tilde{\ell}] \right\rangle$ for every $i\in[0,o-1]$.
\end{definition}
\begin{lemma} ~\cite{Haeupler:2014}
	Let $X\neq Y$ be a pair of binary strings of length $s$ and $h$ be the inner product hash function given in Definition~\ref{def:innerproducthash} for input length $s$ and output length $o$. Suppose $S$ is seed string of length at least $n\cdot 2os$ sampled independently from $XY$. Then the collision probability $\prob{h_S\br{X}=h_S\br{Y}}=2^{-o}$ if $S$ is sampled from the uniform distribution. Furthermore, if the seed $S$ is sampled from a $\delta$-biased distribution the collision probability is at most $2^{-o}+\delta$.
\end{lemma}

\begin{lemma}~\cite{Haeupler:2014}\label{fac:hash}
	Consider $n$ pairs of binary strings $\br{X_1,Y_1},\ldots,\br{X_n,Y_n}$ where each string is of length at most $s$, and suppose $h$ is the inner product hash function for input length $s$ and output length $o$. Suppose furthermore that $S=\br{S_1,\ldots, S_n}$ is a random seed string of length at least $n\cdot 2os$ which is sampled independently of $XY$. Then the output distribution $\br{h_{S_1}\br{X_1}-h_{S_1}\br{Y_1},\ldots,h_{S_n}\br{X_n}-h_{S_n}\br{Y_n}}$ for a $S$ sampled from a $\delta$-biased distribution is $\delta$-statistically close to the output distribution for a uniformly sampled $S$ for which each $x_i$ is equal to $0$ if $X_i=Y_i$ and independently uniformly random, i.e., $\prob{x_i=0}=2^{-o}$, otherwise.
\end{lemma}

}

\subsection{Protocols over qudits}\label{subsec:protocoloverqudits}

In this section, we revisit two quantum communication protocols, both of which are essential to our simulation algorithms and analyze the effect of noise on these protocols.

\subsubsection{Quantum teleportation over noisy channels}

The protocol given here is an extension of quantum teleportation to qudits. Readers may refer to Chapter 6 in~\cite{Wilde11} for more details.

\begin{definition}\label{def:teleportation}\textbf{Quantum teleportation protocol}
	
	Alice possesses an arbitrary $d$-dimensional qudit in state $\ket{\psi}_{A}$, which she wishes to communicate to Bob. They share an MES in the state $\ket{\phi}_{A_1B_1}$.
\begin{enumerate}
	\item  Alice performs a measurement on registers $AA_1$ with respect to the Bell basis $\set{\ket{\phi^{j,k}}}_{j,k}$.
	\item She transmits the measurement outcome $\br{j,k}$ to Bob.
	\item Bob applies the unitary transformation $\rZ_{B_1}^k\rX_{B_1}^j$ on his state to recover $\ket{\psi}$.
	
\end{enumerate}
\end{definition}

\vspace{3mm}

In the rest of the paper the measurements implemented in Definition~\ref{def:teleportation} are referred to as the {\em teleportation measurements} and the receiver's unitary transformation to recover the target state is referred to as {\em teleportation decoding operation}.

If Bob receives $\br{j',k'}$ due to a corruption on Alice's message, the state he gets after decryption will be the following:
\begin{equation}\label{eqn:corruptedteleportation}
	\rZ_B^{k'}\rX_B^{j'}\rX_B^{d-j}\rZ_B^{d-k}\ket{\psi}=e^{\complexi \cdot \frac{2\pi}{d}(j'-j)k'} \rX^{j'-j}\rZ^{k'-k}\ket{\psi}\enspace.
\end{equation}

\subsubsection{Quantum Vernam cipher over noisy qudit channels}\label{sec:qvc}
In this section, we revisit {\em quantum Vernam cipher} (QVC) introduced by Leung~\cite{Leung:2002}, which is a quantum analog of Vernam cipher (one-time-pad). For a unitary operation $U$, the controlled gate $\control{U}$ is defined as
\[\br{\control{U}}_{AB}\ket{j}_A\ket{k}_B\defeq\ket{j}U^j\ket{k} \enspace.\]

The extension of quantum Vernam cipher to qudit systems goes as follows.
\begin{definition}\label{def:Quantum Vernam Cipher}\textbf{Quantum Vernam cipher}
		
		Alice possesses an arbitrary $d$-dimensional qudit in state $\ket{\psi}_{A}$, which she wishes to communicate to Bob. They share an MES pair in the state  $\ket{\phi}_{A_1B_1}\ket{\phi}_{A_2B_2}$, with Alice and Bob holding registers $A_1A_2$ and $B_1B_2$, respectively.
	\begin{enumerate}
		\item Alice applies the unitary transformation $\br{\control{\rZ}}_{A_2A}\br{\control{\rX}}_{A_1A}$.
		\item She transmits the register $A$ to Bob.
		\item Bob applies the unitary transformation $\br{\control{\rX^{-1}}}_{B_1B}\br{\control{\rZ^{-1}}}_{B_2B}$.		
	\end{enumerate}
\end{definition}

\vspace{3mm}

Quantum Vernam cipher uses entanglement as the key to encrypt quantum information sent through an insecure quantum channel. In sharp contrast with the classical Vernam cipher, the quantum key can be recycled securely. Note that if no error occurs on Alice's message, then Bob recovers the state $\ket{\psi}$ perfectly, and at the end of the protocol the MES pair remain intact. The scheme detects and corrects for arbitrary transmission errors, and it only requires local operations and classical communication between the sender and the receiver.

\begin{figure}[ht]
	\begin{center}
\includegraphics[width=250pt]{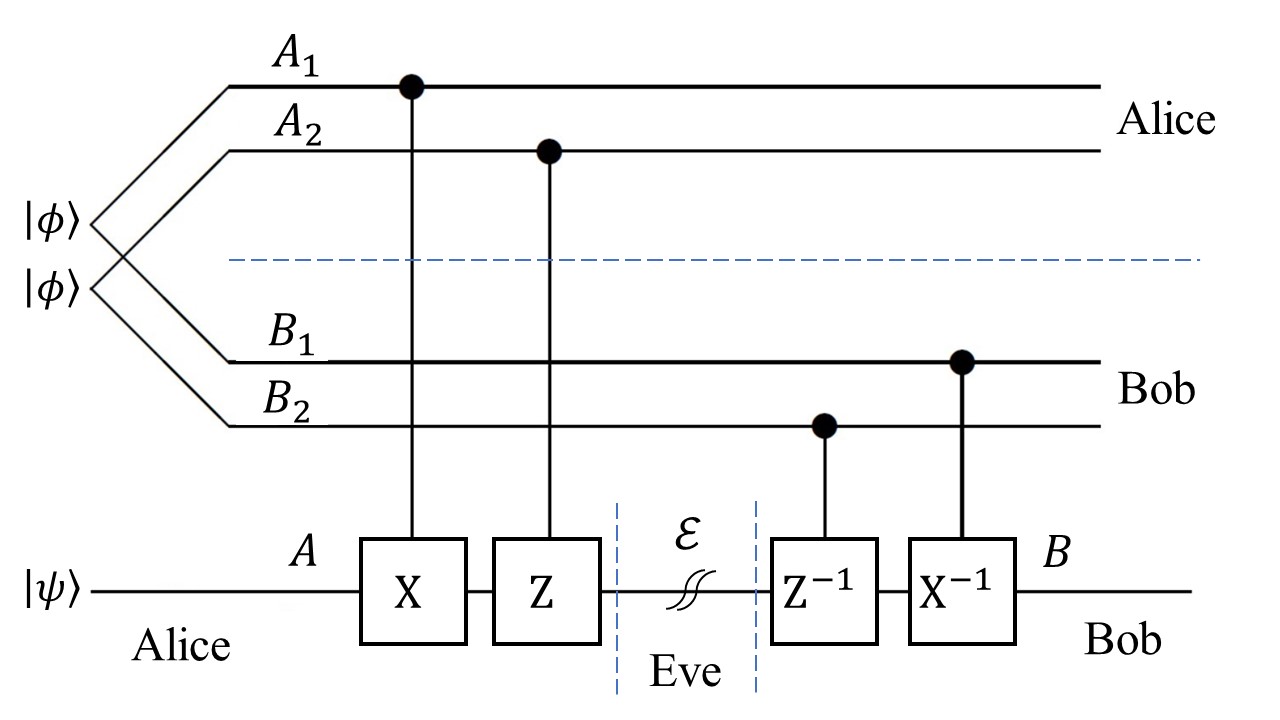}
		\vspace*{2ex}
		\caption{sending one qudit through quantum channel $\mathcal{E}$ using quantum Vernam cipher.}
		\label{fig:scheme1}
	\end{center}
\end{figure}

In particular, if Alice's message is corrupted by the Pauli error $\rX^j\rZ^k$, the joint state after Bob's decryption is
	\begin{eqnarray} \label{QVC with error}
	&&\br{\control{\rX^{-1}}}_{B_1A}\br{\control{\rZ^{-1}}}_{B_2A}\rX^j\rZ^k\br{\control{\rZ}}_{A_2A}\br{\control{\rX}}_{A_1A}\ket{\phi}_{A_1B_1}\ket{\phi}_{A_2B_2}\ket{\psi}_{A}\nonumber\\
	&=&\frac{1}{d}\sum_{t,t'=0}^{d-1}\br{\control{\rX^{-1}}}_{B_1A}\br{\control{\rZ^{-1}}}_{B_2A}\rX^j\rZ^k\br{\control{\rZ}}_{A_2A}\br{\control{\rX}}_{A_1A} \ket{t}_{A_1}\ket{t}_{B_1}\ket{t'}_{A_2}\ket{t'}_{B_2}\ket{\psi}_{A}  \nonumber\\
	&=&\frac{1}{d}\sum_{t,t'=0}^{d-1}\ket{t}_{A_1}\ket{t}_{B_1}\ket{t'}_{A_2}\ket{t'}_{B_2}\rX^{-t}\rZ^{-t'}\rX^j\rZ^k\rZ^{t'}\rX^t\ket{\psi}_{A}\nonumber\\
	&=&\frac{1}{d}\sum_{t,t'=0}^{d-1}e^{\complexi \cdot \frac{2\pi}{d} (kt-jt')}\ket{t}_{A_1}\ket{t}_{B_1}\ket{t'}_{A_2}\ket{t'}_{B_2}\rX^j\rZ^k\ket{\psi}\nonumber\\
	&=&\ket{\phi^{0,k}}_{A_1B_1}\ket{\phi^{0,-j}}_{A_2B_2}\otimes \rX^j\rZ^k\ket{\psi}.\label{eq:messagecorrupt}
	\end{eqnarray}

Note that by Equation \eqref{eq:messagecorrupt}, there is a one-to-one correspondence between the Pauli errors and the state of the maximally entangled pair. An $\rX^{j}$ error on the cipher-text is reflected in the state of the second MES as a $\rZ^{-j}$ error and a $\rZ^{k}$ error on the cipher-text is reflected in the state of the first MES as a $\rZ^{k}$ error. Note that for every integer $s$ we have
$$\br{F\otimes F^{\dagger}} \ket{\phi^{0,s}} \quad=\quad \br{F\rZ^{s}\otimes F^{\dagger}}\ket{\phi} \quad=\quad \br{F\rZ^{s}F^{\dagger}\otimes\id}\ket{\phi} \quad=\quad \br{\rX^{s}\otimes \id}\ket{\phi}.$$
Therefore, in order to extract the error syndrome, it suffices for Alice and Bob to apply $F$ and $F^{\dagger}$, respectively, on their marginals of the MESs and measure them in the computational basis. By comparing their measurement outcomes they can determine the Pauli error.

When quantum Vernam cipher is used for communication of multiple messages, it is possible to detect errors without disturbing the state of the MES pairs at the cost of an additional fresh MES. This error detection procedure allows for recycling of MESs which is crucial in order to achieve a high communication rate, as explained in Section~\ref{sec:Qhashing}. Here we describe a simplified version of the detection procedure. First we need the following two lemma.

\begin{prop}\label{lem:cnotbell}
	It holds that
	\[\br{\control{\rX}}_{A_1A_2} \cdot \br{\control{\rX} }_{B_1B_2}\ket{\phi^{j_1,k_1}}_{A_1B_1}\ket{\phi^{j_2,k_2}}_{A_2B_2}=\ket{\phi^{j_1,k_1-k_2}}_{A_1B_1}\ket{\phi^{j_1+j_2,k_2}}_{A_2B_2}.\]
	In particular,
	\[\br{\control{\rX}}_{A_1A_2} \cdot \br{ \control{\rX} }_{B_1B_2}\ket{\phi^{0,k_1}}_{A_1B_1}\ket{\phi^{0,k_2}}_{A_2B_2}=\ket{\phi^{0,k_1-k_2}}_{A_1B_1}\ket{\phi^{0,k_2}}_{A_2B_2}.\]
\end{prop}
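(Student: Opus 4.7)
The plan is to compute the left-hand side by direct substitution of the explicit expansion
\[
\ket{\phi^{j,k}}_{XY} \;=\; \frac{1}{\sqrt{d}}\sum_{t=0}^{d-1} e^{\complexi \cdot 2\pi \frac{tk}{d}} \ket{t+j}_X\ket{t}_Y,
\]
which is noted in the paragraph right after \Cref{def:Bellstates}, and then match the result term-by-term with the explicit expansion of the target state on the right-hand side. I would carry out the calculation in three short steps.

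First, I would write
\[
\ket{\phi^{j_1,k_1}}_{A_1B_1}\ket{\phi^{j_2,k_2}}_{A_2B_2}
\;=\; \frac{1}{d}\sum_{t,s=0}^{d-1} e^{\complexi \cdot 2\pi \frac{tk_1+sk_2}{d}}
\ket{t+j_1}_{A_1}\ket{t}_{B_1}\ket{s+j_2}_{A_2}\ket{s}_{B_2},
\]
and then apply the two controlled-$\rX$ gates. The gate $\br{\control{\rX}}_{A_1 A_2}$ sends $\ket{t+j_1}_{A_1}\ket{s+j_2}_{A_2}$ to $\ket{t+j_1}_{A_1}\ket{s+j_2+t+j_1}_{A_2}$ (all arithmetic mod $d$), while $\br{\control{\rX}}_{B_1 B_2}$ sends $\ket{t}_{B_1}\ket{s}_{B_2}$ to $\ket{t}_{B_1}\ket{s+t}_{B_2}$.

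Second, I would perform the change of summation variable $s' \defeq s+t \pmod d$, which leaves the range of the double sum unchanged and rewrites the phase as $t(k_1-k_2) + s'k_2$. The sum then factorizes cleanly:
\[
\frac{1}{d}\sum_{t,s'} e^{\complexi \cdot 2\pi \frac{t(k_1-k_2)+s'k_2}{d}}\,
\ket{t+j_1}_{A_1}\ket{t}_{B_1}\ket{s'+j_1+j_2}_{A_2}\ket{s'}_{B_2},
\]
which splits into the tensor product of the two single-pair sums. Identifying each sum with the explicit form of a Bell state gives exactly $\ket{\phi^{j_1,k_1-k_2}}_{A_1B_1}\ket{\phi^{j_1+j_2,k_2}}_{A_2B_2}$, proving the main identity. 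The ``in particular'' case follows by setting $j_1 = j_2 = 0$.

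I do not anticipate any obstacle: the proof is a one-page direct computation, and the only mildly subtle step is the index shift $s \mapsto s' = s+t$, which relies on the fact that arithmetic is modulo $d$ so the sum is invariant under translation. Alternatively, one could give a more abstract proof by checking the commutation identity $(\control{\rX})(\rX^j \rZ^k \otimes \id)(\control{\rX})^{-1}$ on each factor and using \Cref{fac:paulioperatorcommute}, but the direct expansion is shorter and more transparent, so that is what I would write up.
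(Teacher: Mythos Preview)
Your proposal is correct and follows essentially the same approach as the paper: expand both Bell states in the computational basis, apply the two controlled-$\rX$ gates term by term, and identify the result with the product Bell state on the right-hand side. The only difference is cosmetic—you make the substitution $s' = s+t$ explicit, whereas the paper jumps directly from the post-gate expansion to the final tensor product.
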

\begin{proof}
	\begin{eqnarray*}
	&&\br{\control{\rX}}_{A_1A_2}\cdot \br{\control{\rX}}_{B_1B_2}\ket{\phi^{j_1,k_1}}_{A_1B_1}\ket{\phi^{j_2,k_2}}_{A_2B_2}\\&=&\frac{1}{d}\sum_{t_1,t_2=0}^{d-1}\br{\control{\rX}}_{A_1A_2}\cdot \br{\control{\rX}}_{B_1B_2}e^{\complexi \cdot \frac{2\pi}{d} (t_1k_1+t_2k_2) }\ket{t_1+j_1}_{A_1}\ket{t_1}_{B_1}\ket{t_2+j_2}_{A_2}\ket{t_2}_{B_2}\\
	&=&\frac{1}{d}\sum_{t_1,t_2=0}^{d-1}e^{\complexi \cdot \frac{2\pi}{d}(t_1k_1+t_2k_2)}\ket{t_1+j_1}_{A_1}\ket{t_1}_{B_1}\ket{t_2+j_2+t_1+j_1}_{A_2}\ket{t_2+t_1}_{B_2}\\
	&=&\ket{\phi^{j_1,k_1-k_2}}_{A_1B_1}\ket{\phi^{j_1+j_2,k_2}}_{A_2B_2}.
	\end{eqnarray*}
\end{proof}

\suppress{
\begin{lemma}\label{lem:hashorder}
  Given registers $A_0,A_1,A_2,B_0,B_1,B_2$, the operators $$\br{\control{\rX}}_{A_0A_1}, \br{\control{\rX}}_{B_0B_1}, \br{\control{\rX}}_{A_0A_2},
  \br{\control{\rX}}_{B_0B_2}$$
	commute with each other.
\end{lemma}
\begin{proof}
	$\br{\control{\rX}}_{A_0A_1}$ commutes with $\br{\control{\rX}}_{B_0B_1}$ and $\br{\control{\rX}}_{B_0B_2}$ as they act on disjoint registers. For the same reason, $\br{\control{\rX}}_{A_0A_2}$ commutes with $\br{\control{\rX}}_{B_0B_1}$ and $\br{\control{\rX}}_{B_0B_2}$. Thus, it suffices to show that $\br{\control{\rX}}_{A_0A_1}$ commute with $\br{\control{\rX}}_{A_0A_2}$.
	
	Let $\ket{a_0},\ket{a_1},\ket{a_2}$ be an element of the computational basis in the registers $A_0,A_1,A_2$, respectively.
	Then
	\begin{eqnarray*}
		&&\br{\control{\rX}}_{A_0A_1}\cdot\br{\control{\rX}}_{A_0A_2}\ket{a_0,a_1,a_2}\\
		&=&\br{\control{\rX}}_{A_0A_2}\cdot\br{\control{\rX}}_{A_0A_1}\ket{a_0,a_1,a_2}\\
		&=&\ket{a_0, \, a_0{+}a_1 \bmod d, \, a_0{+}a_2 \bmod d}.
	\end{eqnarray*}
\end{proof}
}

Suppose that Alice and Bob start with $m$ copies of the MES $\ket{\phi}$ and use them in pairs to communicate messages using QVC over a noisy channel. By Equation~\eqref{eq:messagecorrupt} all the MESs remain in $\textsf{span}\set{\ket{\phi^{0,k}}:0\leq k\leq d-1}$. This invariance is crucial to the correctness of our simulation. Let $\ket{\phi^{0,k_i}}_{A_iB_i}$ be the state of the $i$-th MES after the communication is done. In order to detect errors, Alice and Bob use an additional MES $\ket{\phi}_{A_0B_0}$. For $i=1,...,m$, Alice and Bob apply $\br{\control{\rX}}_{A_0A_i}$ and $\br{\control{\rX}}_{B_0B_i}$, respectively. By Proposition~\ref{lem:cnotbell}, the joint state of the register $A_0B_0$ will be $\ket{\phi^{0,-\!\sum_{i=1}^m \! k_i}}_{A_0B_0}$. Now, all Alice and Bob need to do is to apply $F$ and $F^{\dagger}$ on registers $A_0$ and $B_0$, respectively, and measure their marginal states in the computational basis. By comparing their measurement outcomes they can decide whether any error has occurred. In this procedure the MESs used as the keys in QVC are not measured. Note that if the corruptions are chosen so that~$\sum_{i=1}^m \! k_i=0 \bmod d$ then this procedure fails to detect the errors. We will analyze a modified version of this error detection procedure in detail in Section~\ref{sec:Qhashing} which allows error detection with high probability independent of the error syndrome. 

\suppress{ 
In our application of the quantum Vernam cipher, we may not have the above perfect scenario, due to 
corruptions by the adversary. In the following, we analyze several possibilities we encounter.

\begin{itemize}
	
	\item If the pre-shared MES pair are initially in the state $\ket{\phi^{0,s}}_{A_1B_1}\ket{\phi^{0,s'}}_{A_2B_2}$, then the state after Bob's decryption is 
	\begin{eqnarray}
	&&\br{\control{\rX^{-1}}}_{B_1A}\br{\control{\rZ^{-1}}}_{B_2A}\rX^j\rZ^k\br{\control{\rZ}}_{A_2A}\br{\control{\rX}}_{A_1A}\ket{\phi^{0,s}}_{A_1B_1}\ket{\phi^{0,s'}}_{A_2B_2}\ket{\psi}_{A}\nonumber\\
	  &=&\frac{1}{d}\sum_{t,t'=0}^{d-1}e^{\complexi \cdot \frac{2\pi}{d} (ts+t's') }\br{\control{\rX^{-1}}}_{B_1A}\br{\control{\rZ^{-1}}}_{B_2A}\rX^j\rZ^k\br{\control{\rZ}}_{A_2A}\br{\control{\rX}}_{A_1A} \nonumber\\
	&& \hspace*{55ex} \ket{t}_{A_1}\ket{t}_{B_1}\ket{t'}_{A_2}\ket{t'}_{B_2}\ket{\psi}_{A}\nonumber\\
	&=&\frac{1}{d}\sum_{t,t'=0}^{d-1}e^{\complexi \cdot \frac{2\pi}{d}(ts+t's')}\ket{t}_{A_1}\ket{t}_{B_1}\ket{t'}_{A_2}\ket{t'}_{B_2}\rX^{-t}\rZ^{-t'}\rX^j\rZ^k\rZ^{t'}\rX^t\ket{\psi}_{A}\nonumber\\
	&=&\frac{1}{d}\sum_{t,t'=0}^{d-1}e^{\complexi \cdot \frac{2\pi}{d} (\br{s+k}t+\br{s'-j}t') }\ket{t}_{A_1}\ket{t}_{B_1}\ket{t'}_{A_2}\ket{t'}_{B_2}\rX^j\rZ^k\ket{\psi}\nonumber\\
	&=&\ket{\phi^{0,s+k}}_{A_1B_1}\ket{\phi^{0,s'-j}}_{A_2B_2}\otimes \rX^j\rZ^k\ket{\psi}.\label{eqn:quantumvernamcipher}
	\end{eqnarray}

	\item 
If Bob performs his operation in Quantum Vernam Cipher before Alice, then
	\begin{eqnarray}
	&&\br{\control{\rZ}}_{A_2A}\br{\control{\rX}}_{A_1A}\rX^j\rZ^k\br{\control{\rX^{-1}}}_{B_1A}\br{\control{\rZ^{-1}}}_{B_2A}\ket{\phi^{0,s}}_{A_1B_1}\ket{\phi^{0,s'}}_{A_2B_2}\ket{\psi}_{A}\nonumber\\
	&=&\frac{1}{d}\sum_{t,t'=0}^{d-1} e^{\complexi \cdot \frac{2\pi}{d}(ts+t's')} \br{\control{\rZ}}_{A_2A}\br{\control{\rX}}_{A_1A}\rX^j\rZ^k\br{\control{\rX^{-1}}}_{B_1A}\br{\control{\rZ^{-1}}}_{B_2A}\nonumber\\
	&&\hspace*{55ex}\ket{t}_{A_1}\ket{t}_{B_1}\ket{t'}_{A_2}\ket{t'}_{B_2}\ket{\psi}_{A}\nonumber\\
	&=&\frac{1}{d}\sum_{t,t'=0}^{d-1}e^{\complexi \cdot \frac{2\pi}{d} (ts+t's')}\ket{t}_{A_1}\ket{t}_{B_1}\ket{t'}_{A_2}\ket{t'}_{B_2}\rZ^{t'}\rX^t\rX^j\rZ^k\rX^{-t}\rZ^{-t'}\ket{\psi}_{A}\nonumber\\
	&=&\frac{1}{d}\sum_{t,t'=0}^{d-1}e^{\complexi \cdot \frac{2\pi}{d}(\br{s-k}t+\br{s'+j}t')}\ket{t}_{A_1}\ket{t}_{B_1}\ket{t'}_{A_2}\ket{t'}_{B_2}\rX^j\rZ^k\ket{\psi}\nonumber\\
	&=&\ket{\phi^{0,s-k}}_{A_1B_1}\ket{\phi^{0,s'+j}}_{A_2B_2}\otimes \rX^j\rZ^k\ket{\psi}.\label{eqn:reverseorderquantumvernamcipher}
	\end{eqnarray}

\end{itemize}
}

\newpage
\section{Teleportation-based protocols via classical channel with large alphabet}\label{sec:BKlarge}


\subsection{Overview}

We adapt from ~\cite{BNTTU14} the ideas to teleport each quantum
message and to rewind the protocol instead of backtracking.

We also adapt Haeupler's template~\cite{Haeupler:2014} to make a
conversation robust to noise: Both parties conduct their original
conversation as if there were no noise, except for the following:
\begin{itemize}
\item At regular intervals they exchange concise summaries (a $\Theta \br{1}$\suppress{ or
  $\Theta \br{\log \log n}$}-bit hash value) of the conversation up to the
  point of the exchange.
\item If the summary is consistent, they continue the conversation.
\item If the summary is inconsistent, an error is detected. The parties backtrack to an earlier
  stage of the conversation and resume from there.
\end{itemize}
This template can be interpreted as an error correcting
code over many messages, with trivial (and most importantly \emph{message-wise\/})
encoding.  The 2-way summaries measure the error syndromes over a large number
of messages, thereby preserving the rate.  It
works (in the classical setting) by limiting the maximum amount of
communication wasted by a single error to $O_\epsilon \br{1}$.  The worst case
error disrupts the consistency checks, but Alice and Bob agree to
backtrack a constant amount when an inconsistency is detected.
As the error fraction vanishes, the communication rate goes to~$1$.
In addition, these consistency tests are efficient, consisting of
evaluation of hash functions.

\subsubsection{Insufficiency of simply combining
\cite{BNTTU14} and~\cite{Haeupler:2014}.}
Suppose we have to simulate an interactive protocol $\Pi$ that uses
noiseless classical channels in the teleportation-based model.
When implementing $\Pi$ with noisy classical channels, it is \emph{not sufficient\/}
to apply Haeupler's template to the classical messages used in teleportation,
and rewind as in~\cite{BNTTU14} when an error is detected.
%
%
%
The reason is that, in~\cite{BNTTU14}, each message is expanded to
convey different types of actions in one step (simulating the protocol
forward or reversing it). This also maintains the matching between
classical data with the corresponding MES, and the matching between
systems containing MESs.  However, this method incurs a large constant factor
overhead which we cannot afford to incur.

\subsubsection{New difficulties in rate-optimal simulations.}
Due to errors in
communication, the parties need to actively rewind the
simulation to correct errors on their joint quantum state. This itself
can lead to a situation where the parties may not agree on how they
proceed with the simulation (to rewind simulation or to proceed forward).
In order to move on, both parties first need to know what the other
party has done so far in the simulation. This allows them to obtain a
global view of the current joint state and decide on their next
action.
In Ref. \cite{BNTTU14}, this reconciliation step was facilitated
by the extra information sent by each party and the use of tree codes.
This mechanism is not available to us.

\subsubsection{Framework.}
Our first new idea is to introduce sufficient yet
concise data structures so that the parties can detect inconsistencies
in (1)  the stage in which they are in the protocol, (2) what type of action
they should be taking, (3) histories leading to the above, (4)
histories of measurement outcomes generated by one party versus the
potentially different (corrupted) received instruction for teleportation decoding,
(5) which system contains the next MES to be used, (6) a classical
description of the joint quantum state, which is only partially known
to each party.  Each of Alice and Bob maintain her/his data (we
collectively call these $D_\sA, D_\sB$ respectively, here), and also an
estimate of the other party's data ($\widetilde{D_\sB}, \widetilde{D_\sA}$
respectively).  Without channel noise, these data are equal to their
estimates.

\subsubsection{A major new obstacle: out-of-sync teleportation.}

At
every step in the simulation protocol $\Pi'$, Alice and Bob may engage in
one of three actions: a forward step in $\Pi$, step in reverse, or
the exchange of classical summaries.
However, the summaries can also be corrupted.  This leads to a new
difficulty: errors in the summaries can trigger Alice and Bob
to engage in different actions.  In particular, it is possible that
one party tries to teleport while the other expects classical
communication, with only one party consuming his/her half of an MES.
They then become out-of-sync over which MESs to use.
This kind of problem, to the best of our knowledge,
has not been encountered before, and it is not
clear if quantum data can be protected from such error.  (For example, Alice may
try to teleport a message into an MES that Bob already ``used''
earlier.)
One of our main technical contributions is to show that the quantum
data can always be located and recovered when Alice and Bob resolve
the inconsistencies in their data $(D_\sA, \widetilde{D_\sB})$ and
$(\widetilde{D_\sA},D_\sB)$ in the low noise regime.
%
This is particularly surprising since quantum data
can potentially leak irreversibly to the environment (or the adversary):
Alice and Bob potentially operate in an open system due to channel noise,
and out-of-sync teleportation a priori does not protect the messages so sent.

\subsubsection{Tight rope between robustness and rate.}
The simulation maintains sufficient data
structures to store information about each party's view
 so that Alice and Bob can
overcome all the obstacles described above.
The simulation makes progress so long as Alice's and Bob's views are consistent.
The robustness of the simulation requires that the consistency checks be frequent
and sensitive enough so that errors are caught quickly. On the other hand,
to optimize interactive channel capacity, the checks have to remain communication efficient and not too frequent neither.
This calls for delicate analysis in which we balance the two. We also put in some redundancy
in the data structures to simplify the analysis.

\subsection{Result}

In this section, we focus on the teleportation-based quantum communication model with polynomial-size alphabet. In more detail, Alice and Bob share an unlimited number of copies of an MES before the protocol begins. The parties effectively send each other a qudit using an MES and communicating two classical symbols from the communication alphabet. The complexity of the protocol is the number of classical symbols exchanged, while the MESs used are available for free.
We call this model noiseless if the classical channel is noiseless.

The following is our main result in this model for simulation of an $n$-round noiseless communication protocol over an adversarial channel that corrupts any $\epsilon$ fraction of the transmitted symbols.
\suppress{First, we state the result for large alphabets.}
\begin{theorem}
    Consider any $n$-round alternating communication protocol $\Pi$ in the teleportation-based model, communicating messages over a noiseless channel with an alphabet $\Sigma$ of bit-size $\Theta\br{\log n}$. Algorithm \ref{algo:Mainalgorithm} is a computationally efficient coding scheme which given $\Pi$, simulates it with probability at least $1-2^{-\Theta\br{n\epsilon}}$, over any fully adversarial error channel with alphabet $\Sigma$ and error rate $\epsilon$. The simulation uses $n\br{1+\Theta\br{\sqrt{\epsilon}}}$ rounds of communication, and therefore achieves a communication rate of $1-\Theta\br{\sqrt{\epsilon}}$. Furthermore. the computational complexity of the coding operations is~$O\br{n^2}$.
\end{theorem}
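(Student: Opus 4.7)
I would organize the simulation $\Pi'$ into iterations grouped into blocks of size $r = \Theta\br{1/\sqrt{\epsilon}}$. Within a block, each iteration is either a \emph{forward simulation step} (teleport the next message of $\Pi$, consuming one shared MES and sending two classical symbols over the noisy channel as in Definition~\ref{def:teleportation}), a \emph{rewind step} (undo the most recently teleported message by applying the inverse local unitary and the inverse teleportation decoding on the matching MES), or a \emph{hash exchange} at the block boundary. The hash iterations use the seeded family of Lemma~\ref{lem:hashes} with error $p = 1/\mathrm{poly}\br{n}$ and output length $\Theta\br{\log n}$, so each hash fits in $\Theta\br{1}$ symbols of $\Sigma$. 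The total is $n + \Theta\br{n/r}$ simulation/hash iterations, which is $n\br{1 + \Theta\br{\sqrt{\epsilon}}}$, matching the claimed rate.

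\textbf{Invariant and data structures.} I would let Alice maintain $D_\sA$ and an estimate $\widetilde{D_\sB}$, symmetrically for Bob, tracking (i) the current simulated round in $\Pi$, (ii) the action taken at each past iteration, (iii) the measurement outcomes obtained from teleportation encodings, (iv) the indices of the MESs consumed so far, and (v) the accumulated Pauli corrections on each simulated message. The central invariant to prove, by induction on iterations, is: if $(D_\sA,\widetilde{D_\sB})$ and $(D_\sB,\widetilde{D_\sA})$ are pairwise consistent at the start of a hash iteration, then the joint state on $\sA\sB$ together with the remaining MESs equals the correct state of $\Pi$ at the corresponding round, up to a tensor product of Pauli byproducts completely determined by $(D_\sA, D_\sB)$ and known to whichever party holds the affected register.

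\textbf{Handling out-of-sync teleportation.} The main obstacle, and the crux of the proof, is that a corrupted hash message can cause the two parties to disagree on whether the current iteration is a teleport, a rewind, or a hash. Then one party may consume his/her half of an MES that the other never touches, or the parties may encode/decode on mismatched MESs. Using Equation~\eqref{eqn:corruptedteleportation}, I would show that even in every such out-of-sync configuration the logical qudit is not lost: it is located in some MES half (identified by the MES indices recorded in $D_\sA, D_\sB$) and carries a Pauli byproduct that is a known function of $(D_\sA, D_\sB)$. By enumerating the finitely many out-of-sync patterns within a block and composing the local unitaries they induce, I extend the invariant of the previous paragraph to the case where one or both of $\widetilde{D_\sA}, \widetilde{D_\sB}$ disagree with the true $D_\sA, D_\sB$; at the next hash exchange the parties detect the inconsistency and rewind to a synchronized prefix, at which point $(D_\sA, D_\sB)$ exactly classify the needed local correction. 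This is the step where I expect the most case analysis, because the argument must cover all combinations of whether Alice and/or Bob believed the iteration to be forward, reverse, or hash.

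\textbf{Potential analysis, failure probability, and complexity.} Following Haeupler's template, I would define a potential $\Phi$ on $(D_\sA, D_\sB, \widetilde{D_\sA}, \widetilde{D_\sB})$ equal to the sum of the two parties' simulated rounds minus a penalty proportional to the length of the inconsistent suffix, and verify that (a) a hash iteration with no adversarial error and no hash collision increases $\Phi$ by $\Theta\br{r}$, while (b) a single corrupted symbol can decrease $\Phi$ by at most $O\br{r}$. With an error budget of $\epsilon n'$ symbols and $n' = n\br{1 + \Theta\br{\sqrt{\epsilon}}}$, the total decrease in $\Phi$ due to noise is $O\br{\epsilon n' r} = O\br{\sqrt{\epsilon}\, n}$, which is dominated by the $\Theta\br{\sqrt{\epsilon}}\, n$ increase produced by uncorrupted hashes, so $\Phi$ reaches $2n$ and the simulation terminates at the correct output. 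Undetected failure requires a hash collision in at least one of the $O\br{n\epsilon}$ iterations whose state would otherwise expose an inconsistency; each collides with probability at most $1/\mathrm{poly}\br{n}$ independently of the adversary by Lemma~\ref{lem:hashes} (seeds are derived from free MESs), so a Chernoff bound gives failure probability $2^{-\Theta\br{n\epsilon}}$. Each iteration runs a hash on a history of length $O\br{n}$, for total work $O\br{n^2}$.
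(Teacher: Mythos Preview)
Your high-level architecture---blocks of size $r=\Theta(1/\sqrt{\epsilon})$, hash-based consistency checks, and a potential argument---matches the paper's, and you correctly single out out-of-sync teleportation as the crux. But your proposed recovery from that obstacle has a genuine gap. You say that when the parties disagree on MES usage the qudit ``is located in some MES half'' and that they then ``rewind to a synchronized prefix''; your rewind step, however, is defined as applying the inverse local unitary and the inverse teleportation decoding on the matching MES, which only undoes a teleportation that \emph{both} parties executed. If Alice performed a Bell measurement on her half of an MES while Bob never touched his half, no local unitary on Alice's side can undo that measurement, and the logical qudit now resides in Bob's untouched half. Rewinding $\Pi$ does not bring it back into the $ABC$ registers. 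The paper's fix is a dedicated \textsf{syncMES} iteration type (a ``$0$'' block): the party who consumed fewer MESs performs a chain of Bell measurements on its unused halves (teleport $E_1$ through $E_2$, then $E_3$ through $E_4$, and so on), which retroactively completes the missing teleportations and redirects the qudit back into $ABC$, at the price of additional Pauli errors that are absorbed into the Pauli data and later corrected by ordinary rewinding. Your three iteration types (forward, rewind, hash) do not include this mode, and the case analysis you anticipate cannot substitute for it.

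This missing mechanism also forces a more elaborate potential than the one you sketch. Recovery proceeds in ordered phases---first synchronize metadata so both parties learn the MES discrepancy, then equalize MES counts via \textsf{syncMES}, then synchronize Pauli data, and only then rewind bad quantum blocks---and an iteration that makes progress in one phase can cause regress in a later one (e.g.\ a \textsf{syncMES} step decreases the MES gap $u$ but generates fresh Pauli data on one side only, so $\Phi_{\PD}$ drops). The paper therefore uses a three-part potential $\Phi=\Phi_{\mathrm{Q}}+\Phi_{\MD}+\Phi_{\PD}$ with tuned coefficients (the $-5u$ in $\Phi_{\mathrm{Q}}$, the $-3q_{\MA}-3q_{\MB}$ in $\Phi_{\PD}$) precisely so that each error-free iteration of every type still yields a net increase of at least $1$; the case-by-case verification of this is the heart of the analysis. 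Your single potential ``simulated rounds minus length of the inconsistent suffix'' does not separate these phases and would not be monotone under the recovery steps that are actually required.
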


\suppress{Consider teleportation-based noiseless communication protocols of
length~$n$ defined over a channel with a $\Theta(\log n)$-bit alphabet,
and the problem of simulating them with a noisy version of the channel 
over the same alphabet.

There is a protocol that with probability at least~$1 -
2^{-\Omega(\epsilon n)}$, simulates any $n$-symbol teleportation-based 
noiseless communication protocol using $n(1+\Theta(\sqrt{\epsilon}))$ 
symbols over any fully adversarial error channel with error rate at 
most~$\epsilon$. In other words, the simulation achieves information 
rate $1-\Theta(\sqrt{\epsilon})$.

The simulation of channels over constant-size alphabets is more
challenging. Nonetheless, we show that a similar simulation is possible
in this case as well. 
\begin{theorem}
Consider teleportation-based noiseless communication
protocols of length~$n$ defined over a channel with a constant-size alphabet,
and the problem of simulating them with a noisy version of the channel 
over the same alphabet.

There is a protocol that with probability at least~$1 -
2^{-\Omega(\epsilon n)}$, simulates any $n$-symbol teleportation-based 
noiseless communication protocol using $n(1+\Theta(\sqrt{\epsilon}))$ 
symbols over any fully adversarial error channel with error rate at 
most~$\epsilon$. In other words, the simulation achieves information 
rate $1-\Theta(\sqrt{\epsilon})$.
\end{theorem}
We prove this in Section~\ref{sec:small-alphabet-CB}.
}

\subsection{Description of Protocol}\label{sec:general-descrpition-largeclasscial}


We follow the notation associated with quantum
communication protocols introduced in Section~\ref{sec:qucomm} in the
description below.

Recall that in the teleportation-based quantum communication model, Alice 
and Bob implement a protocol~$\Pi_0$ with prior shared entanglement and 
quantum communication by substituting teleportation for quantum communication.
For simplicity, we assume that~$\Pi_0$ is alternating, and begins with
Alice.  In the implementation~$\Pi$ of~$\Pi_0$, the message 
register~$C$ from~$\Pi_0$ has two
counterparts, $C_\sA$ and~$C_\sB$, held by Alice and Bob, respectively. The
unitary operations on~$AC$ in~$\Pi_0$ are applied by Alice on~$A C_\sA$
in~$\Pi$. When Alice
sends the qudit in~$C$ to Bob in~$\Pi_0$, she applies the teleportation
measurement to~$C_\sA$ and her share of the next available MES, and sends
the measurement outcome to Bob in~$\Pi$. Then Bob applies a decoding 
operation on his share of the MES, based on the message received, and 
swaps the MES register with~$C_\sB$.
Bob and Alice's actions in~$\Pi$ when Bob wishes to do a local
operation and send a qudit to Alice in~$\Pi_0$ are analogously defined.
For ease of comparison with the joint state in~$\Pi_0$, we describe the 
joint state of the registers in~$\Pi$ (or its simulation over a noisy
channel) in terms of registers~$ABC$. There, $C$ stands for~$C_\sA$ 
if Alice is to send the next message or all messages have been sent, and
for~$C_\sB$ if Bob is to send the next message.

Starting with such a protocol~$\Pi$ in the teleportation-based model,
we design a simulation protocol~$\Pi'$ which uses a noisy classical 
channel. 
The simulation works with \emph{blocks\/} of even number of
messages.  By a \emph{block\/} of size~$r$ (for even~$r$) of~$\Pi$, we mean a 
sequence of~$r$ local operations and messages alternately sent in~$\Pi$  
by Alice and Bob, starting with Alice.

Roughly speaking, 
Alice and Bob run the steps of the original protocol~$\Pi$ as is, in
blocks of size $r \defeq \Theta (\frac{1}{\sqrt{\epsilon}})$, with $r$ even.
They exchange summary information between these blocks, in order to
check whether they agree on the operations that have been applying to the quantum registers $A B C$ in the simulation. 
The MESs used for teleportations are correspondingly divided into 
blocks of $r$ MESs, implicitly numbered from $1$ to $r$: the odd numbered
ones are used to simulate quantum communication from Alice to Bob,
and the even numbered ones from Bob to Alice. 
If either party detects an error in transmission, they may run a block 
of~$\Pi$ in reverse, or simply communicate classically to help
recover from the error. The classical communication is also conducted in
sequences equal in length to the ones involving a block of~$\Pi$. A block
of~$\Pi'$ refers to any of these types of sequences.

\subsubsection{Metadata}

In more detail, Alice uses an iteration in~$\Pi'$ for one out of four different types of
operations: evolving the simulation by running a block of~$\Pi$ in the
forward direction (denoted a ``$+1$'' block); reversing the simulation by applying inverses
of unitary operations of~$\Pi$ (denoted a ``$-1$'' block); synchronizing with Bob on the number of MESs
used so far by applying identity operators between rounds of teleportation or reversing such an iteration (denoted a ``$0$'' block, with $0$
standing for the application of unitary operations~$U_i^0$ which are~$\id^{AC}$); catching up on the
description of the protocol so far by exchanging classical data with Bob
(denoted a ``$\sC$'' block,
with $\sC$ standing for ``classical''). Alice records the sequence of types of iterations as
her ``metadata'' in the string $\FullMA \in \{\pm1, 0, \sC \}^*$.
$\FullMA$ gets extended by one symbol for each new iteration of the
simulation protocol~$\Pi'$.
The number of blocks of $r$ MESs Alice has used is denoted $q_\MA$ which 
corresponds to the number of non-$\sC$ symbols in $\FullMA$.
Similarly, Bob maintains data $\FullMB$ and  $q_\MB$.

$\FullMA$ and $\FullMB$ may not agree due to the 
transmission errors. To counter this, the two players exchange
information about their metadata at the end of each block.
Hence, Alice also holds $\MBtilde$
and $q_{\MBtilde}$ as her best estimation of Bob's
metadata and the number of MESs he has used, respectively. Similarly,
Bob holds $\MAtilde$ and $q_{\MAtilde}$.
We use these data to control the simulation; before taking any action
in~$\Pi'$, Alice checks if her guess $\MBtilde$
equals~$\FullMB$. Bob does the analogous check for his data.

\subsubsection{Number of MESs used}

Once both parties reconcile their view of each other's metadata with
the actual data, they might detect a discrepancy in the number of
MESs they have used.
The three drawings in Figure~\ref{fig:EPR_registers} represent the $\lceil \frac{n}{2r}(1 + O(r \epsilon))\rceil$ blocks
of $r=O(\sqrt{1/\epsilon})$ MESs at different points in the protocol:
first, before the protocol begins; second, when Alice and Bob have used
the same number of MESs; and third, when they are not synchronized, say, 
Alice has used more blocks of MESs than Bob.
A difference in~$q_{\MA}$ and~$q_{\MB}$ indicates that the joint state 
of the protocol~$\Pi$ can no longer be recovered from 
registers~$A C_\sA C_\sB B$ alone. Since one party did not correctly 
complete the teleportation operations, the (possibly erroneous) joint state 
may be thought of as having ``leaked'' into the partially measured MESs
which were used by only one party. We will elaborate on this scenario in Section~\ref{sec:Out-of-sync teleportation}.

\begin{figure}[!t]
\centering
\includegraphics[width=480pt]{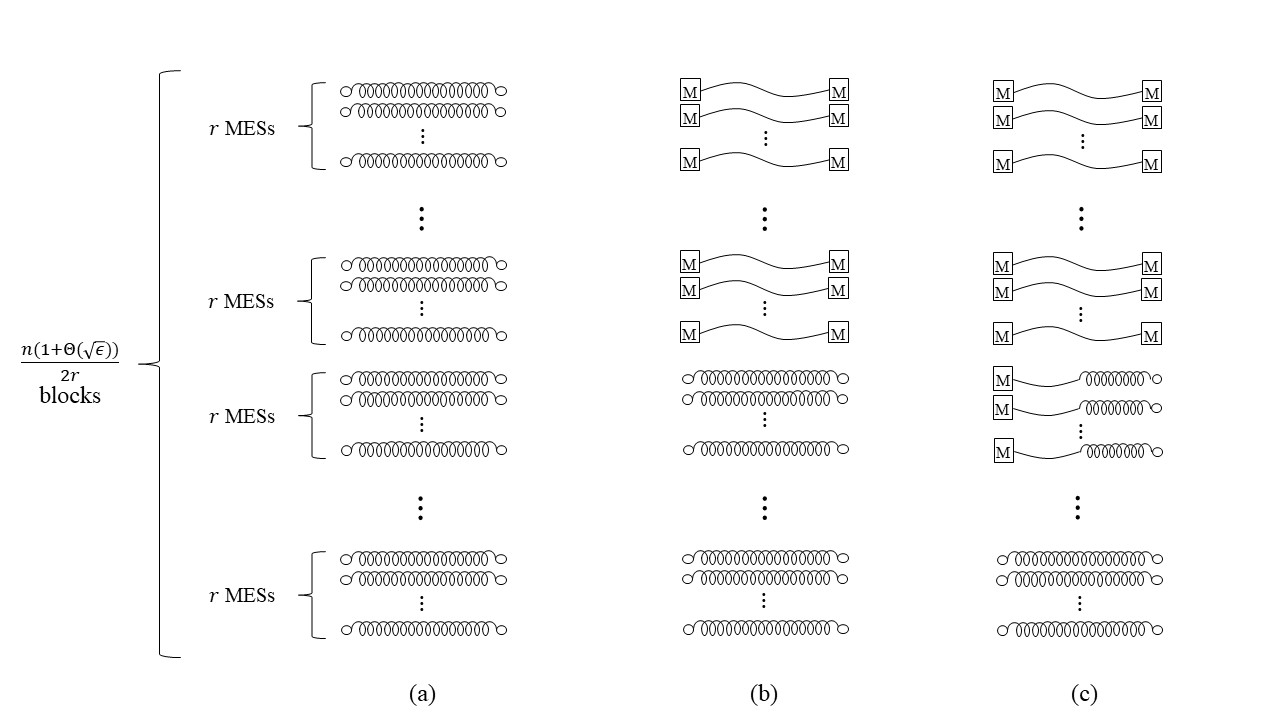}
\caption{These figures represent the MES blocks at different
stages of the protocol. The systems depicted by circles have not been used yet
for teleportation, those depicted by squares have been used already. (either ``Measured'' or teleportation-decoded.)
Figure (a) represents the MES blocks at the beginning of the protocol, when none
have been used. Figure (b) represents them when Alice and Bob have
used the same number of them; this is the desired situation. Figure
(c) represents a situation when Alice and Bob are out of sync; e.g., Alice has used more MES blocks than Bob. They then work to get back in sync before resuming the simulation.}
\label{fig:EPR_registers}
\end{figure}

\subsubsection{Pauli data}

The last piece of information required to complete the description of what has happened so far on the quantum registers $A B C$
is about the Pauli operators corresponding to teleportation, which we
call the ``Pauli data''.
These Pauli data contain information about the teleportation measurement outcomes as well as about
the teleportation decoding operations. Since incorrect teleportation decoding may arise due
to the transmission errors, we must allow the parties to apply Pauli corrections at some point.
We choose to concentrate such Pauli corrections on the receiver's side at the end of each teleportation.
These Pauli corrections are computed from the history of all classical
data available, before the evolution or reversal of~$\Pi$ in a block starts.\suppress{ 
whereas the
measurement and decoding Pauli data are exchanged online during the computation.} The
measurement data are directly transmitted over the noisy classical communication
channel and the decoding data are directly taken to be the data received over the
noisy channel. If there is no transmission error, the decoding Pauli operation
should correspond to the inverse of the effective measurement Pauli operation and cancel out to yield a noiseless quantum channel. Figure~\ref{fig:teleportation-representation} depicts the
different types of Pauli data in a block corresponding to type $+1$ for Alice and $-1$
for Bob. 
The Pauli operations applied on Alice's side are in the following order: 
\begin{quote}
teleportation measurement for the first qudit she sends, \\
decoding operation for the first qudit she receives, \\
correction operation for the same qudit (the first qudit she receives); 

teleportation measurement for the second qudit she sends, \\
decoding operation for the second qudit she receives, \\
correction operation for the same qudit (the second qudit she receives); 

and so on.
\end{quote} 
The Pauli operations applied on Bob's side are in a different order:
\begin{quote}
decoding operation for the first qudit he receives, \\ 
correction operation for the same qudit (the first qudit he receives),
\\
teleportation measurement for the first qudit he teleports; 

decoding operation for the second qudit he receives, \\
correction operation for the same qudit (the second qudit he receives),
\\
teleportation measurement for the second qudit he sends;

and so on.
\end{quote}

\noindent Alice records as her Pauli data in the string $\FullPA \in (\Sigma^{3r})^*$, the sequence of Pauli operators that are applied on the quantum register on her side. Each block of $\FullPA$ is divided into 3 parts of r symbols from the alphabet set $\Sigma$. The first part corresponds to the $\frac{r}{2}$ teleportation measurement outcomes with two symbols for each measurement outcome. Each of the $\frac{r}{2}$ teleportation decoding operations are represented by two symbols in the second part. Finally, the third part contains two symbols for each of the $\frac{r}{2}$ Pauli corrections. Similarly, Bob records the sequence of Pauli operators applied on his side in $\FullPB$.
As described above, the measurement outcome and the decoding Pauli
operations are available to the sender and the receiver, respectively.
Based on the message transcript in~$\Pi'$ so far,
Alice maintains her best guess $\PBtilde$ 
for Bob's Pauli data and Bob maintains his best guess $\PAtilde$ for Alice's Pauli data. 
These data also play an important role in the simulation. Before taking any
action in~$\Pi'$, Alice checks if her guess~$\PBtilde$ 
equals $\FullPB$. Bob does the analogous check for his data.

Alice and Bob check and synchronize their classical data, i.e., the metadata 
and Pauli data, by employing the ideas underlying the Haeupler
algorithm~\cite{Haeupler:2014}.
Once they agree on each other's metadata and
Pauli data, they both possess enough information to
compute the content of the quantum register (to the best of their knowledge).

\begin{figure}[!t]
\centering
\includegraphics[width=350pt]{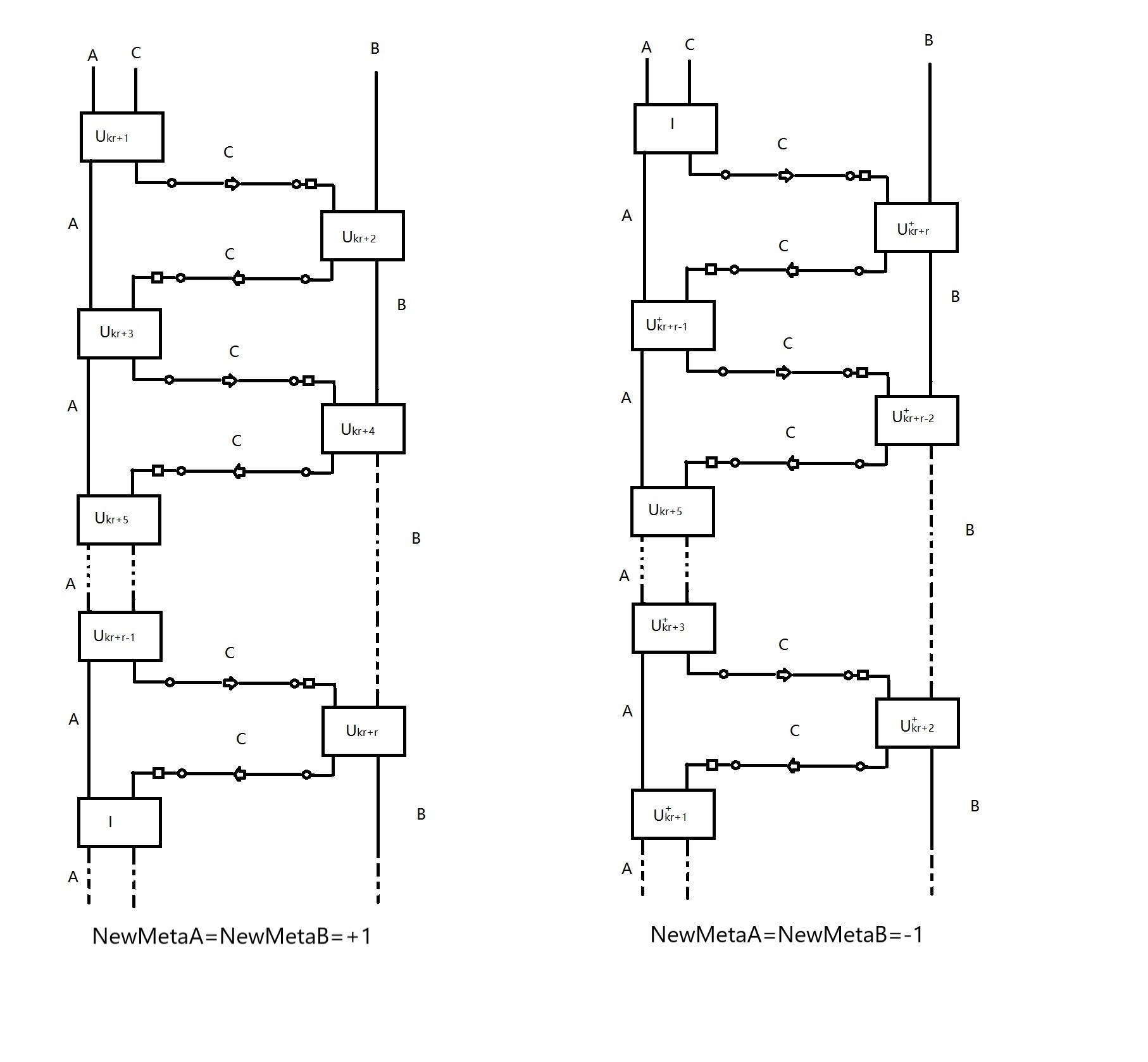}
\caption{Representation of the teleportation scheme for a size $r$
block. The figure on the left corresponds to Alice and Bob having blocks
of type $+1$, the most common block type, and the one on the right to a
block of type $-1$ for both. The large rectangles correspond to unitary
operations of the original
protocol or their inverses, or even an identity operator, being applied by Alice or by Bob to $A C$ or
$B C$, respectively. Bob has $r/2$ rectangles and applies
a unitary operation or an inverse in each of them whenever he has a block of type
$\pm 1$. Alice has $r/2 + 1$ rectangles and uses the first $r/2$ to
apply unitary operations in a block of type $+1$ and apply an identity
on the last one, while she applies an identity in the first one and
inverses of unitary operations in the $r/2$ last ones in a block of type
$-1$. This is so that a $-1$ block for Alice can be the inverse of a
$+1$ block for Alice, and vice-versa. The small circles correspond to
the Pauli operations due to teleportation measurement and teleportation
decoding, with the teleportation being from Alice to Bob on odd numbered MESs
and from Bob to Alice on even numbered MESs. The small squares on the
receiver side right after the teleportation decoding circle corresponds
to the Pauli corrections made in order to try to correct errors in previous blocks.}
\label{fig:teleportation-representation}
\end{figure}

\subsubsection{Out-of-Sync Teleportation}
\label{sec:Out-of-sync teleportation}

\suppress{
Consider the situation where the classical data on both sides are of
full length but do not match at the beginning of an iteration of the
simulation. Suppose that the adversary corrupts the classical data that
the parties communicate to each other in this iteration. If no error or
hash collision occurs in the next round, the parties will realize there
is an error by checking the hash values and will try to fix their
incorrect information. Now if a hash collision or corruption of the
hashes by the adversary makes only Alice believe that her current guess
of Bob's local data is correct, she will continue the simulation of the
input protocol for another round, consuming the next block of MESs.
On the other hand, Bob will try to resolve the inconsistency in their
classical data, without accessing the quantum registers.}

\subsubsection*{Basic out-of-sync scenario}

Consider an iteration in which Alice believes she should implement
a~$+1$ block, while Bob believes he has to resolve an inconsistency in
their classical data.
Alice will simulate one block of the input protocol~$\Pi$,
consuming the next block of MESs.
On the other hand, Bob will try to resolve the inconsistency through
classical communication alone, and not access the quantum registers.
Thus Alice will
treat Bob's messages as the outcomes of his teleportation measurements,
and she performs the teleportation decoding operations according to
these messages. The situation is even worse, since Alice sends quantum
information to Bob through teleportation of which Bob is unaware, and
Bob views the teleportation measurement outcomes sent by Alice 
as classical information about Alice's local Pauli data and
metadata corresponding to previous iterations. Note that at this point
the quantum state in registers~$ABC$ may potentially be lost.
This scenario could continue for
several  iterations and derail the simulation completely. To recover
from such a situation, especially to retrieve the quantum information in
the unused MESs at his end, it would seem that Alice and Bob would have
to rewind the simulation steps in~$\Pi'$ (and not only the steps of the
original protocol~$\Pi'$) to an appropriate point in the past.
This rewinding itself
would be subject to error, and the situation seems hopeless.
Nonetheless, we provide a simple solution to address this kind of error,
which translates out-of-sync teleportation to errors in implementing the
forward simulation or rewinding of the original protocol~$\Pi$.

As explained in the previous subsection, Alice and Bob first reconcile
their view of the history of the simulation stored in their metadata.
Through this, suppose they both discover the discrepancy in the number
of MESs used. (There are other scenarios as well; for example, they may
both think that~$q_\MA = q_\MB$. These scenarios lead to further
errors, but the simulation protocol~$\Pi'$  eventually discovers the
difference in MESs used.) In the scenario in which Alice and Bob both discover
that~$q_\MA \neq q_\MB$, they try to ``gather'' the quantum data hidden in the partially used MESs back into the registers~$A B C$. In more detail, suppose Bob has used fewer MESs than Alice, and he discovers this at the beginning of the~$i$-th iteration. Let~$E_1 E_2 \dotsb E_r$ be registers with Bob that hold the halves of the \emph{first\/} block of MESs that Alice has used but Bob has not. Note that~$E_1, E_3, \dotsc, E_{r-1}$ contain quantum information teleported by Alice, and~$E_2, E_4, \dotsc, E_r$ are MES-halves intended for teleportation by Bob. The MES-halves corresponding to~$E_2, E_4, \dotsc, E_r$ have already been used by Alice to ``complete'' the teleportations she assumed Bob has performed. Say Alice used this block of MESs in the~$i'$-th iteration. In the~$i$-th iteration, Bob teleports the qudit~$E_1$ using the MES-half~$E_2$, $E_3$ with~$E_4$, and so on. That is, Bob teleports qudit~$E_j$ using the MES-half~$E_{j+1}$ in increasing order of~$j$, for all odd~$j \in [r]$, as if the even numbered MESs had not been used by Alice. The effect of this teleportation is the same as if Alice and Bob had \emph{both\/} tried to simulate the local operations and communication from the original protocol in the~$i'$-th iteration (in the forward direction or to correct the joint state), \emph{except that the following also happened independently of channel error\/}:
\begin{enumerate}
\item the Pauli operations used by Bob to decode~$E_1, E_3, \dotsc, E_{r-1}$ were all the identity,
\item the unitary operations used by Bob on the registers~$B C$ were all the identity, and
\item the Pauli operations applied by Alice for decoding Bob's teleportation were unrelated to the outcome of Bob's teleportation measurements.
\end{enumerate} 
This does not guarantee correctness of the joint state in~$A B C$, but has the advantage that quantum information in the MES-halves~$E_1, E_3, \dotsc, E_{r-1}$ that is required to restore correctness is redirected back into the registers~$A B C$. In particular, the difference in the number of MESs used by the two parties is reduced, while the errors in the joint quantum state in~$A B C$ potentially increase. The errors in the joint state are eventually corrected by reversing the incorrect unitary operations, as in the case when the teleportations are all synchronized.

To understand the phenomenon described above, consider a simpler
scenario where Bob wishes to teleport a qudit~$\ket{\xi}$ in
register~$B_1$ to Alice using an MES in registers~$E_1' E_1$, after
which Alice applies the unitary operation~$V$ to register~$E_1'$. If
they follow the corresponding sequence of operations, the final state
would be~$V\ket{\xi}$, stored in register~$E_1'$. Instead suppose they
do the following. First, Alice applies~$V$ to register~$E_1'$,
\emph{then\/} Bob measures registers~$B_1 E_1$ in the generalized Bell
basis and gets measurement outcome~$(j,k)$. He sends this outcome to
Alice. We may verify the state of register~$E_1'$ conditioned on the
outcome is~$V(\rX^j \rZ^k)\ket{\xi}$. Thus, the quantum information in~$\xi$
is redirected to the correct register, albeit with a Pauli error (that
is known to Alice because of his message). In particular, Alice may
later reverse~$V$ to correctly decode the teleported state. The chain of
teleportation steps described in the previous paragraph has a similar effect.

\subsubsection{First representation of the quantum registers}
\label{sec:large-classical-first-rep}

A first representation for the content of the quantum registers~$ABC$
in~$\Pi'$ can be obtained directly and explicitly from the
metadata and the Pauli data, and is denoted $\JSone$, as in
Eq.~\eqref{eqn:JS1} below, with $\mathit{JS}$ standing for ``joint state''.
We emphasize that this is the state conditioned on the outcomes of the
teleportation measurements as well as the transcript of
classical messages received by the two parties.
However, the form $\JSone$ is essentially useless for
deciding the next action that the simulation protocol~$\Pi'$ should take,
but it can be simplified into a more useful representation.
This latter form, denoted $\JStwo$, as in Eq.~\eqref{eqn:JS2_1}
below,  directly corresponds to the further actions we may take in order to evolve the simulation of the original protocol or to actively reverse
previous errors.
We first consider $\JSone$
and $\JStwo$ in the case when $q_{\MA} = q_{\MB}$. 


We sketch how to obtain $\JSone$ from $\FullMA$,
$\FullMB$, $\FullPA$ and $\FullPB$ (when~$q_\MA
= q_\MB$). Each block of $r$ MESs which
have been used by both Alice and Bob corresponds to a bracketed
expression~$[*j]$ for some content ``$*j$'' corresponding
to the $j$-th block that we describe below. The content of the quantum
registers is then the $A B C$ part of
\begin{align}\label{eqn:JS1}
\JSone = [*q_{\MA}] \cdots [*2] [*1]
\ket{\psi_{\mathrm{init}}}^{A B C  E R},
\end{align}
with  $\ket{\psi_{\mathrm{init}}}^{ABCER}$ being the initial state of
the original protocol. (To be accurate, the representation corresponds to the sequence of operations that have been applied to $\ket{\psi_{\mathrm{init}}}$, and knowledge of $\ket{\psi_{\mathrm{init}}}$ is not required to compute the representation.) It remains to describe the content $*j$ of the $j$-th bracket.
It contains from right to left $\frac{r}{2}$ iterations of the following: 
\begin{quote}
Alice's unitary operation - Alice's teleportation measurement outcome - \\
Bob's teleportation decoding - Bob's Pauli correction -
Bob's unitary operation - Bob's teleportation measurement outcome - \\
Alice's teleportation decoding - Alice's Pauli correction.
\end{quote}
It also allows  for an additional unitary operation of Alice on the far
left when she is implementing a block of type~$-1$; we elaborate on this later.
If Alice's block type is $+1$, all her unitary operations are
consecutive unitary operations from the original protocol (with the
index of the unitary operations depending on the number of $\pm 1$ in
$\FullMA$), while if it is $-1$, they are inverses of such
unitary operations. If Alice's block type is $0$, all unitary operations
are equal to the identity on registers $A C_\sA$. Similar properties hold for
Bob's unitary operations on registers
$BC$. Alice's block type corresponds to the content of the $j$-th
non-$\sC$ element in $\FullMA$, and Bob's to the content of the
$j$-th non-$\sC$ element in $\FullMB$. Alice's Pauli data corresponds to the content of the $j$-th block in $\FullPA$, and Bob's to the content of the $j$-th block in $\FullPB$. The precise rules by which Alice and Bob determine their respective types for a block in~$\Pi'$, and which blocks of~$\Pi$ (if any) are involved, are deferred to the next section. Note that when~$q_\MA = q_\MB$, the first $q_\MA$ MES blocks have been used by both parties but not necessarily in the same iterations. Nevertheless, the remedial actions the parties have taken to recover from out-of-sync teleportation have reduced the error on the joint state to transmission errors as if all the teleportations were synchronized and the adversary had introduced those additional errors; see Section\ref{sec:Out-of-sync teleportation}.

To give a concrete example, suppose from her classical data, Alice
determines that in \emph{her\/} $j$-th non-$\sC$ block of~$\Pi'$, she
should actively reverse the unitary operations of block~$k$ of~$\Pi$ to correct
some error in the joint state. So her~$j$-th non-$\sC$ block of~$\Pi'$ is
of type~$-1$. Suppose Alice's Pauli data in the~$j$-th block of $\FullPA$ correspond to Pauli operators~$p_{\sA,1} p_{\sA,2} \cdots p_{\sA,3r/2}$ in the order affecting the joint state. I.e., the Pauli operators~$p_{\sA,1} \,,\, p_{\sA,4} \,,\, \ldots \,,\, p_{\sA, 3 (r/2 -1) + 1}$ correspond to the sequence of Alice's teleportation measurement outcomes, the Pauli operators~$p_{\sA,2} \,,\, p_{\sA,5} \,,\, \ldots \,,\, p_{\sA, 3 (r/2 -1) + 2}$ are her teleportation decoding operations and~$p_{\sA,3} \,,\, p_{\sA,6} \,,\, \ldots \,,\, p_{\sA,3r/2}$ are her Pauli corrections, respectively.
Consider Bob's~$j$-th non-$\sC$ block of~$\Pi'$. Note that this may be a
different block of~$\Pi'$ than Alice's $j$-th non-$\sC$ block.
Suppose from \emph{his\/} classical data, Bob
determines that in his~$j$-th non-$\sC$ block of~$\Pi'$, he
should apply the unitary operations of block~$l$ of~$\Pi$ to evolve
the joint state further. So his~$j$-th non-$\sC$ block of~$\Pi'$ is
of type~$+1$. Suppose Bob's Pauli data in the~$j$-th block of $\FullPB$ correspond to Pauli operators~$p_{\sB,1} p_{\sB,2} \cdots p_{\sB,3r/2} $, in the order affecting the joint state. I.e., the Pauli operators~$p_{\sB,1} \,,\, p_{\sB,4} \,,\, \ldots \,,\, p_{\sB, 3 (r/2 -1) + 1}$ are Bob's decoding operations and~$p_{\sB,2} \,,\, p_{\sB,5} \,,\, \ldots \,,\, p_{\sB, 3 (r/2 -1) + 2}$ are his Pauli corrections and~$p_{\sB,3} \,,\, p_{\sB,6} \,,\, \ldots \,,\, p_{\sB,3r/2}$ correspond to his teleportation measurement outcomes, respectively.
Then from
$\FullMA,\FullMB,\FullPA,\FullPB$,
we can compute a description of the joint state as in
Eq.~\eqref{eqn:JS1}, with~$*j$ equal to
\suppress{
\begin{align}
\sigma_{B,en (i+1)r}U^{m_{b,i}}_{B,(i+1)r}\tilde{\sigma}_{ir+r-1}\ldots\hat{\sigma}_{ir+s}U^{m_{b,i}}_{B,ir+s}\tilde{\sigma}_{ir+s}U^{m_{a,i}}_{A,ir+s}\hat{\sigma}_{ir+s-1}\nonumber\\
\cdots\hat{\sigma}_{ir+1}U^{m_{b,i}}_{B,ir+1}\tilde{\sigma}_{ir+1}U^{m_{a,i}}_{A,ir+1}\sigma_{A,pc,ir+1}\sigma_{A,de,ir+1},
\end{align}
where
\[\tilde{\sigma}_{ir+s}=\sigma_{B,pc,ir+s}\sigma_{B,de,ir+s}\sigma_{A,en,ir+s},\]
and
\[\hat{\sigma}_{ir+s}=\sigma_{A,pc,ir+s+1}\sigma_{A,de,ir+s+1}\sigma_{B,en,ir+s}.\]
}
\begin{align*}
       & U_{kr + 1}^{-1} \\
\times & \left( p_{\sA, 3 (r/2 -1) + 3}\;\;  p_{\sA, 3(r/2 -1) + 2}
         \right)
         \left( p_{\sB, 3 (r/2 -1) + 3}\;\;  U_{lr + r} \;\;  
             p_{\sB, 3(r/2 -1) + 2} \;\;  p_{\sB, 3(r/2 -1) + 1}
         \right) \\
       & \qquad \times  \left( p_{\sA, 3 (r/2 -1) + 1} \;\;  U_{kr + 3}^{-1} 
         \right) \\
\times & \dotsb \\
\times & \left( p_{\sA, 3 (s-1) + 3}\;\;  p_{\sA, 3(s - 1) + 2}
         \right)
         \left( p_{\sB, 3 (s -1) + 3}\;\;  U_{lr + 2s} \;\; 
             p_{\sB, 3(s -1) + 2} \;\;  p_{\sB, 3(s -1) + 1}
         \right) \\
       & \qquad \times
         \left( p_{\sA, 3 (s -1) + 1} \;\;  U_{kr + (r - 2s + 3)}^{-1}
         \right) \\
\times & \dotsb \\
\times & \left( p_{\sA, 6}\;\;  p_{\sA, 5}
         \right)
         \left( p_{\sB, 6}\;\;  U_{lr + 4} \;\; 
             p_{\sB, 5} \;\;  p_{\sB, 4}
         \right)
         \left( p_{\sA, 4} \;\;  U_{kr + (r - 1)}^{-1}
         \right) \\
\times & \left( p_{\sA, 3}\;\;  p_{\sA, 2}
         \right)
         \left( p_{\sB, 3}\;\;  U_{lr + 2} \;\; 
             p_{\sB, 2} \;\;  p_{\sB, 1}
         \right)
         \left( p_{\sA, 1} \;\;  \id
         \right) \enspace.
\end{align*}
Note that Alice and Bob are not necessarily able to compute the state
$\JSone$. Instead, they use their best guess for the other party's metadata and Pauli data in the procedure described in this section to compute their estimates $\JSoneA$ and $\JSoneB$ of $\JSone$, respectively. Note that Alice and Bob will not compute their estimates of $\JSone$ unless they believe that they both know each other's metadata and Pauli data and have used the same number of MES blocks.   


\subsubsection{Second representation of the quantum registers}
\label{sec:large-classical-second-rep}

To obtain $\JStwo$ from $\JSone$, we first look inside each
bracket and recursively cancel consecutive Pauli operators inside the bracket.
In case a bracket evaluates to the identity operator on registers $A B C$, we remove it.
Once each bracket has been cleaned up in this way, we recursively try to
cancel consecutive brackets if their contents
correspond to the inverse of one another (assuming that no two $U_i$ of the original protocol are the
same or inverses of one another). Once no such cancellation works out anymore, what we are left with is
representation $\JStwo$, which is of the following form (when $q_{\MA} = q_{\MB}$):
\begin{align}\label{eqn:JS2_1}
\JStwo = [\#b] \cdots [\#1] [U_{gr} \cdots U_{(g-1)r + 2}
U_{(g-1)r + 1} ] \cdots [U_r \cdots U_2 U_1]
\ket{\psi_{\mathrm{init}}}^{A B C  E R}.
\end{align}
Here, the first $g$ brackets starting from the right correspond to the
``good'' part of the simulation, while the last $b$ brackets correspond
to the ``bad'' part of the simulation, the part that Alice and Bob have
to actively rewind later. The integer~$g$ is determined by the
left-most bracket such that along with its contents, those of the
brackets to the right equal the sequence of unitary 
operations~$U_1, U_2, \dotsc, U_{gr}$ from the original protocol~$\Pi$
in reverse. The brackets to the left of the last~$g$ brackets are all
considered bad blocks.
Thus, the content of $[\#1]$ is not $[U_{(g+1)r} \cdots U_{gr + 1}]$,
while the contents of $[\#2]$ to $[\#b]$ are arbitrary and have to be actively rewound before Alice and Bob can reverse the content of $[\# 1]$.

Once the two parties synchronize their metadata, the number of MESs they have used and their Pauli data, they compute their estimates of $\JSone$. Alice uses $\JSoneA$ in the above procedure to compute her estimate $\JStwoA$ of $\JStwo$. Similarly, Bob computes $\JStwoB$ from $\JSoneB$. These in turn determine their course of action in the simulation as described next. If $b > 0$, they actively reverse the incorrect unitary operators in the last bad block, while assuming the other party does the same. They start by applying the inverse of $[\#b]$, choosing appropriately whether to have a type $\pm 1$ or $0$ block, and also choosing appropriate Pauli corrections. Else, if $b=0$, they continue implementing unitary operations $U_{gr+1}$ to $U_{(g+1)r}$ of the original input protocol~$\Pi$ to evolve the simulation. Note that each player has their independent view of the joint state, and takes actions assuming that their view is correct. In this process, Alice and Bob use their view of the joint state to predict each other's next action in the simulation and extend their estimates of each other's metadata and Pauli data accordingly. 

We describe a few additional subtleties on how the parties access the
quantum register in a given block, as represented in Figure~\ref{fig:teleportation-representation}. First, each block begins and ends with
Alice holding register $C$ and being able to perform a unitary
operation. In $+1$ blocks, she applies a unitary operation at the beginning and
not at the end, whereas in $-1$ blocks she applies the inverse of a
unitary operation at the end and not at the beginning.
This is in order to allow a $-1$ block to be the inverse of a $+1$ block, and vice-versa. Second, whenever Alice
and Bob are not synchronized in the number of MESs they have used so far, as explained in Section~\ref{sec:Out-of-sync teleportation}, the party who has used more will wait for the other to catch up by creating a new type $\sC$ block while the party who has used less will try to catch up by creating a type $0$ block, sequentially feeding the $C$ register at the output of a teleportation decoding to the input of the next teleportation measurement. 
Notice that due to errors in communication, it might happen that $+1$ blocks are used to correct previous erroneous $-1$ blocks and $0$ blocks are used to correct previous erroneous $0$ blocks. As illustrated in Figure~\ref{fig:teleportation-representation}, the block on the right is the inverse of the one on the left if the corresponding Pauli operators are inverses of each other.

\subsubsection{Representations of quantum registers while out-of-sync}

We now define the $\JSone$ and $\JStwo$ representations of the joint state in the case when $q_{\MA} \neq q_{\MB}$. Note that in this case, conditioned on the classical data with the two parties, $\JSone$ and $\JStwo$ represent a pure state. However, in addition to the $ABCER$ registers, we must also include the half-used MES registers in the representation. Let $u\defeq |q_{\MA}-q_{\MB}|$. For concreteness, suppose that $q_{\MA} > q_{\MB}$. Then the $\JSone$ representation is of the following form:
\begin{align}\label{eqn:JS1-OoS}
\JSone = [*q_\MA] \cdots [*q_\MB] \cdots [*2] [*1]
\ket{\psi_{\mathrm{init}}}^{A B C  E R}\enspace.
\end{align}
The content of the first $q_\MB$ brackets from the right, corresponding to the MES blocks which have been used by both parties are obtained as described in Subsection~\ref{sec:large-classical-first-rep}. The leftmost $u$ brackets correspond to the MES blocks which have been used only by Alice. We refer to these blocks as the \emph{ugly\/} blocks. These brackets contain Alice's unitary operations from the input protocol, her teleportation decoding operations and Pauli correction operations in her last~$u$ non-classical iterations of the simulation. Additionally, they contain the $u$ blocks of MES registers used only by Alice. In each of these blocks, the registers indexed by an odd number have been measured on Alice's side and the state of the MES register has collapsed to a state which is obtained from Alice's Pauli data.

The representation $\JStwo$ is obtained from $\JSone$ as follows: We denote by $[@u]\cdots [@1]$ the leftmost $u$ brackets corresponding to the ugly blocks. We use the procedure described in Subsection~\ref{sec:large-classical-second-rep} on the rightmost $q_{\MB}$ brackets in $\JSone$ to obtain $\JStwo$ of the following form:
\begin{align} \label{eqn:JS2_OoS}
\JStwo = [@u] \cdots [@1] [\#b] \cdots [\#1] [U_{gr} \cdots U_{(g-1)r + 2} U_{(g-1)r + 1} ] 
\cdots [U_r \cdots U_2 U_1]
\ket{\psi_{\mathrm{init}}}^{A B C E R}\enspace,
\end{align}
with~$g$ \emph{good\/} blocks, and~$b$ \emph{bad\/} blocks, for some non-negative integers~$g,b$.

\suppress{
The representations are obtained in the manner described in Sections~\ref{sec:large-classical-first-rep} and~\ref{sec:large-classical-second-rep}, except that we account for the additional blocks of unitary operations along with blocks of MES halves $F_1, F_2, \ldots, F_u$ which arise from MES blocks that have been used by only one party. We refer to these blocks of unitary operators as the \emph{ugly\/} blocks. More formally,
we denote the final (leftmost)~$u$ blocks in the $\JSone$ representation as~$[@ u] \cdots [@ 1]$. These are the blocks of unitary operations performed in the last~$u$ non-classical iterations for Alice in the simulation and they contain untouched MES registers $F_1, F_2, \ldots, F_u$ on Bob's side.
Let $\JSone'$ denote the state obtained from $\JSone$ with these~$u$ blocks removed. Let $\JStwo'$ denote the state obtained from $\JSone'$ by following the procedure described in Section~\ref{sec:large-classical-second-rep}. It is of the form
\begin{align}\label{eqn:JS2goodbadform1} 
\JStwo' = [\#b] \cdots [\#1] [U_{gr} \cdots U_{(g-1)r+ 1 }] \cdots [U_r \cdots U_1] \ket{\psi_{init}}^{ABCER} \enspace,
\end{align}
with~$g$ ``good'' blocks, and~$b$ ``bad'' blocks, for some non-negative integers~$g,b$.
Then, $\JStwo$ is defined as $\JStwo\defeq [@ u] \cdots [@ 1] \JStwo'$.
}

Thus, in the rest of this section, we assume that $\JStwo$ is of the form of Equation~\eqref{eqn:JS2_OoS} at the end of each iteration
for some non-negative integers~$g,b,u$ which are given by
\begin{align}
&g \defeq
\text{the number of good unitary blocks in $\JStwo$,}\label{eqn:g}\\
&b \defeq
\text{the number of bad unitary blocks in $\JStwo$, and}\label{eqn:b}\\
&u\defeq |q_{\MA}-q_{\MB}|.\label{eqn:u}
\end{align}
%
%

We point out that Alice and Bob compute their estimates of $\JSone$ and $\JStwo$ only if, based on their view of the simulation so far, they believe that they have used the same number of MES blocks. Therefore, whenever computed, $\JSoneA,\JSoneB$ and $\JStwoA,\JStwoB$ are always of the forms described in Subsections~\ref{sec:large-classical-first-rep} and~\ref{sec:large-classical-second-rep}, respectively.

Notice that if there are no transmission errors or hash collisions and Alice and Bob do as described earlier in this section after realizing that $q_{\MA} > q_{\MB}$, then the ugly blocks $[@ u] \cdots [@ 2]$ remain as they were while block $[@ 1]$ becomes a standard block of unitary operations acting on registers $ABC$ only, quite probably being a new bad block, call it $[\#b+1]$. More generally, if there is either a transmission error or a hash collision, Bob might not realize that $q_{\MA} > q_{\MB}$. Then he might either have a $\sC$ type of iteration in which case block $[@ 1]$ also remain as is, or else it is a $+1$, $-1$ or $0$ (non-$\sC$) type of iteration and then he may apply non-identity Pauli operations and unitary operations on registers $BC$, which still results in block $[@ 1]$ becoming a standard block of unitary operations acting on registers $ABC$ only. Similarly if there is either a transmission error or a hash collision,  Alice might  not realize that $q_{\MA} > q_{\MB}$. Then she might have a non-$\sC$ type of iteration in which case a new ugly block, call it $[@ u+1]$, would be added to the left of $[@ u]$. 

\subsubsection{Summary of main steps}

The different steps that Alice and Bob follow in the simulation protocol~$\Pi'$ are summarized in Algorithm~\ref{summary-TP-based}. Recall that each party runs the simulation algorithm based on their view of the simulation so far. 

\suppress{
In one iteration of the simulation, only one step involving communication is conducted (and this constitutes one block of operations).}

\RestyleAlgo{boxruled}
\begin{algorithm}\label{summary-TP-based}
	
	 Agree on the history of the simulation contained in the metadata, i.e., ensure $\FullMA = \MAtilde$ and $\FullMB = \MBtilde$. This involves Algorithm \ref{algo:rewindMD}---\textbf{\textsf{rewindMD}}, and Algorithm \ref{algo:extendMD}---\textbf{\textsf{extendMD}}.\\
	 
	Synchronize the number of MESs used, in particular, ensure $q_{\MA} = q_{\MBtilde}$ and $q_{\MB} = q_{\MAtilde}$. This involves Algorithm \ref{algo:syncMES}---\textbf{\textsf{syncMES}}. \\
	
	Agree on Pauli data for all the teleportation steps and additional Pauli corrections for addressing channel errors, i.e., ensure $\FullPA = \PAtilde$ and $\FullPB = \PBtilde$. This is done via Algorithm \ref{algo:rewindPD}---\textbf{\textsf{rewindPD}} and Algorithm \ref{algo:extendPD}---\textbf{\textsf{extendPD}}. \\
	
	Compute the best guess for $\JSone$ and $\JStwo$. If there are any ``bad'' blocks in the guess for $\JStwo$, reverse the last bad block of unitary operations. I.e., implement quantum rewinding so that~$b = 0$ in $\JStwo$. This is done in Algorithm \ref{algo:simulate}---\textbf{\textsf{simulate}}. \\
	
	If no ``bad'' blocks remain, implement the next block of the original protocol. This results in an increase in $g$ in $\JStwo$, and is also done through Algorithm \ref{algo:simulate}---\textbf{\textsf{simulate}}. \\

	\caption{Main steps in one iteration of the simulation
for the large alphabet teleportation-based model}
	\label{algo:Main steps-large-alphabet-cleve-burhman}
\end{algorithm}
\RestyleAlgo{ruled}
The algorithms mentioned in each step are presented in the next section. Figure~\ref{fig:flow-telep} summarizes the main steps in flowchart form.

In every iteration exactly one of the steps listed in Algorithm~\ref{summary-TP-based} is conducted. Alice and Bob skip one step to the next only if the goal of the step has been achieved through the \emph{previous\/} iterations. The simulation protocol is designed so that unless there is a transmission error or a hash collision in comparing a given type of data, Alice and Bob will go down these steps in tandem, while never returning to a previous step. For instance, once Alice and Bob achieve the goal of step $1$, as long as no transmission error or hash collision occurs, their metadata will remain synchronized while they are conducting any of the next steps. This is in fact a crucial property which we utilize in the analysis of the algorithm. In particular, to ensure this property, Alice and Bob need to synchronize the number of MESs they have used \emph{before\/} synchronizing their Pauli data.

\begin{figure}[!t]
\centering
\includegraphics[width=475pt]{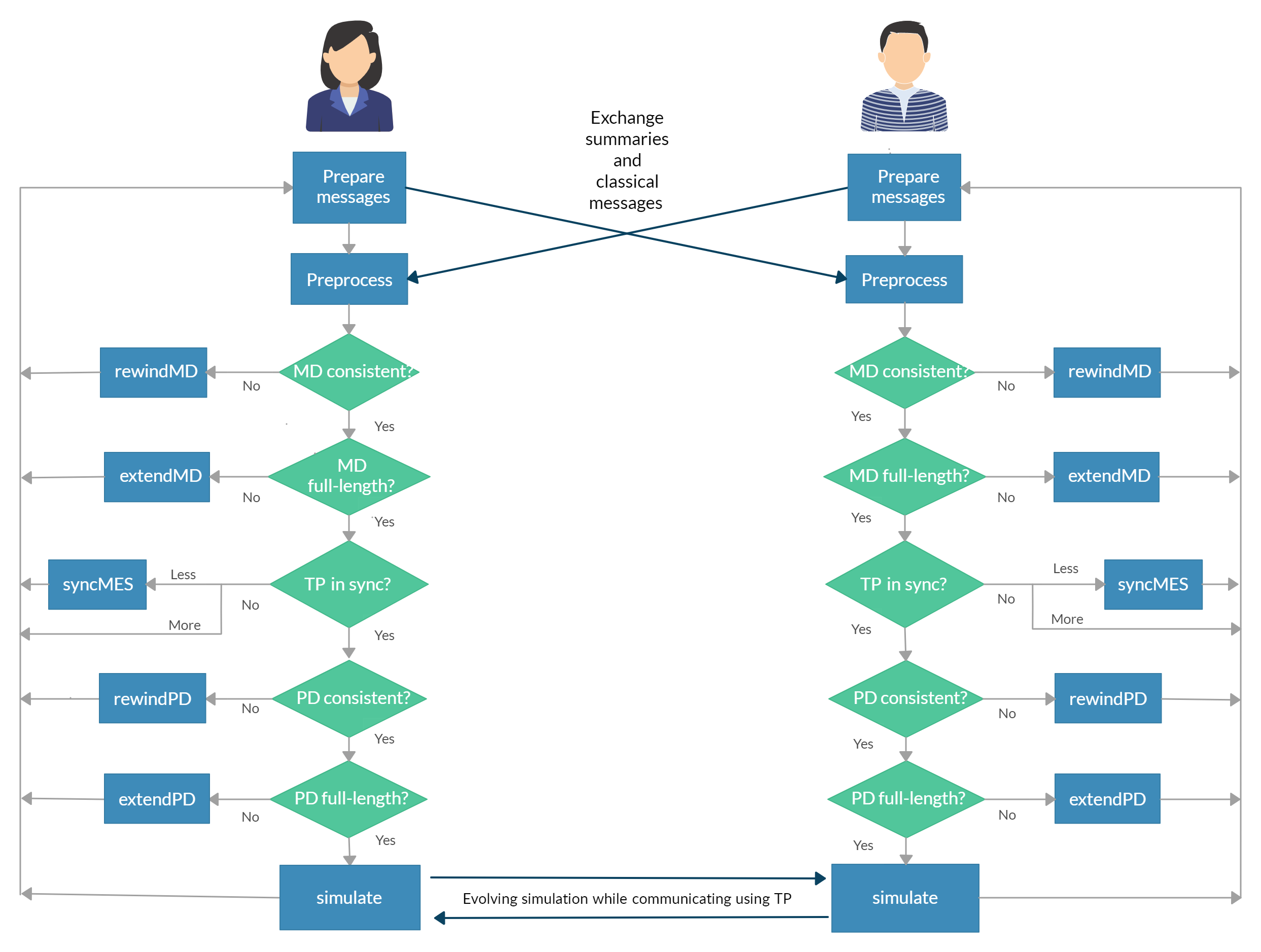}
\caption{Flowchart of the teleportation-based scheme for high rate noisy
interactive quantum communication. Most of the communication is spent
actually trying to simulate the protocol, in the \textsf{simulate} subroutine.}
\label{fig:flow-telep}
\end{figure}

\suppress{
Notice that unless there is a transmission error or a hash collision in
comparing a given type of data (as in Ref.~\cite{Haeupler:2014}),
Alice and Bob cycle through these steps in tandem.}

\subsection{Algorithm}\label{subsec:algforlargealphabet}

In this section, we present our simulation protocol $\Pi'$ in the teleportation-based model when the communication alphabet is polynomial-size. We first introduce the data structure used in our algorithm in this model, which summarizes the definition of the variables appearing in the pseudocodes.

\subsubsection{Data structure}\label{sec:datastructurelargealphabetcleveburhman}

\begin{itemize}
	\item \textbf{Metadata:} In every iteration $\NewMetaA
\in {\{\pm 1,0,\sC\}}$ corresponds to Alice's block type which determines how the simulation of the input
protocol proceeds locally on Alice's side. $\NewMetaA=\sC$ corresponds to a classical iteration, in which Alice does not access the quantum registers. $\NewMetaA \in {\{\pm 1,0\}}$ determines the exponent of the unitary operators from the input protocol $\Pi$ applied by Alice in the current iteration of the simulation. Alice records her metadata in $\FullMA$ which is concatenated with $\NewMetaA$ in every iteration and has length $i$ after $i$ iterations. Her best guess of Bob's block type in the current iteration is denoted by $\NewMetaBtilde$. Alice maintains a guess for Bob's metadata in $\MBtilde$ which gets modified or corrected as she gains more information through interaction with Bob. Note that $\MBtilde$ is not necessarily full-length in every iteration and its length may decrease. $\ellMBtilde$ denotes the length of $\MBtilde$. Bob's local data, $\NewMetaB$, $\FullMB$, $\NewMetaAtilde$, $\MAtilde$ and $\ellMAtilde$ are defined similarly.

Alice maintains a guess~$\ellMA$ for the length of~$\MAtilde$, which is with Bob. We define~$\MA$ to be the prefix of $\FullMA$ of length~$\ellMA$, i.e., ~$\MA \defeq \FullMA\Br{1:\ellMA}$. When $\MA$ appears in any of the algorithms in this section, it is implicitly computed by Alice from $\FullMA$ and $\ellMA$. The number of MES blocks used by Alice for teleportation is denoted by $q_{\MA}$. We use $q_{\MBtilde}$ to denote Alice's guess of the number of MES blocks used by Bob. 
Note that $q_{\MA}$ and $q_{\MBtilde}$ are the number of $0$, $1$ and $-1$ symbols in $\MA$ and $\MBtilde$, respectively. Bob's $\MB$, $\ellMB$, $q_{\MB}$ and $q_{\MAtilde}$ are defined similarly.
	
\item \textbf{Pauli data:} In every iteration $\NewPauliA \in {\br{\Sigma^{r}}^3}$ consists of three parts: The first part corresponds to the outcomes of Alice's teleportation measurements in the current iteration; the second part corresponds to the received transmissions which determine the teleportation decoding operation and the last part which corresponds to Pauli corrections. 

The Pauli data are recorded locally by Alice in $\FullPA$. Starting from the empty string, $\FullPA$ is concatenated with $\NewPauliA$ whenever Alice implements a non-$\sC$ iteration. Alice's best guess for Bob's $\NewMetaB$ in each iteration is denoted by $\NewMetaBtilde$. She maintains a string $\PBtilde$ as an estimate of Bob's Pauli data. The length of $\PBtilde$ is denoted by $\ellPBtilde$. Alice also maintains $\ellPA$, her estimate for the length of $\PAtilde$, which is with Bob. $\PA$ denotes the prefix of $\FullPA$ of length $\ellPA$, i.e., $\PA\defeq \FullPA\Br{1:\ellPA}$. When $\PA$ appears in any of the algorithms in this section, it is implicitly computed by Alice from $\FullPA$ and $\ellPA$. Bob's local Pauli data $\NewPauliB\;,\;\FullPB\;,\;\NewPauliAtilde\;,\;\PAtilde\;,\;\ellPAtilde\;,\;\ellPB\;,\;\PB$ are defined similarly.

A critical difference between the metadata and the Pauli data is that the metadata assigns one symbol for each block while the Pauli data assigns $3r$ symbols for each block.
	
	\item We use $H$ with the corresponding data as subscript to denote the hashed data, e.g., $\HMA$ denotes the hash value of the string $\MA$.
	
	\item  The data with ~$'$~ denote the received data after transmission over the noisy channel, e.g., $\ellMB'$ denotes what Alice receives when Bob sends $\ellMB$.
		
	\item The variable $\Itertype \in \{\MD,\PD,\MES,\SIM\}$ determines the iteration type for the party: $\MD$ and $\PD$ correspond to iterations where metadata and Pauli data are processed or modified, $\MES$ is used for iterations where the party is trying to catch up on the number of used MESs, and $\SIM$ corresponds to iterations where the party proceeds with evolving the simulation of $\Pi$ by applying a block of unitary operators from $\Pi$ or the inverse of such a block of unitary operators in order to fix an earlier error.
	
	\item The variable $\RewindExtend \in \{\sR,\sE\}$ determines in classical iterations if a string of the local metadata or Pauli data is extended or rewound in the current iteration.

\end{itemize}

\subsubsection{Pseudo-codes}
This section contains the pseudo-codes for the main algorithm and the subroutines that each party runs locally in the simulation protocol. The subroutines are the following: \textsf{Preprocess}, which determines what will happen locally to the classical and quantum data in the current iteration of the simulation; \textsf{rewindMD} and \textsf{extendMD}, which process the local metadata; \textsf{syncMES} which handles the case when the two parties do not agree on the number of MES blocks they have used; \textsf{rewindPD} and \textsf{extendPD}, process the local Pauli data; and finally, \textsf{simulate}, in which the player moves on with the simulation of the input protocol according to the information from subroutine \textsf{Computejointstate} of \textsf{Preprocess}.   
 When the party believes that the classical data are fully synchronized, he or she uses the subroutine \textsf{Computejointstate} to extract the necessary information to decide how to evolve the joint quantum state next. This information includes $\JSone$ and $\JStwo$ defined in~\eqref{eqn:JS1} and~\eqref{eqn:JS2_1}, respectively, $\NewMetaA$, $\RewindExtend$, $\NewMetaBtilde$, $\mathit{Block}$ which represents the index of the block of unitary operations from the input protocol $\Pi$ the party will perform, $\PCorr$ representing Alice's Pauli corrections and $\PCorrtilde$ representing Alice's guess of Bob's Pauli corrections.
 \suppress{
 To get $\JStwo$ from $\JSone$, Alice and Bob follow the procedure explained in Section \ref{sec:general-discription-large-alphabet-cleveburhman}. The computed joint state will be of the following form
 \begin{equation}\label{eqn:JS2}
 \JStwo=\prod_i\br{\sigma_{\br{2r+1}i}U^{b_i}_{B,\br{i+1}r}\sigma_{\br{2r+1}i-1}U^{a_i}_{A,\br{i+1}r}\sigma_{\br{2r+1}i-1}\ldots\sigma_{2ri+3}U^{b_i}_{B,ir+1}\sigma_{2ri+2}U^{a_i}_{A,ir+1}\sigma_{2ri+1}}\ket{\Psi_{\mathrm{init}}}.
 \end{equation}
 }
 
 For the subroutines used in the simulation protocol, we list all the global variables accessed by the subroutine as the \textbf{Input} at the beginning of the subroutine. Whenever applicable, the relation between the variables when the subroutine is called is stated as the \textbf{Promise} and the global variables which are modified by the subroutine are listed as the \textbf{Output}. 
 
 \begin{remark}
 	The amount of communication in each iteration of Algorithm \ref{algo:Mainalgorithm} is independent of the iteration type.
 \end{remark}
 \begin{remark}
 	Since in every iteration of Algorithm \ref{algo:Mainalgorithm} the lengths of $\FullMA$ and $\FullMB$ increase by $1$, in order to be able to catch up on the metadata, Alice and Bob need to communicate two symbols at a time when extending the metadata. This is done by encoding the two symbols into strings of length $r$ of the channel alphabet $\Sigma$ using the mapping encodeMD in Algorithm \ref{algo:Preprocess} and decoding it using the mapping decodeMD in Algorithm \ref{algo:extendMD}.
 \end{remark}

\begin{algorithm}
    \Input{$n$ round protocol $\Pi$ in teleportation-based model over polynomial-size alphabet $\Sigma$}
	\BlankLine
	
	Initialize \\
	\nonl  $\qquad r \leftarrow \Theta\br{1/\sqrt{\epsilon}}$ \; 
	\nonl  $\qquad \Rtotal \leftarrow \lceil \frac{n}{2r}+\Theta(n\eps)\rceil$ \;
	\nonl  $\qquad q_{\MA},\ellMA,\ellMBtilde,\ellPA,\ellPBtilde\leftarrow 0$ \; \nonl  $\qquad \MA,\MBtilde,\PA,\PBtilde\leftarrow \emptyset$ \;	
	\BlankLine
	
	$h\leftarrow$ hash function of Lemma~\ref{lem:hashes} with $p=1/n^5$ and $o=s=\Theta(\log n)$ \;
	Measure $\Theta\br{\Rtotal}$ MESs in the computational basis and record the binary representation of the outcomes in $S_1,\ldots,S_{4\Rtotal}$ \;
	\tcp*[f]{\textbf{$4\Rtotal$ seeds of length $s$ for the hash function $h$}}\\
	
	\SetKwProg{ForLoop}{For}{}{}
	\SetAlgoLined
	\ForLoop{$i = 1 \to \Rtotal$}	
	    {   \SetAlgoVlined
	        \Comment*[f] {\textbf{Preprocessing phase}}\\
	
		    $\HMA \leftarrow h_{S_{4i-3}}\br{\MA}$ \; 
		    $\HMBtilde \leftarrow h_{S_{4i-2}}\br{\MBtilde}$ \; 
		    $\HPA \leftarrow h_{S_{4i-1}}\br{\PA}$ \; 
		    $\HPBtilde \leftarrow h_{S_{4i}}\br{\PBtilde}$ \;
		
	    }	

		\caption{\textbf{\textsf{Main algorithm }}(Alice's side)}\label{algo:Mainalgorithm}
\end{algorithm}

\setcounter{algocf}{2}

\begin{algorithm}
\setcounter{AlgoLine}{8}
    \SetKwBlock{Begin}{}{}
	\Begin{
        Send $$\br{\HMA,\ellMA,\HMBtilde,\ellMBtilde,\HPA,\ellPA,\HPBtilde,\ellPBtilde};$$\\
		
		Receive $$\br{\HMAtilde',\ellMAtilde',\HMB',\ellMB',\HPAtilde',\ellPAtilde',\HPB',\ellPB'};$$\\
		
		\BlankLine
		\textbf{\textsf{Preprocess}}\;
		
		\If   {$\Itertype \neq \SIM$}
		    {   Send $\msg$\;
			    Receive $\msg'$\;
		    }
		\tcp*[f]{\textbf{messages are communicated alternately}}\\  
		
	\BlankLine
	\Comment*[f] {\textbf{Case i.A}}\\
		\If   {$\Itertype = \MD \;\mathrm{and}\; \RewindExtend = \sR$}
		    {   \textbf{\textsf{rewindMD}}\;
		    }	
	\Comment*[f] {\textbf{Case i.B}}\\
	    \ElseIf {$\Itertype = \MD \;\mathrm{and}\; \RewindExtend = \sE$}
	        {   \textbf{\textsf{extendMD}}\;
	        }
    \Comment*[f] {\textbf{Case ii.A}}\\
        \ElseIf{$\Itertype = \MES \;\mathrm{and}\; \NewMetaA=\sC$}
            {   return\;
            }
    \Comment*[f] {\textbf{Case ii.B}}\\
        \ElseIf{$\Itertype = \MES \;\mathrm{and}\; \NewMetaA=\mathsf{0}$}
        	{   \textbf{syncMES}\;
        	}
    \Comment*[f] {\textbf{Case iii.A}}\\
        \ElseIf   {$\Itertype = \PD \;\mathrm{and}\; \RewindExtend = \sR$}
            {   \textbf{\textsf{rewindPD}}\;
            }
    \Comment*[f] {\textbf{Case iii.B}}\\
        \ElseIf{$\Itertype = \PD \;\mathrm{and}\; \RewindExtend = \sE$}
            {   \textbf{\textsf{extendPD}}\;
            }

	    \tcp*[f] {\textbf{Classical data are synchronized}}\\
	\Comment*[f] {\textbf{Case iv}}\\	
	    \Else{\textbf{\textsf{simulate}}.}
    } 
	\Return{\textup{\textbf{\textsf{Main algorithm}}}}\;
	\caption{\textbf{\textsf{Main algorithm }}(Alice's side, cont. from previous page)}
\end{algorithm}

\begin{algorithm}
	
	\Input{ 
		    $$\br{ 
		    	  \begin{array}{c}
			        \HMA,\ellMA,\HMBtilde,\ellMBtilde,\HPA,\ellPA,\HPBtilde,\ellPBtilde \\
			        \HMAtilde',\ellMAtilde',\HMB',\ellMB',\HPAtilde',\ellPAtilde',\HPB',\ellPB' \\
		            \FullMA,\MBtilde,\FullPA,\PBtilde,q_{\MA}
		          \end{array}
		         }$$
	      }

	\Output{ $\br{\Itertype,\RewindExtend,\NewMetaA,\FullMA,\ellMA, \NewMetaBtilde,\ellMBtilde,\msg}$ }
	
	\BlankLine
	\If   
	    {  
	        $\br{\HMA,\HMBtilde}=\br{\HMAtilde',\HMB'}  \;\mathrm{and}\; \ellMA=\ellMAtilde'=\ellMBtilde=\ellMB'=i-1$
	    }
	    {   Compute $q_{\MBtilde}$\;
	    }

	\Comment*[f] {\textbf{Processing metadata}}\\
	\Comment*[f] {\textbf{Case i.A}}\\
	\If
	   {
	   	   $\br{\HMA,\HMBtilde,\ellMA,\ellMBtilde}\neq \br{\HMAtilde',H'_{\mathrm{MB}},\ellMAtilde',\ellMB'}$
       }
	   {   $\Itertype \leftarrow \MD$\;
	   	   $\RewindExtend \leftarrow \sR$\;
	   	   $\NewMetaA \leftarrow \sC$\;
	   	   $\FullMA \leftarrow \left(\FullMA,\NewMetaA\right)$\;	
	   	   $\msg \leftarrow \text{dummy message of length } r$\;
	   }
   \Comment*[f] {\textbf{Case i.B}}\\
	\ElseIf
	   {   $\br{\ellMA < i-1} \;\mathrm{or}\; \br{\ellMBtilde < i-1}$
	   }
	   {   $\Itertype \leftarrow \MD$\;
	       $\RewindExtend \leftarrow \sE$\;
	   	   $\NewMetaA \leftarrow \sC$\;
	   	   $\FullMA \leftarrow \left(\FullMA,\NewMetaA\right)$\;
	   	   \If   {$\ellMA < i-1$}
	             {   $\msg \leftarrow \mathrm{encodeMD}\br{\FullMA\Br{\ellMA+1,\ellMA+2}}\!;$ 
	             	 \tcp*[f]{\textbf{Encode MD in $\Sigma^r$}}\\
	             }
	        \Else
	             {   $\msg \leftarrow \text{dummy message of length } r$\;
	             }      	
	   }	
	\Comment*[f] {\textbf{Comparing number of used MES blocks}}\\
	\Comment*[f] {\textbf{Case ii.A}}\\
	\ElseIf   {$q_{\MA} > q_{\MBtilde}$}
	   {   $\Itertype \leftarrow \MES$\;
	   	   $\NewMetaA \leftarrow \sC$\;
	   	   $\FullMA \leftarrow \left(\FullMA,\NewMetaA\right)$\;
           $\ellMA \leftarrow \ellMA+1$\;
	   	   $\NewMetaBtilde \leftarrow 0$\;
           $\MBtilde \leftarrow \br{\MBtilde,\NewMetaBtilde}$\;
           $\ellMBtilde \leftarrow \ellMBtilde+1$\;
		   $\msg \leftarrow \text{dummy message of length } r$\;
	   }
	\caption{\textbf{\textsf{Preprocess }} (Alice's side)}
\label{algo:Preprocess}
\end{algorithm}

\setcounter{algocf}{3}

\begin{algorithm}
\setcounter{AlgoLine}{25}
   \Comment*[f] {\textbf{Case ii.B}}\\
	\ElseIf   {$q_{\MA} < q_{\MBtilde}$}	
	   {   $\Itertype \leftarrow \MES$\;
	   	   $\NewMetaA \leftarrow \mathsf{0}$\;
	   	   $\FullMA \leftarrow \left(\FullMA,\NewMetaA\right)$\;
           $\ellMA \leftarrow \ellMA+1$\;
		   $\NewMetaBtilde \leftarrow \sC$\;
           $\MBtilde \leftarrow \br{\MBtilde,\NewMetaBtilde}$\;
           $\ellMBtilde \leftarrow \ellMBtilde+1$\;
		   $\msg \leftarrow \text{dummy message of length }r$\;
	   }	

	\Comment*[f] {\textbf{Processing Pauli data}}\\
	\Comment*[f] {\textbf{Case iii.A}}\\
	\ElseIf   {$\br{\HPA,\HPBtilde,\ellPA,\ellPBtilde}\neq \br{\HPAtilde',\HPB',\ellPAtilde',\ellPB'}$
	      }
	   {   $\Itertype \leftarrow \PD$\;
	       $\RewindExtend \leftarrow \sR$\;
	   	   $\NewMetaA \leftarrow \sC$\;
	       $\FullMA \leftarrow \left(\FullMA,\NewMetaA\right)$\;
           $\ellMA \leftarrow \ellMA+1$\;
           $\NewMetaBtilde \leftarrow \sC$\;
           $\MBtilde \leftarrow \br{\MBtilde,\NewMetaBtilde}$\;
           $\ellMBtilde \leftarrow \ellMBtilde+1$\;
		   $\msg \leftarrow \text{dummy message of length }r$\;
	   }
	 \Comment*[f] {\textbf{Case iii.B}}\\
	\ElseIf   {$\br{\ellPA < 3q_{\MA} \cdot r} \;\mathrm{or}\; \br{\ellPBtilde <
			3q_{\MBtilde} \cdot r}$}
	{   $\Itertype \leftarrow \PD$\;
		$\RewindExtend \leftarrow \sE$\;
		$\NewMetaA \leftarrow \sC$\;
		$\FullMA \leftarrow \left(\FullMA,\NewMetaA\right)$\;
        $\ellMA \leftarrow \ellMA+1$\;
        $\NewMetaBtilde \leftarrow \sC$\;
        $\MBtilde \leftarrow \br{\MBtilde,\NewMetaBtilde}$\;
        $\ellMBtilde \leftarrow \ellMBtilde+1$\;
		\If   {$\ellPA < 3q_{\MA} \cdot r$}
		{   $\msg \leftarrow {\FullPA}\Br{\ellPA+1,\ellPA+r}$
		}	
	}
	\caption{\textbf{\textsf{Preprocess}} (Alice's side, cont. from previous page)}
\end{algorithm}

\setcounter{algocf}{3}

\begin{algorithm}
\setcounter{AlgoLine}{55}
	\Comment*[f] {\textbf{Processing joint quantum state}}\\
	\Comment*[f] {\textbf{Case iv}}\\
	\Else
	{   $\Itertype \leftarrow \SIM$\;
		\textsf{\textbf{computejointstate}}\;
		$\FullMA=\left(\FullMA,\NewMetaA\right)$\;
        $\ellMA \leftarrow \ellMA+1$\;
        $\MBtilde \leftarrow \br{\MBtilde,\NewMetaBtilde}$\;
        $\ellMBtilde \leftarrow \ellMBtilde+1$\;
	}	
	
	\Return{\textup{\textbf{\textsf{Preprocess}}}}\;
	\caption{\textbf{\textsf{Preprocess}} (Alice's side, cont. from previous page)}
\end{algorithm}

\begin{algorithm}
	
	\Input{$\br{\HMA,\ellMA,\HMBtilde,\ellMBtilde,\HMAtilde',\ellMAtilde',\HMB',\ellMB'}$}
	
	\Promise{$\br{\HMA,\HMBtilde,\ellMA,\ellMBtilde}\neq \br{\HMAtilde',\HMB',\ellMAtilde',\ellMB'}$.}
	
	\Output{$\br{\ellMA,\ellMB'}$}
		
	\If   {$\ellMA \neq \ellMAtilde' \;\mathrm{or}\; \ellMBtilde \neq \ellMB'$}
	{   \If   {$\ellMA > \ellMAtilde'$}
		{$\ellMA \leftarrow \ellMA-1$\;}		
		\If   {$\ellMBtilde > \ellMB'$}
		{$\ellMBtilde \leftarrow \ellMBtilde-1$\;}		
	}
	\Else   {
		\If   {$\HMA \neq \HMAtilde'$}
		{$\ellMA\leftarrow\ellMA-1$\;}
		\If   {$\HMBtilde \neq \HMB'$}
		{$\ellMBtilde\leftarrow\ellMBtilde-1$\;}		
	}
	
	\Return{\textup{\textbf{\textsf{rewindMD}}}}\;
	\caption{\textbf{\textsf{rewindMD}} (Alice's side)}
	\label{algo:rewindMD}
\end{algorithm}

\begin{algorithm}
	\Input{$\br{\ellMA,\ellMBtilde,\MBtilde,\msg',i}$}
	
	\Promise{
$\br{\HMA,\HMBtilde,\ellMA,\ellMBtilde}=\br{\HMAtilde',\HMB',\ellMAtilde',\ellMB'}$,
	$\ellMA < i-1 \quad \mathrm{or} \quad \ellMBtilde < i-1$.
}
	
	\Output{$\br{\ellMA,\MBtilde,\ellMBtilde}$}
		
	\If   {$\ellMA < i-1$}
	   {   $\ellMA \leftarrow \ellMA+2$\;          	  	
	   }
	\ElseIf    {$\ellMA = i-1$}
	   {    $\ellMA \leftarrow \ellMA+1$\;
	   }
	\If   {$\ellMBtilde < i-1$}
	   {   $\MBtilde\Br{\ellMBtilde+1,\ellMBtilde+2} \leftarrow \mathrm{decodeMD}\br{\msg'}\!;$        	
	   \tcp*[f]{\textbf{decode MD from $\Sigma^r$}}\\
	   	   $\ellMBtilde \leftarrow \ellMBtilde+2$\;          	  	
	   }	
	\ElseIf   {$\ellMBtilde=i-1$}
	    {   $\MBtilde \leftarrow \br{\MBtilde,\sC}$\;        	
	   		$\ellMBtilde \leftarrow \ellMBtilde+1$\;          	  	
	   	}
	
	\Return{\textup{\textbf{\textsf{extendMD}}}}\;
	\caption{\textbf{\textsf{extendMD}} (Alice's side) }
	\label{algo:extendMD}
\end{algorithm}

\begin{algorithm}
	
	\Input{$\br{\FullPA,q_{\MA}}$}
	
    \Promise{ $\br{\HMA,\HMBtilde,\ellMA,\ellMBtilde}=\br{\HMAtilde',\HMB',\ellMAtilde'+1,\ellMB'+1}$,
    $\ellMA=\ellMBtilde=i \;,\; q_{\MA} < q_{\MBtilde}$.}
	
	\Output{$q_{\MA},\NewPauliA,\FullPA$}
	
	Recall that~$A' B' C'$ are the registers that are used to generate the joint quantum state of the protocol being simulated, and~$C'$ is the communication register\;

	Let~$E_1 E_2 \dotsb E_r$ be the~$r$ registers with Alice containing halves of the block of~$r$ MESs with indices in the interval~$(q_{\text{MA}} \cdot r, \: (q_{\text{MA}} + 1)\cdot r]$ \;

	Teleport~$C'$ using~$E_1$; then teleport~$E_2$ using~$E_3$, $E_4$ using~$E_5$, and so on (i.e., teleport~$E_j$ using $E_{j+1}$ for even~$j \in [r-2]$), and then store~$E_r$ in register~$C'$ \;
	\tcp*[f]{\textbf{See Section~\ref{sec:Out-of-sync teleportation} for the rationale, and Bob's analogue of this step}} \\


	Store the teleportation measurement outcomes in $m\in{\Sigma}^r$\;

	$\NewPauliA \leftarrow \br{m,0^r,0^r}$\;
    $\FullPA \leftarrow \br{\FullPA,\NewPauliA}$\;
    $q_{\MA} \leftarrow q_{\MA}+1$\;
		
	\Return{\textup{\textbf{\textsf{syncMES}}}}\;
	\caption{\textbf{\textsf{syncMES}} (Alice's side)}
	\label{algo:syncMES}
\end{algorithm}

\begin{algorithm}
		
	\Input{$\br{\HPA,\ellPA,\HPBtilde,\ellPBtilde,\HPAtilde',\ellPAtilde',\HPB',\ellPB'}$}
	
	\Promise{$\br{\HMA,\HMBtilde,\ellMA,\ellMBtilde}=\br{\HMAtilde',\HMB',\ellMAtilde'+1,\ellMB'+1}$ , 
    $\ellMA=\ellMBtilde=i \;,\; q_{\MA}=q_{\MBtilde}$ , 
    $\br{\HPA,\HPBtilde,\ellPA,\ellPBtilde}\neq \br{\HPAtilde',\HPB',\ellPAtilde',\ellPB'}$.}
	
	\Output{$\br{\ellPA,\ellPBtilde}$}
	
	\If   {$\ellPA \neq \ellPAtilde' \;\mathrm{or}\; \ellPBtilde \neq \ellPB'$}
	{   \If   {$\ellPA > \ellPAtilde'$}
		{$\ellPA \leftarrow \ellPA-r$\;}		
		\If   {$\ellPBtilde > \ellPB'$}
		{$\ellPBtilde \leftarrow \ellPBtilde-r$\;}		
	}
	\Else   {
		\If   {$\HPA \neq \HPAtilde'$}
		{$\ellPA \leftarrow \ellPA-r$\;}
		\If   {$\HPBtilde \neq \HPB'$}
		{$\ellPBtilde \leftarrow \ellPBtilde-r$\;}		
	}	
	
	\Return{\textup{\textbf{\textsf{rewindPD}}}}\;
	\caption{\textbf{\textsf{rewindPD}} (Alice's side)}
	\label{algo:rewindPD}
\end{algorithm}

\begin{algorithm}
	
	\Input{$\br{\ellPA,\ellPBtilde,\PBtilde,q_{\MA},q_{\MBtilde},\msg'}$}
	
	\Promise{$\br{\HMA,\HMBtilde,\ellMA,\ellMBtilde}=\br{\HMAtilde',\HMB',\ellMAtilde'+1,\ellMB'+1}$ , 
    $\ellMA=\ellMBtilde=i \;,\; q_{\MA}=q_{\MBtilde}$ ,  $\br{\HPA,\HPBtilde,\ellPA,\ellPBtilde}= \br{\HPAtilde',\HPB',\ellPAtilde',\ellPB'}$ , 
		$\ellPA < 3q_{\MA} \cdot r \quad \mathrm{or} \quad \ellPBtilde < 3q_{\MBtilde} \cdot r$.}
	
	\Output{$\br{\ellPA,\PBtilde,\ellPBtilde}$}
	
	\If   {$\ellPA < 3q_{\MA} \cdot r$}
  	    {$\ellPA \leftarrow \ellPA+r$\;}
	\If   {$\ellPBtilde < 3q_{\MBtilde} \cdot r$}
      	{   $\PBtilde\Br{\ellPBtilde+1:\ellPBtilde+r} \leftarrow \msg'$\;
		$\ellPBtilde \leftarrow \ellPBtilde+r$\;
	    }
	
	\Return{\textup{\textbf{\textsf{extendPD}}}}\;
	\caption{\textbf{\textsf{extendPD}} (Alice's side) }
	\label{algo:extendPD}
\end{algorithm}

\begin{algorithm}
	
	\Input{$\br{\FullMA,\MBtilde,\FullPA,\PBtilde}$}
	
	\Promise{$\br{\HMA,\HMBtilde,\ellMA,\ellMBtilde}=\br{\HMAtilde',\HMB',\ellMAtilde',\ellMB'}$,
    $\ellMA=\ellMBtilde=i-1 \;,\; q_{\MA}=q_{\MBtilde}$, $\br{\HPA,\HPBtilde,\ellPA,\ellPBtilde}= \br{\HPAtilde',\HPB',\ellPAtilde',\ellPB'}$,
		$\ellPA= \ellPAtilde'=3q_{\MA} \cdot r\;,\; \ellPBtilde =\ellPB'=3q_{\MBtilde} \cdot r$.}
	
	\Output{$\br{\JSoneA, \JStwoA,\NewMetaA,\NewMetaBtilde, \mathit{Block},\RewindExtend,\PCorr,\PCorrtilde}$}
	
	Compute $\JSoneA$\;
	Compute $\JStwoA$\;
	Compute $\NewMetaA$\;
    Compute	$\RewindExtend$\;
	Compute $\NewMetaBtilde$\;
	Compute $\mathit{Block}$\;
	Compute $\PCorr$\;
	Compute $\PCorrtilde$\;
	\tcp*[f]{\textbf{Refer to Sections~\ref{sec:large-classical-first-rep},~\ref{sec:large-classical-second-rep} to see how these variables are computed}}
	
	\Return{\textup{\textbf{\textsf{Computejointstate}}}}\;
	\caption{\textbf{\textsf{Computejointstate}} (Alice's side)}
	\label{algo:Computejointstate}
\end{algorithm}

\begin{algorithm}
	
	\Input{$\br{q_{\MA},\FullPA,\ellPA,\PBtilde,\ellPBtilde,\RewindExtend,\NewMetaA,\mathit{Block},\PCorr,\PCorrtilde}$}
	
	\Promise{$\br{\HMA,\HMBtilde,\ellMA,\ellMBtilde,q_{\MA}}=\br{\HMAtilde',\HMB',\ellMAtilde'+1,\ellMB'+1,q_{\MBtilde}}$,
    $\ellMA=\ellMBtilde=i$, $\br{\HPA,\HPBtilde,\ellPA,\ellPBtilde}= \br{\HPAtilde',\HPB',\ellPAtilde',\ellPB'}$, 
	$\ellPA=\ellPBtilde=3q_{\MA} \cdot r$}
	
	\Output{$\br{\FullPA,\ellPA,\PBtilde,\ellPBtilde}$}
	
	Continue the simulation of the input protocol according to $\mathit{Block}$, $\NewMetaA$ and $\PCorr$\;
	Record all teleportation measurement outcomes in $\alpha$\;
	Record all received Bob's teleportation measurement outcomes in $\beta$\;
	$\NewPauliA \leftarrow \br{\alpha,\beta,\PCorr}$\;
	$\FullPA \leftarrow \br{\FullPA,\NewPauliA}$\;
	$\ellPA \leftarrow \ellPA+3r$\;
	$\NewPauliBtilde\leftarrow \br{\beta,\alpha,\PCorrtilde}$\;
	$\PBtilde \leftarrow \br{\PBtilde,\NewPauliBtilde}$\;
	$\ellPBtilde \leftarrow \ellPBtilde+3r$\;
	$q_{\MA} \leftarrow q_{\MA}+1$\;

	\Return{\textup{\textbf{\textsf{simulate}}}}\;
	\caption{\textbf{\textsf{simulate}} (Alice's side)}
	\label{algo:simulate}
\end{algorithm}

\newpage
\subsection{Analysis}\label{subsec:polysizeclassicalanalysis}

In order to show the correctness of the above algorithm, we condition on some view of the metadata and Pauli data, i.e., $\FullMA$, $\MA$, $\MAtilde$, 
$\FullMB$, $\MB$, $\MBtilde$, 
$\FullPA$, $\PA$, $\PAtilde$, 
$\FullPB$, $\PB$ and $\PBtilde$. 
We define a potential function $\Phi$ as
\begin{align*}
\Phi\defeq \Phi_{\mathrm{Q}}+\Phi_{\MD}+\Phi_{\PD}\enspace,
\end{align*}
where $\Phi_{\MD}$ and $\Phi_{\PD}$ measure the correctness of the two parties' current estimate of each other's metadata and Pauli data, respectively, and $\Phi_{\mathrm{Q}}$ measures the progress in reproducing the joint state of the input protocol. We define
\begin{align}
	&\mdAplus \defeq~\text{the length of the longest prefix where $\MA$ and $\MAtilde$ agree;}\label{eqn:mda+}\\
	&\mdBplus \defeq~\text{the length of the longest prefix where $\MB$ and $\MBtilde$ agree;}\label{eqn:mdb+}\\
	&\mdAminus \defeq \max\{\ellMA,\ellMAtilde\}-\mdAplus;\label{eqn:mda-}\\
	&\mdBminus \defeq \max\{\ellMB,\ellMBtilde\}-\mdBplus;\label{eqn:mdb-}\\
	&\pdAplus \defeq \lfloor\frac{1}{r} \times~\text{the length of the longest prefix where $\PA$ and $\PAtilde$ agree}\rfloor;\label{eqn:pda+}\\
	&\pdBplus \defeq \lfloor\frac{1}{r} \times~\text{the length of the longest prefix where $\PB$ and $\PBtilde$ agree}\rfloor;\label{eqn:pdb+}\\
	&\pdAminus \defeq \frac{1}{r} \max\{\ellPA,\ellPAtilde\}-\pdAplus;\label{eqn:pda-}\\
	&\pdBminus \defeq \frac{1}{r} \max\{\ellPB,\ellPBtilde\}-\pdBplus.\label{eqn:pdb-}
\end{align}

Also, recall that

\begin{align}
&g \defeq
\text{the number of good unitary blocks in $\JStwo$,}\label{eqn:g-analysis}\\
&b \defeq
\text{the number of bad unitary blocks in $\JStwo$, and}\label{eqn:b-analysis}\\
&u\defeq |q_{\MA}-q_{\MB}|,\label{eqn:u-analysis}
\end{align}

with $q_{\MA}$ and $q_{\MB}$ the number of non-$\sC$ iterations for Alice and Bob, respectively.

\suppress{
Note that by the following lemma, the $\JStwo$ representation defined above is well-defined.

\begin{lemma}\label{lem:invariance}
	In the end of any iteration, the joint state is of the following form.
	\[\frac{1}{2^{\abs{(q_{\MA}-q_{\MB})\cdot r}}} \sum_{l\in\set{0,\ldots,d-1}^{\abs{(q_{\MA}-q_{\MB})\cdot r}}}\ketbra{l}\otimes\ketbra{\psi_l},\]
	where $\ket{\psi_l}$ is of the form JS1, and $d=|\Sigma|$.
	
	Moreover, The first $\min\br{q_{\MA},q_{\MB}}$ blocks (from right to left) of $\ket{\psi_l}$ are same for all $l$. If $q_{\MA}>q_{\MB}$ (resp. $q_{\MA}<q_{\MB}$), then in the last $q_{\MA}-q_{\MB}$ (resp. $q_{\MB}-q_{\MA}$) blocks, all the powers of $U_{B,i}$ (resp. $U_{A,i}$) are $0$.
\end{lemma}
\begin{proof}
	The above form is closed under all quantum operations occurred in the algorithms.
\end{proof}
}

Now we are ready to define the components of the potential function. At the end of the $i$-th iteration, we let
\begin{align}
	&\Phi_{\mathrm{Q}}\defeq g-b-5u, \label{eqn:phiQ}\\
	&\Phi_{\MD}\defeq \mdAplus-3\mdAminus+\mdBplus-3\mdBminus-2i,\label{eqn:phimd}\\
	&\Phi_{\PD}\defeq  \pdAplus-\pdAminus+\pdBplus-\pdBminus-3q_{\MA}-3q_{\MB},\label{eqn:phipd}\\
	&\Phi\defeq\Phi_{\mathrm{Q}}+\Phi_{\MD}+\Phi_{\PD}.\label{eqn:phi}
\end{align}
where $g$, $b$ and and $u$ are defined in Eqs.~\eqref{eqn:g-analysis}, \eqref{eqn:b-analysis}, and \eqref{eqn:u-analysis}.

\begin{lemma}\label{lem:phimdpdnegativelargeclassical}
	Throughout the algorithm, it holds that
	\begin{itemize}
		\item $\Phi_{\MD} \leq 0$ with equality if and only if Alice and Bob have full knowledge of each other's metadata, i.e.,   $\mdAplus = \mdBplus = i$ and $\mdAminus = \mdBminus = 0$.
		\item $\Phi_{\PD} \leq 0$ with equality if and only if Alice and Bob have full knowledge of each other's Pauli data, i.e., $\pdAplus  = 3q_{\MA}$, $ \pdBplus = 3q_{\MB}$ and $\pdAminus = \pdBminus = 0$.
	\end{itemize}
\end{lemma}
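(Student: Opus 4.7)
The plan is to prove the two bounds by purely combinatorial reasoning from the definitions, using two auxiliary invariants that the algorithm maintains at the end of every iteration: (a) $\max\{\ellMA,\ellMAtilde\}\leq i$ and $\max\{\ellMB,\ellMBtilde\}\leq i$ after the $i$-th iteration, and (b) $\max\{\ellPA,\ellPAtilde\}\leq 3r\,q_{\MA}$ and $\max\{\ellPB,\ellPBtilde\}\leq 3r\,q_{\MB}$. Once these are in hand, the algebra is short and transparent.

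First I would establish invariant (a). Since $\FullMA$ grows by exactly one symbol in every iteration, $|\FullMA|=i$ and $\ellMA\leq i$ is immediate. For $\ellMAtilde$, I would inspect each place where Bob updates his guess: the only places this quantity changes are inside \textsf{rewindMD} (where it decreases) and inside \textsf{extendMD} (where it increases by $1$ or $2$ subject to a guard that caps it against the current iteration counter $i$). A straightforward induction on $i$ then gives $\ellMAtilde\leq i$. Invariant (b) is analogous: $|\FullPA|=3r\,q_{\MA}$ holds as a joint invariant because the only places $\FullPA$ grows are \textsf{syncMES} and \textsf{simulate}, and both append exactly $3r$ symbols to $\FullPA$ while incrementing $q_{\MA}$ by one. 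For $\ellPAtilde$, \textsf{extendPD} increases it by $r$ only under the explicit guard $\ellPAtilde<3q_{\MAtilde}\cdot r$, and one checks by induction that $q_{\MAtilde}\leq q_{\MA}$ is maintained by the coordination in \textsf{syncMES}/\textsf{Preprocess}, yielding $\ellPAtilde\leq 3r\,q_{\MA}$.

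Given invariant (a), the bound on $\Phi_{\MD}$ falls out of the identity
\begin{equation*}
\mdAplus+\mdAminus \;=\; \max\{\ellMA,\ellMAtilde\},
\end{equation*}
which follows directly from the definitions in \eqref{eqn:mda+} and \eqref{eqn:mda-}. Then
\begin{equation*}
\mdAplus-3\mdAminus \;=\; (\mdAplus+\mdAminus)-4\mdAminus \;\leq\; i-4\mdAminus \;\leq\; i,
\end{equation*}
with equality throughout iff $\mdAminus=0$ and $\max\{\ellMA,\ellMAtilde\}=i$, i.e.\ iff $\mdAplus=i$ and $\ellMA=\ellMAtilde=i$. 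The same reasoning applied to Bob gives $\mdBplus-3\mdBminus\leq i$ with equality iff $\mdBplus=i$ and $\mdBminus=0$. Adding the two and subtracting $2i$ yields $\Phi_{\MD}\leq 0$ with the claimed equality condition.

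The argument for $\Phi_{\PD}$ is structurally identical but with $q_{\MA}$ (resp.\ $q_{\MB}$) playing the role of $i$, using invariant (b) together with the identity $\pdAplus+\pdAminus=\tfrac{1}{r}\max\{\ellPA,\ellPAtilde\}$ coming from \eqref{eqn:pda+} and \eqref{eqn:pda-}. Here the inequality is weaker (only $\pdAplus-\pdAminus\leq 3q_{\MA}$, not $-3\pdAminus$), because the $-3q_{\MA}-3q_{\MB}$ offset in \eqref{eqn:phipd} is set to exactly cancel $3q_{\MA}+3q_{\MB}$ rather than to leave slack in $\pdAminus$; equality requires $\pdAminus=0$ and $\ellPA=\ellPAtilde=3r\,q_{\MA}$, and symmetrically for $B$. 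Summing gives $\Phi_{\PD}\leq 0$ with the stated characterization of equality.

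The one genuinely delicate point is the invariant $q_{\MAtilde}\leq q_{\MA}$ (and symmetrically $q_{\MBtilde}\leq q_{\MB}$), which is what prevents Bob from over-extending $\PAtilde$ beyond the true length of Alice's Pauli data. This is the only step that requires a careful, case-by-case read of \textsf{Preprocess} across all four iteration types; everything else is bookkeeping on the definitions.
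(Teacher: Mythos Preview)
Your argument for $\Phi_{\MD}$ is correct but takes a detour: you bound $\mdAplus+\mdAminus=\max\{\ellMA,\ellMAtilde\}\leq i$, which forces you to establish invariant~(a) for Bob's variable $\ellMAtilde$. The paper instead observes directly that $\mdAplus\leq i$ (the common prefix of $\MA$ and $\MAtilde$ cannot be longer than $\MA$ itself, and $|\MA|=\ellMA\leq|\FullMA|=i$) and that $\mdAminus\geq 0$, from which $\Phi_{\MD}\leq 0$ and the equality characterisation are immediate. Your route works, but the extra invariant is unnecessary.

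For $\Phi_{\PD}$ there is a genuine gap. The claim $q_{\MAtilde}\leq q_{\MA}$ is not an invariant of the algorithm, and neither is the weaker $\ellPAtilde\leq 3r\,q_{\MA}$ that you derive from it. A single adversarial corruption of the hashes can cause Bob to enter Case~iv (\textsf{simulate}) or Case~ii.A while Alice stays in Case~i; in Bob's \textsf{simulate}, $\ellPAtilde$ grows by $3r$ while $q_{\MA}$ is unchanged, and in Bob's Case~ii.A the symbol~$0$ is appended to $\MAtilde$, incrementing $q_{\MAtilde}$ without any change to $q_{\MA}$. So bounding $\max\{\ellPA,\ellPAtilde\}$ by $3r\,q_{\MA}$ fails. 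The fix is the same simplification as above: the common prefix length is bounded by the \emph{minimum} of the two lengths, so $\pdAplus\leq \tfrac{1}{r}\ellPA\leq \tfrac{1}{r}|\FullPA|=3q_{\MA}$ directly, with no need to control $\ellPAtilde$ or $q_{\MAtilde}$ at all. Combined with $\pdAminus\geq 0$ this gives $\Phi_{\PD}\leq 0$ and the equality condition in one line, which is exactly what the paper does.
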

\begin{proof}
	The first statement follows from the property that $\mdAplus,\mdBplus \leq i$, and the second statement holds since $\pdAplus \leq 3q_{\MA}$ and $\pdBplus \leq 3q_{\MB}$.
\end{proof}

Note that if $g-b-u\geq n/{2r}$, the noiseless protocol embedding described in Section \ref{sec:nslss}, guarantees that not only is the correct final state of the original protocol produced and swapped into the safe registers $\tilde{A}$, $\tilde{B}$ and $\tilde{C}$, but also they remain untouched by the bad and ugly blocks of the simulation. Therefore, by Lemma \ref{lem:phimdpdnegativelargeclassical}, for successful simulation of an $n$-round protocol it suffices to have $\Phi \geq n/{2r}$, at the end of the simulation.

The main result of this section is the following:

\setcounter{theorem}{0}
\begin{theorem}[\textbf{Restated}]\label{theorem:simplealg}
	Consider any $n$-round alternating communication protocol $\Pi$ in the teleportation-based model, communicating messages over a noiseless channel with an alphabet $\Sigma$ of bit-size $\Theta\br{\log n}$. Algorithm \ref{algo:Mainalgorithm} is a computationally efficient coding scheme which given $\Pi$, simulates it with probability at least $1-2^{-\Theta\br{n\epsilon}}$, over any fully adversarial error channel with alphabet $\Sigma$ and error rate $\epsilon$. The simulation uses $n\br{1+\Theta\br{\sqrt{\epsilon}}}$ rounds of communication, and therefore achieves a communication rate of $1-\Theta\br{\sqrt{\epsilon}}$. Furthermore. the computational complexity of the coding operations is~$O\br{n^2}$.
\end{theorem}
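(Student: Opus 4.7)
The plan is to prove the theorem via a potential function argument, tracking how $\Phi = \Phi_\mathrm{Q} + \Phi_\MD + \Phi_\PD$ evolves across iterations. First I would fix the randomness and condition on the full transcript determining the sequence of iteration types and adversarial errors. I would partition the $\Rtotal$ iterations into \emph{good} iterations (no transmission error and no hash collision among the four hashes exchanged) and \emph{bad} iterations (at least one of the two occurs). By Lemma~\ref{lem:hashes} and the choice $p = 1/n^5$ together with independent seeds obtained from the measured MESs, a union bound over $4\Rtotal = O(n)$ hash comparisons shows that with probability at least $1 - 2^{-\Theta(n\eps)}$ the number of hash-collision iterations is at most $n\eps$; combined with the adversarial budget $\eps\Rtotal$, the total number of bad iterations is $\leq 3\eps\Rtotal = \Theta(n\eps)$.

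Next I would perform the core case analysis: for each of the six cases in \textbf{\textsf{Preprocess}} (i.A, i.B, ii.A, ii.B, iii.A, iii.B) plus the simulation case iv, quantify the change in $\Phi$ in a good iteration and upper-bound the loss in a bad iteration. The key claim, which is the engine of the argument, is that in a good iteration $\Phi$ increases by at least $1$: the intended subroutine runs correctly, so either $\Phi_\MD$ increases by $2$ (offsetting the $-2i$ drift, via extending two symbols or removing a mismatched suffix on both sides), or $\Phi_\PD$ increases by $2$ (via extending one $r$-symbol block of Pauli data or rewinding a disagreeing block on both sides), or $\Phi_\mathrm{Q}$ increases by $1$ (either $b$ decreases by one via an actively reversed bad block in case iv, or $g$ grows by one when $b = 0$, or $u$ decreases by one after a \textbf{\textsf{syncMES}} call while the newly created ugly block becomes a bad block with total effect $-(-5) + (-1) = 4 \geq 1$). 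Conversely, in a bad iteration the damage is bounded by a small constant: a corrupted transmission or bad hash can shrink $\mdA^+ + \mdB^+$ by at most $1$ while $\mdA^- + \mdB^-$ grows by at most $1$ (contributing $-4$ to $\Phi_\MD$), or analogously change $\Phi_\PD$ by at most $O(1)$, or convert a good block to a bad block (changing $\Phi_\mathrm{Q}$ by at most $-2$), or trigger out-of-sync teleportation which by the \textbf{\textsf{syncMES}} analysis of Section~\ref{sec:Out-of-sync teleportation} produces at most one new ugly block and bounded additional bad blocks, costing at most $O(1)$ in $\Phi_\mathrm{Q}$.

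Combining these per-iteration bounds, after $\Rtotal$ iterations we get $\Phi \geq (\Rtotal - 3\eps\Rtotal) \cdot 1 - 3\eps\Rtotal \cdot O(1) \geq \Rtotal - O(\eps \Rtotal)$. Choosing $r = \Theta(1/\sqrt{\eps})$ and $\Rtotal = \lceil n/(2r) + \Theta(n\eps)\rceil$ so that $\Rtotal \cdot (1 - O(\eps)) \geq n/(2r)$, Lemma~\ref{lem:phimdpdnegativelargeclassical} then forces $\Phi_\MD = \Phi_\PD = 0$ and $\Phi_\mathrm{Q} = g - b - 5u \geq n/(2r)$, which combined with $b, u \geq 0$ gives $g \geq n/(2r)$. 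The noiseless protocol embedding of Section~\ref{sec:nslss} then ensures the $n+1$ correct unitaries have been applied and swapped into the safe registers before any bad or ugly block touches them, so the output matches $\Pi(\ket{\psi_\mathrm{init}})$. The communication rate and success probability follow immediately from the parameter choices. The hard part, and where I would be most careful, is the case analysis around out-of-sync teleportation: verifying that the weight $-5u$ in $\Phi_\mathrm{Q}$ is large enough to absorb the combined effect on $g, b$ when \textbf{\textsf{syncMES}} fires or when a bad iteration creates a fresh ugly block, and that these bookkeeping changes remain consistent with the $\JSone/\JStwo$ representations of Section~\ref{sec:large-classical-first-rep} and Section~\ref{sec:large-classical-second-rep} even when Alice's and Bob's local views disagree about whether the iteration was of type $\MES$, $\SIM$, or $\sC$. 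Efficiency ($O(n^2)$ computation) follows from the fact that hashing, encoding, and joint-state bookkeeping at iteration $i$ touch strings of length $O(n)$.
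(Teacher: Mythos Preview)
Your approach matches the paper's exactly: the same potential $\Phi=\Phi_{\mathrm Q}+\Phi_{\MD}+\Phi_{\PD}$, the same case split (Lemmas~\ref{lem:potential increase} and~\ref{lem:simpleerrorpotential} in the paper show $+1$ per good iteration and at most $-45$ per bad one), and the same finish via Lemma~\ref{lem:phimdpdnegativelargeclassical} and the noiseless embedding.

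Two small corrections are worth making. First, the adversary's budget is $\epsilon$ times the number of \emph{symbols} transmitted, not $\epsilon\Rtotal$; since each iteration sends $2r+\Theta(1)$ symbols and the total is $\approx n(1+\Theta(\sqrt\epsilon))$, the number of iterations containing a corrupted symbol is at most $2n\epsilon$, not $\epsilon\Rtotal=\Theta(n\epsilon^{3/2})$. Your stated equality ``$3\epsilon\Rtotal=\Theta(n\epsilon)$'' is therefore off. Second, Lemma~\ref{lem:phimdpdnegativelargeclassical} does not force $\Phi_{\MD}=\Phi_{\PD}=0$; it only gives $\Phi_{\MD},\Phi_{\PD}\le 0$, whence $g-b-5u=\Phi_{\mathrm Q}\ge\Phi\ge n/(2r)$, and then $g\ge n/(2r)$ follows since $b,u\ge 0$. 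With these fixes (and the paper's explicit constant $45$ in place of your unspecified $O(1)$ for bad iterations), your argument goes through as written.
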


{\bf Proof Outline.} We prove that any iteration without an error or hash collision increases the potential by at least one while any iteration with error or hash collision reduces the potential by at most some fixed constant. As in Ref.~\cite{Haeupler:2014}, with very high probability the number of hash collisions is at most $O(n\epsilon)$, the same order of magnitude as the number of errors, therefore negligible. Finally, our choice of the total number of iterations, $R_{total} \defeq \lceil n/2r + \kappa n \epsilon \rceil$ (for a sufficiently large constant $\kappa$), guarantees an overall potential increase of at least~$n/2r$. As explained above, this suffices to prove successful simulation of the input protocol.

\setcounter{theorem}{4}
\begin{lemma}\label{lem:potential increase}
	Each iteration of the Main Algorithm (Algorithm~\ref{algo:Mainalgorithm}) without a hash collision or error increases the potential $\Phi$ by at least $1$.
\end{lemma}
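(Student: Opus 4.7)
The plan is a case analysis over the iteration types $\{i.A, i.B, ii.A, ii.B, iii.A, iii.B, iv\}$ chosen by \textsf{Preprocess}. A preliminary step is to verify that, in an iteration without channel errors or hash collisions, Alice's and Bob's case selections are forced to be coordinated: since \textsf{Preprocess} is a deterministic function of the exchanged tuple, and error-free transmission means that each party receives exactly what the other sent, every hash equality (or inequality) tested inside \textsf{Preprocess} faithfully reflects the underlying string (in)equality. A short enumeration of the branches of \textsf{Preprocess} shows that the two parties always select either the same case from $\{i.A, i.B, iii.A, iii.B, iv\}$, or one selects $ii.A$ while the other selects $ii.B$, with the assignment dictated by the sign of $q_{\MA} - q_{\MB}$.

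Two accounting identities drive the arithmetic. First, the term $-2i$ in $\Phi_{\MD}$ decreases by $2$ each iteration, so progress must come from growing $\mdAplus + \mdBplus$ by at least $2$, or from shaving down $\mdAminus + \mdBminus$ (each unit being worth $3$). Second, incrementing $q_{\MA}$ or $q_{\MB}$ costs $3$ in $\Phi_{\PD}$ via the $-3q_{\MA} - 3q_{\MB}$ terms, and each such increment must eventually be compensated by correctly extending $\pdAplus$ (or $\pdBplus$) by the appropriate amount; these increments only occur in the \textsf{syncMES} and \textsf{simulate} subroutines. With these rules in mind, Cases $i.A/B$ and $iii.A/B$ are handled uniformly: a rewind case uses the promise of a genuine disagreement to force each party to decrement its erroneous overhang, reducing $\mdAminus + \mdBminus$ (or $\pdAminus + \pdBminus$) by at least $1$; an extend case uses the promise that at least one side lags behind to force a net extension of at least $3$ in $\mdAplus + \mdBplus$ (or at least $1$ in $\pdAplus + \pdBplus$, after normalization by $r$). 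A routine calculation gives $\Delta \Phi \geq 1$ in each such case, with the other two potential components unchanged.

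For Case $iv$ (\textsf{simulate}), both parties execute a full block of $\Pi$ in lock-step. $\mdAplus + \mdBplus$ grows by $2$, exactly cancelling the $-2$ from $-2i$; $q_{\MA}$ and $q_{\MB}$ both grow by $1$, costing $6$ in $\Phi_{\PD}$, which is exactly offset by the $+6$ from the correctly appended $3r$-symbol Pauli blocks on each side (so $\pdAplus$ and $\pdBplus$ each grow by $3$). The remaining $+1$ required by the lemma then comes from $\Phi_{\mathrm{Q}}$: by the definition of $\JStwo$ in Section~\ref{sec:large-classical-second-rep}, the block either reverses the current bad block (so $b$ decreases by $1$) or appends a new good block (so $g$ increases by $1$), and in either case $\Delta \Phi_{\mathrm{Q}} = 1$.

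The main obstacle is Case $ii$ (\textsf{syncMES}), where Alice and Bob perform genuinely asymmetric quantum actions. Concretely, say $q_{\MA} < q_{\MBtilde}$, so Alice is in Case $ii.B$ (running \textsf{syncMES}) while Bob is in Case $ii.A$ (updating only classical variables). Three effects must be combined with care. First, Alice's increment $q_{\MA} \to q_{\MA} + 1$ drops $\Phi_{\PD}$ by $3$ via $-3 q_{\MA}$, but crucially does \emph{not\/} immediately drop $\pdAplus$, since \textsf{syncMES} leaves $\ellPA$ unchanged; Bob's $\PAtilde$ still agrees with Alice's authorized prefix $\PA$, and the resulting deficit $\ellPA < 3 q_{\MA} \cdot r$ will be corrected by subsequent \textsf{extendPD} iterations, each of which contributes positively to $\Phi$ in isolation. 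Second, the metadata appendices are matched (Alice appends $\NewMetaA = 0$ while Bob appends $\NewMetaAtilde = 0$, and symmetrically $\sC$ is appended to the other party's estimate), so $\mdAplus$ and $\mdBplus$ each grow by $1$ and $\Delta \Phi_{\MD} = 0$. Third, by the out-of-sync teleportation analysis of Section~\ref{sec:Out-of-sync teleportation}, one ugly block is converted into a bad block, so $u \to u - 1$ and $b \to b + 1$, giving $\Delta \Phi_{\mathrm{Q}} = -1 + 5 = 4$. The totals $\Delta \Phi_{\PD} = -3$, $\Delta \Phi_{\MD} = 0$, $\Delta \Phi_{\mathrm{Q}} = 4$ sum to $\Delta \Phi = 1$, as desired; the symmetric case $q_{\MA} > q_{\MBtilde}$ proceeds identically, completing the argument.
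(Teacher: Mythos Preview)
Your proposal is correct and follows essentially the same case analysis as the paper's proof: both arguments branch on the iteration type in $\{i.A, i.B, ii.A/ii.B, iii.A, iii.B, iv\}$, first argue that without errors or hash collisions the two parties' case selections are coordinated, and then verify $\Delta\Phi\ge 1$ case by case via the same bookkeeping on $\mdAplus,\mdAminus,\pdAplus,\pdAminus,q_{\MA},q_{\MB},g,b,u$. One small imprecision: in Case~ii you write $b\to b+1$ and $\Delta\Phi_{\mathrm Q}=4$ as equalities, whereas the paper (and the definition of $\JStwo$) only guarantees that $b$ increases by \emph{at most} $1$, hence $\Delta\Phi_{\mathrm Q}\ge 4$; this does not affect the conclusion $\Delta\Phi\ge 1$.
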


\begin{proof} Note that in an iteration with no error or hash collision, Alice and Bob agree on the iteration type. Moreover, if $\Itertype=\;\MD$ or $\PD$ (Case i or iii), they also agree on whether they extend or rewind the data (the subcase A or B), and if $\Itertype\;=\;\MES$ (Case ii), then exactly one of them is in \textsf{Case A} and the other one is in \textsf{Case B}. We analyze the potential function in each of the cases, keeping in mind that we only encounter Case~ii or later cases once the metadata of the two parties are consistent and of full length, and similarly, that we encounter Case~iv once the parties have used the same number of MESs and the Pauli data with the two parties are consistent and of full length. Lemma~\ref{lem:phimdpdnegativelargeclassical} guarantees that~$\Phi_{\MD}$ becomes~0 on entering Case~ii, and that~$\Phi_{\MD} = \Phi_{\PD} = 0$ on entering Case~iv.
	
	\begin{itemize}
		\item Alice and Bob are in \textsf{Case i.A}:
		\begin{itemize}
			\item $\Phi_{\PD}$ and $\Phi_{\mathrm{Q}}$ stay the same.
			\item $i$ increases by $1$.
			\item $\mdAplus$ and $\mdBplus$ stay the same.
			\item None of $\mdAminus$ and $\mdBminus$ increases, and at least one decreases by $1$.
		\end{itemize}
		Therefore, $\Phi_{\MD}$ increases at least by $3-2=1$, and so does $\Phi$.
		
		\item Alice and Bob are in \textsf{Case i.B}:
		\begin{itemize}
			\item $\Phi_{\PD}$ and $\Phi_{\mathrm{Q}}$ stay the same.
			\item $i$ increases by $1$.
			\item $\mdAminus$ and $\mdBminus$ stay at $0$.
			\item  At least one of $\ellMA$ or $\ell_{\MB}$ is smaller than $i-1$; If only $\ellMA < i-1$, then $\mdAplus$ increases by $2$, and $\mdBplus$ by $1$. The case where only $\ell_{\MB} < i-1$ is similar. If both are smaller than $i-1$, then $\mdAplus$ and $\mdBplus$ both increase by $2$.
		\end{itemize}
		Therefore, $\Phi_{\MD}$ increases by at least $3-2=1$, and so does $\Phi$.
		
		\item Alice is in \textsf{Case ii.A}, Bob is in \textsf{Case ii.B}:
		\begin{itemize}
			\item $\Phi_{\MD}$ stays at $0$.
			\item $q_{\MB}$ increases by $1$.
			\item $q_{\MA}$, $\pdAplus$, $\pdAminus$, $\pdBplus$, $\pdBminus$ all stay the same.
			\item $g$ remains the same, $b$ increases by at most $1$, and $u$ decreases by 1.
		\end{itemize}
		Therefore,  $\Phi_{\mathrm{Q}}$ increases by at least $5-1=4$, and $\Phi_{\PD}$ decreases by $3$. So $\Phi$ increases by at least $1$.
		
		\item Alice is in \textsf{Case ii.B}, Bob is in \textsf{Case ii.A}: This case is similar to the above one.

		\item Alice and Bob are in \textsf{Case iii.A}
		\begin{itemize}
			\item $\Phi_{\MD}$ stays at $0$, and $\Phi_{\mathrm{Q}}$ stays the same
			\item $\pdAplus$, $\pdBplus$, $q_{\MA}$ and $q_{\MB}$  stay the same.
			\item  None of $\pdAminus$ and $\pdBminus$ increases, and at least one decreases by $1$.
		\end{itemize}
		Therefore,  $\Phi_{\PD}$ increases by at least $1$, and so does $\Phi$.
		
		\item Alice and Bob are in \textsf{Case iii.B}
		\begin{itemize}
			\item $\Phi_{\MD}$ stays at $0$, and $\Phi_{\mathrm{Q}}$ stays the same.
			\item $\pdAminus$, $\pdBminus$ stay at $0$, and $q_{\MA}$, $q_{\MB}$ stay the same.
			\item  At least one of the following holds:
$\ellPA < 3q_{\MA}\cdot r$, in which case $\pdAplus$ increases
by $1$ (otherwise it remains unchanged), or $\ellPB < 3q_{\MB}\cdot r$, and then $\pdBplus$ increases by $1$ (otherwise it remains unchanged).
		\end{itemize}
		Therefore,  $\Phi_{\PD}$ increases by at least $1$, and so does $\Phi$.
		
		\item Alice and Bob are in \textsf{Case iv}
		\begin{itemize}
			\item $\Phi_{\MD}$ and $\Phi_{\PD}$ stay at $0$.
			\item $u$ stays at $0$
			\item Either $g$ stays the same and $b$ decreases by 1 (when $b\neq0$) or $b$ stays at 0 and $g$ increases by 1.
		\end{itemize}
		Therefore,  $\Phi_{\mathrm{Q}}$ increases by $1$, and so does $\Phi$.

	\end{itemize}
	
	Hence $\Phi$ increases at least by $1$ for each iteration of the algorithm without a hash collision or error.
	
\end{proof}

\begin{lemma}\label{lem:simpleerrorpotential}
	Each iterations of Algorithm \ref{algo:Mainalgorithm}, regardless of the number of hash collisions and errors, decreases the potential $\Phi$ by at most $45$.
\end{lemma}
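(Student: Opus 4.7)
The plan is to perform a term-by-term bound on the worst-case decrease of each component of $\Phi = \Phi_{\mathrm{Q}} + \Phi_{\MD} + \Phi_{\PD}$ in a single iteration, regardless of what errors or hash collisions occur. The key structural observation making the bound a constant is that every iteration involves only bounded activity: $\ellMA$ and $\ellMBtilde$ each change by at most $2$ in magnitude (an extension by $1$ or $2$, or a rewind by $1$); the lengths $\ellPA, \ellPBtilde$ change only by $-r, 0, +r,$ or $+3r$; each of $q_{\MA}$ and $q_{\MB}$ increases by at most $1$; and at most one new bracket is appended to the left of each of $\JSoneA$ and $\JSoneB$, triggering at most one cancellation in computing $\JStwoA$ and $\JStwoB$. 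These bounds hold regardless of whether errors occur, because the algorithm's local updates to these variables are uniformly bounded.

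For $\Phi_{\MD}$, I would first note the $-2i$ term always contributes $-2$. For the Alice-side contribution, since $\ellMA$ can decrease by at most $1$ via Alice's \textsf{rewindMD} and $\ellMAtilde$ by at most $1$ via Bob's \textsf{rewindMD}, the agreement length $\mdAplus$ decreases by at most $2$, and since $\max(\ellMA, \ellMAtilde)$ grows by at most $2$, the gap $\mdAminus = \max(\ellMA, \ellMAtilde) - \mdAplus$ increases by at most $4$. This yields $\Delta(\mdAplus - 3\mdAminus) \ge -14$, and symmetrically for Bob, so $\Delta \Phi_{\MD} \ge -30$. The same style of argument applied to the Pauli data (where the role of ``$1$'' is played by ``$1$ after dividing the length by $r$'') yields $\Delta \Phi_{\PD} \ge -C_{\PD}$ for a small constant $C_{\PD}$, with the extra $-3 q_{\MA} - 3 q_{\MB}$ penalty from blocks added when \textsf{simulate} or \textsf{syncMES} fires.

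For $\Phi_{\mathrm{Q}} = g - b - 5u$, I would use that $u = |q_{\MA} - q_{\MB}|$ changes by at most $1$ per iteration, contributing at most $-5$; that $g$ decreases by at most $1$ (the only way being if the newly appended bracket happens to equal the inverse of the rightmost good block); and that $b$ increases by at most $1$ (a new bracket that fails to cancel becomes $[\#b+1]$, while a cancelling bracket \emph{decreases} $b$ by $1$). Hence $\Delta \Phi_{\mathrm{Q}}$ is bounded below by a small constant. Summing the bounds for $\Phi_{\MD}, \Phi_{\PD}, \Phi_{\mathrm{Q}}$ and checking the cases carefully yields the claimed bound of $-45$.

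The main obstacle is the $\Phi_{\mathrm{Q}}$ analysis in out-of-sync scenarios where $u > 0$: the recursive cancellation defining $\JStwo$ from $\JSone$ could a priori interact with the ugly blocks $[@u]\cdots[@1]$ in subtle ways when a new bracket is appended, or when \textsf{syncMES} converts an ugly block into a bad block. However, because each party only adds one bracket per iteration and the cancellation rule depends only on adjacency, these interactions always perturb the triple $(g, b, u)$ by at most a bounded amount, which is what enables the desired constant bound on the total potential decrease.
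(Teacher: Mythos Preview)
Your approach is exactly the paper's: bound the per-iteration change of each primitive quantity $(g,b,u,\mdAplus,\mdAminus,\ldots)$ and sum. The only discrepancy is quantitative. Your bounds $\Delta\mdAplus \ge -2$ and $\Delta\mdAminus \le 4$ are looser than necessary and give $\Delta\Phi_{\MD}\ge -30$; combined with $\Delta\Phi_{\PD}\ge -16$ and $\Delta\Phi_{\mathrm{Q}}\ge -7$ this yields $-53$, not $-45$. The paper gets $-22$ for $\Phi_{\MD}$ by using the sharper observations that $\mdAplus$ can drop by at most $1$ (since $\mdAplus \le \min(\ellMA,\ellMAtilde)$ and each of these shrinks by at most $1$ per iteration), and $\mdAminus$ can grow by at most $3$ (the max grows by at most $2$ while $\mdAplus$ drops by at most $1$). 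With $1+9+1+9+2=22$, $1+4+1+4+3+3=16$, and $1+1+5=7$, the total is exactly $45$. Your ``checking the cases carefully'' is precisely this tightening; once you make it, the argument is complete and identical to the paper's.

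Your worry about the ugly blocks in $\JStwo$ is not an issue: by definition the cancellation producing $\JStwo$ is applied only to the first $\min(q_{\MA},q_{\MB})$ brackets, with the $u$ ugly blocks left untouched on the left, so appending one new bracket (or converting one ugly block via \textsf{syncMES}) perturbs $(g,b,u)$ by at most one unit each, as you conclude.
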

\begin{proof}
	At each step, $i$ increases by $1$ while, in the worst case, $g$, $\mdAplus$,$\mdBplus$, $\pdAplus$ and $\pdBplus$ decrease by at most $1$,  $b$, $u$, $q_{\MA}$ and $q_{\MB}$ increase by at most $1$, $\mdAminus$ and $\mdBminus$ increase by at most $3$ and $\pdAminus$ and $\pdBminus$ increase by at most $4$.
Hence, $\Phi_{\mathrm{Q}}$, $\Phi_{\MD}$ and $\Phi_{\PD}$ decrease at most by $7$, $22$, and $16$, respectively. So in total, $\Phi$ decreases by at most $45$.
\end{proof}
The following lemma is from \cite{Haeupler:2014}.
\begin{lemma}\label{lem:simplehashcollisions}
	The number of iterations of Algorithm \ref{algo:Mainalgorithm} suffering from a hash collision is at most $6n\eps$ with probability at least $1 - 2^{-\Theta(\eps n)}$.
\end{lemma}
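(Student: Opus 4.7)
The plan is to show that hash collisions across the run of Algorithm~\ref{algo:Mainalgorithm} are dominated by a binomial random variable with tiny per-trial failure probability, and then to apply a Chernoff-type tail bound. The key structural observation is that in each iteration only a constant number (specifically four) of hash comparisons are made, namely between the pairs $(\MA,\MAtilde')$, $(\MBtilde,\MB')$, $(\PA,\PAtilde')$, $(\PBtilde,\PB')$, and each such comparison uses a fresh seed (drawn from the batch $S_{4i-3},\ldots,S_{4i}$ dedicated to iteration $i$). These seeds are generated by computational-basis measurements of fresh MESs at the outset of the protocol, and hence, as explained in Section~\ref{subsec:hash}, are uniformly random and statistically independent of everything the adversary sees. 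In particular, the adversary's choice of errors up to and including iteration $i$ is independent of the seeds $S_{4i-3},\ldots,S_{4i}$ used by the players inside iteration $i$.

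Given this independence, I would condition on the entire history (all of Alice's and Bob's states, the transcript, and the adversary's choices) up through the point where the hash values of iteration $i$ are about to be computed. Conditioned on this history, the two strings being compared in each of the four comparisons of iteration $i$ are fixed, and the seeds are still uniformly random. By Lemma~\ref{lem:hashes} (applied with $p=1/n^5$, $l=\Theta(n)$, $|\Sigma|=\mathrm{poly}(n)$), the probability that a given comparison yields a collision (i.e., equal hash values for unequal strings) is at most $1/n^5$. A union bound over the four comparisons shows that the probability that iteration~$i$ suffers any hash collision, conditioned on the history, is at most $4/n^5$.

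Let $X_i\in\{0,1\}$ be the indicator that iteration $i$ suffers a hash collision, and let $X=\sum_{i=1}^{\Rtotal} X_i$. The preceding paragraph shows that $X$ is stochastically dominated by a sum of independent Bernoulli$(4/n^5)$ variables over $\Rtotal=\lceil n/(2r)+\Theta(n\epsilon)\rceil=O(n\sqrt{\epsilon}+n\epsilon)$ trials, so in particular $\mathbb{E}[X]=O(1/n^3)$. Applying a standard Chernoff bound (for instance, $\Pr[X\geq k]\leq \binom{\Rtotal}{k}(4/n^5)^k$) with $k=6n\epsilon$ yields
\begin{equation*}
\Pr[X\geq 6n\epsilon]\;\leq\;\left(\frac{e\,\Rtotal}{6n\epsilon}\cdot\frac{4}{n^5}\right)^{\!6n\epsilon}
\;\leq\;\left(\frac{C}{n^5\sqrt{\epsilon}}\right)^{\!6n\epsilon}
\;=\;2^{-\Theta(n\epsilon\log n)},
\end{equation*}
which is comfortably smaller than the claimed $2^{-\Theta(\epsilon n)}$.

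The only subtle point, and the step I would present most carefully, is verifying the seed-independence claim used to get per-iteration collision probability $4/n^5$ regardless of adversarial behavior. In the plain model the seeds are never transmitted (they come from shared measurements on MESs distributed through Algorithm~\ref{algo:Robust Entanglement Distribution}), so the adversary truly has no information about them; this is exactly the setting in which Lemma~\ref{lem:hashes} applies with the stated collision probability. With that justified, the rest is a routine Chernoff argument as above.
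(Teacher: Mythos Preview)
Your argument is correct and is essentially the proof the paper defers to \cite{Haeupler:2014} rather than spelling out: fresh seeds per iteration that are hidden from the adversary, per-comparison collision probability $p=1/n^5$ from Lemma~\ref{lem:hashes}, a union bound giving at most $4/n^5$ per iteration, and a Chernoff bound over the $\Rtotal$ iterations. One small correction: in the teleportation-based model of Algorithm~\ref{algo:Mainalgorithm} the MESs measured for seeds are \emph{pre-shared} (free entanglement in this model), not distributed via Algorithm~\ref{algo:Robust Entanglement Distribution}; the latter is only invoked in the plain quantum model of Section~\ref{sec:BriefQuLarge}. This actually makes your independence claim cleaner, since there is no noisy distribution step for the adversary to interfere with.
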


\begin{proofof} {Theorem \ref{theorem:simplealg}}
Let $\Rtotal=\lceil\frac{n}{2r}\rceil+368n\epsilon$. The total number of iterations is less than $2n$, so the total number of iterations with an error is at most $2n\epsilon$. By Lemma \ref{lem:simplehashcollisions}, with probability at least $1 - 2^{-\Theta(\eps n)}$, the number of iterations with a hash collision is at most $6n\epsilon$. Therefore, by Lemma \ref{lem:potential increase}, in the remaining $\Rtotal-8n\epsilon=\lceil\frac{n}{2r}\rceil+360n\epsilon$ iterations, the potential $\Phi$ increases by at least one. The potential decreases only when there is an error or hash collision and it decreases by at most $45$. So at the end of the simulation, we have 
\[g-b-u\geq\Phi_{\mathrm{Q}}\geq\Phi\geq \Rtotal-8n\epsilon-45\times 8n\epsilon\geq\frac{n}{2r}.\]
Hence the simulation is successful. Furthermore, note that the of amount communication in each iteration is independent of the iteration type and is always $2r+\Theta(1)$ symbols: in every iteration each party sends $\Theta(1)$ symbols to communicate the hash values and the lengths of  the metadata and Pauli data in line 9 of Algorithm \ref{algo:Mainalgorithm}; each party sends another $r$ symbols, either in line 13 of Algorithm \ref{algo:Mainalgorithm}, if $\Itertype \neq \SIM$ or in Algorithm \ref{algo:simulate} to communicate the teleportation measurement outcomes. 
So the total number of communicated symbols is
\begin{equation}
\Rtotal\cdot(2r+\Theta(1))=\left(\lceil\frac{n}{2r}\rceil+\Theta(n\epsilon)\right)\left(2r+\Theta(1)\right)=n(1+\Theta(\sqrt{\epsilon})),
\end{equation}
as claimed.
\end{proofof}

\newpage
\section{Recycling-based protocol via quantum channel with large alphabet}

\label{sec:BriefQuLarge}

\subsection{Overview}

\subsubsection{Teleportation is inapplicable}

Switching from the teleportation-based
model to the plain quantum model, suppose we are given a protocol $\Pi$ using
noiseless quantum communication, and we are asked to provide a protocol $\Pi'$
using noisy quantum channels under the strongly adversarial model
described earlier.  In the absence of free entanglement, how can we
protect quantum data from leaking to the environment without
incurring a non-negligible overhead?  First, note that some form of
protection is necessary, as discussed in
\longpaper{Section~\ref{sec:intro-diff}.}
\blurb{Section~2 of this extended abstract.}
  Second,
teleportation would be too expensive to use, since it incurs an overhead of
at least $3$: we have to pay for the MES as well as the
classical communication required.

Surprisingly, an old and relatively unknown idea called
the Quantum Vernam Cipher (QVC)~\cite{Leung:2002} turns out to
be a perfect alternative method to protect quantum data with
negligible overhead as the noise rate approaches 0.

\subsubsection{Quantum Vernam Cipher (QVC)}

Suppose
Alice and Bob share two copies of MESs, each over two $d$-dimensional
systems.  For Alice to send a message to Bob, she applies a controlled-$\rX$ operation with her half of the first MES as control, and
the message as the target.  She applies a controlled-$\rZ$
operation from her half of the second MES to the message.  When Bob
receives the message, he reverses the controlled operations using his
halves of the MESs.  The operations are similar for the opposite direction of
communication. A detailed description is provided in Section~\ref{sec:qvc}.
\suppress{the full paper.
\blurb{Section~2.5 of the full manuscript; Figure~7 at the end of this extended abstract depicts the protocol.}}

QVC is designed so that given access to an
authenticated classical channel from Alice to Bob, Bob can determine and correct any error in the transmission of the quantum message.  This can simply be
done by measuring $\rZ^l$ type changes to one half of the two MES.
They can also run QVC many times to send multiple messages and determine the errors in a large
block using a method called ``random hashing'', and recycle the MESs
if the error rate (as defined in our adversarial model) is low.  This is
a crucial property of QVC and leads to one of the earliest (quantum) key
recycling results known.
What makes QVC particularly suitable for our problem is that encoding and decoding
are performed message-wise, while error detection can be done in large blocks, and
entanglement can be recycled if no error is detected. It may thus be viewed as
a natural quantum generalization to Haeupler's consistency
checks.
%

As an aside, in Appendix E of Ref.~\cite{Leung:2002}, the relative merits of
teleportation and QVC were compared (those are the only two ciphers
with complete quantum reliability), and it was determined that
entanglement generation over an insecure noisy quantum channel
followed by teleportation is more entanglement efficient than QVC
with entanglement recycling in some test settings.  However, this
difference vanishes for low noise.  Furthermore, the comparison
assumes authenticated noiseless classical communication to be free.
QVC requires an amount of classical communication for the
consistency checks which vanishes with the noise parameter (but this
cost was not a concern in that study).  Furthermore, QVC was also
proposed as an authentication scheme, but the requirement for
interaction to authenticate and to recycle the key or entanglement was
considered a disadvantage, compared to non-interactive schemes.
(Those are only efficient for large block length, and cannot identify
the error when one is detected. So, these authentication schemes are
inapplicable).  We thus provide renewed insight into QVC when
interaction is natural (while it is considered expensive in many other
settings).


\subsubsection{Entanglement recycling and adaptations of QVC  for the current problem}

In the current scenario, we have neither free MESs nor an authenticated
classical channel.
Instead, Alice and Bob start the protocol by distributing the MESs they need, using a high rate quantum error correcting code over the low-noise channel.
Then, they run the input protocol $\Pi$ as is over the noisy channel, while frequently checking for errors by performing \emph{quantum hashing\/}~\cite{BDSW96,Leung:2002}, using the same noisy quantum channel instead
of an authenticated classical channel. If they detect an inconsistency, assuming that the errors are most likely recent, they measure a small block of MESs in the recent past to determine the errors. They continue this process until they get matching quantum hash values indicating (with constant probability) that they have located and identified all the errors and the remaining MESs can be recycled and reused to encrypt the messages. Frequent quantum hashing allows Alice and Bob to boost their confidence about recyclability of the earlier MESs and reuse MESs in a cyclic way. Note that for successful simulation it is crucial to ensure that the recycled MESs are indeed not corrupted and that Alice and Bob recycle the same sequence of MESs. One of our main contributions in this section is developing a framework for recycling entanglement in a communication efficient way. 
We show that entanglement generation of $O\br{n\sqrt{\epsilon}}$ MESs, where $n$ is the length of the input protocol $\Pi$ and $\epsilon$ is the noise parameter, is sufficient to last through the whole simulation.

\subsubsection{Framework}

As in the case of the teleportation-based protocols, due to transmission errors and collisions, Alice and Bob do not necessarily always agree on their actions in the simulation. Therefore, in every iteration both parties need to obtain a global view of the history of the simulation so far to correctly decide their next actions. They achieve this goal by maintaining a similar data structure as in the teleportation-based case. The data structure now contains additional information to keep track of their measurements, which Alice and Bob use in the recycling process. 

\subsubsection{Additional out-of-sync problems}

Due to transmission errors introduced by the adversary, Alice and Bob may get out of sync in QVC. In such a scenario, the QVC operations are performed by only one party and the quantum data intended to be sent to the other party leaks into the MES registers used to encrypt the messages. Furthermore, the parties may not agree on the subset of MESs they have already measured when they perform quantum hashing. As we will explain in Section~\ref{sec:out-of-sync QH}, in the worst case, this can further lead to the leakage of the quantum data into all the MES registers involved in the quantum hashing procedure.   

\suppress{
In particular, corruptions can
lead only one party to recycle an MES can cause a significant discrepancy in
how many MESs the two parties are holding.  It is much more involved
to analyse the joint quantum state.}
%
\suppress{To tackle these problems, we develop further data structures and adapt
the ``quantum hashing'' procedure of Ref.~\cite{BDSW96,Leung:2002} to
our setting.}
%

We show that, surprisingly, once again the quantum data can be recovered once Alice and Bob reconcile the differences in the data structure developed for the task. This is in spite of the fact that there is no reason to expect out-of-sync QVC to be sufficient to protect the quantum data from leaking to the environment when encoding and decoding operations are performed incorrectly and quantum data is sent via the noisy quantum channel.





\subsection{Result}

Our main result in the plain quantum model with polynomial-size communication alphabet is the simulation of an $n$-round noiseless communication protocol over a fully adversarial channel of error-rate $\epsilon$ defined in Section~\ref{sec:noisy_comm_model}. \suppress{First, we state the result in the large alphabet case.}

\begin{theorem} \label{thm:Qmessagelargealphabet}
	Consider any $n$-round alternating communication protocol $\Pi$ in the plain quantum model, communicating messages over a noiseless channel with an alphabet $\Sigma$ of bit-size $\Theta\br{\log n}$. Algorithm \ref{algo:MainalgorithmQMessage} is a quantum coding scheme which given $\Pi$, simulates it with probability at least $1-2^{-\Theta\br{n\epsilon}}$, over any fully adversarial error channel with alphabet $\Sigma$ and error rate $\epsilon$. The simulation uses $n\br{1+\Theta\br{\sqrt{\epsilon}}}$ rounds of communication, and therefore achieves a communication rate of $1-\Theta\br{\sqrt{\epsilon}}$.
\end{theorem}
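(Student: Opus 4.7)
The plan is to follow the template of the proof of Theorem~\ref{theorem:simplealg} (the teleportation-based case), substituting QVC and entanglement recycling for teleportation and free MESs. First, I would have Alice and Bob run Algorithm~\ref{algo:Robust Entanglement Distribution} at the outset with a quantum Gilbert--Varshamov code of distance $4n\epsilon$ to distribute $\Theta(n\sqrt{\epsilon})$ near-perfect MESs, which is enough to last through the entire simulation (this is why no mid-protocol entanglement regeneration is needed). A small designated sub-block is measured in the computational basis to yield truly random bits, which are then expanded by Lemma~\ref{lem:stretch} into the $\delta$-biased seed string used both for the hashes of Lemma~\ref{lem:hashes} and for the random quantum hashing sketched in Section~\ref{sec:qvc}. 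Crucially, the adversary never sees these seeds, so the classical hashing collision bound transfers verbatim from Section~\ref{subsec:polysizeclassicalanalysis}. The simulation proceeds in blocks of size $r=\Theta(1/\sqrt{\epsilon})$; a non-classical block either encrypts/decrypts $r$ message qudits via QVC using a fresh block of $r$ MES-pairs, or performs a quantum-hash check over a recent window of used MESs, recycling them on success and entering a measurement-and-rewind phase on failure.

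Next I would lift the data-structure framework of Section~\ref{sec:general-descrpition-largeclasscial} to this setting. Each party maintains $\FullMA,\FullMB$ classifying iterations (forward, reverse, identity, classical, or quantum-hash), an analogue of $\FullPA,\FullPB$ now recording QVC control registers and quantum-hash measurement outcomes, plus an additional ``measurement data'' string tracking which MESs have been collapsed and their outcomes. Hashes of each local string (with lengths) are exchanged at the top of every iteration, and the Haeupler-style \textsf{rewindMD}/\textsf{extendMD} and \textsf{rewindPD}/\textsf{extendPD} subroutines reconcile the two views. Representations $\JSone,\JStwo$ of the joint state are derived as before, now accounting for QVC key registers in place of consumed MES halves, and for ugly blocks that arise when QVC or quantum hashing goes out of sync.

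The main obstacle is controlling out-of-sync QVC and out-of-sync quantum hashing. Unlike teleportation, where a one-sided mismatch cleanly becomes one ugly block plus a standard unitary block, out-of-sync QVC can leak the message qudit into the two key MESs, and an out-of-sync application of the $\control{\rX}$ gates in Proposition~\ref{lem:cnotbell} can entangle the message with an entire block of MESs at once. The technical heart of the proof, parallel to the derivation of equation~\eqref{eq:messagecorrupt}, is an invariance lemma: even after arbitrary sequences of out-of-sync QVC and out-of-sync hashing operations, the joint state of all affected MES registers remains in the span of tensor products of $\ket{\phi^{0,k}}$ states. Consequently, once the two parties detect the discrepancy through their reconciled data structures, they can measure the relevant registers in the Fourier-conjugated computational basis and extract a classical description of the residual Pauli error, which is then folded into the Pauli-correction bookkeeping exactly as in Section~\ref{subsec:polysizeclassicalanalysis}.

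Finally, I would define a potential function
\[
\Phi \;=\; \Phi_{\mathrm Q}+\Phi_{\MD}+\Phi_{\PD}+\Phi_{\mathrm{Meas}},
\]
where $\Phi_{\mathrm{Meas}}$ penalises disagreement on which MESs have been measured, and the coefficients on $u$ (number-of-MES mismatch) and on the ugly-block count are chosen large enough to absorb the extra damage a single out-of-sync QVC or hashing iteration can do. Case analysis in the style of Lemma~\ref{lem:potential increase} then shows that every error- and collision-free iteration raises $\Phi$ by at least $1$, while a Lemma~\ref{lem:simpleerrorpotential}-style bound shows any iteration lowers $\Phi$ by at most a fixed constant; the hash-collision count is $O(n\epsilon)$ with probability $1-2^{-\Theta(n\epsilon)}$ by the seed-secrecy argument above. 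Choosing $\Rtotal=\lceil n/(2r)\rceil+\Theta(n\epsilon)$ drives $\Phi$ above $n/(2r)$ except with probability $2^{-\Theta(n\epsilon)}$; by the noiseless protocol embedding of Section~\ref{sec:nslss}, this guarantees the swapped-out registers $\tilde A\tilde B\tilde C$ carry the correct output of $\Pi$, and the $r=\Theta(1/\sqrt{\epsilon})$ block size yields the claimed rate $1-\Theta(\sqrt{\epsilon})$.
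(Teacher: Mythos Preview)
Your proposal captures the right high-level template, but there are two genuine gaps that the paper's analysis handles and yours does not.

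\textbf{Circularity between recycling success and the potential analysis.} Your definition of $\Phi_{\mathrm Q}$ via $g,b,u$ presupposes a valid $\JStwo$ representation, but that representation exists only if recycling has been successful so far (i.e., both parties recycled the same blocks, and those blocks were actually in $\ket{\phi^{0,0}}^{\otimes 4r}$ when recycled). You cannot simply run the Lemma~\ref{lem:potential increase}/\ref{lem:simpleerrorpotential} argument and conclude $\Phi\ge n/(2r)$, because if a corrupted block is ever recycled, a single later iteration can silently destroy many good blocks and the constant-per-iteration bound on $\Phi$ fails. The paper breaks this circularity by an \emph{inductive} argument on the iteration index: assuming recycling was successful through iteration $i$, it bounds separately the number of metadata, quantum-hash, and Pauli-data collisions and the number of ``recovery'' iterations of three types up to $i$ (Lemmas~\ref{lem:MD collisions}--\ref{lem:type III recovery}), uses these bounds to show the recycling pointers cannot reach a bad block in iteration $i{+}1$ (Lemma~\ref{lem:recycling_requirements}), and only after establishing successful recycling for the entire run (Lemma~\ref{lem:successful-recycling}) does it invoke the potential argument. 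Your proposal skips this step entirely.

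\textbf{The invariance lemma you state is false, and the collision bound does not transfer verbatim.} Out-of-sync quantum hashing genuinely entangles the hash control register with the half-used MES registers; the joint state is \emph{not} a tensor product of $\ket{\phi^{0,k}}$'s (the paper says so explicitly in Section~\ref{sec:out-of-sync QH}). What saves the day is a commutation argument: completing QVC after the hash has the same effect on the syndrome as completing it before, so once both parties measure the affected block the correct Pauli error is still extractable. This is different from, and more delicate than, the span invariant you propose. Relatedly, the quantum-hash collision probability is only a constant per iteration (Lemma~\ref{lem:quantumhash}), so you cannot lift the $O(n\epsilon)$ collision bound ``verbatim'' from Section~\ref{subsec:polysizeclassicalanalysis}; you must first bound the number of iterations in which a collision is even possible (those with $rd^-\neq 0$) and then apply Chernoff, which is exactly the chained argument of Lemmas~\ref{lem:MD collisions}--\ref{lem:QH collisions}.
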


\suppress{
The simulation when the communication alphabet is constant-sized is more complicated. Nonetheless, in Section~\ref{sec:small-alphabet-Yao}, we prove that the same simulation rate is achievable in this case as well.   

\begin{theorem} \label{thm:Qmessagesmallalphabet}
	Consider any $n$-round alternating communication protocol $\Pi$ in the plain quantum model, communicating messages over a noiseless channel with an alphabet $\Sigma$ of constant bit-size. Algorithm \ref{algo:MainalgorithmQMessage} is a \textbf{(computationally efficient?)} quantum coding scheme which given $\Pi$, simulates it with probability at least $1-2^{-\Theta\br{n\epsilon}}$, over any fully adversarial error channel with alphabet $\Sigma$ and error rate $\epsilon$. The simulation uses $n\br{1+\Theta\br{\sqrt{\epsilon}}}$ rounds of communication, and therefore achieves a communication rate of $1-\Theta\br{\sqrt{\epsilon}}$.
\end{theorem}

Note that in the classical setting with constant-size alphabet and no pre-shared randomness, the current best simulation rate is $1-\Theta\br{\sqrt{\epsilon}}$ for oblivious channels and $1-\Theta\br{\sqrt{\epsilon\log\log\frac{1}{\epsilon}}}$ for fully adversarial channels~\cite{Haeupler:2014}. Our simulation in the plain quantum model outperforms the best known protocol in the corresponding classical setting. This advantage may be interpreted as follows. When using quantum communication, Alice and Bob can establish MESs and measure them to generate the hash seeds. The advantage of establishing the shared randomness in this way is that the seeds remain unknown to the adversary. Thus the adversary is not able to create hash collisions purposely. Therefore, we do not need to modify the simulation algorithm for oblivious errors to handle fully adversarial errors, in contrast to~\cite{Haeupler:2014}.

}

\subsection{Description of Protocol}\label{sec:description-large-quantum}
\subsubsection{General Description}\label{subsec:description-large-quantum}

Our simulation of noiseless protocols in the plain quantum model of communication proceeds using the same idea of running $O_\epsilon\br{1}$ rounds of the input protocol as is, while checking if the adversary has corrupted the communication during the previous iterations and if necessary, actively rewinding the simulation to correct errors. The quantum messages are protected using  QVC against corruptions by the adversary. In order to detect potential transmission errors, the MES pairs used as the key in QVC may be measured after each communication round. The measurement outcomes may be stored and later on compared to obtain the error syndrome. Therefore, using a data structure similar to the one introduced in the previous section, one can obtain a coding scheme for simulating any protocol in the plain quantum model.  However, this approach is not efficient in using the entanglement. Recall that in the plain quantum model, the parties do not pre-share any entanglement, hence they need to establish the shared MESs through extra communication. Rather than measuring the MES pairs immediately after each round of communication, we use the quantum hashing procedure described in Section ~\ref{sec:Qhashing} to check whether any transmission error has occurred so far. Note that if Alice and Bob detect an error, they need to determine the error and eventually actively rewind the simulation to correct it and resume the simulation from there. However, similar to the teleportation-based protocol, due to transmission errors Alice and Bob may not always agree on how they proceed with the simulation in every iteration. Thus, in every iteration before taking further actions, each party needs to know the actions of the other party so far. More accurately, they first need to obtain a global view of their joint quantum state. Alice and Bob locally maintain a similar data structure as in the teleportation-based protocol containing metadata and Pauli data, which needs to be synchronized in the algorithm. They first need to ensure they have full knowledge of each other's metadata. Then similar to the teleportation-based case, in order to avoid out-of-sync scenarios in communication using QVC, it is crucial for them to synchronize the number of MESs they have used (see Section \ref{subsec:out-of-sync QVC}). We denote by $\ellQVCA$ and $\ellQVCB$, the number of blocks of MES pairs used by Alice and Bob, respectively. After ensuring $\ellQVCA=\ellQVCB$ (to the best of their knowledge), they compare their quantum hash values to check for errors. Note that quantum hashing does not indicate in which round of communication the error has occurred. If the hash values do not match, they measure the last block of MES pairs which has not gained as much trust as the older blocks through quantum hashing. This effectively collapses the adversary's action on this block to Pauli errors. The effective errors can be determined jointly from the measurement outcomes, which are recorded by Alice and Bob as part of their local Pauli data. Otherwise, if the hashes do match, then Alice and Bob synchronize their Pauli data. Note that similar to the teleportation-based protocol, the Pauli corrections performed by Alice and Bob are also recorded as part of the Pauli data. Together with the metadata, this gives Alice and Bob all the information they need to compute their estimate of the current joint state. Hence, they can determine how to proceed with the simulation next.

\paragraph{Recycling entanglement and recycling data.}

An important complication arising in simulation of protocols in the plain quantum model is that in order to achieve the simulation rate of $1-\Theta(\sqrt{\epsilon})$, we cannot afford to access a new MES pair in every round of communication using QVC. This is where we use a crucial property of QVC, namely the key recycling property. Note that in communication using QVC if no error occurs on the message then the pair of MESs used will remain intact, hence they can be recycled and used in later rounds. Otherwise, at some point Alice and Bob need to measure the MES pair to get the error syndrome, in which case the pair cannot be recycled. By performing quantum hashing regularly and carefully keeping track of the measured MES blocks, Alice and Bob can recycle MES pairs as needed and run the simulation by establishing a smaller number of MESs at the beginning of the protocol. 

In order to correctly simulate the input protocol we need to ensure that the recycling is successful in every iteration, namely that the same sequences of MES blocks are recycled by the two parties and that they are indeed not corrupted when being recycled. Note that if the two parties reuse an MES pair which has been corrupted due to an earlier transmission error in QVC, then even if Alice and Bob detect the error and measure the MES pair, they have no way of knowing whether the error has occurred the last time the MES pair were used or in an earlier round. Moreover, if a block of MES pairs has been locally measured by only one party, say Alice, then the other party, Bob, needs to measure the block and avoid this block when encrypting future messages using QVC.

In order to achieve successful recycling, we modify the metadata so that it contains additional information to keep track of each party's measurements. Alice maintains a string $\RA\in \{\sS,\sM\}^*$, where each $\sM$ symbol corresponds to a measured MES block and $\sS$ is used for an MES block still in superposition. In each iteration, Alice finds the next reusable MES block, records the index of the block in a string, $\IndexA$, of recycled MESs and concatenates $\RA$ with a new $\sS$ symbol. The string $\IndexA$ serves as Alice's queue of reusable MES blocks. Moreover, if she measures an MES block in the current iteration, she changes the corresponding $\sS$ symbol in $\RA$ to $\sM$. Similarly, Bob maintains the strings $\RB$ and $\IndexB$. Note that the recycled MES blocks are not necessarily reused immediately or even in the same iteration by the two parties. 

Alice and Bob need to ensure that the strings $\RA$ and $\RB$ are the same. Using her full-length estimate $\MBtilde$ of Bob's metadata $\FullMB$, Alice computes an estimate $\RBtilde$ of $\RB$. Note that if $\MBtilde=\FullMB$ then $\RBtilde=\RB$. Similarly, Bob computes his estimate $\RAtilde$ of $\RA$ from his full-length estimate $\MAtilde$ of Alice's metadata $\FullMA$. After synchronizing their metadata and ensuring that they have used the same number of MES blocks, they synchronize their recycling data. Alice recycles MES blocks using $\RA$ and $\IndexA$ and Bob using $\RB$ and $\IndexB$. Frequent hashing of the metadata allows them to be highly confident that their recycling data agree in a sufficiently long prefix. Furthermore, quantum hashing ensures that with high probability all the recycled MES blocks are indeed reusable. 

Note that the synchronization of the recycling data is slightly different from the synchronization of the metadata and the Pauli data. For the latter, Alice and Bob just need to know each other's local data, i.e., Alice needs to know $\FullMB$ and $\FullPB$ and Bob needs to know $\FullMA$ and $\FullPA$ and the corresponding data need not be the same. Whereas for the recycling data, Alice and Bob need to learn each other's data and match them, i.e., they need to ensure $\RA=\RB$.

\paragraph{Entanglement distribution.}

At the outset of the simulation, Alice and Bob use the Robust Entanglement Distribution protocol (Algorithm~\ref{algo:Robust Entanglement Distribution}) introduced in Section~\ref{subsec:hash} to share $\Theta(n\sqrt{\epsilon})$ copies of the MES $\ket{\phi^{0,0}}$, defined in Definition ~\ref{def:Bellstates}. The shared MESs are used as follows: 
\begin{itemize}
	\item $\Theta\br{n\sqrt{\epsilon}}$ MESs are used in pairs to serve as the key for encryption of messages using QVC. They are divided into $\LQVC=\Theta\br{n\epsilon}$ blocks of $2r$ MES pairs, where $r=\Theta(\frac{1}{\sqrt{\epsilon}})$. In each block the MES pairs are implicitly numbered from $1$ to $2r$. The odd-numbered pairs are used in QVC to send messages from Alice to Bob and the even-numbered pairs are used to send messages from Bob to Alice.
	\item $\Theta(n\sqrt{\epsilon})$ MESs are reserved to be used in quantum hashing. 
	\item The remaining $\Theta(n\sqrt{\epsilon})$ MESs are measured in the computational basis by both parties to obtain a common random string to be used as the seed for classical and quantum hashing.
\end{itemize} 
We show that with the limited error budget of the adversary, the $\Theta(n\sqrt{\epsilon})$ MESs established at the beginning of simulation are sufficient to successfully simulate the input protocol. This allows us to achieve a simulation rate of $1-\Theta(\sqrt{\epsilon})$.

\begin{figure}[!t]
\centering
\includegraphics[width=480pt]{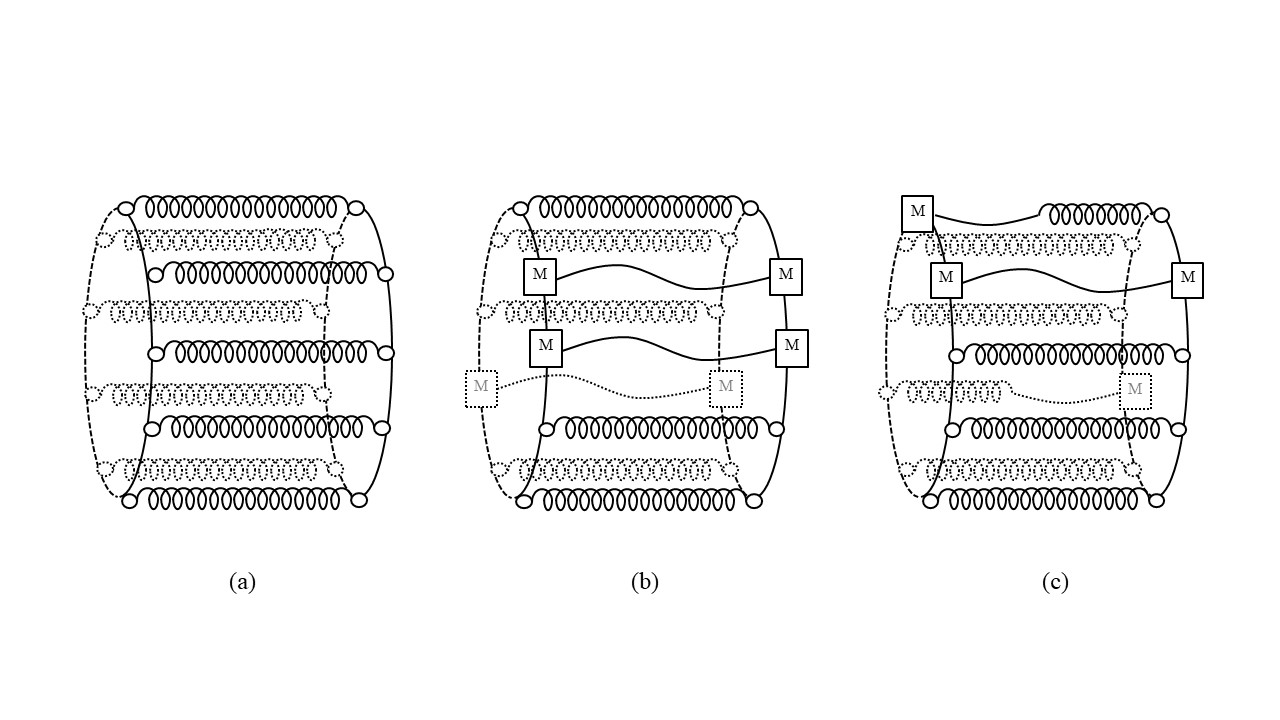}
\caption{These figures represent the blocks of MES pairs at different stages of the protocol. To simplify the figure, we represent each block by a single MES. Note that these are used in a circular pattern, corresponding to recycling some of the previously used blocks of MES pairs. Those depicted as circles are assumed to be good and usable for QVC, those depicted by squares have been measured already in order to extract the error syndrome. Figure (a) represent the MES blocks at the beginning of the protocol, when none have been measured. Figure (b) represents them when Alice and Bob agree on which ones have been measured and have used the same amount of them for QVC, which is the desired state. Figure (c) represents a situation when Alice and Bob have gotten out-of-sync, e.g., Alice has measured some blocks that Bob has not (and maybe used QVC more often than Bob). They then work to get back in sync before resuming the simulation.
}
\label{fig:EPR_circle}
\end{figure}

\subsubsection{Quantum Hashing}\label{sec:Qhashing}

By performing local measurements on the MES pairs used as the keys for encrypting the messages in QVC and comparing the measurement outcomes on both sides, one can extract the error syndrome corresponding to the corruptions introduced over the noisy communication channel.  As explained in the previous subsection, although this allows the two parties to detect errors immediately, it is not efficient for our application. In Subsection~\ref{sec:qvc}, we introduced an error detection procedure which allows the parties to check for corruptions when QVC is used over several rounds of communication to send multiple messages at the cost of losing only one MES which is measured at the end. However, this error detection procedure is not directly useful in our application since the adversary can always choose the corruptions in a way that makes it impossible for Alice and Bob to detect errors; see subsection~\ref{sec:qvc}. Instead, Alice and Bob use the \emph{quantum hashing\/} procedure described below to check whether there is an undetected error. 
To avoid the adversary from hiding her corruptions from the detection procedure above, Alice and Bob choose a random subset of the MESs and try to detect errors in this subset rather than all MESs used in QVC. More precisely, quantum hashing involves the following steps in our algorithm. At the beginning of the $i$-th iteration, Alice and Bob pick a fresh MES serving as the control system used in the error detection procedure. Recall that Alice and Bob locally maintain the strings $\IndexA$ and $\IndexB$, respectively, corresponding to their recycled MES blocks. Alice uses the shared randomness established at the outset of the protocol to choose a random subset of the MES registers contained in the blocks specified by $\IndexA\Br{i-t+1:\ellQVCA}$, for $t\in \Theta \br{n\epsilon}$. Using her recycling data $\RA$, she locally determines the MESs in this random subset which she has not measured already. She performs her operations in the detection procedure described in Subsection~\ref{sec:qvc} only on these MESs. Bob does the same locally, based on $\IndexB$ and $\RB$. Alice and Bob store their measurement outcomes in $\QHA$ and $\QHB$, respectively, and exchange the values. We prove that except with exponentially small probability, recycling is successful throughout the execution of the algorithm.  Therefore we can assume $\IndexA=\IndexB$ in every iteration. But $\RA$ and $\RB$ do not necessarily match in every iteration. Therefore, in some iterations, Alice and Bob may perform the quantum hashing on different subsets of the MESs. Moreover, if the two parties are not synchronized in the number of MES blocks they have used , they might perform quantum hashing on an MES that is only used by one party but not the other. We discuss these out-of-sync quantum hashing scenarios in Subsection~\ref{sec:out-of-sync QH}.

Alice and Bob compare their quantum hash values only when they believe they have full knowledge of each other's metadata, have used the same number of MES blocks and agree on their recycling data. If they get the same hash values, they assume no error has occurred. Otherwise, they believe they have detected an error. Note that similar to all other types of data that are communicated over the noisy channel, the quantum hash values may get corrupted by the adversary. We say a \emph{quantum hash collision\/} has occurred in an iteration only when recycling has been successful so far, Alice and Bob are indeed synchronized in their metadata, the number of MES blocks they have used and their recycling data and $\QHA=\QHB$ despite the fact that there are non-measured MES blocks in the last $t$ blocks of $\IndexA=\IndexB$, which are not in the $\ket{\phi^{0,0}}^{\otimes 4r}$ state.

The following lemma shows that in every iteration of the algorithm, assuming successful recycling up to that point, the probability of a quantum hash collision is at most $1/2$. 
\suppress{
Note that by Equations ~\eqref{eq:messagecorrupt},\eqref{eqn:quantumvernamcipher} and \eqref{eqn:reverseorderquantumvernamcipher}, even if Alice and Bob reuse a block of MES pairs corresponding to an iteration in which communication is corrupted, all the MESs always stay in $\textsf{span}\set{\ket{\phi^{0,k}}:0\leq k\leq d-1}$.
}

\begin{lemma}\label{lem:quantumhash}
	Let $m\in \mathbb{N}$ and $k=k_1 \ldots k_m \in \{0,1,\ldots,d-1\}^m$. Suppose that Alice and Bob share the states $\ket{\phi^{0,0}}_{A_0B_0},\ket{\phi^{0,k_1}}_{A_1B_1},\ldots,\ket{\phi^{0,k_m}}_{A_mB_m}$and a random variable $S$ distributed over $\set{0,1}^m$, interpreted as a subset of $[m]$. Alice and Bob apply $\control{\rX}_{A_0A_i}$ and $\control{\rX}_{B_0B_i}$, for all $i\in S$. Then they apply the quantum Fourier transform operator $\rF$ and its inverse $\rF^{\dagger}$ on $A_0$ and $B_0$, respectively. They measure the registers in the computational basis with outcomes $\QHA$ and $\QHB$, respectively. Then for $k=0^m$, independent of the random variable $S$, we have
	\[
	\prob{\QHA=\QHB}=1\enspace.
	\] 
	Moreover, for uniformly random $S$, for all $k\neq0^m$, we have 
	\[
	\prob{\QHA=\QHB}\leq \frac{1}{2}\enspace.
	\]
\suppress{
	\[
	\prob{\QHA=\QHB}\begin{cases}
	=1 & \text{if $k_1=\ldots=k_m=0$}\\
	\leq \frac{1}{2} &\text{otherwise}\enspace,
	\end{cases}
	\]
	and if $S$ is $\delta$-biased, then
	\[
	\prob{\QHA=\QHB}
	\begin{cases}
	=1 & \text{if $k_1=\ldots=k_m=0$} \\
	\leq \frac{1}{2}+\delta & \text{otherwise}.
	\end{cases}
	\]
	Furthermore, the states in $A_1B_1,\ldots, A_mB_m$ remain unchanged.
}
\end{lemma}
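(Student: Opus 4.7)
The plan is to reduce the joint effect of all the controlled-$\rX$ gates to a single transformation on the ancilla $A_0B_0$, then compute the joint measurement distribution after the Fourier operations, and finally bound the coincidence probability $\prob{\QHA = \QHB}$. Since the gates $\br{\control{\rX}}_{A_0A_i}$ and $\br{\control{\rX}}_{B_0B_i}$ for different $i$ act on pairwise disjoint targets and share controls only on the computational-basis registers $A_0, B_0$, they all commute and may be applied in any order over $i \in S$. Iterating Proposition~\ref{lem:cnotbell} with $j_1 = k_1 = 0$ (initial ancilla state $\ket{\phi^{0,0}}$) and $j_2 = 0$, $k_2 = k_i$ for each pair, every application leaves $\ket{\phi^{0,k_i}}_{A_iB_i}$ unchanged and shifts the second index of the ancilla state by $-k_i$. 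Setting $K \defeq \sum_{i \in S} k_i \bmod d$, the global state after all the $\br{\control{\rX}}$ gates is
\[
\ket{\phi^{0,-K}}_{A_0B_0} \otimes \bigotimes_{i=1}^{m}\ket{\phi^{0,k_i}}_{A_iB_i},
\]
so in particular the hashed MESs are preserved.

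Next I would analyse the action of $\rF \otimes \rF^{\dagger}$ on the ancilla and the resulting measurement. Using Proposition~\ref{fac:paulioperatorcommute} to conjugate $\rZ^{-K}$ by $\rF$ (producing a power of $\rX$), together with the identity $\br{\rF \otimes \rF^{\dagger}}\ket{\phi} = \ket{\phi}$ (which follows from Proposition~\ref{fac:uuepr} applied with $U = \rF$, together with $\rF^T = \rF$), the state of $A_0B_0$ after the Fourier operations takes the form
\[
\br{\rX^{\pm K} \otimes \id}\ket{\phi}_{A_0B_0} \;=\; \frac{1}{\sqrt{d}}\sum_{a=0}^{d-1}\ket{a}_{A_0}\ket{a\mp K \bmod d}_{B_0}.
\]
Measurement in the computational basis then produces $\QHA$ uniform on $\Sigma$ and $\QHB = \QHA \mp K \bmod d$, so $\QHA = \QHB$ holds precisely when $K \equiv 0 \pmod d$; the coincidence event therefore depends on $S$ only through $K \bmod d$. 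If $k = 0^m$, then $K = 0$ for every $S$ and the coincidence occurs with probability $1$, independently of $S$, as claimed. Otherwise, fix any index $j$ with $k_j \not\equiv 0 \pmod d$, condition on $\set{S_i : i \neq j}$, and write $K = S_j k_j + C$ with $C$ determined by the conditioning. Since $S_j$ is uniform on $\set{0,1}$ and $0 \not\equiv k_j \pmod d$, at most one of the two values of $S_j$ makes $K \equiv 0 \pmod d$, so the conditional probability of coincidence is at most $1/2$; averaging yields the unconditional bound.

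I do not foresee a serious obstacle: once Proposition~\ref{lem:cnotbell} is used to compress the information about all the $k_i$ into the single scalar $K$ on the ancilla, the probability bound reduces to the elementary fact that a nonzero $\mathbb{Z}_d$-valued linear combination with independent uniform binary coefficients hits any fixed target with probability at most $1/2$. The one minor bookkeeping point is the exact sign in $\rF \rZ^{-K} \rF^{\dagger}$, but since only $K \equiv 0 \pmod d$ governs the coincidence event, the sign is immaterial.
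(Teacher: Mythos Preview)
Your proposal is correct and follows essentially the same route as the paper: iterate Proposition~\ref{lem:cnotbell} to reduce the ancilla to $\ket{\phi^{0,-K}}$ with $K=\sum_{i\in S}k_i$, push $\rF\otimes\rF^{\dagger}$ through to obtain $\ket{\phi^{-K,0}}$, and conclude that $\QHA=\QHB$ iff $K\equiv 0\pmod d$. The only cosmetic difference is in the final combinatorial step: the paper pairs each $s$ in the zero set with $s+e_j$ to show $|Z|\le 2^{m-1}$, whereas you condition on $\{S_i:i\neq j\}$ and observe that at most one value of $S_j$ can hit zero; these are the same argument in different words.
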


\begin{proof}
	Let $S=S_1S_2\ldots S_m$. By Lemma~\ref{lem:cnotbell}, the state in register $A_0B_0$ before the measurements is
	\begin{align*}
	\br{\rF\otimes \rF^\dagger} \ket{\phi^{0,-\sum_{i=1}^mS_ik_i}}_{A_0B_0}
	&=\br{\rF \rZ^{-\sum_{i=1}^mS_ik_i}\otimes \rF^\dagger}\ket{\phi^{0,0}}
	=\br{\rF \rZ^{-\sum_{i=1}^mS_ik_i}\rF^\dagger\otimes \id}\ket{\phi^{0,0}}\\
	&=\br{\rX^{-\sum_{i=1}^mS_ik_i}\otimes \id}\ket{\phi^{0,0}}
	=\ket{\phi^{-\sum_{i=1}^mS_ik_i, 0}}\enspace,
	\end{align*}
	where the first and the last equality follow from Definition~\ref{def:Bellstates}; the second equality holds by Proposition~\ref{fac:uuepr} and the fact that~$\rF=\rF^T$. The third equality follows from Proposition~\ref{fac:paulioperatorcommute}.
	Hence
	\[\prob{\QHA=\QHB}=\prob{\sum S_ik_i=0\mod d}\enspace.\]	
	If $k=0^m$ then the above probability equals $1$. Suppose that $k$ is non-zero in a non-empty subset $J$ of coordinates in $\Br{m}$. Consider the set $Z$ of all $s\in \{0,1\}^m$ such that $\sum s_ik_i=0 \mod d$. Note that the minimum Hamming distance of elements of $Z$ restricted to $J$ is at least $2$, since otherwise there exists $j\in J$ such that $d$ divides $k_j$, contradicting $k_j\in \Br{d-1}$. Fix $j\in J$ and let $e_j\in \{0,1\}^m$ be the string which is $1$ in the $j$-th coordinate and zero everywhere else. For every $s\in Z$, the string $s+e_j$ is not in $Z$. Therefore, $|Z|\leq 2^{m-1}$ and for $S$ uniformly distributed over $\{0,1\}^m$ we have 
	\[ 
	\prob{\sum S_ik_i=0\mod d}\leq\frac{1}{2}.
	\]
	Note that the above bound is tight when $|J|=1$. Finally, by Lemma~\ref{lem:cnotbell} the state in the registers $A_1B_1,\ldots, A_mB_m$ remains unchanged.
\end{proof}

In order to reduce the collision probability to a smaller constant, quantum hashing may be repeated for a constant number of times in every iteration with fresh control MESs and independent random subsets $S$.

\paragraph{Classical seeds needed for quantum hashing.}

Alice and Bob perform quantum hashing and communicate the hash values in every iteration but they only compare their hash values in a subset of iterations. We choose to do so in order to avoid the two parties from getting out-of-sync on which MES register to use in quantum hashing. As the hashing procedure only consumes a constant number of MESs in each iteration, the total number of MESs used in quantum hashing in the entire simulation is $\Theta\br{\Rtotal}=\Theta\br{n\sqrt{\epsilon}}$, and they constitute a constant fraction of the MESs distributed at the outset of the protocol; see Subsection \ref{subsec:description-large-quantum}. On the other hand, generating independent $\Theta\br{rt}$-bit seeds, with~$r\in \Theta\br{1/\sqrt{\epsilon}}$ and $t\in \Theta\br{n\epsilon}$, for each of the $\Rtotal$ iterations would require $\Theta\br{n^2\epsilon}$ bits of shared randomness. The shared randomness is obtained by measuring a fraction of the MESs established at the beginning of the algorithm. Even in the large-alphabet case, sharing $\Theta\br{n^2\epsilon}$ bits of randomness would require too much communication. 

To circumvent this obstacle Alice and Bob start with a smaller number of i.i.d. random bits and extend them to a much longer pseudo-random string. In more detail, they measure $\Theta\br{n\sqrt{\epsilon}}$ MESs in the computational basis and record the binary representation of the outcomes in $R'$. Then they each use the deterministic algorithm of Lemma \ref{lem:stretch} with $\delta=2^{-\Theta\br{n\sqrt{\epsilon}}}$, to obtain a shared $\delta$-biased string $R'$ of length $\Theta\br{rt\Rtotal }=\Theta\br{n^2\epsilon}$. The following lemma bounds the collision probability when instead of a uniformly random seed, a $\delta$-biased seed is used in quantum hashing. Note that in our application of Lemma~\ref{lem:quantum hash-delta biased}, we have $m=O\br{rt}=O\br{n\sqrt{\epsilon}}$.
	
\begin{lemma} \label{lem:quantum hash-delta biased}
	Suppose that the random variable $S$ in Lemma \ref{lem:quantumhash} is $\delta$-biased. Then for all $k\neq 0^m$, we have
	\[
	\prob{\QHA=\QHB}\leq \frac{1}{2}+2^{m/2}\delta \enspace.
	\]
\end{lemma}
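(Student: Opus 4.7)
The plan is to reduce this to the uniform-seed case already handled in Lemma~\ref{lem:quantumhash}, by quantifying how much the collision probability can change when we replace a uniform seed by a $\delta$-biased one. The first observation is that the computation of the joint state in the proof of Lemma~\ref{lem:quantumhash} never used that $S$ was uniform: for any fixed $s \in \{0,1\}^m$, conditioning on $S = s$ yields the same identity
\[
\prob{\QHA = \QHB \mid S = s} = \mathbf{1}\!\left[\,\textstyle\sum_{i} s_i k_i \equiv 0 \pmod{d}\,\right],
\]
so averaging gives $\prob{\QHA = \QHB} = \prob{\sum_i S_i k_i \equiv 0 \pmod d}$ whatever the distribution of $S$ is. Thus the problem becomes purely classical: bound this probability for $\delta$-biased $S$.

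Next I would compare $S$ with the uniform distribution $U$ on $\{0,1\}^m$ through $L_1$ and $L_2$ distances. Let $E \defeq \{s \in \{0,1\}^m : \sum_i s_i k_i \equiv 0 \pmod d\}$. By Proposition~\ref{prop:l1-distance},
\[
\bigl|\Pr_S[E] - \Pr_U[E]\bigr| \;\leq\; \tfrac{1}{2}\,\|S - U\|_1.
\]
By Cauchy--Schwarz on $\{0,1\}^m$, $\|S - U\|_1 \leq 2^{m/2}\,\|S - U\|_2$, and Proposition~\ref{prop:uniform-vs-delta-biased} gives $\|S - U\|_2 \leq \delta$. Combining these yields $|\Pr_S[E] - \Pr_U[E]| \leq 2^{m/2-1}\delta \leq 2^{m/2}\delta$.

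Finally, for $k \neq 0^m$, the uniform-case analysis in the proof of Lemma~\ref{lem:quantumhash} showed $\Pr_U[E] \leq 1/2$ (via the involution $s \mapsto s + e_j$ on $E$ for any coordinate $j$ where $k_j \neq 0$). Adding the deviation bound above gives $\Pr_S[E] \leq 1/2 + 2^{m/2}\delta$, which is the claimed inequality. There is no real obstacle here; the only modest subtlety is recognizing that the quantum/measurement part of the argument goes through verbatim for any distribution on $S$, so the whole statement reduces to plugging the supplied propositions on $\delta$-biased spaces into the collision-probability formula.
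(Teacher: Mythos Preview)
Your proof is correct and follows essentially the same route as the paper: reduce $\prob{\QHA=\QHB}$ to $\Pr_S[E]$ for the event $E=\{s:\sum_i s_ik_i\equiv 0\pmod d\}$, then bound $|\Pr_S[E]-\Pr_U[E]|$ via Proposition~\ref{prop:l1-distance}, Cauchy--Schwarz, and Proposition~\ref{prop:uniform-vs-delta-biased}, and finally invoke the $1/2$ bound from Lemma~\ref{lem:quantumhash}. Your explicit remark that the per-seed identity $\prob{\QHA=\QHB\mid S=s}=\mathbf{1}[\sum_i s_ik_i\equiv 0]$ holds regardless of the distribution of $S$ is a welcome clarification that the paper leaves implicit.
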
	 

\begin{proof}
	Let $U$ denote the uniform distribution on $\{0,1\}^m$ and $Z$ be the subset of all $s\in\{0,1\}^m$ such that $\sum s_ik_i=0 \mod d$. By Propositions~\ref{prop:uniform-vs-delta-biased} and~\ref{prop:l1-distance}, we have
	\[
	|U(Z)-S(Z)|\leq \frac{1}{2}\|U-S\|_1 \leq \frac{1}{2} \times 2^{m/2} \|U-S\|_2 \leq 2^{m/2}\delta\enspace.
	\]
	Therefore, by Lemma~\ref{lem:quantumhash}, we have 
	\[
	\prob{\QHA=\QHB} = \prob{\sum S_ik_i=0\mod d} = S(Z) \leq \frac{1}{2}+2^{m/2}\delta \enspace.
	\]
\end{proof}

\subsubsection{Out-of-Sync Quantum Vernam Cipher}\label{subsec:out-of-sync QVC}

Consider the scenario where Alice, based on her view of the simulation so far, implements a $+1$ block, while Bob believes their classical data are not consistent and therefore implements a $\sC$ iteration. Alice simulates a block of the input protocol $\Pi$ while using the next block of MES pairs in the queue $\IndexA$ to encrypt her messages using QVC. At the same time, Bob tries to reconcile the inconsistency through classical communication and does not send his messages using QVC. In this scenario, they also interpret the messages they receive incorrectly. Alice believes Bob is also encrypting his messages using QVC and she applies QVC decoding operations and her Pauli corrections on Bob's messages. Moreover, she potentially applies unitary operations of the input protocol on her local registers. Meanwhile, Bob treats Alice's messages as classical information about the data he believes they need to synchronize. More importantly, since he does not perform the QVC operations (decoding operations in odd rounds and encoding operations in even rounds) on his side, in each round the corresponding MES pair becomes entangled with the message register. So crucial information for continuing the simulation spreads to multiple registers. Moreover, this scenario could continue for several iterations. Nonetheless, we provide a simple way to redirect the quantum information back to the $ABC$ registers, while effectively reducing this type of error to corruptions introduced in the joint state due to transmission errors by the adversary. Once reduced to such errors, Alice and Bob can actively rewind the incorrect part of the simulation and resume from there.

As explained earlier, the first step for Alice and Bob is to ensure they have full knowledge of each other's metadata. Once they both achieve this goal, they discover the discrepancy in the number of MES blocks they have used. Suppose Bob has used fewer blocks of MES pairs than Alice, i.e., $\ellQVCA>\ellQVCB$ and he discovers this at the beginning of the $i$-th iteration. Let $E_1E_2\ldots E_{4r}$ be the registers with Bob containing halves of the $4r$ MESs in the first MES block that Alice has used, say in the $i'$-th iteration, but Bob has not so far. Note that in iteration $i'$, Alice has used the MES pairs corresponding to $E_1E_2,E_5E_6,\ldots,E_{4r-3}E_{4r-2}$ on her side to encrypt quantum information using QVC and she has performed QVC decoding operations on Bob's messages and her marginal of MES pairs corresponding to $E_3E_4,E_7E_8,\ldots,E_{4r-1}E_{4r}$. In the $i$-th iteration, Alice and Bob both send dummy messages to each other. Let $C_1,C_2,\ldots,C_r$ denote the $r$ message registers sent from Alice to Bob after communication over the noisy channel. For every $j\in[r]$, upon receiving $C_j$, Bob applies QVC decoding operations on $C_j$ and $E_{4j-3}E_{4j-2}$, and then applies QVC encoding operations on $C_j$ and $E_{4j-1}E_{4j}$, i.e., he applies
\[
\br{\control{\rZ}}_{E_{4j}C_j}\br{\control{\rX}}_{E_{4j-1}C_j}\br{\control{\rX}^{-1}}_{E_{4j-3}C_j}\br{\control{\rZ}^{-1}}_{E_{4j-2}C_j}, 
\] 
and then he discards the message register $C_j$. The effect of these operations is the same as if Alice and Bob had both used the MES block in sync, i.e., in the $i'$-th iteration, \emph{except the following also happened independently of channel error\/}:

\begin{enumerate}
	\item Alice's messages in the $i'$-th iteration were replaced by $C_1,\ldots,C_r$ and Bob applied his QVC decoding operations on these dummy messages rather than the messages Alice intended to communicate,
	\item the unitary operations used by Bob on the registers $BC$ were all identity, and 
	\item Bob's messages were replaced by his messages of the $i'$-th iteration and  Alice's QVC decoding operations were applied on these (classical) messages. 
\end{enumerate}

The above procedure redirects the quantum information leaked to the MES registers back to the $ABC$ registers, while introducing errors which act exactly the same as transmission errors introduced by the adversary. As in the case of corruptions by the adversary, once Alice and Bob measure the MES block, the error collapses to a Pauli error which can be determined by comparing the measurement outcomes by Alice and Bob. We choose to perform the measurements at the end of the $i$-th iteration, rather than leaving the algorithm to detect the error through quantum hashing (as in the case of transmission errors).


\subsubsection{Out-of-Sync Quantum Hashing} \label{sec:out-of-sync QH}

Consider the scenario in which Alice and Bob have used the same number of MES blocks for communication using QVC but have measured different subsets of MES blocks. Suppose that when they perform quantum hashing, the random subset of MESs they choose contains an MES in registers $A_1B_1$, which has been measured by only one party, say Alice. Let $V_\mathrm{A}V_\mathrm{B}$ be the registers used as the control registers by Alice and Bob in quantum hashing. Alice and Bob compare their quantum hash values only if they believe they have measured the same subset of MES blocks. Therefore, if they compare their hash values in this iteration, it is due to a transmission error or a metadata hash collision. Note that in this scenario Bob applies a controlled-$\rX$ operation on the partially measured MES, while Alice who has measured her marginal does not. Since $A_1$ is already measured by Alice, after Bob's controlled-$\rX$ operation the registers $A_1B_1$ do not get entangled with $V_\mathrm{A}V_\mathrm{B}$. However, the state in the $V_\mathrm{A}V_\mathrm{B}$ registers will be mapped to $\ket{\phi^{0,a}}$, for a random $a\in\{0,1,\ldots,d-1\}$ corresponding to Alice's measurement outcome. This (quite probably) results in Alice and Bob taking incorrect actions from which they can recover once they realize the inconsistency in their classical data. The algorithm is designed to ensure that the register $B_1$ is measured by Bob and $A_1B_1$ is not reused by the two parties in future iterations. Moreover, Bob's controlled-$\rX$ operation does not change the outcome of his measurement on $B_1$. This ensures that Alice and Bob can correctly learn any potential error on the message register in the corresponding communication round once they learn each other's Pauli data.

A subtler scenario occurs when Alice and Bob perform quantum hashing when they have used different numbers of MES blocks. Suppose that $\ellQVCA>\ellQVCB$, i.e., Alice has used more MES blocks and the random subset of MESs they choose for quantum hashing contains an MES which has been only used on Alice's side. As explained in the previous section, when QVC operations are performed only on one side, the MES pair used as the key becomes entangled with the message register and the quantum information in the message register leaks to these MESs. In the above scenario, once quantum hashing is done the information leaks into the additional MES used in hashing. Since the same MES register is used as the control system when applying the $\control{\rX}$ operations on all the MESs in the random subset, the information may leak even further into those registers as well. Surprisingly, the simple solution we provided to recover from out-of-sync QVC resolves this issue as well. The first measure we need to take is to ensure quantum hashing is performed in a sequential way on the MESs in the random subset, starting from the MES used earliest to the latest one. This ensures that the states in the MES registers which have been used both by Alice and Bob do not get disturbed. However, the remaining MESs in the random subset become entangled with the additional MES used in hashing and potentially each other. We need to ensure that these MES registers are not reused in future iterations. Once the two parties synchronize their metadata and realize that they are out of sync on the number of MESs they have used, Bob completes QVC on his side as described in the previous section and immediately measures his marginal of the MES block. This ensures that he will never reuse this block in future iterations. The algorithm is designed so that by the time Alice needs to decide whether to recycle this MES block or not, she will have measured her marginal of the MES registers. We prove that except with probability $2^{-\Theta\br{n\epsilon}}$ recycling is successful in all iterations and such a block of MES registers is never recycled. 

Despite the fact that quantum hashing is performed before Bob completes the QVC operations on his side, this procedure has the same effect as if Bob had completed QVC \emph{before\/} the quantum hashing was performed. To understand this phenomenon, consider the following simpler scenario. Suppose that Alice and Bob share $3$ copies of the MES $\ket{\phi^{0,0}}$ in registers $A_1B_1$, $A_2B_2$ and $V_\mathrm{A}V_\mathrm{B}$. Alice uses the MES pair in registers $A_1B_1$ and $A_2B_2$ as the key to encrypt a message in register $C$ using QVC and sends the message register to Bob. Suppose that the adversary applies the Pauli error $\rX^a \rZ^b$ on the message register for some $a,b\in\{0,1,\ldots,d-1\}$. Now suppose that before Bob applies his QVC decoding operations, Alice applies $\control{\rX}$ on $V_\mathrm{A}A_1$ with $V_\mathrm{A}$ being the control system. Then their joint state is  
\begin{align*}
\br{\control{\rX^{-1}}}_{B_1C}\br{\control{\rZ^{-1}}}_{B_2C}
\br{\control{\rX}}_{V_\mathrm{A}A_1}
\br{\rX^a\rZ^b}_C 
&\br{\control{\rZ}}_{A_2C}\br{\control{\rX}}_{A_1C} \\
&\ket{\phi^{0,0}}_{V_\mathrm{A}V_\mathrm{B}} \ket{\phi^{0,0}}_{A_1B_1} \ket{\phi^{0,0}}_{A_2B_2} \ket{\psi}_C.
\end{align*}
Note that $\br{\control{\rX}}_{V_\mathrm{A}A_1}$ commutes with Bob's QVC decoding operation $\br{\control{\rZ}}_{A_2C}\br{\control{\rX}}_{A_1C}$.  Therefore, by Equation~\eqref{QVC with error} their joint state is given by 
\[
\br{\control{\rX}}_{V_\mathrm{A}A_1} \br{\rX^a \rZ^b}_C 
\ket{\phi^{0,0}}_{V_\mathrm{A}V_\mathrm{B}} \ket{\phi^{0,b}}_{A_1B_1} \ket{\phi^{0,-a}}_{A_2B_2} \ket{\psi}_C.
\]
Note that $V_\mathrm{A}V_\mathrm{B}$ and $A_1B_1$ are entangled as a result of the controlled-$\rX$ operation. Nevertheless, Alice and Bob still extract the correct error syndrome when they measure $A_1,A_2$ and $B_1,B_2$, respectively, and compare their measurement outcomes, as if Bob's QVC decoding operations were performed before Alice's $\br{\control{\rX}}_{V_\mathrm{A}A_1}$ operation. This is due to the fact that the error on the message register is reflected in the MES pair as phase errors and the phase error in each MES can still be detected correctly by local measurements in the Fourier basis even after the controlled-$\rX$ operation is applied. In the out-of-sync quantum hashing scenario described above a similar effect occurs. Finally, note that when Alice and Bob do not agree on the number of MES blocks they have used, they do not compare their quantum hash values unless a transmission error or a meta data hash collision occurs.  

\subsubsection{First representation of the joint quantum state} \label{subsec:Q-1st-rep-sync}

As in Section \ref{sec:general-descrpition-largeclasscial}, we start by introducing a first representation of the joint state, denoted $\JSone$, which in turn is simplified into a more informative representation. This latter representation, denoted $\JStwo$, is the representation which Alice and Bob need to compute correctly in order to make progress in simulation of the input protocol $\Pi$ and decide their next action in $\Pi'$. Recall that due to the recycling of MESs, each block of MES pairs may be used multiple times to encrypt messages using QVC. The representations $\JSone$ and $\JStwo$ defined below are valid only if the recycling has been successful so far in $\Pi'$, namely that Alice and Bob have recycled the same blocks of MES registers and that these registers were indeed in the $\ket{\phi^{0,0}}$ state when recycled. We prove that except with probability $2^{-\Theta\br{n\epsilon}}$ the recycling is successful throughout the algorithm.

\suppress{ In addition to the registers $ABCER$, the representations $\JSone$ and $\JStwo$ contain registers corresponding to MESs recycled for communication using QVC. To simplify the representations, in every iteration, the MES block being recycled is represented by a new block of MES registers all in the ideal $\ket{\phi^{0,0}}$ state, as if no recycling were done and a fresh block of MES pairs were available each time.}  

Recall that in the adversarial noise model, the adversary Eve can introduce arbitrary errors on the quantum communication register $C^\prime$ that passes through her hand subject to the constraints given by Eq.~\eqref{eqn:noise-model-1} and Eq.~\eqref{eqn:noise-model-2}.
\suppress{the global action of the adversary Eve on the $n'$ communication registers in $\Pi'$ has a representation with Kraus operators of weight at most $\epsilon n'$, i.e., each Kraus operator can be written as a linear combination of Pauli errors of weight at most $\epsilon n'$.} Furthermore, as explained in Section~\ref{subsec:out-of-sync QVC}, the algorithm is designed so that once the two parties agree on the number of blocks of MES pairs they have used, the error in the joint state due to out-of-sync QVC in any iteration is translated to a transmission error on the message registers, as if the MES blocks were used in sync and transmission errors were introduced by the adversary. We emphasize that in both cases, the error on the message register is a mixture of linear combinations of Pauli errors and once Alice and Bob measure a block of MES pairs to extract (part of) the syndrome, the error on the corresponding message register collapses to a Pauli error. Then the joint state can be written in terms of a new mixture of linear combinations of Pauli errors conditioned on the measurement outcomes, which are recorded in the Pauli data by the two parties. To simplify the joint state representation and the analysis of the algorithm, without loss of generality, we focus on a fixed but arbitrary error syndrome in any such linear combination of Pauli errors arising in the simulation protocol $\Pi'$. We prove the correctness of the algorithm for any such error syndrome which by linearity implies the correctness of the algorithm against any adversary defined in Section~\ref{sec:noisy_comm_model}. Let~$E \in \P_{d,n^\prime}$ be a Pauli error with~$\mathrm{wt}\br{E}\leq \epsilon n^\prime$. In the remainder of this section, we assume $E$ is the error introduced by the adversary into the~$n^\prime$ communicated qudits in $\Pi'$.

We first define the representations $\JSone$ and $\JStwo$ after $i$ iterations of the algorithm in the case when the two parties have used the same number of blocks of MES pairs, i.e., $\ellQVCA=\ellQVCB$. In Subsection~\ref{subsec:Q-2nd-rep-out-of-sync}, we explain how these representations are modified when Alice and Bob are out of sync in the number of MES blocks they have used. 

We sketch how to obtain $\JSone$ from $\FullMA$, $\FullMB$, $\FullPA$ and $\FullPB$, when the error syndrome is given by $W\in \br{\Sigma^2}^*$. Recall that when~$\ellQVCA=\ellQVCB$, there is a one-to-one correspondence between the state of each MES register and the Pauli error on the message register in the corresponding communication round. Therefore, in order to simplify the representations, without introducing any ambiguity, we omit the MES registers from the representations $\JSone$ and $\JStwo$. The first representation $\JSone$ of the joint state after $i$ iterations (when $\ellQVCA=\ellQVCB$) is given by 
\begin{align} \label{eqn:Q-JS1_1}
\JSone = [*\ellQVCA] \cdots [*2][*1] \ket{\psi_{\mathrm{init}}}^{A B C E R},
\end{align}
where $\ket{\psi_{\mathrm{init}}}^{A B C E R}$ is the initial state of the original input protocol $\Pi$ and the content of each bracket is described below. The $j$-th bracket corresponds to the $j$-th block of MES pairs which have been used by both Alice and Bob and contains from right to left $r$ iterations of the following:
\begin{quote}
	Alice's unitary operation - \\
	Pauli error on Alice's message - \\
	Bob's Pauli correction - Bob's unitary operation - \\
	Pauli error on Bob's message - \\
	Alice's Pauli correction.
\end{quote}
Similar to the teleporation-based protocol, we allow for an additional unitary operation by Alice on the far left when she implements a block of type $-1$. Using the same rules described in Section~\ref{sec:large-classical-second-rep}, in each bracket, the block of unitary operations of the input protocol $\Pi$ applied by Alice (if any) and her block type ($\pm1$ or $0$) can be computed from $\FullMA$. Moreover, her Pauli corrections are recorded in $\FullPA$ and the block of $\FullPA$ containing these Pauli corrections can be located using $\FullMA$. Recall that each block of $\FullPA$ may correspond to two different types of iterations: when Alice measures a block of MES pairs to extract the error syndrome she concatenates $\FullPA$ with a new block containing her measurement outcomes (with no Pauli corrections), whereas in iterations in which she communicates using QVC, she may apply Pauli corrections in between and records the Pauli corrections in $\FullPA$. Therefore, $\FullMA$ may be used to distinguish these two different types of blocks and locate the corresponding Pauli corrections in $\FullPA$. Similarly, in each bracket, the block of unitary operations of $\Pi$ applied by Bob, his block type and his Pauli corrections are obtained from $\FullMB$ and $\FullPB$. 
Finally, in $\JSone$, the Pauli errors on the messages in each bracket are specified in terms of the error syndrome $W=W_1W_2\ldots W_{\ellQVCA} \in {\br{\Sigma^2}}^{2r\times \ellQVCA}$ defined below.  
The communication in each iteration of the algorithm has two parts. In the first part, the parties use a constant number of rounds to communicate the pointers and hash values. Any transmission error introduced by the adversary on these messages only affects the actions of the two parties in the current and future iterations, which will be recorded in the metadata and Pauli data and reflected in the joint state representation. The second part involves $2r$ rounds of communication, in which either classical information is communicated (e.g., to reconcile inconsistencies in the data structures) or QVC is used to communicate quantum information (on one side or both). Transmission errors on these messages can directly modify the joint state and need to be separately taken into account in the joint state representation. Let $W'=W'_1W'_2\ldots W'_{\Rtotal}$ denote the error syndrome corresponding to the restriction of~$E$ to these messages over the $\Rtotal$ iterations of the algorithm, where each $W'_j$ is a string in $\br{\Sigma^2}^{2r}$ representing a Pauli error on $2r$ qudits. For every $j\in \Br{\ellQVCA}$, if the $j$-th block of MES pairs has been used in sync on both sides, say in iteration $j'$, we let $W_j=W'_{j'}$. Otherwise, the $j$-th block of MES pairs has been used out of sync and we define $W_j$ to be the error syndrome arising on the message registers in the corresponding communication rounds due to the remedial actions the parties take to recover from the out-of-sync QVC; see Section~\ref{subsec:out-of-sync QVC} for more details.\suppress{We set $W_j$ to be $\br{0^2}^{2r}$, for all $j\in \Br{\ellQVCA+1,i}$.} Each $W_j\in \br{\Sigma^2}^{2r}$ specifies the $2r$ Pauli errors in the $j$-th bracket from the right. 


Note that in order to compute the representation $\JSone$, one needs to know $\FullMA$, $\FullMB$, $\FullPA$, $\FullPB$ and $W$. This information is not necessarily available to Alice and Bob at any point during the simulation. In fact, We use the representation in order to analyze the progress in the simulation. Alice and Bob compute their best guess for $\JSone$ based on their estimate of each other's classical data. They only compute their estimates of $\JSone$ when they believe that they have full knowledge of each other's metadata and Pauli data, fully agree on the recycling data, have used the same number of MES blocks and have measured all blocks of MES pairs which were corrupted in communication using QVC or due to out-of-sync QVC. 

Alice's estimate $\JSoneA$ of $\JSone$ is of the same form as in Equation \eqref{eqn:Q-JS1_1}, except she uses her best guess of $\FullMB$ and $\FullPB$ in the above procedure. Moreover, the string $W$ in $\JSone$ is replaced by $W^\mathrm{A}=W^\mathrm{A}_1 \ldots W^\mathrm{A}_{\ellQVCA} \in {\br{\Sigma^2}}^{2r\times \ellQVCA}$ computed by Alice as follows. For every $j\in\Br{\ellQVCA}$, 
\begin{itemize}
	\item if $\RA\Br{j}=\sS$, then she sets $W^\mathrm{A}_j=\br{0^2}^{2r}$. Recall that when Alice computes $\JSoneA$, she believes that they have used the same number of MES blocks and have both already measured all blocks of MES pairs which were corrupted in communication using QVC. Therefore, in Alice's view the remaining rounds of communication using QVC have not been corrupted.
	\item Otherwise, $\RA\Br{j}=\RBtilde\Br{j}=\sM$, i.e., Alice has measured the corresponding block of MES pairs and believes Bob has measured them as well. Using $\FullMA$ and $\MBtilde$, Alice locates the corresponding measurement outcomes in $\FullPA$ and $\PBtilde$ and sets $W^\mathrm{A}_j$ to be the error syndrome obtained from the measurement outcomes.
\end{itemize} 

Note that if Alice computes $\JSoneA$ in an iteration with no transmission errors or hash collisions, then the computed representation $\JSoneA$ is indeed equal to $\JSone$. Bob computes $\JSoneB$ similarly based on his best estimate of $\FullMA$ and $\FullPA$.

\subsubsection{Second representation of the joint quantum state} \label{subsec:Q-2nd-rep-sync}

The representation $\JStwo$ is obtained from $\JSone$ as follows. In $\JSone$, starting from the rightmost bracket, we recursively try to cancel consecutive brackets if their contents correspond to inverse of one another. Once no further such cancellation is possible, what we are left with is the $\JStwo$ representation, which is of the following form (when $\ellQVCA=\ellQVCB$):
\begin{align} \label{eqn:Q-JS2_1}
\JStwo = [\#b] \cdots [\#1] [U_{gr} \cdots U_{(g-1)r + 2} U_{(g-1)r + 1} ] 
\cdots [U_r \cdots U_2 U_1]
\ket{\psi_{\mathrm{init}}}^{A B C E R},
\end{align}
where $g$ is the largest integer such that the concatenation of the first $g$ brackets starting from the right equals the sequence $U_{gr},\ldots,U_2,U_1$ of unitary operations of $\Pi$. As in Section~\ref{sec:general-descrpition-largeclasscial}, we refer to these brackets as the ``good'' blocks, and the remaining $b$ brackets which need to be actively rewound are called the ``bad'' blocks. 

Once the parties have synchronized their classical data (to the best of their knowledge) as described earlier, they have all the information they need to compute their best guesses $\JSoneA$ and $\JSoneB$ of $\JSone$. Using the same procedure described above, from $\JSoneA$ Alice computes her estimate $\JStwoA$ of $\JStwo$. Similarly, Bob computes $\JStwoB$ from $\JSoneB$. Note that the two parties may have different views of their joint state, based on which they decide how to further evolve the state in $\Pi'$. The rules by which Alice and Bob decide their respective types ($\pm1$ or $0$) for the next block in $\Pi'$, and which blocks of unitary operations of $\Pi$ (if any) are involved, are the same as the teleportation-based protocol; see Section~\ref{sec:large-classical-second-rep}.

\subsubsection{Representation of the joint state while out-of-sync} \label{subsec:Q-2nd-rep-out-of-sync}

We now define the representations $\JSone$ and $\JStwo$ in the case when $\ellQVCA \neq \ellQVCB$. Note that in this case similar to the teleportation-based protocol, conditioned on the classical data with Alice and Bob and a fixed error syndrome $W'$ by the adversary, $\JSone$ and $\JStwo$ represent a pure state. However, in addition to the $ABCER$ registers we need to include the MES blocks which have been used by only one party. Let $u \defeq |\ellQVCA -\ellQVCB |$. For concreteness suppose that $\ellQVCA > \ellQVCB$. Then the $\JSone$ representation is of the following form: 
\begin{align}\label{eqn:Q-JS1-OoS}
\JSone = [*\ellQVCA] \cdots [*\ellQVCB] \cdots [*2] [*1]
\ket{\psi_{\mathrm{init}}}^{A B C  E R}\enspace.
\end{align}
The content of the first $\ellQVCB$ brackets from the right corresponding to the MES blocks which have been used by both parties are obtained as described in Subsection~\ref{subsec:Q-1st-rep-sync}. The leftmost $u$ brackets, correspond to the MES blocks which have been used only by Alice. We refer to these blocks as the \emph{ugly\/} blocks. These brackets contain Alice's unitary operations from the input protocol $\Pi$, her Pauli correction operations and QVC encoding and decoding operations, as well as all the MES registers involved in these iterations which remain untouched on Bob's side.

The representation $\JStwo$ is obtained from $\JSone$ as follows: We denote by $[@u]\cdots [@1]$ the leftmost $u$ brackets corresponding to the ugly blocks. We use the procedure described in Subsection~\ref{subsec:Q-2nd-rep-sync} on the rightmost $\ellQVCB$ brackets in $\JSone$ to obtain $\JStwo$ of the following form:
\begin{align} \label{eqn:Q-JS2_OoS}
\JStwo = [@u] \cdots [@1] [\#b] \cdots [\#1] [U_{gr} \cdots U_{(g-1)r + 2} U_{(g-1)r + 1} ] 
\cdots [U_r \cdots U_2 U_1]
\ket{\psi_{\mathrm{init}}}^{A B C E R}\enspace,
\end{align}
for some non-negative integers $g$ and $b$, which we refer to as the number of \emph{good\/} blocks and the number of \emph{bad\/} blocks in $\JStwo$ representation, respectively. We point out that Alice and Bob do not compute their estimates of $\JSone$ and $\JStwo$ unless, based on their view of the simulation so far, they believe that they have used the same number of MES blocks. Therefore, whenever computed, $\JSoneA,\JSoneB$ and $\JStwoA,\JStwoB$ are always of the forms described in Subsections~\ref{subsec:Q-1st-rep-sync} and~\ref{subsec:Q-2nd-rep-sync}, respectively. Note that Alice and Bob can realize that $\ellQVCA \neq \ellQVCB$ by learning each other's meta data. Then if they do as described in Subsection~\ref{subsec:out-of-sync QVC}, if no error or collision occurs, they will reduce the number of ugly blocks in $\JStwo$ by one. 

\subsubsection{Constant collision probability for classical hashing suffices}

As in the teleportation-based algorithm, in our algorithm in the plain model the hash function of Lemma~\ref{lem:hashes} is used to check for inconsistencies in the classical data maintained by Alice and Bob. Recall that in Section~\ref{sec:BKlarge}, the collision probability $p$ for the hash function $h$ of Lemma~\ref{lem:hashes} is chosen to be $1/\mathrm{poly}(n)$. The output of the hash function is of length $o=\Theta\br{\log \frac{1}{p}}=\Theta(\log n)$ bits. Therefore, in the large-alphabet case the hash values corresponding to the classical data can be communicated using only a constant number of rounds. However, in the small-alphabet case, using a logarithmic number of rounds to communicate the hash values leads to a vanishing simulation rate. In our algorithm in this section, we use the hash family of Lemma~\ref{lem:hashes} with a constant collision probability and show that $p=\Theta\br{1}$ suffices to keep the number of hash collisions low. We address this issue in the simpler large-alphabet setting to simplify the proof in the small-alphabet case at the conceptual level.

Following Haeupler \cite{Haeupler:2014}, we circumvent the barrier explained above using the observation that hashing only makes one-sided errors. In other words, collision only occurs when the data to be compared are not equal, which in turn is a result of corruptions by the adversary. As the error budget of the adversary is bounded by $2n\epsilon$, one would expect the total number of rounds in which the classical data being compared are not equal to be bounded by $\Theta\br{n\epsilon}$. In fact, this allows us to have a constant collision probability while keeping the number of hash collisions in the same order as the number of transmission errors. 

\subsubsection{Summary of main steps}

In Algorithm~\ref{algo:Main steps-large-alphabet-Yao}, we summarize the outline of the steps which are followed by Alice and Bob in the simulation. Note that since synchronizing recycling data creates new Pauli data, we choose to do this step before synchronizing the Pauli data.  Similar to the teleportation-based case, the algorithm is designed so that unless there is a transmission error or a hash collision in comparing a given type of data, Alice and Bob will simultaneously go down these steps while never returning to a previous step. This in fact is a crucial property used in the analysis of the algorithm. 

\RestyleAlgo{boxruled}
\begin{algorithm}
	 Agree on the history of the simulation contained in metadata, i.e., ensure $\FullMA= \MAtilde$ and $\FullMB=\MBtilde$. This involves Algorithm~\ref{algo:rewindMD}---\textbf{\textsf{rewindMD}} and Algorithm~\ref{algo:extendMD}---\textbf{\textsf{extendMD}}.\\
	 
	 Synchronize the number of MES blocks used in QVC, in particular, ensure $\ellQVCA=\ellQVCBtilde$ and $\ellQVCB=\ellQVCAtilde$. This is done via Algorithm~\ref{algo:Q-syncMES}---\textbf{\textsf{Q-syncMES}}.\\
	 
	 Agree on the measurement pointers and the recycling data up to the pointers, in particular, ensure $\br{\ellRA,\RA\Br{1:\ellRA}}=\br{\ellRBtilde,\RBtilde\Br{1:\ellRBtilde}}$ and $\br{\ellRB,\RB\Br{1:\ellRB}}=\br{\ellRAtilde,\RAtilde\Br{1:\ellRAtilde}}$. This involves Algorithm~\ref{algo:rewindRD}---\textbf{\textsf{rewindRD}}.\\
	 
	 Ensure no undetected quantum error from earlier rounds exists. This is done by ensuring $\QHA=\QHB$ and involves Algorithm~\ref{algo:QmeasureEPR}---\textbf{\textsf{measuresyndrome}}.\\
	 
	 Ensure $\ellRA=\ellQVCA$ and $\ellRB=\ellQVCB$. This is achieved via Algorithm~\ref{algo:extendRD}---\textbf{\textsf{extendRD}}.\\
	 
	 Agree on Pauli data, in particular, ensure $\FullPA=\PAtilde$ and $\FullPB=\PBtilde$. This is done via Algorithm~\ref{algo:Q-rewindPD}---\textbf{\textsf{Q-rewindPD}} and Algorithm~\ref{algo:Q-extendPD}---\textbf{\textsf{Q-extendPD}}.\\
	 
	 Compute the best guess for $\JSone$ and $\JStwo$. If there are any
	 ``bad'' blocks in the guess for $\JStwo$, reverse the last
	 bad block of unitary operations. I.e.,
	 implement quantum rewinding so that~$b = 0$ in $\JStwo$. This
	 is done in Algorithm~\ref{algo:Q-simulate}---\textbf{\textsf{Q-simulate}}.\\
	 
	 If no ``bad'' blocks remain, implement the next block of
	 rounds of the original protocol. This results in an increase in $g$ in
	 $\JStwo$, and is also done through Algorithm~\ref{algo:Q-simulate}---\textbf{\textsf{Q-simulate}}.\\

	\caption{Main steps in one iteration of the simulation for the large alphabet recycling-based model}
	\label{algo:Main steps-large-alphabet-Yao}
\end{algorithm}
\RestyleAlgo{ruled}

Note that although in step $1$ we use the same algorithms for synchronizing the metadata as in the teleportation-based case  (\textsf{rewindMD} and \textsf{extendMD}), the alphabet over which the metadata strings are defined are now different (see Subsection~\ref{sec:data-structure-large-alphabet-Yao}). The algorithms mentioned in the remaining step are presented in the next section. Figure~\ref{fig:flow-qvc} summarizes the main steps in flowchart form.

\begin{figure}[!t]
\centering
\includegraphics[width=475pt]{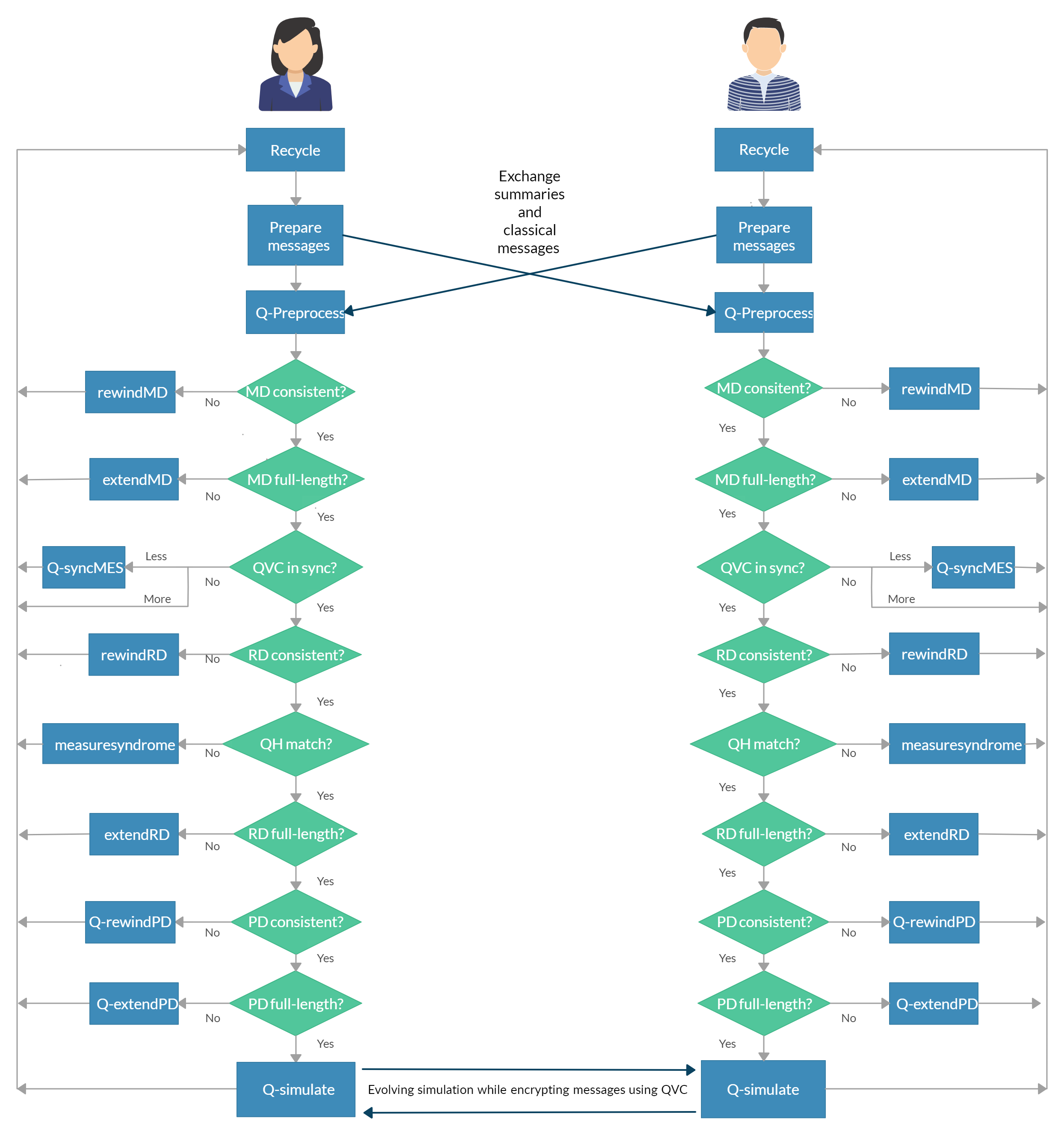}
\caption{Flowchart of the recycling-based scheme for high rate noisy interactive quantum communication.}
\label{fig:flow-qvc}
\end{figure}

\subsection{Algorithm}

For each subroutine, we list all the global variables accessed by the subroutine as the \textbf{Input} at the beginning of the subroutine. Whenever applicable, the relation between the variables when the subroutine is called is stated as the \textbf{Promise} and the global variables which are modified by the subroutine are listed as the \textbf{Output}.

\subsubsection{Data structure}\label{sec:data-structure-large-alphabet-Yao}
Alice and Bob maintain a data structure obtained by modifying the one introduced in Section~\ref{sec:datastructurelargealphabetcleveburhman}.
\begin{itemize}
	\item \textbf{Metadata:} The metadata now contain new symbols corresponding to recycling operations and error detection measurements. In every iteration $\NewMetaA \in \{\mathsf{\pm1},\mathsf{0},\sC,\OED,\sM,\sC'\}$ specifies Alice's action in the current iteration. Similar to the teleportation-based algorithm, in an iteration with $\NewMetaA=\sC$, Alice does not access the quantum registers and if $\NewMetaA\in \{\mathsf{\pm1},\mathsf{0}\}$, then  the unitary operators of the original protocol applied locally by Alice in the current iteration have exponent $\NewMetaA$. If $\NewMetaA=\sM$ Alice measures the block of MES registers specified by her measurement pointer and moves the pointer back by $1$. $\NewMetaA=\sC'$ corresponds to a classical iteration for Alice in which she just moves her measurement pointer forward by $1$. Finally, in an iteration with $\NewMetaA=\OED$, Alice completes QVC on her side as explained in Section~\ref{subsec:out-of-sync QVC}. The notation $\OED$ is used to emphasize that similar to an iteration with $\NewMetaA=\mathsf{0}$ no unitary operator of the original protocol is applied by Alice, but she measures the block of MES registers at the end of the iteration for error detection. 
	
	Alice records her metadata in $\FullMA$ which is concatenated with $\NewMetaA$ in each iteration and has length $i$ at the end of the $i$-th iteration. Similar to the teleportation-based algorithm, Alice maintains a string $\MBtilde$ as an estimate of Bob's metadata, which is not necessarily full-length. The length of $\MBtilde$ is denoted by $\ellMBtilde$. Alice also maintains $\ellMA$, her estimate for the length of $\MAtilde$, which is with Bob. $\MA$ is defined as the prefix of $\FullMA$ of length $\ellMA$, i.e., $\MA\defeq \FullMA\Br{1:\ellMA}$. When $\MA$ appears in any of the algorithms in this section, it is implicitly computed by Alice from $\FullMA$ and $\ellMA$. We denote by $\ellQVCA$ the number of iterations in which Alice has performed QVC so far. Note that $\ellQVCA$ can be computed from $\FullMA$ and is equal to the number of $\mathsf{\pm1}$, $\mathsf{0}$, $\OED$ symbols in $\FullMA$. Bob's local metadata variables are defined similarly.
	
	\item \textbf{Recycling data:} Quantum recycling data are used to decide whether a block of MES pairs can be reused for communication using QVC. The recycling data alphabet consists of the symbol $\sM$ corresponding to a measured block of MES pairs and the symbol $\sS$, used for a block of MES pairs still in superposition. Alice records her recycling data in a string $\RA$, which is also computable from $\FullMA$. Alice maintains a queue, $\IndexA$, of indices corresponding to recycled MES blocks to be reused in QVC. In every iteration, $\NextMESIndexA$ holds the index of the recycled MES block by Alice and the string $\IndexA$ is concatenated with the index. If no such index exits in an iteration, it is assigned the value $\perp$ and the protocol aborts. $\ellNextMESA$ is a pointer on $\RA$ and $\IndexA$ used by Alice in the recycling process. In every iteration, Alice moves this pointer to the next $\sS$ symbol in $\RA$ (if it exists) and records the value of $\IndexA$ in this coordinate into $\NextMESIndexA$. Alice also maintains a measurement pointer, $\ellRA$, which specifies the block of MES pairs to be measured in iterations with $\NewMetaA=\sM$. The measurement pointers serve as reference points up to which Alice and Bob compare their recycling data in each iteration. The pointer $\ellRA$ can stay the same or move backward or forward by $1$ and can be computed from $\FullMA$. In every iteration, if the protocol does not abort, $\RA$ is concatenated with an $\sS$ symbol corresponding to the recycled MES block and if an MES block is measured the corresponding element of $\RA$ is changed from $\sS$ to $\sM$. Bob's recycling data variables are defined similarly. Alice computes $\RBtilde$ and $\ellRBtilde$ as her estimate of Bob's $\RB$ and $\ellRB$, respectively, based on her full-length estimate $\MBtilde$ of $\FullMB$. Note that if $\MBtilde=\FullMB$ then Alice's estimates of Bob's recycling data are correct. 
	
	\item \textbf{Pauli data:} In any iteration, new Pauli data is generated on Alice's side if and only if she measures a block of MES pairs or performs QVC locally. $\NewPauliA$ has three parts: If a block of $r$ MES pairs is measured locally by Alice in the current iteration then the measurement outcome, $(m_1,m_2)\in \Sigma^{4r}$, is recorded in the first two parts of $\NewPauliA$, and the third part contains $\perp^{2r}$, corresponding to no Pauli corrections. Otherwise, if Alice performs QVC then $\perp^{2r}$ is recorded in each of the first two parts and the third part similar to the teleportation-based protocol specifies the Pauli corrections. 
	
	Alice records her Pauli data in $\FullPA$. Starting from the empty string, $\FullPA$ is concatenated with $\NewPauliA$ whenever Alice measures an MES block or performs QVC. She maintains a string $\PBtilde$ as an estimate of Bob's Pauli data. The length of $\PBtilde$ is denoted by $\ellPBtilde$. Alice also maintains $\ellPA$, her estimate for the length of $\PAtilde$, which is with Bob. $\PA$ denotes the prefix of $\FullPA$ of length $\ellPA$, i.e., $\PA\defeq \FullPA\Br{1:\ellPA}$. When $\PA$ appears in any of the algorithms in this section, it is implicitly computed by Alice from $\FullPA$ and $\ellPA$. We define $q_{\MA}\defeq\left| \FullPA\right| / 6r$. Note that $q_{\MA}$ can be computed from $\FullMA$. Alice computes her estimate $q_{\MBtilde}$ of $q_{\MB}$ using her full-length estimate $\MBtilde$ of $\FullMB$. Bob's Pauli data variables are defined similarly.
	
	\item As in Section~\ref{sec:BKlarge}, we use $H$ with different variables as subscript to represent hash values, e.g., $\HMA$ represents a hash value corresponding to $\MA$. We use $\QHA$ and $\QHB$ to represent quantum hash values. The data variables with a superscript $'$ denote the received data after transmission over the noisy channel, e.g., $\ellMB'$ denotes what Alice receives when Bob sends $\ellMB$. 
	
	\item The variable $\Itertype$ takes two new values: $\RD$ corresponding to recycling data synchronization and $\QH$ corresponding to an iteration in which the received quantum hash value does not match the locally computed quantum hash value.
\end{itemize}

Finally, $\LQVC$ denotes the total number of MES blocks to be used as the keys in QVC.

\begin{remark} We point out an additional subtlety in interpreting the Pauli data by the two parties. Recall that when Alice and Bob compute their estimates of the joint state, they use the metadata to locate in the Pauli data, the measurement outcomes for each block of measured MES pairs. However, due to transmission errors or collisions, it is possible to have an inconsistency between the metadata and the Pauli data. For instance, $\MAtilde$ may indicate that a specific block of $\PAtilde$ contains the measurement outcomes on an MES block, whereas it actually has $\perp^{2r}$ in the first two parts and corresponds to an iteration in which Alice has performed QVC. In any such scenario, Alice and Bob interpret the $\perp$ symbols as $0$ and compute the joint state. This most likely introduces new errors on the joint state from which Alice and Bob can recover once they obtain a correct view of the simulation.
\end{remark}

\subsubsection{Pseudo-codes}

This section contains the pseudo-codes for the main algorithm and the subroutines that each party runs locally in the simulation protocol.


\begin{algorithm}
	\Input{$n$ round protocol $\Pi$ in plain quantum model over polynomial-size alphabet $\Sigma$}
	\BlankLine
	\textbf{\textsf{Q-Initialization}}\;
	\BlankLine	
	\SetKwProg{ForLoop}{For}{}{}
	\SetAlgoLined
	\ForLoop{$i = 1 \to \Rtotal$}	
	{   
		\SetAlgoVlined
		\textbf{\textsf{Recycle}}\;
		\If {$\NextMESIndexA=\perp$} 
			{\text{Abort}\;}
		$\IndexA \leftarrow \br{\IndexA,\NextMESIndexA}$\;
		$\RA \leftarrow \br{\RA,\sS}$\;
		\Comment*[f] {\textbf{computing hash values}}\\
		$\HMA \leftarrow h_{S_{4i-3}}\br{\MA}$\;
		$\HMBtilde \leftarrow h_{S_{4i-2}}\br{\MBtilde}$\;
		$\HPA \leftarrow h_{S_{4i-1}}\br{\PA}$\; 
		$\HPBtilde \leftarrow h_{S_{4i}}\br{\PBtilde}$\;
		\textbf{\textsf{Quantum-hash}}\;  
		\BlankLine
		Send 
		$$\br{\HMA,\ellMA,\HMBtilde,\ellMBtilde,\HPA,\ellPA,\HPBtilde,\ellPBtilde,\QHA};$$\\
		Receive
		$$\br{\HMAtilde',\ellMAtilde',\HMB',\ellMB',\HPAtilde',\ellPAtilde',\HPB',\ellPB',\QHB'};$$\\
		
	    \BlankLine
		\textbf{\textsf{Q-Preprocess}};
		\Comment*[f] {\textbf{Determining iteration type}}\\
		\BlankLine
		
		\If   {$\Itertype \neq \SIM$}
			{   Send $\msg$\;
				Receive $\msg^\prime$\;
			}
		\tcp*[f]{\textbf{messages are communicated alternately}}\\    
		\BlankLine
		\Comment*[f] {\textbf{Case i.A}}\\
		\If   {$\Itertype = \MD \;\mathrm{and}\; \RewindExtend = \sR$}
		{   \textbf{\textsf{rewindMD}}\;
		}	
		\Comment*[f] {\textbf{Case i.B}}\\
		\ElseIf {$\Itertype = \MD \;\mathrm{and}\; \RewindExtend = \sE$}
		{   \textbf{\textsf{extendMD}}\;
		}
		\Comment*[f] {\textbf{Case ii.A}}\\
		\ElseIf{$\Itertype = \MES \;\mathrm{and}\; \NewMetaA=\sC$}
		{   return\;
		}
		\Comment*[f] {\textbf{Case ii.B}}\\
		\ElseIf{$\Itertype = \MES \;\mathrm{and}\; \NewMetaA=\OED$}
		{   \textbf{\textsf{Q-syncMES}}\;
		}		
			
	}
	\caption{{\textbf{\textsf{Q-Main}} (Alice's side) }}
\end{algorithm}

\setcounter{algocf}{12}

\begin{algorithm}
	\setcounter{AlgoLine}{26}
	\SetKwBlock{Begin}{}{}
	\Begin{
	    \Comment*[f] {\textbf{Case iii}}\\
		\ElseIf{$\Itertype=\RD \;\mathrm{and}\; \RewindExtend=\sR$}
		{
			\textbf{\textsf{rewindRD}}\;
		}
		\Comment*[f] {\textbf{Case iv}}\\
		\ElseIf{$\Itertype=\mathrm{QH}$}
		{
			\textbf{\textsf{measuresyndrome}}\;
		}
		\Comment*[f] {\textbf{Case v}}\\
		\ElseIf{$\Itertype =\RD \;\mathrm{and}\; \RewindExtend=\sE$}
		{
			\textbf{\textsf{extendRD}}\;
		}
		\Comment*[f] {\textbf{Case vi.A}}\\
		\ElseIf   {$\Itertype = \PD \;\mathrm{and}\; \RewindExtend = \sR$}
		{   \textbf{\textsf{rewindPD}}\;
		}
		\Comment*[f] {\textbf{Case vi.B}}\\
		\ElseIf{$\Itertype = \PD \;\mathrm{and}\; \RewindExtend = \sE$}
		{   \textbf{\textsf{extendPD}}\;
		}
		\Comment*[f] {\textbf{Case vii}}\\	
		\Else{\textbf{\textsf{Q-Simulate}}\;}
	}		
	
	\Return{\textup{\textbf{\textsf{Q-Main}}}}\;
	\caption{\textbf{\textsf{Q-Main}} (Alice's side, cont. from previous page) }
	\label{algo:MainalgorithmQMessage}
\end{algorithm}


\begin{algorithm}
	
	\Input{ 
		    $$\br{ 
		    	  \begin{array}{c}
			        \HMA,\ellMA,\HMBtilde,\ellMBtilde,\HPA,\ellPA,\HPBtilde,\ellPBtilde,\QHA \\
			        \HMAtilde',\ellMAtilde',\HMB',\ellMB',\HPAtilde',\ellPAtilde',\HPB',\ellPB',\QHB' \\
		            \FullMA,\ellQVCA,\MBtilde,\RA,\ellRA,\FullPA,\PBtilde
		          \end{array}
		         }$$
	      }
	
	\Output{$\br{\Itertype, \RewindExtend, \NewMetaA, \FullMA, \ellMA, \NewMetaBtilde, \MBtilde, \ellMBtilde, \msg}$}
	
	\BlankLine
	\If   
	    {  
	        $\br{\HMA,\HMBtilde}=\br{\HMAtilde',\HMB'}  \;\mathrm{and}\; \ellMA=\ellMAtilde'=\ellMBtilde=\ellMB'=i-1$
	    }
	    {   Compute $\ellQVCBtilde$, $\RBtilde$, $\ellRBtilde$, $q_{\MBtilde}$\;
	    }
	
	\BlankLine 
	\Comment*[f] {\textbf{Processing Metadata}}\\
	\Comment*[f] {\textbf{Case i.A}}\\
	\If
	   {
	   	   $\br{\HMA,\HMBtilde,\ellMA,\ellMBtilde}\neq 			           \br{\HMAtilde',\HMB',\ellMAtilde',\ellMB'}$
       }
	   {   $\Itertype \leftarrow \MD$\;
	   	   $\RewindExtend \leftarrow \sR$\;
	   	   $\NewMetaA \leftarrow \sC$\;
	   	   $\FullMA \leftarrow \br{\FullMA,\NewMetaA}$\;	
	   	   $\msg \leftarrow \text{dummy message of length } r$\;
	   }
   
   	\caption{\textbf{\textsf{Q-Preprocess}} (Alice's side) }
   \label{algo:QPreprocess}
\end{algorithm}

\setcounter{algocf}{13} 

\begin{algorithm}
\setcounter{AlgoLine}{8}

    \Comment*[f] {\textbf{Case i.B}}\\
	\ElseIf
	   {   $\br{\ellMA < i-1} \;\mathrm{or}\; \br{\ellMBtilde < i-1}$
	   }
	   {   $\Itertype \leftarrow \MD$\;
	       $\RewindExtend \leftarrow \sE$\;
	   	   $\NewMetaA \leftarrow \sC$\;
	   	   $\FullMA \leftarrow \br{\FullMA,\NewMetaA}$\;
	   	   \If   {$\ellMA < i-1$}
	             {   $\msg \leftarrow \mathrm{encodeMD}\br{\FullMA\Br{\ellMA+1,\ellMA+2}}\!;$ 
	             	 \tcp*[f]{\textbf{Encode MD in $\Sigma^r$}}\\
	             }
	        \Else
	             {   $\msg \leftarrow \text{dummy message of length } r$\;
	             }      	
	   }

    \Comment*[f] {\textbf{Comparing number of used MES blocks}}\\
   
	\Comment*[f] {\textbf{Case ii.A}}\\
	\ElseIf   {$\ellQVCA > \ellQVCBtilde$}
	   {   $\Itertype \leftarrow \MES$\;
	   	   $\NewMetaA \leftarrow \sC$\;
	   	   $\FullMA \leftarrow \br{\FullMA,\NewMetaA}$\;
           $\ellMA \leftarrow \ellMA+1$\;
	   	   $\NewMetaBtilde \leftarrow \OED$\;
           $\MBtilde \leftarrow \br{\MBtilde,\NewMetaBtilde}$\;
           $\ellMBtilde \leftarrow \ellMBtilde+1$\;
		   $\msg \leftarrow \text{dummy message of length } r$\;
	   }

  	\Comment*[f] {\textbf{Case ii.B}}\\
   	\ElseIf   {$\ellQVCA < \ellQVCBtilde$}	
	   {   $\Itertype \leftarrow \MES$\;
	   	   $\NewMetaA \leftarrow \OED$\;
		   $\FullMA \leftarrow \br{\FullMA,\NewMetaA}$\;
		   $\ellMA \leftarrow \ellMA+1$\;
		   $\NewMetaBtilde \leftarrow \sC$\;
		   $\MBtilde \leftarrow \br{\MBtilde,\NewMetaBtilde}$\;
		   $\ellMBtilde \leftarrow \ellMBtilde+1$\;
		   $\msg \leftarrow \text{dummy message of length }r$\;
	   }

   \caption{\textbf{\textsf{Q-Preprocess}} (Alice's side, cont. from previous page)}
\end{algorithm}

\setcounter{algocf}{13}

\begin{algorithm}
\setcounter{AlgoLine}{35}

    \Comment*[f] {\textbf{Processing recycling data}}\\
    
	\Comment*[f] {\textbf{Case iii}}\\
	\ElseIf   { $\br{\ellRA, \RA\Br{1:\ellRA}} \neq \br{\ellRBtilde, \RBtilde\Br{1:\ellRBtilde}}$}
		{	$\Itertype \leftarrow \RD$\;
			$\RewindExtend\leftarrow \sR$\;
		  	\If   {$\ellRA > \ellRBtilde$}
		    	{   $\NewMetaA \leftarrow \sM$\;
		        	$\NewMetaBtilde \leftarrow \sC$\;
		   	    }
		  	\ElseIf   {$\ellRA < \ellRBtilde$} 
		    	{   $\NewMetaA \leftarrow \sC$\;
		     	    $\NewMetaBtilde \leftarrow \sM$\;
		   	    }
		  	\Else
		   		{   $\NewMetaA \leftarrow \sM$\;
		        	$\NewMetaBtilde \leftarrow \sM$\;
		    	}	       
					$\FullMA \leftarrow \br{\FullMA,\NewMetaA}$\;
					$\ellMA \leftarrow \ellMA+1$\;
					$\MBtilde \leftarrow \br{\MBtilde,\NewMetaBtilde}$\;
					$\ellMBtilde \leftarrow \ellMBtilde+1$\;
					$\msg \leftarrow \text{dummy message of length }r$\;
		}

    \Comment*[f] {\textbf{Case iv}}\\
    \ElseIf{$\QHA\neq \QHB'$}
    {   $\Itertype \leftarrow \QH$\;
    	$\NewMetaA \leftarrow \sM$\;
    	$\FullMA \leftarrow \br{\FullMA,\NewMetaA}$\;
    	$\ellMA \leftarrow \ellMA+1$\;
    	$\NewMetaBtilde \leftarrow \sM$\;
    	$\MBtilde \leftarrow \br{\MBtilde,\NewMetaBtilde}$\;
    	$\ellMBtilde \leftarrow \ellMBtilde+1$\;
    	$\msg \leftarrow \text{dummy message of length }r$\;
    }

    \Comment*[f] {\textbf{Case v}}\\
\ElseIf{$\ellRA < \ellQVCA $}
{	$\Itertype \leftarrow \RD$\;
	$\RewindExtend\leftarrow \sE$\;
	$\NewMetaA \leftarrow \sC'$\;
	$\FullMA \leftarrow \br{\FullMA,\NewMetaA}$\;
	$\ellMA \leftarrow \ellMA+1$\;
	$\NewMetaBtilde \leftarrow \sC'$\;
	$\MBtilde \leftarrow \br{\MBtilde,\NewMetaBtilde}$\;
	$\ellMBtilde \leftarrow \ellMBtilde+1$\;
	$\msg \leftarrow$ dummy message of length $r$\; 	
}

\caption{\textbf{\textsf{Q-Preprocess}} (Alice's side, cont. from previous page)}
\end{algorithm}

\setcounter{algocf}{13}
\begin{algorithm}
\setcounter{AlgoLine}{71}

    \Comment*[f] {\textbf{Processing Pauli data}}\\
    \Comment*[f] {\textbf{Case vi.A}}\\
    \ElseIf   {$\br{\HPA,\HPBtilde,\ellPA,\ellPBtilde}\neq \br{\HPAtilde',\HPB',\ellPAtilde',\ellPB'}$
    }
    {   $\Itertype \leftarrow \PD$\;
    	$\RewindExtend \leftarrow \sR$\;
    	$\NewMetaA \leftarrow \sC$\;
    	$\FullMA \leftarrow \br{\FullMA,\NewMetaA}$\;
    	$\ellMA \leftarrow \ellMA+1$\;
    	$\NewMetaBtilde \leftarrow \sC$\;
    	$\MBtilde \leftarrow \br{\MBtilde,\NewMetaBtilde}$\;
    	$\ellMBtilde \leftarrow \ellMBtilde+1$\;
    	$\msg \leftarrow \text{dummy message of length }r$\;
    }

    \Comment*[f] {\textbf{Case vi.B}}\\
    \ElseIf   {$\br{\ellPA < 6q_{\MA} \cdot r} \;\mathrm{or}\; \br{\ellPBtilde <
    		6q_{\MBtilde} \cdot r}$}
    {   $\Itertype \leftarrow \PD$\;
    	$\RewindExtend \leftarrow \sE$\;
    	$\NewMetaA \leftarrow \sC$\;
    	$\FullMA \leftarrow \br{\FullMA,\NewMetaA}$\;
    	$\ellMA \leftarrow \ellMA+1$\;
    	$\NewMetaBtilde \leftarrow \sC$\;
    	$\MBtilde \leftarrow \br{\MBtilde,\NewMetaBtilde}$\;
    	$\ellMBtilde \leftarrow \ellMBtilde+1$\;
    	\If   {$\ellPA < 6q_{\MA} \cdot r$}
    	{   $\msg \leftarrow {\FullPA}\Br{\ellPA+1,\ellPA+r}$
    	}	
    }

    \Comment*[f] {\textbf{Case vii}}\\
    \Else
    {   \textbf{\textsf{Q-Computejointstate}}\;
        $\Itertype \leftarrow \SIM$\;
    	$\FullMA=\br{\FullMA,\NewMetaA}$\;
    	$\ellMA \leftarrow \ellMA+1$\;
    	$\MBtilde \leftarrow \br{\MBtilde,\NewMetaBtilde}$\;
    	$\ellMBtilde \leftarrow \ellMBtilde+1$\;
    }	
	
	\Return{\textup{\textbf{\textsf{Q-Preprocess}}}}\;
	\caption{\textbf{\textsf{Q-Preprocess}} (Alice's side, cont. from previous page)}
\end{algorithm}


\begin{algorithm}
	
	Initialize \\
	\nonl          $\qquad \LQVC \leftarrow\Theta\br{n\epsilon}$ \;
	\nonl          $\qquad r \leftarrow\Theta\br{1/\sqrt{\epsilon}}$ \; 
	\nonl          $\qquad \Rtotal \leftarrow \lceil n/{2r}+\Theta(n\eps)\rceil$ \;
	\nonl          $\qquad t \leftarrow \Theta\br{n\epsilon}$ \;
	\nonl		   $\qquad \ellMA,\ellMBtilde,\ellPA,\ellPBtilde,\ellQVCA,\ellRA,\ellNextMESA \leftarrow 0$ \; 
	\nonl		   $\qquad \FullMA,\MBtilde,\FullPA,\PBtilde\leftarrow \emptyset$ \; 
	\nonl		   $\qquad \RA \leftarrow (\sS,\ldots,\sS) \textrm{ of length } t$ \;
	\nonl		   $\qquad \IndexA \leftarrow \br{\LQVC-t+1,\ldots,\LQVC}$	\;
	\tcp*[f]{\textbf{The indexing of the strings $\RA$ and $\IndexA$ starts from $-t+1$}}\\
	
	\BlankLine		
	$h\leftarrow$ hash function of Lemma~\ref{lem:hashes} with $p=\Theta(1), o=\Theta(1), s=\Theta\br{\log n}$ \;
	
	\textbf{\textsf{Robust Entanglement Distribution}($\Theta\br{n\sqrt{\epsilon}}$)} \;
	
	Reserve $\LQVC\cdot 4r$ MES pairs to be used as the keys in QVC \;
	
	Reserve $10\Rtotal$ MESs to be used in quantum hashing \;
	
	Measure $\Theta\br{\Rtotal}$ MESs in the computational basis and record the binary representation of the outcomes in $S_1,\ldots,S_{4\Rtotal}$ \;
	\tcp*[f]{\textbf{$4\Rtotal$ seeds of length $s$ for the hash function $h$}}\\
	
	Measure the remaining $\Theta\br{n\sqrt{\epsilon}}$ MESs in the computational basis and record the binary representation of the outcomes in $R'$ \; 
	
	Extend $R'$ to a $\delta$-biased pseudo-random string $R=R_1,\ldots,R_{10\Rtotal}$ using the deterministic algorithm of Lemma~\ref{lem:stretch} where $\delta= 2^{-\Theta\br{\frac{n}{r}}}$ \;  
	\tcp*[f]{\textbf{$10\Rtotal$ seeds of length $4rt$ used in quantum hashing}}\\

	\Return {\textup{\textbf{\textsf{Q-Initialization}}}}\;
	\caption{\textbf{\textsf{Q-Initialization}} (Alice's side) }
\end{algorithm}\label{Qalgo:Initialization}


\begin{algorithm}	
	\Input{$\br{\RA,\IndexA,\ellNextMESA,i}$}
	
	\Output{$\br{\ellNextMESA,\NextMESIndexA}$}
	\BlankLine
	
	\If (\tcp*[f]{\textbf{No recycling in first $\LQVC$ iterations}})  {$i \leq \LQVC$}
		{  $\NextMESIndexA \leftarrow i$ \;
		}
	\Else 
		{   $\ellNextMESA \leftarrow \ellNextMESA+1$ \;
			\While { $\br{\ellNextMESA < i} \; \mathrm{and} \; \br{\RA\Br{\ellNextMESA} = \sM}$ }
				{$\ellNextMESA \leftarrow \ellNextMESA+1$ \;}
		   \If  {$\ellNextMESA=i$}
				{  $\NextMESIndexA \leftarrow \perp$ \;
				}
		   \Else 
				{  $\NextMESIndexA \leftarrow \IndexA\Br{\ellNextMESA}$ \;
				}
		}
	
	\Return{\textup{\textbf{\textsf{Recycle}}}}\;
	\caption{\textbf{\textsf{Recycle}} (Alice's side) }
	\label{algo:Recycle}
\end{algorithm}


\begin{algorithm}
	\Input{$\br{\RA,\IndexA,i,R}$}

	\Output{$\QHA$}
	\BlankLine
	
	$\QHA \leftarrow \emptyset$ \;
	\For {$k=1 \to 10$} 
		{	
			Choose a fresh MES from "Quantum Hash" category, and let $F$ denote the register containing Alice's half of the state\;
			\For (\tcp*[f]{\textbf{Hashing between blocks $i-t+1$ and $\ellQVCA$}}) {$j=1 \to 4r \br{t+\ellQVCA-i}$}
				{   \If    {$R_{10i+k}\Br{j}=1 \;\mathrm{and}\; 
						    \RA\Br{i-t+\lceil\frac{j}{4r}\rceil}\neq \sM$}
						   {   
						   		
							    Apply $\br{\control{\rX}}_{FA_b}$ , where 
							    $b=4r \cdot \IndexA \Br{ i-t+\lfloor \frac{j}{4r} \rfloor }+\br{ j\,\mod 4r }$\;
			   		   	   }
				}
    		Apply the Fourier transform operator on $F$, measure it in the computational basis and record the outcome in $qh$\; 
    		$\QHA \leftarrow \br{\QHA,qh}$\;
    	}
    
    \Return{\textup{\textbf{\textsf{Quantum-hash}}}}\;
	\caption{\textbf{\textsf{Quantum-hash}} (Alice's side) }
	\label{algo:Quantum-hash}
\end{algorithm}


\begin{algorithm}
	
	\Input{$\br{\IndexA,\ellQVCA,\msg',\FullPA}$}
	
	\Promise{$\br{\HMA,\HMBtilde,\ellMA,\ellMBtilde}=\br{\HMAtilde',\HMB',\ellMAtilde'+1,\ellMB'+1}$, $\ellMA=\ellMBtilde=i \;,\; \ellQVCA < \ellQVCBtilde$}
	
	\Output{$\br{\ellQVCA,\NewPauliA,\FullPA}$}
	\BlankLine
	
	Let~$C_0$ be the communication register at the beginning of the current iteration (which is in Alice's possession) and for every~$j\in[r]$, let~$C_j$ denote the communication register containing $\msg^\prime(j)$, Bob's $j$-th message in this iteration\;
	
	$\ellQVCA \leftarrow \ellQVCA+1$\;
	
	Let~$E_1 E_2 \dotsb E_{4r}$ be the registers with Alice containing halves of the~$4r$ MESs in the block indexed by $\IndexA \Br{\ellQVCA}$ \;

	Apply $$\br{\control{\rZ}}_{E_2C_0}\br{\control{\rX}}_{E_1C_0}$$ 
	
	For every~$j\in[r-1]$, upon receiving $C_j$ apply $$\br{\control{\rZ}}_{E_{4j+2}C_j}\br{\control{\rX}}_{E_{4j+1}C_j}\br{\control{\rX}^{-1}}_{E_{4j-1}C_j}\br{\control{\rZ}^{-1}}_{E_{4j}C_j}$$ 
	
	Upon receiving $C_r$ apply $$\br{\control{\rX}^{-1}}_{E_{4r-1}C_{r}}\br{\control{\rZ}^{-1}}_{E_{4r}C_{r}}$$

	\tcp*[f]{\textbf{See Section~\ref{subsec:out-of-sync QVC} for the rationale and Bob's analogue of above steps}} \\	
	
	Apply the Fourier transform operator to~$E_1,E_2, \dotsb, E_{4r}$ and measure them in the computational basis. Store the measurement outcomes in $\br{m_1,m_2}\in{\Sigma}^{4r}$\;
	$\RA\Br{\ellQVCA}\leftarrow \sM$\;
	$\NewPauliA \leftarrow \br{m_1,m_2,\perp^{2r}}$\;
	$\FullPA \leftarrow \br{\FullPA,\NewPauliA}$\;
	
	\Return{\textup{\textbf{\textsf{Q-syncMES}}}}\;
	\caption{\textbf{\textsf{Q-syncMES}} (Alice's side)}
	\label{algo:Q-syncMES}
\end{algorithm}


\begin{algorithm}
	
	\Input{$\br{\NewMetaA,\RA,\ellRA,\IndexA,\FullPA}$}
	
	\Promise{$\br{\HMA,\HMBtilde,\ellMA,\ellMBtilde,\ellQVCA}=
		     \br{\HMAtilde',\HMB',\ellMAtilde'+1,\ellMB'+1,\ellQVCBtilde}$ , 
		     $\ellMA=\ellMBtilde=i$ , 
		     $\br{\ellRA,\RA\Br{1:\ellRA}} \neq \br{\ellRBtilde,\RBtilde\Br{1:\ellRBtilde}}$
	        }
	
	\Output{$\br{\NewPauliA,\FullPA,\RA,\ellRA}$}
	\BlankLine
	
	\If   {$\NewMetaA=\sM \; \mathrm{and} \; \RA\Br{\ellRA} = \sS$}
	    {
	      Sequentially apply the Fourier transform operator to all the MESs in the block indexed by $\IndexA \Br{\ellRA}$ and measure them in the computational basis \;
	      Store the measurement outcomes in $\br{m_1,m_2}\in{\Sigma}^{4r}$\;
	      $\NewPauliA \leftarrow \br{m_1,m_2,\perp^{2r}}$\;
	      $\FullPA \leftarrow \br{\FullPA,\NewPauliA}$\;
	      $\RA \Br{\ellRA} \leftarrow \sM$\;
	    }
	$\ellRA \leftarrow \ellRA-1$\;

	\Return{\textup{\textbf{\textsf{rewindRD}}}}\;
	\caption{\textbf{\textsf{rewindRD}} (Alice's side) }
	\label{algo:rewindRD}
\end{algorithm}


\begin{algorithm}
	
	\Input{$\br{\RA,\ellRA,\IndexA,\FullPA}$}
	
	\Promise{$\br{\HMA,\HMBtilde,\ellMA,\ellMBtilde,\ellQVCA}
		     =\br{\HMAtilde',\HMB',\ellMAtilde'+1,\ellMB'+1,\ellQVCBtilde}$ , 
		     $\ellMA=\ellMBtilde=i$ , $\br{\ellRA,\RA\Br{1:\ellRA}}=\br{\ellRBtilde,\RBtilde\Br{1:\ellRBtilde}}$ , 
		     $\QHA \neq \QHB'$
        	}
	
	\Output{$\br{\NewPauliA,\FullPA,\RA,\ellRA}$}
	\BlankLine
	
	\If {$\RA\Br{\ellRA} = \sS$}
	{
		Sequentially apply the Fourier transform operator to all the MESs in the block indexed by $\IndexA \Br{\ellRA}$ and measure them in the computational basis \;
		Store the measurement outcomes in $\br{m_1,m_2}\in{\Sigma}^{4r}$\;
		$\NewPauliA \leftarrow \br{m_1,m_2,\perp^{2r}}$\;
		$\FullPA \leftarrow \br{\FullPA,\NewPauliA}$\;
		$\RA \Br{\ellRA} \leftarrow \sM$\;
	}	
		$\ellRA \leftarrow \ellRA-1$
	
	\Return{\textup{\textbf{\textsf{measuresyndrome}}}}\;
	\caption{\textbf{\textsf{measuresyndrome}} (Alice's side) }
	\label{algo:QmeasureEPR}
\end{algorithm}


\begin{algorithm}
	
	\Input{$\ellRA$}
	
	\Promise{$\br{\HMA,\HMBtilde,\ellMA,\ellMBtilde,\ellQVCA}
	     	 =\br{\HMAtilde',\HMB',\ellMAtilde'+1,\ellMB'+1,\ellQVCBtilde}$ , 
		     $\ellMA=\ellMBtilde=i$ , $\br{\ellRA,\RA\Br{1:\ellRA}}=\br{\ellRBtilde,\RBtilde\Br{1:\ellRBtilde}}$ , 
		     $\QHA=\QHB'$ , $\ellRA < \ellQVCA$ 
	        }
        
	\Output{$\ellRA$}
	\BlankLine
	
	$\ellRA \leftarrow \ellRA+1$\;
	
	\Return{\textup{\textbf{\textsf{extendRD}}}}\;
	\caption{\textbf{\textsf{extendRD}} (Alice's side)}
	\label{algo:extendRD}
\end{algorithm}


\begin{algorithm}
		
	\Input{$\br{\HPA,\ellPA,\HPBtilde,\ellPBtilde,\HPAtilde',\ellPAtilde',\HPB',\ellPB'}$}
	
	\Promise{$\br{\HMA,\HMBtilde,\ellMA,\ellMBtilde,\ellQVCA}
	     	 =\br{\HMAtilde',\HMB',\ellMAtilde'+1,\ellMB'+1,\ellQVCBtilde}$ , 
		     $\ellMA=\ellMBtilde=i$ , $\br{\ellRA,\RA\Br{1:\ellRA}}=\br{\ellRBtilde,\RBtilde\Br{1:\ellRBtilde}}$ , 
		     $\QHA=\QHB'$ , $\ellRA = \ellQVCA$ , 
             $\br{\HPA,\HPBtilde,\ellPA,\ellPBtilde}\neq \br{\HPAtilde',\HPB',\ellPAtilde',\ellPB'}$.
             }
	
	\Output{$\br{\ellPA,\ellPBtilde}$}
	
	\If   {$\ellPA \neq \ellPAtilde' \;\mathrm{or}\; \ellPBtilde \neq \ellPB'$}
	{   \If   {$\ellPA > \ellPAtilde'$}
		{$\ellPA \leftarrow \ellPA-r$\;}		
		\If   {$\ellPBtilde > \ellPB'$}
		{$\ellPBtilde \leftarrow \ellPBtilde-r$\;}		
	}
	\Else   {
		\If   {$\HPA \neq \HPAtilde'$}
		{$\ellPA \leftarrow \ellPA-r$\;}
		\If   {$\HPBtilde \neq \HPB'$}
		{$\ellPBtilde \leftarrow \ellPBtilde-r$\;}		
	}	
	
	\Return{\textup{\textbf{\textsf{Q-rewindPD}}}}\;
	\caption{\textbf{\textsf{Q-rewindPD}} (Alice's side)}
	\label{algo:Q-rewindPD}
\end{algorithm}


\begin{algorithm}
	
	\Input{$\br{\ellPA,\ellPBtilde,\PBtilde,q_{\MA},q_{\MBtilde},\msg'}$}
	
	\Promise{$\br{\HMA,\HMBtilde,\ellMA,\ellMBtilde,\ellQVCA}
	     	 =\br{\HMAtilde',\HMB',\ellMAtilde'+1,\ellMB'+1,\ellQVCBtilde}$ , 
		     $\ellMA=\ellMBtilde=i$ , $\br{\ellRA,\RA\Br{1:\ellRA}}=\br{\ellRBtilde,\RBtilde\Br{1:\ellRBtilde}}$ , 
		     $\QHA=\QHB'$ , $\ellRA = \ellQVCA$ ,  $\br{\HPA,\HPBtilde,\ellPA,\ellPBtilde}= \br{\HPAtilde',\HPB',\ellPAtilde',\ellPB'}$ , 
		     $\ellPA < 6q_{\MA} \cdot r \quad \mathrm{or} \quad \ellPBtilde < 6q_{\MBtilde} \cdot r$.}
	
	\Output{$\br{\ellPA,\PBtilde,\ellPBtilde}$}
	
	\If   {$\ellPA < 6q_{\MA} \cdot r$}
  	    {$\ellPA \leftarrow \ellPA+r$\;}
	\If   {$\ellPBtilde < 6q_{\MBtilde} \cdot r$}
      	{   $\PBtilde\Br{\ellPBtilde+1:\ellPBtilde+r} \leftarrow \msg'$\;
		$\ellPBtilde \leftarrow \ellPBtilde+r$\;
	    }
	
	\Return{\textup{\textbf{\textsf{Q-extendPD}}}}\;
	\caption{\textbf{\textsf{Q-extendPD}} (Alice's side) }
	\label{algo:Q-extendPD}
\end{algorithm}


\begin{algorithm}
	
	\Input{$\br{\FullMA,\MBtilde,\RA,\FullPA,\PBtilde}$}
	
	\Promise{$\br{\HMA,\HMBtilde,\ellMA,\ellMBtilde,\ellQVCA}
	     	 =\br{\HMAtilde',\HMB',\ellMAtilde',\ellMB',\ellQVCBtilde}$ , 
		     $\ellMA=\ellMBtilde=i-1$ , $\br{\ellRA,\RA\Br{1:\ellRA}}=\br{\ellRBtilde,\RBtilde\Br{1:\ellRBtilde}}$ , 
		     $\QHA=\QHB'$ , $\ellRA=\ellQVCA$ ,
		     $\br{\HPA,\HPBtilde,\ellPA,\ellPBtilde}= \br{\HPAtilde',\HPB',\ellPAtilde',\ellPB'}$,
		     $\ellPA=6q_{\MA} \cdot r\;, \ellPBtilde =6q_{\MBtilde} \cdot r$,
		    }
	
	\Output{$\br{\JSoneA, \JStwoA,\NewMetaA,\NewMetaBtilde,
		    \mathit{Block},\RewindExtend,\PCorr,\PCorrtilde}$}
	\BlankLine
	
	Compute $\JSoneA$\;
	Compute $\JStwoA$\;
	Compute $\NewMetaA$\;
	Compute $\NewMetaBtilde$\;
    Compute	$\RewindExtend$\;
	Compute $\mathit{Block}$\;
	Compute $\PCorr$\;
	Compute $\PCorrtilde$\;
	\tcp*[f]{\textbf{Refer to Sections~\ref{subsec:Q-1st-rep-sync},~\ref{subsec:Q-2nd-rep-sync} to see how these variables are computed}}
	
	\Return{\textup{\textbf{\textsf{Q-computejointstate}}}}\;
	\caption{\textbf{\textsf{Q-computejointstate}} (Alice's side)}
	\label{algo:Q-computejointstate}
\end{algorithm}


\begin{algorithm}
	
	\Input{$\br{\ellQVCA,\IndexA,\FullPA,\ellPA,\PBtilde,\ellPBtilde,
		   \NewMetaA,\RewindExtend,\mathit{Block},\PCorr,\PCorrtilde}$
	      }
	
	\Promise{$\br{\HMA,\HMBtilde,\ellMA,\ellMBtilde,\ellQVCA}
		     =\br{\HMAtilde',\HMB',\ellMAtilde'+1,\ellMB'+1,\ellQVCBtilde}$ , 
		     $\ellMA=\ellMBtilde=i$ , $\br{\ellRA,\RA\Br{1:\ellRA}}=\br{\ellRBtilde,\RBtilde\Br{1:\ellRBtilde}}$ , 
		     $\QHA=\QHB'$ , $\ellRA=\ellQVCA$ ,
		     $\br{\HPA,\HPBtilde,\ellPA,\ellPBtilde}= \br{\HPAtilde',\HPB',\ellPAtilde',\ellPB'}$,
		     $\ellPA=6q_{\MA} \cdot r\;, \ellPBtilde =6q_{\MBtilde} \cdot r$,
	        }
	
	\Output{$\br{\ellQVCA,\NewPauliA,\FullPA,\ellPA,
			\NewPauliBtilde,\PBtilde,\ellPBtilde,\ellRA}$}
    \BlankLine
	
	$\ellQVCA \leftarrow \ellQVCA+1$\;
	Continue the simulation of the noiseless protocol according to the output of \textbf{\textsf{Q-computejointstate}} using the block of MESs indexed by $\IndexA\Br{\ellQVCA}$\;
	$\NewPauliA \leftarrow \br{\bot^{2r},\bot^{2r},\PCorr}$\;
	$\FullPA \leftarrow \br{\FullPA,\NewPauliA}$\;
	$\ellPA \leftarrow \ellPA+6r$\;
	$\NewPauliBtilde\leftarrow \br{\bot^{2r},\bot^{2r},\PCorrtilde}$\;
	$\PBtilde \leftarrow \br{\PBtilde,\NewPauliBtilde}$\;
	$\ellPBtilde \leftarrow \ellPBtilde+6r$\;
	$\ellRA \leftarrow \ellRA+1$\;	
	
	\Return{\textup{\textbf{\textsf{Q-simulate}}}}\;
	\caption{\textbf{\textsf{Q-simulate}} (Alice's side)}
	\label{algo:Q-simulate}
\end{algorithm}

\newpage
\subsection{Analysis}

To simplify the analysis of the algorithm, without loss of generality, we assume that the error introduced by the adversary on the $n^\prime$ message registers in~$\Pi'$ is a Pauli error of weight at most~$\epsilon n^\prime$. We prove the correctness of the algorithm for any such error syndrome, which by linearity implies the correctness of the algorithm against any adversary defined in Section~\ref{sec:noisy_comm_model}. In order to track the simulation progress and show the correctness of the algorithm, we condition on some view of the local classical data recorded by Alice and Bob.

Similar to Section~\ref{subsec:polysizeclassicalanalysis}, the analysis of Algorithm~\ref{algo:MainalgorithmQMessage} is in terms of potential functions which measure the correctness of the two players' views of what has happened so far in the simulation and quantify the progress in reproducing the joint state of the input protocol. We recall the following definitions from Section~\ref{subsec:polysizeclassicalanalysis}:
\begin{align}
	&\mdAplus \defeq~\text{the length of the longest prefix where $\MA$ and $\MAtilde$ agree;}\label{eqn:Q-mda+}\\
	&\mdBplus \defeq~\text{the length of the longest prefix where $\MB$ and $\MBtilde$ agree;}\label{eqn:Q-mdb+}\\
	&\mdAminus\defeq \max\{\ellMA,\ellMAtilde\}-\mdAplus;\label{eqn:Q-mda-}\\
	&\mdBminus\defeq \max\{\ellMB,\ellMBtilde\}-\mdBplus;\label{eqn:Q-mdb-}\\
	&\pdAplus\defeq \lfloor\frac{1}{r} \times~\text{the length of the longest prefix where $\PA$ and $\PAtilde$ agree}\rfloor;\label{eqn:Q-pda+}\\
	&\pdBplus\defeq \lfloor\frac{1}{r} \times~\text{the length of the longest prefix where $\PB$ and $\PBtilde$ agree}\rfloor;\label{eqn:Q-pdb+}\\
	&\pdAminus\defeq \frac{1}{r} \max\{\ellPA,\ellPAtilde\}-\pdAplus;\label{eqn:Q-pda-}\\
	&\pdBminus\defeq \frac{1}{r} \max\{\ellPB,\ellPBtilde\}-\pdBplus.\label{eqn:Q-pdb-}
\end{align}
We recall the definition of $g,b,u$ from Subsection~\ref{subsec:Q-2nd-rep-out-of-sync}:
\begin{align}
	&g \defeq
	\text{the number of good blocks in $\JStwo$,}\label{eqn:Q-g}\\
	&b \defeq
	\text{the number of bad blocks in $\JStwo$, and}\label{eqn:Q-b}\\
	&u\defeq |\ellQVCA-\ellQVCB|,\label{eqn:Q-u}
\end{align}
We define
\begin{align}
	rd^+ \defeq \max &\{j:\; j\leq \min\{\ellRA,\ellRB\} \;,\; \RA\Br{1:j}=\RB\Br{1:j} \nonumber \\
	&\;, W_{k}=0^{4r}\text{ for all $k\leq j$ with $\RA\Br{k}=\sS$}\}  ;\label{eqn:rd+}\\
	rd^- \defeq \max &\{\ellRA,\ellRB\}-rd^+,\label{eqn:rd-}
\end{align}
where $W$ in Equation~\ref{eqn:rd+} is the string corresponding to the error syndrome defined in Subsection~\ref{subsec:Q-1st-rep-sync}. At the end of the $i$-th iteration, we let
\begin{align}
&\Phi_{\MD}\defeq 2i-\mdAplus+3\mdAminus-\mdBplus+3\mdBminus,\label{eqn:Q-phimd}\\
&\Phi_{\RD}\defeq \ellQVCA+\ellQVCB+ 13rd^- - 2rd^+, \label{eqn:Q-phird}\\
&\Phi_{\PD}\defeq 6q_{\MA}+6q_{\MB}-\pdAplus+\pdAminus-\pdBplus+\pdBminus,\label{eqn:Q-phipd}\\
&\Phi_{\mathrm{Q}}\defeq g-b-9u, \label{eqn:Q-phiQ}\\
&\Phi\defeq\Phi_{\mathrm{Q}}-\Phi_{\MD}-\Phi_{\RD}-\Phi_{\PD}.\label{eqn:Q-phi}
\end{align}

The following lemma states an important property of potential functions $\Phi_{\MD}$, $\Phi_{\RD}$ and $\Phi_{\PD}$ defined above which we use in the analysis of the algorithm.

\begin{lemma} \label{lem:potential-value-vs-data}
	Throughout the algorithm, it holds that 
	\begin{itemize}
		\item $\Phi_{\MD}\geq 0$ with equality if and only if Alice and Bob have full knowledge of each other's metadata, i.e., $\mdAplus=\mdBplus=i$ and $\mdAminus=\mdBminus=0$.
		\item $\Phi_{\RD}\geq 0$ with equality if and only if Alice and Bob have used the same number of MES blocks $(\ellQVCA=\ellQVCB)$, their measurement pointers $\ellRA$ and $\ellRB$ agree and are equal to $\ellQVCA$, they fully agree on the recycling data $(\RA=\RB)$ and $W_{k}=0^{4r}$ for all $k\leq \ellQVCA$ with $\RA\Br{k}=\sS$, i.e., $\ellQVCA=\ellQVCB=rd^+$ and $rd^-=0$.
		\item $\Phi_{\PD}\geq 0$ with equality if and only if Alice and Bob have full knowledge of each other's Pauli data, i.e., $\pdAplus=6q_\MA$, $\pdBplus=6q_\MB$ and $\pdAminus=\pdBminus=0$. 
	\end{itemize}
\end{lemma}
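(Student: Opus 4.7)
Each of the three claims has the same structural template: identify the natural upper/lower bounds on the quantities defining the potential, combine them to get the sign inequality, and then inspect when the inequalities are tight. The only subtlety lies in verifying a handful of algorithmic invariants that the pseudocode maintains silently.

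First, for $\Phi_{\MD}$, the plan is to observe that after $i$ iterations the strings $\MA,\MAtilde,\MB,\MBtilde$ each have length at most $i$ (since each of them is only ever extended by one symbol per iteration, in Algorithms \ref{algo:Preprocess} / \ref{algo:rewindMD} / \ref{algo:extendMD} and their analogues for Bob). Hence the lengths of the longest common prefixes satisfy $\mdAplus\le i$ and $\mdBplus\le i$, and by construction $\mdAminus,\mdBminus\ge 0$. Substituting into \eqref{eqn:Q-phimd} gives $\Phi_{\MD}\ge 2i - i - i = 0$. Equality forces $\mdAplus=\mdBplus=i$ and $\mdAminus=\mdBminus=0$, which in turn forces $\ellMA=\ellMAtilde=\ellMBtilde=\ellMB=i$ and full pointwise agreement of $\MA=\MAtilde$ and $\MB=\MBtilde$, as claimed.

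Second, for $\Phi_{\PD}$, the plan is to note the invariant that $\ellPA\le 6q_{\MA}\cdot r$ and $\ellPBtilde\le 6q_{\MBtilde}\cdot r$ at all times --- both are only incremented by $r$ inside Algorithm \ref{algo:Q-extendPD}, and only when the corresponding guard is satisfied; similarly on Bob's side. Combined with the definitions \eqref{eqn:Q-pda+}--\eqref{eqn:Q-pdb-}, this gives $\pdAplus\le 6q_{\MA}$ and $\pdBplus\le 6q_{\MB}$, while $\pdAminus,\pdBminus\ge 0$. Substituting into \eqref{eqn:Q-phipd} yields $\Phi_{\PD}=(6q_{\MA}-\pdAplus)+(6q_{\MB}-\pdBplus)+\pdAminus+\pdBminus\ge 0$, with equality exactly when all four non-negative summands vanish, which is the stated characterization.

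The main obstacle --- and the place that will need the most care --- is $\Phi_{\RD}$. Here the plan is first to prove the invariant $\ellRA\le \ellQVCA$ and $\ellRB\le \ellQVCB$ throughout the execution of Algorithm \ref{algo:MainalgorithmQMessage}. This is done by induction on the iteration count: $\ellQVCA$ is only incremented in Algorithms \ref{algo:Q-syncMES} and \ref{algo:Q-simulate}, and $\ellRA$ is only incremented in Algorithms \ref{algo:extendRD} and \ref{algo:Q-simulate} --- in the former explicitly guarded by $\ellRA<\ellQVCA$, and in the latter simultaneously with $\ellQVCA$. Decrements of $\ellRA$ in Algorithms \ref{algo:rewindRD} and \ref{algo:QmeasureEPR} only help the invariant. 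Given this invariant, $rd^+\le\min\{\ellRA,\ellRB\}\le\min\{\ellQVCA,\ellQVCB\}$, so $2rd^+\le\ellQVCA+\ellQVCB$, and since $rd^-\ge 0$ one obtains $\Phi_{\RD}\ge 13\,rd^-\ge 0$ from \eqref{eqn:Q-phird}. For the equality case, $\Phi_{\RD}=0$ forces $rd^-=0$ and $2rd^+=\ellQVCA+\ellQVCB$; the former gives $\max\{\ellRA,\ellRB\}=rd^+$, hence $\ellRA=\ellRB=rd^+$, and the latter combined with $rd^+\le\min\{\ellQVCA,\ellQVCB\}$ forces $\ellQVCA=\ellQVCB=rd^+$. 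Unpacking the definition of $rd^+$ in \eqref{eqn:rd+} then gives $\RA[1{:}\ellRA]=\RB[1{:}\ellRB]$, which together with $\ellRA=\ellRB=\ellQVCA=\ellQVCB$ (so these are the full strings of recycling symbols) is the asserted $\RA=\RB$, and also gives $W_k=0^{4r}$ for every $k\le\ellQVCA$ with $\RA[k]=\sS$. This matches the claimed equality condition exactly.
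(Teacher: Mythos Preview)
Your proof is correct and follows essentially the same approach as the paper's own proof, which simply cites the bounds $\mdAplus,\mdBplus\le i$, $rd^+\le\min\{\ellRA,\ellRB\}$, $\ellRA\le\ellQVCA$, $\ellRB\le\ellQVCB$, and $\pdAplus\le 6q_{\MA}$, $\pdBplus\le 6q_{\MB}$ together with nonnegativity of the minus-quantities. Your write-up is more thorough in that it actually verifies the invariant $\ellRA\le\ellQVCA$ by inspecting which subroutines modify each pointer, whereas the paper merely asserts this as a ``property''; otherwise the arguments coincide.
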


\begin{proof}
	The first statement follows from the property that $\mdAminus,\mdBminus \geq 0$ and $\mdAplus,\mdBplus\leq i$. The second statement holds since $rd^- \geq 0$, $rd^+ \leq \min\{\ellRA,\ellRB\}$ and the property that $\ellRA\leq\ellQVCA$ and $\ellRB\leq\ellQVCB$. The third statement follows since $\pdAminus,\pdBminus \geq 0$, $\pdAplus\leq 6q_\MA$ , and $\pdBplus\leq 6q_\MB$.  
\end{proof}
 
In order to avoid ambiguity, whenever necessary we use a superscript $i$ to indicate the value of the variables of the algorithm at the end of the $i$-th iteration. For instance, we denote Alice's recycling data at the end of the $i$-th iteration by $\RA^i$. Before presenting the analysis of Algorithm~\ref{algo:MainalgorithmQMessage}, we formally define successful recycling.

\begin{definition} \label{def:successful recycling}
	We say recycling is successful in the $i$-th iteration of Algorithm~\ref{algo:MainalgorithmQMessage} if the following hold: 
	\begin{itemize}
		\item The algorithm does not abort in the $i$-th iteration, i.e., $\NextMESIndexA^i,\NextMESIndexB^i \neq \perp$,
		\item $\NextMESIndexA^i=\NextMESIndexB^i$,
		\item The block of MES registers indexed by $\NextMESIndexA^i$ are in the $\ket{\phi^{0,0}}^{\otimes 4r}$ state at the beginning of the $i$-th iteration.
	\end{itemize}
\end{definition}

Note that the conditions of Definition~\ref{def:successful recycling} are all satisfied in the first $\LQVC$ iterations of the algorithm, since in fact no recycling is done in those iterations. Moreover, we have $\IndexA\Br{1:\LQVC}=\IndexB\Br{1:\LQVC}=1:\LQVC$.

 {\bf Proof Outline of Theorem~\ref{thm:Qmessagelargealphabet}.} In order to prove successful simulation of an $n$-round protocol, it suffices to show that $\Phi \geq n/{2r}$, at the end of the simulation. In Section~\ref{subsec:polysizeclassicalanalysis} we showed that except with exponentially small probability, the total number of hash collisions is~$O\br{n\epsilon}$. Then, for sufficiently large number of iterations, to prove the correctness it was sufficient to show that in any iteration with no error or hash collision the potential function increases by at least one, while any iteration with errors or hash collisions decreases the potential by at most some fixed constant. However, this statement is not necessarily true for Algorithm~\ref{algo:MainalgorithmQMessage} if the recycling of MESs has not been successful in an earlier iteration. In fact, the potential function is defined in terms of~$\JStwo$ which is a valid representation of the joint state at any stage in the simulation only if recycling has been successful so far. Therefore, to use such an argument, one needs to prove successful recycling first.
 On the other hand, to prove successful recycling in an iteration, we need to bound the number of iterations with a hash collision, as well as the number of iterations dedicated to ``recovery'' from hash collisions and transmission errors. Therefore, the analysis of the recycling-based protocol involves an inductive argument. 
 
 The analysis in this section involves constants 
 \begin{align*}
     c_1 < c_2 < c_3 < c_4 < c_5 < c_6 < c_7 < c_8 < c_9 \enspace,
 \end{align*}
 chosen such that $c_i$ is sufficiently large depending only on $c_j$ with $j<i$.
 
 \begin{definition}
    We say an iteration of Algorithm~\ref{algo:MainalgorithmQMessage} suffers from a \emph{metadata hash collision\/} when $\HMA=\HMAtilde$ despite the fact that $\MA \neq \MAtilde$, or $\HMB=\HMBtilde$ despite the fact that $\MB \neq \MBtilde$.
    Note that we distinguish between the above scenario and when, for instance,~$\HMA=\HMAtilde'$ due to a transmission error on $\HMAtilde$, despite the two might have similar effects.
 \end{definition} 
 
The following lemma bounds the number of iterations with a metadata hash collision up to any point in the simulation assuming successful recycling in the earlier iterations.

\begin{lemma}\label{lem:MD collisions}
	Suppose that recycling is successful in the first $i$ iterations of Algorithm~\ref{algo:MainalgorithmQMessage}. Then the number of iterations of the algorithm suffering from a metadata hash collision in the first $i$ iterations is at most $c_1n\epsilon$ with probability at least $1-2^{-\Theta(n\epsilon)}$.
\end{lemma}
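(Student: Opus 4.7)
The plan is to exploit the one-sided nature of hashing: a metadata hash collision occurs only when the strings being compared, $\MA$ versus $\MAtilde$ or $\MB$ versus $\MBtilde$, actually differ. So the strategy is to first bound the number of iterations in which such a disagreement can even arise, and then apply a Chernoff-type concentration on the number of collisions produced.

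\textbf{Step 1 (disagreement budget).} Let $T_{\mathrm{err}}$ be the number of iterations among the first $i$ in which the adversary corrupts at least one of the transmitted items involved in the metadata subprotocol; the error budget $\epsilon n'$ forces $T_{\mathrm{err}} \leq 2 n \epsilon$. Let $T_{\mathrm{coll}}$ be the number of metadata hash collisions so far, and let $T_{\mathrm{diff}}$ be the number of iterations in which $\MA \neq \MAtilde$ or $\MB \neq \MBtilde$ at the start of the iteration. Track the inconsistency potential $\Phi_{\MD}$ defined in~\eqref{eqn:Q-phimd}. By Lemma~\ref{lem:potential-value-vs-data}, $\Phi_{\MD}\geq 0$ always and $\MA\neq\MAtilde$ forces $\mdAminus>0$, hence $\Phi_{\MD}>0$; similarly on Bob's side. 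In any iteration without a transmission error or metadata hash collision, the proof of Lemma~\ref{lem:potential increase}---applied to Cases i.A and i.B of \textbf{Q-Preprocess}, which are copied verbatim from the teleportation-based algorithm---shows that $\Phi_{\MD}$ decreases by at least one whenever it is positive. Conversely, a transmission error or a collision can increase $\Phi_{\MD}$ by at most a universal constant $C$, since only a constant number of metadata positions is processed per iteration. A telescoping argument then yields $T_{\mathrm{diff}} \leq C \bigl(T_{\mathrm{err}} + T_{\mathrm{coll}}\bigr)$.

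\textbf{Step 2 (seed independence).} The hash seeds $S_{4j-3}$ and $S_{4j-2}$ used for the metadata in iteration $j$ are obtained by measuring MES pairs distributed by \textbf{Robust Entanglement Distribution} at the outset of the protocol. Under the hypothesis of successful recycling throughout the first $i$ iterations, the MESs reserved for seed generation are authentic $\ket{\phi^{0,0}}$ states, so their measurement outcomes are uniformly random bits, independent of the adversary's view. Seeds for different iterations come from disjoint MES pairs and are mutually independent. By Lemma~\ref{lem:hashes}, choosing the output length $o$ to be a sufficiently large constant renders the per-iteration collision probability $p$ an arbitrarily small constant, uniformly over the history, whenever the compared strings actually differ.

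\textbf{Step 3 (Chernoff with a stopping time).} Let $\tau$ be the first iteration in which the running count of metadata hash collisions exceeds $c_1 n \epsilon$, and suppose for contradiction that $\tau \leq i$. Applying Step~1 at time $\tau-1$ gives $T_{\mathrm{diff}}(\tau) \leq C(2 + c_1) n \epsilon$. Hence the $c_1 n \epsilon$ collisions occurring up to time $\tau$ arise among at most $C(2+c_1) n \epsilon$ independent Bernoulli trials with success probability at most $p$ (Step~2). A standard Chernoff bound yields
\[
\Pr[\tau \leq i] \;\leq\; \Pr\!\bigl[\mathrm{Bin}\bigl(C(2+c_1) n\epsilon,\,p\bigr) \geq c_1 n \epsilon\bigr] \;\leq\; 2^{-\Theta(n\epsilon)},
\]
provided $c_1$ is chosen so that $C p (2 + c_1) < c_1$, which holds for all sufficiently large $c_1$ once $Cp<1$. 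Fixing such a $c_1$ completes the proof.

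\textbf{Main obstacle.} The principal subtlety is the apparent circularity in Step~1: $T_{\mathrm{coll}}$ appears on both sides of the disagreement bound, reflecting the positive feedback in which a collision causes an erroneous reconciliation, creating more disagreement, and hence further opportunities for collisions. The stopping-time formulation in Step~3 breaks this circularity by conditioning on the event ``just before collisions exceed $c_1 n \epsilon$'', so the disagreement budget depends only on a quantity we have already bounded. A secondary technical point is to verify the constant $C$ in Step~1 by checking that iterations of type $\RD$, $\QH$, $\PD$, and $\SIM$---while not directly processing metadata---can nonetheless inflate $\mdAminus$ or $\mdBminus$ via transmission errors on the length pointers $\ellMA, \ellMBtilde$; this inflation is still $O(1)$ per affected iteration and is absorbed into $T_{\mathrm{err}}$.
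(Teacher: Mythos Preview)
Your proposal is correct and follows essentially the same approach as the paper: exploit the one-sidedness of hash collisions, bound the number of ``disagreement'' iterations by a constant times (errors $+$ collisions), and close the loop with a Chernoff-type bound. The paper tracks $\mdAminus+\mdBminus$ directly and argues that in at least a $1/7$ fraction of the disagreement iterations this quantity fails to decrease (hence an error or collision occurred), whereas you telescope $\Phi_{\MD}$ and use an explicit stopping time; both routes yield the same inequality $T_{\mathrm{diff}}\le C(T_{\mathrm{err}}+T_{\mathrm{coll}})$ and the same Chernoff conclusion.

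One small inaccuracy: in Step~2 you attribute the authenticity of the seed MESs to the successful-recycling hypothesis. That is not the mechanism---the seed MESs are distributed once at the outset via \textbf{Robust Entanglement Distribution} (Algorithm~\ref{algo:Robust Entanglement Distribution}), protected by an error-correcting code of distance $4n\epsilon$, and are never recycled. Their uniformity and independence therefore hold unconditionally (given the adversary's error budget), and the recycling hypothesis plays no role in this particular lemma. This does not affect the validity of your argument. Also, the relevant potential lemma in this section is Lemma~\ref{lem:Q-potential-increase} rather than Lemma~\ref{lem:potential increase}; the content for Cases~i.A and~i.B is identical, so the citation slip is harmless.
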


\begin{proof}	
	Note that a metadata hash collision occurs in an iteration only if $\mdAminus + \mdBminus \neq 0$ at the beginning of the iteration. Let $\alpha_\MD$ denote the number of such iterations in the first $i$ iterations of Algorithm~\ref{algo:MainalgorithmQMessage}. It suffices to prove that 
	\[\Pr\br{\alpha_\MD > c_1n\epsilon} \leq 2^{-\Theta(n\epsilon)} \enspace.
	\]
	Note that in any iteration $\mdAminus + \mdBminus$ increases by at most $6$. Moreover, in an iteration with $\mdAminus + \mdBminus \neq 0$, if $\mdAminus + \mdBminus$ decreases, it decreases by at least $1$. Therefore, in at least $\alpha_\MD/7$ iterations, $\mdAminus + \mdBminus$ increases or remains unchanged at a nonzero value. Note that $\mdAminus + \mdBminus > 0$ increases or remains unchanged only if a transmission error or a metadata hash collision occurs. Moreover, when $\mdAminus + \mdBminus$ increases from zero in an iteration, it is due to a transmission error. The number of iterations is less than $2n$. So the total number of iterations with transmission errors, is at most $2n\epsilon$. This implies that in all the remaining iterations, i.e., at least $\alpha_\MD/7-2n\epsilon$ iterations a metadata hash collision occurs. Since the algorithm uses independent seeds in each iteration and the probability of collision is chosen to be $0.1$, the expected number of collisions is at most $\alpha_\MD/10$. If $\alpha_\MD > c_1n\epsilon$ for a sufficiently large $c_1$, then the Chernoff bound implies that the probability of having so many collisions is at most $2^{-\Theta\br{n\epsilon}}$.	
\end{proof}

\begin{definition}
    We refer to an iteration of Algorithm~\ref{algo:MainalgorithmQMessage} as a \emph{recovery iteration of type I\/} if at least one of Alice or Bob conducts one of the cases i.A, i.B, ii.A, or ii.B.
\end{definition}

We use the following lemma to bound the number of type I recovery iterations.

\begin{lemma} \label{lem:type I recovery}
	Suppose that in the first $i$ iterations of Algorithm~\ref{algo:MainalgorithmQMessage}, recycling is successful  and the number of iterations suffering from a metadata hash collision is at most $c_1n\epsilon$. Then the number of type I recovery iterations in the first $i$ iterations is at most $c_2n\epsilon$.
\end{lemma}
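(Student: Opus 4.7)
The plan is to control type I recovery iterations via a potential-function argument localized to the classical metadata and the MES-count discrepancy. First I would introduce
\[
\Psi \;\defeq\; \Phi_{\MD} \,+\, \alpha \bigl( |\ellQVCA - \ellQVCBtilde| + |\ellQVCB - \ellQVCAtilde| \bigr)
\]
for a sufficiently large constant $\alpha$. By Lemma~\ref{lem:potential-value-vs-data}, $\Psi \geq 0$, and $\Psi = 0$ exactly when both parties have full-length correct copies of each other's metadata and agree on the other's MES-block count. Inspection of \textsf{Q-Preprocess} shows that Alice enters one of cases i.A--ii.B only when her local view exhibits a discrepancy of the form captured by $\Psi$; hence at least one party witnesses $\Psi > 0$ at the start of every type I recovery iteration.

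Next I would establish two monotonicity properties. (a) Any type I recovery iteration that is free of both a transmission error and a metadata hash collision decreases $\Psi$ by at least $1$. For cases i.A/i.B this is essentially the argument already carried out in Lemma~\ref{lem:potential increase} for $\Phi_{\MD}$, since \textsf{rewindMD} and \textsf{extendMD} are reused verbatim from Section~\ref{sec:BKlarge}. For cases ii.A/ii.B exactly one party executes \textsf{Q-syncMES} while the other sends a dummy classical block; with no error or collision, this reduces $|\ellQVCA-\ellQVCBtilde|$ or $|\ellQVCB-\ellQVCAtilde|$ by one while leaving $\Phi_{\MD}$ at zero, so $\Psi$ drops by $\alpha$. (b) A single transmission error or metadata hash collision changes $\Psi$ by at most some absolute constant $K$: each corrupted symbol can shift the relevant pointers by one and cause at most one spurious jump in the estimated $\ellQVCBtilde$ or $\ellQVCAtilde$ within an iteration, and similarly for a collision on $\HMA$ or $\HMB$.

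To finish, I would combine the bookkeeping. By hypothesis there are at most $c_1 n\epsilon$ metadata hash collisions among the first $i$ iterations, and since the total number of iterations is in $\Theta(n)$ the adversary injects at most $2n\epsilon$ transmission errors. The cumulative positive contribution to $\Psi$ from errors and collisions is therefore at most $K(c_1 + 2)\,n\epsilon$. Since $\Psi_0 = 0$ and $\Psi \geq 0$ throughout, the number of ``clean'' type I recovery iterations (those with no error and no collision) is at most $K(c_1+2)\,n\epsilon$ by a standard telescoping argument; adding back the at most $(c_1+2)\,n\epsilon$ iterations that do suffer an error or collision yields a bound of $c_2\,n\epsilon$ for $c_2$ chosen sufficiently large relative to $c_1, K, \alpha$.

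The main obstacle will be nailing down property (b) with the correct constant $K$: I need to enumerate, for each of the eight comparison conditions in \textsf{Q-Preprocess}, how a single corruption on the eight-tuple transmitted in the preprocessing round or a single collision among $\HMA, \HMBtilde, \HMAtilde', \HMB'$ can perturb the subroutine's branch selection, and confirm that the resulting changes to $\ellMA, \ellMAtilde, \ellMBtilde, \ellMB, \ellQVCA, \ellQVCBtilde$ and their counterparts are each $O(1)$. A secondary subtlety, already anticipated by the hypothesis of successful recycling, is that I must be sure $|\ellQVCA - \ellQVCBtilde|$ is a well-defined bounded quantity in every iteration the lemma considers; this is guaranteed because the bounded error budget prevents $\ellQVCBtilde$ from ever diverging from $\ellQVCA$ by more than $O(n\epsilon)$, which is consistent with the inductive framing culminating in the final simulation-rate theorem.
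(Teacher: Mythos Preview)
Your overall strategy is the same as the paper's: a potential-function/telescoping argument localized to metadata plus the MES-count discrepancy. The paper defines $\Phi_\mathrm{I} \defeq u + \Phi_{\MD}$ with the \emph{true} discrepancy $u = |\ellQVCA - \ellQVCB|$, shows $\Phi_\mathrm{I} \geq 0$, that it strictly decreases in any clean iteration starting from a positive value, and increases by at most a constant otherwise; the bound $c_2 = 24(c_1+2)+2$ then follows by the same bookkeeping you outline.

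There is, however, a genuine gap in your claim (a). You use the \emph{estimated} discrepancies $|\ellQVCA - \ellQVCBtilde|$ and $|\ellQVCB - \ellQVCAtilde|$ and assert that for cases i.A/i.B ``this is essentially the argument already carried out for $\Phi_{\MD}$.'' It is not: in a clean run of \textsf{extendMD}, Alice appends up to two correct symbols of $\FullMB$ to $\MBtilde$; if these symbols are of QVC type, $\ellQVCBtilde$ increases. Whenever $\ellQVCBtilde \geq \ellQVCA$ already (which is perfectly possible when $\ellQVCB > \ellQVCA$), this makes $|\ellQVCA - \ellQVCBtilde|$ strictly larger. With ``$\alpha$ sufficiently large'' as you stipulate for the case-ii drop, the increase in the discrepancy term swamps the unit decrease in $\Phi_{\MD}$, so $\Psi$ \emph{increases} in a clean type-I iteration---contradicting (a). Taking $\alpha$ small enough to absorb this in case i instead makes the case-ii drop only $2\alpha$, so ``decreases by at least $1$'' fails there; no single choice of $\alpha$ satisfies both requirements as you have stated them.

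The paper sidesteps this entirely because the true $u$ is invariant under any case-i iteration: both parties set $\NewMetaA,\NewMetaB = \sC$, so neither $\ellQVCA$ nor $\ellQVCB$ changes. Replacing your estimated terms by $u$ repairs the argument and collapses it to the paper's proof.
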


\begin{proof}
	Let 
	\[ \Phi_\mathrm{I} \defeq u+\Phi_{\MD}\enspace. \]
	By Lemma~\ref{lem:potential-value-vs-data} and the definition of $u$ in Eq.~\eqref{eqn:Q-u}, $\Phi_\mathrm{I}$ is always non-negative and is equal to zero if and only if Alice and Bob know each other's full metadata and have used the same number of MES blocks for QVC. 
	
	Note that if $\Phi_\mathrm{I}=0$ at the beginning of an iteration, then the iteration is a type I recovery iteration only if a transmission error in communication of metadata messages occurs. The total number of such iterations is at most $2n\epsilon$. 
	
	Let $\beta_\mathrm{I}$ denote the number of iterations in the first $i$ iterations starting with $\Phi_\mathrm{I}> 0$. Note that in any iteration, $\Phi_\mathrm{I}$ increases or remains unchanged at a nonzero value only if a metadata hash collision or a transmission error occurs. In each iteration, regardless of the number of errors and collisions, $\Phi_\mathrm{I}$ increases by at most $23$. Moreover, if $\Phi_\mathrm{I}$ decreases, it decreases by at least $1$. Assuming the number of metadata hash collisions is at most $c_1n\epsilon$, this implies that the number of iterations in which $\Phi_\mathrm{I}$ decreases is at most $23\br{c_1+2} n\epsilon$. So we have $\beta_\mathrm{I} \leq 24\br{c_1+2} n\epsilon$.
	 
	Therefore, the total number of type I recovery iterations is at most $c_2n\epsilon$, where~$c_2 \defeq 24\br{c_1+2}+2$.  	
\end{proof}

\begin{definition}
    We say an iteration of Algorithm~\ref{algo:MainalgorithmQMessage} suffers from a \emph{quantum hash collision\/} when recycling has been successful so far, Alice and Bob know each other's metadata, have used the same  number of MES blocks ($\ellQVCA=\ellQVCB$) and agree on their measurement pointers and their recycling data up to the measurement pointers ($\ellRA=\ellRB$ and ~$\RA\Br{1:\ellRA}=\RB\Br{1:\ellRB}$) but despite the fact that there is an undetected quantum error from earlier iterations, their quantum hash values match, i.e.,~$\QHA=\QHB$. Note that we distinguish between the above scenario and when~$\QHA=\QHB'$ due to a transmission error on $\QHB$.
\end{definition}   
  
We use the following lemma to bound the number of iterations suffering from a quantum hash collision.
  
\begin{lemma} \label{lem:QH collisions}
	Suppose that in the first $i$ iterations of Algorithm~\ref{algo:MainalgorithmQMessage} recycling is successful and the number of iterations suffering from a metadata hash collision is at most $c_1n\epsilon$. Then the number of iterations suffering from a quantum hash collision in the first $i$ iterations is at most $c_3n\epsilon$ with probability at least $1-2^{-\Theta(n\epsilon)}$.
\end{lemma}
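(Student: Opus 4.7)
The plan follows the template of Lemma~\ref{lem:MD collisions}: identify an auxiliary quantity whose positivity is necessary for a quantum hash collision, show that this quantity stays positive in only $O(n\epsilon)$ iterations, and finally apply a Chernoff bound to control the subset of those iterations in which the hash actually collides. Specifically, I would call iteration~$j$ \emph{at risk\/} if recycling has been successful up to~$j$, Alice and Bob have synchronized metadata, $\ellQVCA=\ellQVCB$, the measurement pointers and recycling prefixes agree, and at least one unmeasured MES block in the current hashing window $\Br{i-t+1,\ellQVCA}$ carries a non-identity Pauli error. By the definition given just before the lemma, a quantum hash collision is precisely an at-risk iteration in which all ten fresh $\control{\rX}$-hash attempts return matching outcomes on the two sides.

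To bound the number of at-risk iterations, I would introduce the ledger
\[
L_j \;\defeq\; \bigl|\{k \le \ellQVCA : W_k \ne 0 \text{ and } \RA\Br{k}=\sS\}\bigr|
\]
and track its evolution across iterations. Entries of $L_j$ can be created only when a transmission error corrupts a QVC message or when the out-of-sync remedial operations of Subsection~\ref{subsec:out-of-sync QVC} fire, and the latter is in turn triggered by a prior transmission error or metadata hash collision; the total number of such ``error-injection'' events is therefore at most $(c_1+2)n\epsilon$ by the hypotheses. Once $L_j>0$, Lemma~\ref{lem:quantum hash-delta biased} with $m=\Theta(n\sqrt{\epsilon})$ and $\delta=2^{-\Theta(n\sqrt{\epsilon})}$ bounds the per-attempt failure probability by $\tfrac12+2^{m/2}\delta \le 0.6$, and the ten attempts within an iteration use disjoint fresh control MESs and disjoint seed fragments, so (up to the negligible bias) the error is caught within a time stochastically dominated by a geometric random variable with parameter at least $1-0.6^{10}$, after which measurement and rewinding drop it from the ledger.

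The principal obstacle is that the $10\,\Rtotal$ seeds consumed across all iterations are drawn jointly from a single $\delta$-biased pseudo-random string, and hence are only \emph{nearly\/} independent. Here Lemma~\ref{lem:stretch} applies: the $\delta$-biased distribution is $\delta^{\Theta(1)}=2^{-\Theta(n\sqrt{\epsilon})}$-statistically close to a $k$-wise independent distribution for $k=\Theta(n\sqrt{\epsilon})$. Since the analysis only references seeds belonging to the at-risk iterations, whose count we intend to bound by $O(n\epsilon)\ll n\sqrt{\epsilon}$, the relevant family of seeds sits comfortably inside this $k$-wise independence window, and at an additive total-variation cost of $2^{-\Theta(n\sqrt{\epsilon})}$ we may treat them as fully independent.

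Under this coupling, the total at-risk duration is dominated by a sum of at most $(c_1+2)n\epsilon$ independent geometric variables, which is at most $c'n\epsilon$ with failure probability $2^{-\Theta(n\epsilon)}$ by a standard Chernoff-style tail bound for sums of geometrics. Conditioning on this event, the number of quantum hash collisions is a sum of at most $c'n\epsilon$ Bernoulli variables, each of parameter at most $0.6^{10}$, and a second Chernoff bound yields a tally of at most $c_3 n\epsilon$ with failure probability $2^{-\Theta(n\epsilon)}$, provided $c_3$ is chosen large enough. A union bound over the two tail events and the $2^{-\Theta(n\sqrt{\epsilon})}$ coupling slack finishes the argument.
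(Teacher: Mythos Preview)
Your overall strategy---bound the number of ``at-risk'' iterations and then apply Chernoff to the Bernoulli collision indicators within them---matches the paper. The gap is in your Step~4: you claim that once quantum hashing fires on an at-risk iteration, ``measurement and rewinding drop it from the ledger''. But \textsf{measuresyndrome} (case~iv) measures only the block at the current pointer~$\ellRA$ and then decrements~$\ellRA$; it does not locate the corrupted block. If an earlier quantum hash collision has let the parties proceed to case~v or case~vii, then~$\ellRA$ has moved \emph{past} the corrupted block, and the next firing measures a clean block on top of it. The corrupted block stays in your ledger~$L_j$, and one more firing is needed for every prior collision. Hence the at-risk duration per injected error is not a single geometric variable but a biased random walk of~$\ellRA$ relative to the error position---exactly the self-referential dependence on the number of quantum hash collisions that you are trying to bound. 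Your sum-of-geometrics bound on at-risk iterations is therefore unjustified.

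The paper sidesteps this by tracking the potential~$rd^-=\max\{\ellRA,\ellRB\}-rd^+$ instead of~$L_j$. This quantity already encodes the distance from the pointer to the first corrupted block, increases by at most~$2$ per iteration, and---crucially---can fail to decrease in an iteration with~$rd^->0$ only because of a transmission error, a metadata collision, a type~I recovery iteration (bounded by Lemma~\ref{lem:type I recovery}, which you never invoke), or a quantum hash collision itself. Letting~$\alpha_\RD$ be the number of iterations with~$rd^->0$, this gives
\[
\tfrac{1}{3}\alpha_\RD \;\le\; (c_1+c_2+2)n\epsilon + (\text{QH collisions}),
\]
and since the expected number of quantum hash collisions is at most~$10^{-3}\alpha_\RD$ (ten attempts at probability $\le 1/2 + 2^{m/2}\delta$ each), a single Chernoff bound closes the loop. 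The potential~$rd^-$ does the bookkeeping that your geometric model omits. A secondary point: your error-injection count of~$(c_1+2)n\epsilon$ also underestimates, since out-of-sync QVC is governed by the number of type~I recovery iterations, i.e., by~$c_2n\epsilon$.
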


\begin{proof}	
	Note that a quantum hash collision occurs in an iteration only if $rd^- \neq 0$ at the beginning of the iteration. Let $\alpha_\RD$ denote the number of such iterations in the first $i$ iterations of Algorithm~\ref{algo:MainalgorithmQMessage}. It suffices to prove that 
	\[\Pr\br{\alpha_\RD > c_3n\epsilon} \leq 2^{-\Theta(n\epsilon)} \enspace.
	\]
	Note that in any iteration $rd^-$ increases by at most $2$. Moreover, in an iteration with $rd^- \neq 0$, if $rd^-$ decreases, it decreases by at least $1$. Therefore, in at least $\alpha_\RD/3$ iterations, $rd^-$ increases or remains unchanged at a nonzero value. Note that $rd^->0$ increases or remains unchanged only if
	\begin{itemize}
		\item A metadata hash collision or a transmission error on metadata messages (i.e., $\HMA$, $\ellMA$, $\HMB$, $\ellMB$, $\HMBtilde$, $\ellMBtilde$, $\HMAtilde$, $\ellMAtilde$)  occurs, or else,
 		\item The iteration is a type I recovery iteration. Alice and Bob are still reconciling an earlier inconsistency in their metadata and they are both in case i.A or case i.B, or one of them is in case ii.A and the other one in case ii.B. Else,
		\item A transmission error on quantum hash values or a quantum hash collision occurs. At least one party does not realize that $rd^->0$ and conducts one of the cases v, vi.A, vi.B, or vii.
	\end{itemize} 
	Moreover, the value of $rd^-$ increases from zero in an iteration only if
	\begin{itemize}
	    \item A metadata hash collision occurs and Alice and Bob act based on incorrect estimates of each other's recycling data, or else,
	    \item A transmission error on metadata messages occurs, or else,
	    \item A transmission error on quantum hash values occurs and only one party conducts case iv, or else,
	    \item A transmission error on the Pauli data messages (i.e., $\HPA$, $\ellPA$, $\HPB$, $\ellPB$, $\HPBtilde$, $\ellPBtilde$, $\HPAtilde$, $\ellPAtilde$) occurs and one party conducts case vi.A or vi.B while the other is in case vii. Else,
	    \item A transmission error occurs on the communicated QVC messages when both parties conduct case vii.
	\end{itemize}
	Assuming the number of metadata hash collisions is at most $c_1n\epsilon$, by Lemma~\ref{lem:type I recovery}, the total number of type I recovery iterations is at most $c_2n\epsilon$. The total number of transmission errors is at most $2n\epsilon$. Therefore, in at least $\alpha_\RD/3-\br{c_1+c_2+2}n\epsilon$ iterations a quantum hash collision occurs.  
	
	The shared random string used as the classical seed for quantum hashing is $\delta$-biased with $\delta=2^{-\Theta\br{n\sqrt{\epsilon}}}$. By Lemma~\ref{lem:stretch}, the seeds are also $\delta^{\Theta\br{1}}$-statistically close to being $\Theta\br{\Rtotal}$-wise independent. Therefore, all hashing steps are statistically close to being fully independent. Combined with Lemma~\ref{lem:quantum hash-delta biased}, this implies that the expected number of quantum hash collisions is at most $10^{-3}\alpha_\RD$. 
	For sufficiently large $c_3$, if $\alpha_\RD > c_3n\epsilon$, the Chernoff bound implies that the probability of having at least $\alpha_\RD/3-\br{c_1+c_2+2}n\epsilon$ quantum hash collisions is at most $2^{-\Theta(n\epsilon)}$. 
\end{proof}

\begin{definition}
    We refer to an iteration of Algorithm~\ref{algo:MainalgorithmQMessage} as a recovery iteration of type II if it is not a type I recovery iteration and at least one of Alice or Bob conducts one of the cases iii, iv, or v.
\end{definition}

We use the following lemma to bound the number of type II recovery iterations.

\begin{lemma} \label{lem:type II recovery}
	Suppose that in the first $i$ iterations of Algorithm~\ref{algo:MainalgorithmQMessage}, recycling is successful, the number of iterations suffering from a metadata hash collision is at most $c_1n\epsilon$ and the number of iterations suffering from a quantum hash collision is at most $c_3n\epsilon$. Then the total number of type II recovery iterations in the first $i$ iterations is at most $c_4n\epsilon$.
\end{lemma}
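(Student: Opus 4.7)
The plan is to follow the same template as the proof of Lemma~\ref{lem:type I recovery}, now with a potential function that tracks progress past the metadata/MES-count stage and into the recycling-data/quantum-hashing stage. Define
\begin{align*}
    \Phi_{\mathrm{II}} \defeq u + \Phi_{\MD} + \Phi_{\RD} \enspace.
\end{align*}
By Lemma~\ref{lem:potential-value-vs-data} together with the definition of $u$ in Eq.~\eqref{eqn:Q-u}, $\Phi_{\mathrm{II}}$ is always non-negative, and equals zero precisely when Alice and Bob have full knowledge of each other's metadata, have used the same number of MES blocks, agree on their recycling data up to the measurement pointers, their pointers satisfy $\ellRA = \ellRB = \ellQVCA = \ellQVCB$, and no undetected quantum error remains on a used MES block marked $\sS$ in the recycling data.

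The first step is to show that whenever $\Phi_{\mathrm{II}} = 0$ at the start of an iteration, that iteration is a type~II recovery iteration only if a transmission error, a metadata hash collision, or a quantum hash collision occurs. Indeed, under $\Phi_{\mathrm{II}} = 0$ both parties have consistent metadata and Pauli-data promises coming into \textbf{\textsf{Q-Preprocess}}, they both skip cases i, ii, and would also skip cases iii, iv, v unless one of the comparisons in \textbf{\textsf{Q-Preprocess}} is disturbed. A mismatch in Alice's tests $\br{\HMA,\HMBtilde,\ldots} \neq \br{\HMAtilde',\HMB',\ldots}$, or $\br{\ellRA,\RA[1{:}\ellRA]} \neq \br{\ellRBtilde,\RBtilde[1{:}\ellRBtilde]}$, or $\QHA \neq \QHB'$ can arise starting from $\Phi_{\mathrm{II}} = 0$ only through a transmission error on the pointers/hashes, a metadata hash collision (which could make Alice compute $\RBtilde, \ellRBtilde$ inconsistently with $\RB, \ellRB$), or a quantum hash collision. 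By the hypotheses of the lemma and the bound of $2n\epsilon$ transmission errors, the number of such iterations in the first $i$ iterations is at most $(c_1 + c_3 + 2)\, n\epsilon$.

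The next step is to bound the number of iterations starting with $\Phi_{\mathrm{II}} > 0$. By inspection of the pseudocodes, $\Phi_{\mathrm{II}}$ increases by at most some absolute constant $K$ in any single iteration (the increments of $u$, $\Phi_{\MD}$, and $\Phi_{\RD}$ per iteration are all bounded, and the coefficient $13$ in front of $rd^-$ inside $\Phi_{\RD}$ together with the coefficient $3$ in front of $\mdAminus, \mdBminus$ and the coefficient $9$ in front of $u$ inside the associated terms were chosen with exactly this purpose). Moreover, $\Phi_{\mathrm{II}}$ can increase or stay the same at a positive value only when a transmission error or a metadata/quantum hash collision occurs; in any other iteration it strictly decreases, and in that case by at least $1$ (this is the analog of Lemma~\ref{lem:potential increase} restricted to the sub-potential $\Phi_{\mathrm{II}}$, which is what the coefficients in $\Phi_{\MD}, \Phi_{\RD}$ are calibrated for). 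Hence the number of iterations with $\Phi_{\mathrm{II}} > 0$ in the first $i$ iterations is at most $(K+1)\,(c_1 + c_3 + 2)\, n\epsilon$. Adding the two contributions and choosing $c_4$ large enough gives the claimed bound $c_4 \, n\epsilon$ on the total number of type~II recovery iterations.

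The main obstacle in carrying this plan out rigorously will be the first step: verifying case-by-case (over Alice's and Bob's independent conclusions in \textbf{\textsf{Q-Preprocess}}) that from $\Phi_{\mathrm{II}} = 0$ one cannot enter cases iii, iv, or v in the absence of a transmission error or a hash collision, and that each pairing of Alice's case with Bob's case is correctly accounted for. In particular, we must rule out subtle out-of-sync chains where, e.g., Alice is in case v while Bob is in case vii because of a pointer-transmission error in an earlier iteration; such scenarios are actually accounted for by the fact that the earlier error has already prevented $\Phi_{\mathrm{II}}$ from returning to zero, so they are captured in the $\Phi_{\mathrm{II}} > 0$ tally rather than in the $\Phi_{\mathrm{II}} = 0$ tally. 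A careful enumeration of transitions of $\Phi_{\MD}$, $u$, and $\Phi_{\RD}$ under each combination of Alice's and Bob's iteration types, analogous to the case analysis in Lemma~\ref{lem:potential increase}, then completes the argument.
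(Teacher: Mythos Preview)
Your overall template is right, but the specific potential you chose has a gap. With $\Phi_{\mathrm{II}} = u + \Phi_{\MD} + \Phi_{\RD}$, consider an error- and collision-free iteration in Case~ii (one party in ii.A, the other in ii.B). There $\Phi_{\MD}$ stays at $0$, $u$ decreases by exactly $1$, while $\ellQVCA+\ellQVCB$ increases by $1$ and $rd^+,rd^-$ are unchanged, so $\Phi_{\RD}$ \emph{increases} by $1$. The net change of $\Phi_{\mathrm{II}}$ is $0$, not a strict decrease. Since Case~ii is entered precisely when $u>0$, this is an iteration with $\Phi_{\mathrm{II}}>0$ that is not caused by any transmission error or hash collision, contradicting your key claim that ``$\Phi_{\mathrm{II}}$ can increase or stay the same at a positive value only when a transmission error or a metadata/quantum hash collision occurs.'' The coefficients in $\Phi_{\MD},\Phi_{\RD}$ are calibrated for the \emph{full} potential $\Phi$ (where the large coefficient $9$ on $u$ inside $\Phi_{\mathrm Q}$ absorbs the $+1$ from $\Phi_{\RD}$ and the $+6$ from $\Phi_{\PD}$), not for your sub-potential.

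There are two easy fixes. One is to reweight, e.g.\ take $\Phi_{\mathrm{II}}=2u+\Phi_{\MD}+\Phi_{\RD}$; then Case~ii gives $-2+0+1=-1$ and all other error-free cases still give a decrease of at least $1$. The other, which is what the paper does, is to drop $u+\Phi_{\MD}$ and work with $\Phi_{\RD}$ alone: bound separately the type~II iterations starting with $\Phi_{\RD}=0$ (these require a metadata collision/transmission error or a quantum-hash transmission error), and the iterations with $\Phi_{\RD}>0$ where $\Phi_{\RD}$ fails to decrease. The latter set now explicitly includes type~I recovery iterations, so you invoke the already-proved Lemma~\ref{lem:type I recovery} (bound $c_2 n\epsilon$) rather than trying to absorb them into the potential. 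Either route gives $c_4$ as a function of $c_1,c_2,c_3$.
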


\begin{proof}	  
	Note that by Lemma~\ref{lem:potential-value-vs-data}, $\Phi_{\RD}$ is always non-negative. If at the beginning of an iteration $\Phi_\RD=0$, then the iteration is a type II recovery iteration only if
	\begin{itemize}
		\item $\Phi_\MD>0$ but due to a metadata hash collision or a transmission error on metadata messages both Alice and Bob do not realize that. In this case they compute their estimates of each other's recycling data based on incorrect estimates of each other's metadata.
		\item $\Phi_\MD=0$ but a transmission error in communication of quantum hashes occurs. 
	\end{itemize}  
	
	Therefore, in the first $i$ iterations the total number of type II recovery iterations starting with $\Phi_\RD=0$ is at most~$\br{c_1+2}n\epsilon$. \\
	Let $\beta_\RD$ denote the number of iterations starting with $\Phi_\RD > 0$ in the first $i$ iterations. Assuming successful recycling in the preceding iterations, $\Phi_\RD>0$ increases or remains unchanged in an iteration only if
	\begin{itemize}
		\item The iteration is a type I recovery iteration, or else,
		\item $\Phi_\MD>0$ but due to a metadata hash collision or transmission errors on metadata messages, Alice and Bob don't realize that and act based on their incorrect estimates of each other's recycling data.
		\item $\Phi_\MD=0$, i.e., Alice and Bob have correct estimates of each other's recycling data but a quantum hash collision or a transmission error on quantum hash values occurs. 
	\end{itemize} 
	
	Moreover, $\Phi_\RD$ increases from zero in an iteration only if
	\begin{itemize}
	    \item A transmission error occurs, or else, 
	    \item A metadata hash collision occurs and the two parties act based on incorrect estimates of each other's recycling data.
	\end{itemize}
	
	Therefore, the number of iterations in the first $i$ iterations with $\Phi_\RD$ increasing or remaining unchanged at a nonzero value is at most~$\br{c_1+c_2+c_3+2}n\epsilon$. Note that in each iteration, regardless of the number of errors and collisions, $\Phi_\RD$ increases by at most $30$. Moreover, if $\Phi_\RD$ decreases, it decreases by at least $1$. This implies that the number of iterations in which $\Phi_\RD$ decreases is at most $30\br{c_1+c_2+c_3+2}n\epsilon$. So, we have $\beta_\RD \leq 31\br{c_1+c_2+c_3+2}n\epsilon$. \\
	Therefore, the total number of recovery iterations of type II in the first $i$ iterations is at most $c_4n\epsilon$, where $c_4 \defeq 31\br{c_1+c_2+c_3+2}+\br{c_1+2}$.	
\end{proof}

\begin{definition}
    We say an iteration of Algorithm~\ref{algo:MainalgorithmQMessage} suffers from a \emph{Pauli data hash collision\/} when recycling has been successful so far, Alice and Bob know each other's metadata, agree on the number of MES blocks they have used ($\ellQVCA=\ellQVCB$), agree on their recycling data, their measurement pointers satisfy $\ellRA=\ellRB=\ellQVCA$, all the non-measured MES blocks are in the $\ket{\phi^{0,0}}^{\otimes 4r}$ state and $\HPA=\HPAtilde$ despite the fact that $\PA \neq \PAtilde$ or $\HPB=\HPBtilde$ despite the fact that $\PB \neq \PBtilde$. Note that we distinguish between the above scenario and when for instance~$\HPA=\HPAtilde'$ due to a transmission error on $\HPAtilde$.
\end{definition}

We use the following lemma to bound the number of iterations suffering from a Pauli data hash collision.

\begin{lemma} \label{lem:PD collisions}
	Suppose that in the first $i$ iterations of Algorithm~\ref{algo:MainalgorithmQMessage}, recycling is successful, the number of iterations suffering from a metadata hash collision is at most $c_1n\epsilon$ and the number of iterations suffering from a quantum hash collision is at most $c_3n\epsilon$. Then the number of iterations suffering from a Pauli data hash collision in the first $i$ iterations is at most $c_5n\epsilon$ with probability at least $1-2^{-\Theta(n\epsilon)}$.
\end{lemma}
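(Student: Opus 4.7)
\begin{proofof}{Lemma (Pauli data hash collisions), proof plan}
My plan is to follow the same template used for Lemma~\ref{lem:MD collisions} and Lemma~\ref{lem:QH collisions}. A Pauli data hash collision can occur in the $i$-th iteration only if $\pdAminus + \pdBminus \neq 0$ at the beginning of that iteration, so I let $\alpha_\PD$ denote the number of such iterations in the first $i$ iterations and aim to show $\Pr\br{\alpha_\PD > c_5 n\epsilon} \leq 2^{-\Theta\br{n\epsilon}}$. Since $\pdAminus + \pdBminus$ changes by a bounded constant per iteration (at most by $8$, by the definition of $\pdAminus$ and $\pdBminus$), and decreases by at least $1$ whenever it decreases, a constant fraction of the $\alpha_\PD$ iterations must have $\pdAminus + \pdBminus$ increasing or remaining unchanged at a nonzero value.

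Next I would enumerate the events that can cause $\pdAminus + \pdBminus > 0$ to fail to decrease in an iteration, assuming successful recycling in earlier iterations: (i)~a transmission error on the Pauli data messages $\br{\HPA, \ellPA, \HPBtilde, \ellPBtilde, \HPAtilde', \ellPAtilde', \HPB', \ellPB'}$ or on the QVC messages in case vii, (ii)~a metadata hash collision or transmission error on metadata messages causing the parties to act on incorrect estimates of each other's Pauli data, (iii)~a quantum hash collision or transmission error on quantum hashes causing one party to enter case vii while the other is still in case iv, (iv)~a type~I recovery iteration, (v)~a type~II recovery iteration, or (vi)~a Pauli data hash collision itself. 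The cases in which $\pdAminus + \pdBminus$ increases from $0$ are a strict subset of these and are accounted for similarly. Using the hypothesized bounds of $c_1 n \epsilon$ on metadata hash collisions (Lemma~\ref{lem:MD collisions}) and $c_3 n \epsilon$ on quantum hash collisions (Lemma~\ref{lem:QH collisions}), together with Lemma~\ref{lem:type I recovery} (at most $c_2 n \epsilon$ type~I recoveries), Lemma~\ref{lem:type II recovery} (at most $c_4 n \epsilon$ type~II recoveries), and the total bound $2 n \epsilon$ on transmission errors, I can conclude that at least $\alpha_\PD / k - \br{c_1 + c_2 + c_3 + c_4 + 2} n \epsilon$ of these iterations must themselves suffer a Pauli data hash collision, for some absolute constant $k$ determined by the per-iteration change bound.

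The final step is the probabilistic argument. The hash values for the Pauli data use independent seeds $S_{4i-1}$ in distinct iterations and the collision probability of the hash family of Lemma~\ref{lem:hashes} is a fixed constant (say, bounded by $0.1$). Hence the expected number of Pauli data hash collisions among these iterations is at most $\alpha_\PD / 10$. Choosing $c_5$ sufficiently large relative to $c_1, c_2, c_3, c_4$ so that $c_5 / k - \br{c_1 + c_2 + c_3 + c_4 + 2} > c_5 / 5$, the Chernoff bound gives $\Pr\br{\alpha_\PD > c_5 n \epsilon} \leq 2^{-\Theta\br{n \epsilon}}$, as desired.

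The only delicate point I foresee is item~(i) above concerning QVC messages: a transmission error in case vii directly perturbs the joint quantum state but does not by itself create a discrepancy between $\FullPA$ and $\PAtilde$; such a discrepancy arises only because the receiver records a corrupted $\msg'$ into $\PBtilde$. I expect this to still count as only one contribution of $O(1)$ to $\pdAminus + \pdBminus$ per erroneous transmission, so it is absorbed into the $2n\epsilon$ transmission budget; I would verify this bookkeeping carefully. Everything else mirrors the earlier lemmas almost verbatim.
\end{proofof}
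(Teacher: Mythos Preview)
Your proposal is correct and follows essentially the same argument as the paper's proof: define $\alpha_\PD$, use the bounded per-iteration change of $\pdAminus+\pdBminus$ to extract a constant fraction of non-decreasing iterations, enumerate the possible causes (type~I and type~II recovery, metadata collisions, transmission errors, Pauli data collisions), and finish with the Chernoff bound using the constant collision probability of the hash family. Your bookkeeping is slightly more conservative than the paper's (you carry an extra $c_3$ term, whereas the paper uses $c_1+c_2+c_4+2$), and your flagged concern about case~vii is in fact moot in the QVC-based model: \textsf{Q-simulate} records only the locally computed $\PCorr,\PCorrtilde$ into the Pauli data and never writes any received $\msg'$ there, so a transmission error on the QVC messages affects the quantum state via the syndrome $W$ but cannot create a discrepancy in $\pdAminus+\pdBminus$.
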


\begin{proof}	
	Note that a Pauli data hash collision occurs in an iteration only if $\pdAminus + \pdBminus \neq 0$ at the beginning of the iteration. Let $\alpha_\PD$ denote the number of such iterations in the first $i$ iterations of Algorithm~\ref{algo:MainalgorithmQMessage}. We prove that 
	\[\Pr\br{\alpha_\PD > c_5n\epsilon} \leq 2^{-\Theta(n\epsilon)} \enspace.
	\]
	Note that in any iteration $\pdAminus + \pdBminus$ increases by at most $8$. Moreover, in an iteration with $\pdAminus + \pdBminus \neq 0$, if $\pdAminus + \pdBminus$ decreases, it decreases by at least $1$. Therefore, in at least $\alpha_\PD/9$ iterations, $\pdAminus + \pdBminus$ increases or remains unchanged at a nonzero value. Note that when $\pdAminus + \pdBminus >0$ increases or remains unchanged in an iteration, it is due to one of the following reasons:
	\begin{itemize}
		\item The iteration is a type I recovery iteration, or else,
		\item The iteration is a type II recovery iteration, or else,
		\item A Pauli data hash collision or a transmission error on Pauli data messages (i.e., $\HPA$, $\ellPA$, $\HPB$, $\ellPB$, $\HPBtilde$, $\ellPBtilde$, $\HPAtilde$, $\ellPAtilde$) occurs.
	\end{itemize} 
	
	Moreover, $\pdAminus + \pdBminus$ increases from zero in an iteration only if
	\begin{itemize}
	    \item A metadata hash collision or transmission error on the metadata messages occurs, or else, 
	    \item A transmission error on the quantum hash values occurs, or else,
	    \item A transmission error on the Pauli data messages occurs, or else,
	    \item Both parties conduct case vi.B and due to a transmission error, at least one of Alice or Bob extends her/his estimate of the other party's Pauli data incorrectly.
	\end{itemize}
	
	Assuming the number of iterations of Algorithm~\ref{algo:MainalgorithmQMessage} suffering from a metadata hash collision is at most $c_1n\epsilon$ and the number of iterations suffering from a quantum hash collision is at most $c_3n\epsilon$, the total number of type I and type II iterations is at most~$\br{c_2+c_4}n\epsilon$.
	The number of transmission errors is at most $2n\epsilon$. Therefore, in at least $\alpha_\PD/9-\br{c_1+c_2+c_4+2}n\epsilon$ iterations a Pauli data hash collision occurs.  

	Since the algorithm uses independent seeds in each iteration and the probability of a collision is chosen to be $0.1$, the expected number of Pauli data hash collisions is at most $\alpha_\PD/10$. For sufficiently large $c_5$, if $\alpha_\PD > c_5n\epsilon$, the Chernoff bound implies that the probability of having so many Pauli data hash collisions in the first $i$ iterations is at most $2^{-\Theta\br{n\epsilon}}$.
\end{proof}

\begin{definition}
    We refer to an iteration of Algorithm~\ref{algo:MainalgorithmQMessage} as a recovery iteration of type III if it is not a type I or type II recovery iteration and at least one of Alice or Bob conducts one of the cases vi.A or vi.B.
\end{definition}

We use the following lemma to bound the number of type III recovery iterations.

\begin{lemma} \label{lem:type III recovery}
	Suppose that in the first $i$ iterations of Algorithm~\ref{algo:MainalgorithmQMessage}, recycling is successful and the number of iterations suffering from metadata , quantum and Pauli data hash collisions is at most $c_1n\epsilon$, $c_3n\epsilon$ and $c_5n\epsilon$, respectively. Then the total number of type III recovery iterations in the first $i$ iterations is at most $c_6n\epsilon$.
\end{lemma}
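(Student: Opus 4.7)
My plan is to mimic the structure of the proofs of Lemmas~\ref{lem:type I recovery} and~\ref{lem:type II recovery}, using $\Phi_{\PD}$ as the governing potential. By Lemma~\ref{lem:potential-value-vs-data}, $\Phi_{\PD} \geq 0$ with equality exactly when both parties have correct full-length estimates of each other's Pauli data. A type III recovery iteration involves at least one party conducting case vi.A or vi.B, so in particular the hash comparison on Pauli data failed on that party's side, or one of the length counters satisfies $\ellPA < 6q_{\MA}\cdot r$ or $\ellPBtilde < 6q_{\MBtilde}\cdot r$ (or the analogous conditions on Bob's side). The first step is to split the count of type III iterations in the first $i$ iterations into those starting with $\Phi_{\PD}=0$ and those starting with $\Phi_{\PD}>0$.

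For iterations starting with $\Phi_{\PD}=0$, a type III recovery can only be triggered if at least one of the players believes their shared Pauli data are inconsistent or incomplete when in fact they are synchronized; this requires either a Pauli data hash collision, a transmission error on one of the Pauli data messages $(\HPA,\ellPA,\HPB,\ellPB,\HPAtilde,\ellPAtilde,\HPBtilde,\ellPBtilde)$, or an earlier inconsistency in the metadata or the recycling/measurement data that misleads a party into entering case vi when the opposite party is not; by definition the latter would make the iteration a type I or type II recovery iteration, contradicting the assumption. Hence the number of such iterations is at most $(c_5+2+c_2+c_4)\,n\epsilon$, using the hypothesized bound on Pauli data hash collisions and the bound of $2n\epsilon$ on transmission errors, and the bounds from Lemmas~\ref{lem:type I recovery} and~\ref{lem:type II recovery}.

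For iterations starting with $\Phi_{\PD}>0$, I argue that $\Phi_{\PD}$ can only increase or remain at a positive value if (a) the iteration is a recovery iteration of type I or type II, (b) a metadata or quantum hash collision or transmission error occurs (which can cause one party to mis-extend its Pauli data estimate or mis-locate a Pauli data block because the metadata guides how $\PBtilde$ is indexed), or (c) a Pauli data hash collision or transmission error on the Pauli data messages occurs. In all other cases, a type III recovery iteration actually decreases $\Phi_{\PD}$ by at least $1$ (rewinds in vi.A decrease one of $\pdAminus,\pdBminus$ by at least~$1$; extensions in vi.B increase one of $\pdAplus,\pdBplus$ by at least~$1$). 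Since $\Phi_{\PD}$ can increase by at most a fixed constant in any single iteration (bounded by~$26$, given that $\pdAminus,\pdBminus$ each jump by at most~$4$ and $\pdAplus,\pdBplus$ each drop by at most~$1$ per iteration while $q_\MA,q_\MB$ each rise by at most~$1$), the total number of iterations with $\Phi_{\PD}$ rising or staying positive is $O((c_1+c_2+c_3+c_4+c_5+2)\,n\epsilon)$, and consequently the number of iterations where it decreases is of the same order.

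The main obstacle, and the step I would treat most carefully, is bookkeeping the ways an earlier metadata or recycling inconsistency can spuriously inflate $\Phi_{\PD}$ at the start of an iteration without a fresh Pauli-data error or collision occurring: one has to trace how $\PBtilde$ (and $\PAtilde$) are indexed using $\MBtilde$ (resp.\ $\MAtilde$) and $q_{\MBtilde}$, and argue that such spurious inflations can be absorbed into the already-counted type I and type II recovery iterations. Assuming this accounting is carried out as above, combining the two cases gives a bound of $c_6 n\epsilon$ on the total number of type III recovery iterations for a sufficiently large constant $c_6$ depending only on $c_1,\dotsc,c_5$.
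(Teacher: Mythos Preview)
Your proposal follows the paper's proof closely: both arguments use $\Phi_{\PD}$ as the controlling potential, split type III iterations according to whether $\Phi_{\PD}=0$ or $\Phi_{\PD}>0$ at the start, and bound the latter by the standard ``increase is rare and bounded, decrease is by at least $1$'' accounting. The structure and the decisive observations are the same as in the paper.

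One point to tighten. In the $\Phi_{\PD}=0$ case you argue that an underlying metadata or recycling inconsistency that tricks a party into entering case~vi ``would make the iteration a type~I or type~II recovery iteration, contradicting the assumption.'' That inference is not correct: if a \emph{metadata hash collision} (or transmission error on the metadata messages) occurs, both parties may pass the checks for cases~i--v while one of them, using a wrong $q_{\MBtilde}$ (or $q_{\MAtilde}$), enters case~vi.B; the other party may be in case~vii, so the iteration is genuinely type~III and not type~I/II. The paper handles this by explicitly including metadata hash collisions and transmission errors in the $\Phi_{\PD}=0$ bucket, giving the bound $(c_1+2)n\epsilon$ there. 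Your final bound $(c_5+2+c_2+c_4)n\epsilon$ happens to dominate $(c_1+2)n\epsilon$ numerically, so the conclusion survives, but the justification you give for that line is wrong and should be replaced by the direct observation that a metadata hash collision or transmission error is required. Also note that a Pauli data hash collision is impossible when $\Phi_{\PD}=0$ (since $\PA=\PAtilde$ and $\PB=\PBtilde$), so there is no need to invoke $c_5$ in that bucket. Finally, the paper uses $22$ rather than $26$ for the per-iteration increase of $\Phi_{\PD}$; either constant works, and your inclusion of case~(b) in the $\Phi_{\PD}>0$ analysis is harmless overcounting (the paper gets by with type~I/II recoveries plus Pauli-data collisions and transmission errors).
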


\begin{proof}
	Note that by Lemma~\ref{lem:potential-value-vs-data}, $\Phi_{\PD}$ is always non-negative and it is equal to zero if and only if Alice and Bob have full knowledge of each other's Pauli data. If at the beginning of an iteration $\Phi_\PD=0$, then the iteration is a type III recovery iteration only if 
	\begin{itemize}
	    \item A transmission error occurs on the Pauli data messages, or else,
	    \item A metadata hash collision or a transmission error on metadata messages occurs. In this case at least one party incorrectly believes that his/her estimate of the other party's Pauli data is not full-length.
	\end{itemize} 
	
	Therefore, in the first $i$ iterations the total number of type II recovery iterations starting with $\Phi_\PD=0$ is at most~$\br{c_1+2}n\epsilon$.
	
	Let $\beta_\PD$ denote the number of iterations starting with $\Phi_\PD > 0$ in the first $i$ iterations. Note that in any iterations $\Phi_\PD>0$ increases or remains unchanged only if
	\begin{itemize}
		\item The iteration is a type I recovery iteration, or else,
		\item The iteration is a type II recovery iteration, or else,
		\item A Pauli data hash collision or a transmission error on Pauli data messages occurs. 
	\end{itemize}
	
	Moreover, $\Phi_\PD$ increases from zero in an iteration only if
		\begin{itemize}
	    \item A metadata hash collision or transmission error on the metadata messages occurs, or else,
	    \item The iteration is a type I recovery iteration in which one party conducts case ii.A and the other case ii.B, or else,
	    \item A transmission error on the quantum hash values occurs, or else,
	    \item The iteration is a type II recovery iteration in which both parties conduct case iii or both conduct case iv, or else,
	    \item A transmission error on the Pauli data messages occurs.
	\end{itemize}
	
	Therefore, the number of iterations in the first $i$ iterations with $\Phi_\PD$ increasing or remaining unchanged at a nonzero value is at most~$\br{c_1+c_2+c_4+c_5+2}n\epsilon$. Note that in each iteration, regardless of the number of errors and collisions, $\Phi_\PD$ increases by at most $22$. Moreover, if $\Phi_\PD$ decreases, it decreases by at least $1$. This implies that the number of iterations in which $\Phi_\PD$ decreases is at most $22\br{c_1+c_2+c_4+c_5+2}n\epsilon$. So, we have $\beta_\PD \leq 23\br{c_1+c_2+c_4+c_5+2}n\epsilon$.
	
	Therefore, the total number of recovery iterations of type III in the first $i$ iterations is at most $c_6n\epsilon$, where $c_6 \defeq 23\br{c_1+c_2+c_4+c_5+2}+\br{c_1+2}$.	
\end{proof}

 Let $\omega_i$ denote the number of iterations suffering from a transmission error or a hash collision in the first $i$ iterations, plus the number of recovery iterations of type I, II, or III, in the first $i$ iterations. Note that the bounds and probabilities in Lemmas~\ref{lem:MD collisions}--\ref{lem:type III recovery} are all independent of the iteration number $i$. As a corollary we have:
 
 \begin{cor} \label{cor:errors-collisions}
 	There exist ~$q=2^{-\Theta(n\epsilon)}$ and a constant $c_7$ such that, for every $i\in \Br{\Rtotal}$, assuming successful recycling in the first $i$ iterations of Algorithm~\ref{algo:MainalgorithmQMessage}, except with probability at most~$q$, we have $\omega_i \leq c_7 n\epsilon$.
 \end{cor}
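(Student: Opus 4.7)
The plan is to combine the six bounds established in Lemmas~\ref{lem:MD collisions}--\ref{lem:type III recovery} together with the deterministic bound of $2n\epsilon$ on transmission errors via a straightforward union bound. Recall that the total number of iterations is at most~$\Rtotal \leq 2n$, so the adversary's error budget bounds the number of iterations with a transmission error by~$2n\epsilon$. The three quantities counted probabilistically are the numbers of metadata, quantum, and Pauli data hash collisions, while the three types of recovery iterations are bounded deterministically once the collision counts and the transmission-error count are controlled.

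First I would set up the union bound. Fix $i \in [\Rtotal]$ and condition on successful recycling in the first $i$ iterations. Let $\mathcal{E}_1$ be the event that the number of metadata hash collisions exceeds $c_1 n\epsilon$, $\mathcal{E}_2$ the analogous event for quantum hash collisions with threshold $c_3 n\epsilon$, and $\mathcal{E}_3$ the analogous event for Pauli data hash collisions with threshold $c_5 n\epsilon$. By Lemmas~\ref{lem:MD collisions}, \ref{lem:QH collisions}, and \ref{lem:PD collisions}, each of these events occurs with probability at most $2^{-\Theta(n\epsilon)}$; note that Lemma~\ref{lem:QH collisions} is applied conditionally on the complement of $\mathcal{E}_1$, and Lemma~\ref{lem:PD collisions} is applied conditionally on the complement of $\mathcal{E}_1 \cup \mathcal{E}_2$, which is consistent since each lemma's hypothesis is precisely the collision-count bound assumed in the preceding lemmas. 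A union bound gives $\Pr[\mathcal{E}_1 \cup \mathcal{E}_2 \cup \mathcal{E}_3] \leq 3 \cdot 2^{-\Theta(n\epsilon)} = 2^{-\Theta(n\epsilon)}$.

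Next, I would assemble the bound on $\omega_i$ on the complement of $\mathcal{E}_1 \cup \mathcal{E}_2 \cup \mathcal{E}_3$. Under this event, the number of metadata, quantum, and Pauli data hash collisions are at most $c_1 n\epsilon$, $c_3 n\epsilon$, and $c_5 n\epsilon$ respectively. By Lemma~\ref{lem:type I recovery}, the number of type I recovery iterations is at most $c_2 n\epsilon$; by Lemma~\ref{lem:type II recovery}, the number of type II recovery iterations is at most $c_4 n\epsilon$; and by Lemma~\ref{lem:type III recovery}, the number of type III recovery iterations is at most $c_6 n\epsilon$. Adding in the at most $2n\epsilon$ iterations with a transmission error yields
\begin{equation*}
\omega_i \;\leq\; (2 + c_1 + c_2 + c_3 + c_4 + c_5 + c_6) \, n\epsilon \;\leq\; c_7 \, n\epsilon,
\end{equation*}
where $c_7$ is chosen to be a sufficiently large constant depending only on $c_1,\ldots,c_6$, as required by the constant hierarchy $c_1 < \cdots < c_9$ set up earlier.

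The main thing to double-check is that the probabilistic bounds in Lemmas~\ref{lem:MD collisions}, \ref{lem:QH collisions}, and \ref{lem:PD collisions} are uniform in~$i$ (they are, since each bound holds for any prefix of length $i \leq \Rtotal \leq 2n$ and the $2^{-\Theta(n\epsilon)}$ failure probability does not depend on~$i$), so that setting $q \defeq 3 \cdot 2^{-\Theta(n\epsilon)} = 2^{-\Theta(n\epsilon)}$ yields the statement of the corollary. There is no serious obstacle here; the corollary is essentially a bookkeeping statement that packages Lemmas~\ref{lem:MD collisions}--\ref{lem:type III recovery} into a single quantity~$\omega_i$ that will be used in the subsequent inductive argument establishing successful recycling in every iteration.
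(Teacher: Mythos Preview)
Your proposal is correct and is exactly the intended argument: the paper states this result as an immediate corollary of Lemmas~\ref{lem:MD collisions}--\ref{lem:type III recovery} with only the remark that the bounds and probabilities in those lemmas are independent of the iteration number~$i$. Your explicit handling of the nested hypotheses (applying Lemma~\ref{lem:QH collisions} on the complement of~$\mathcal{E}_1$, and Lemma~\ref{lem:PD collisions} on the complement of~$\mathcal{E}_1\cup\mathcal{E}_2$) and the final union bound are precisely what the paper leaves implicit.
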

  
The following lemma is the last ingredient we need to prove successful recycling in every iteration of the simulation. Recall that $\RA^i$ and $\RB^i$ denote the recycling data of Alice and Bob, respectively, at the end of the $i$-th iteration.

\begin{lemma} \label{lem:recycling_requirements}
	Let $t=c_8n\epsilon$ where $c_8>3c_7$. Then for every $i\in \Br{\Rtotal}$ where $i\geq t$, if recycling is successful in the first $i-1$ iterations and $\omega_{i-1} \leq c_7 n\epsilon$, then:
	\begin{enumerate}
		\item $\RA^i\Br{1:i-t}=\RB^i\Br{1:i-t}$, i.e., at the end of iteration $i$, the recycling data of Alice and Bob agree in a prefix of length at least $i-t$.
		\item $\RA^i\Br{1:i-t}=\RA^{i+1}\Br{1:i-t}$, i.e., the prefix of $\RA$ of length $i-t$ does not get modified in the next iteration. The same statement holds for $\RB$.
		\item For every $k\in \Br{i-t}$ such that $\RA^i\Br{k}=\RB^i\Br{k}=\sS$, we have $W_{k}=0^{4r}$.
		\suppress{
		... in $\JStwo$ representation at the end of the $i$-th iteration the $k$-th block of MES registers is in the $\ket{\phi^{0,0}}^{\otimes 4r}$ state.
		} 
	\end{enumerate}
\end{lemma}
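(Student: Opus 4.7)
The strategy is to leverage the bound $\omega_{i-1} \leq c_7 n\epsilon$ together with the generous gap $t = c_8 n\epsilon > 3 c_7 n\epsilon$ to argue that positions with index $\leq i-t$ in $\RA$ and $\RB$ are ``frozen'' and correct at the end of iteration $i$. I will prove the three parts in the order 2, 1, 3, since Part 2 is a purely mechanical statement about pointer movement, Part 1 reduces to Part 2 together with inductive symmetry, and Part 3 requires the probabilistic reasoning of the quantum-hashing lemmas.

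For Part 2, inspection of the subroutines shows that $\RA[k]$ is written to $\sS$ exactly once (when index $k$ is first appended in \textbf{\textsf{Recycle}}) and can only be flipped to $\sM$ by (a) \textbf{\textsf{Q-syncMES}}, which writes at position $\ellQVCA$ after incrementing it; (b) \textbf{\textsf{rewindRD}} or \textbf{\textsf{measuresyndrome}}, which write at position $\ellRA$ and then decrement $\ellRA$. Since $\ellQVCA$ is monotonically non-decreasing and $\ellQVCA$ is incremented in every iteration that is not charged to $\omega$, we have $\ellQVCA^{i-1} > i - t$, so (a) cannot touch $[1:i-t]$ in iteration $i+1$. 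For (b), the net decrement of $\ellRA$ over any interval is bounded by the number of rewindRD/measuresyndrome steps, which is at most $\omega_i + O(1) < t/2$; this forces $\ellRA^{j} > i-t$ for $j \in \{i, i+1\}$.

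Part 1 follows by induction on $i$: in any ``clean'' iteration $j \leq i-t$ (no transmission error, no hash collision, not a recovery iteration), both parties share a consistent view and execute the same update to their respective recycling strings at the same position, so the update preserves $\RA[1:j] = \RB[1:j]$. Discrepancies can arise only from iterations charged to $\omega$, and by Part 2 any such discrepancy must be confined to positions $> i-t$. Combined with the identical initial content of the two strings, this yields $\RA^i[1:i-t] = \RB^i[1:i-t]$.

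Part 3 is the delicate part and is where the main obstacle lies. Consider $k \leq i - t$ with $\RA^i[k] = \RB^i[k] = \sS$; I must rule out $W_k \neq 0^{4r}$. Block $k$ was inside the hashing window $[\ellQVCA - t + 1,\ellQVCA]$ of \textbf{\textsf{Quantum-hash}} for roughly $t = \Theta(n\epsilon)$ iterations before being pushed past it. In every iteration inside this window in which the parties were fully synchronized (on metadata, recycling pointers, and quantum hash seeds) and the hash values were not corrupted in transit, Lemma~\ref{lem:quantum hash-delta biased} together with the pseudo-randomness guarantee of Lemma~\ref{lem:stretch} gives independent probability at most $1/2 + 2^{m/2}\delta$ of missing the error; detection would trigger \textbf{\textsf{measuresyndrome}}, flipping $\RA[k]$ to $\sM$ and contradicting the assumption. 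The subtlety, discussed in Section~\ref{sec:out-of-sync QH}, is that during out-of-sync phases quantum hashing can act on inconsistent subsets or touch already-measured MES blocks, so not every one of the $t$ in-window iterations yields a ``clean'' check. I will argue that the number of \emph{clean} checks is still $\Theta(n\epsilon)$, obtained by subtracting the at-most $O(c_7 n\epsilon)$ iterations charged to $\omega_i$ from the $t = c_8 n\epsilon$ window length using $c_8 > 3 c_7$, and then apply a Chernoff-type bound (as in the proofs of Lemmas~\ref{lem:MD collisions} and~\ref{lem:QH collisions}) to conclude that the probability of escaping detection across all clean checks is at most $2^{-\Theta(n\epsilon)}$, absorbable into the exceptional probability of Corollary~\ref{cor:errors-collisions}.
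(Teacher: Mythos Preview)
Your proposal has a genuine gap in Part 3, and a looser but related issue in Part 1.

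\textbf{Part 3.} You propose a fresh probabilistic argument: block $k$ sits in the quantum-hash window for $\Theta(n\epsilon)$ iterations, most of which are ``clean'', and in each clean iteration Lemma~\ref{lem:quantum hash-delta biased} gives an independent $\leq 1/2 + 2^{m/2}\delta$ chance of missing the error; a Chernoff bound then yields $2^{-\Theta(n\epsilon)}$, to be ``absorbed into the exceptional probability of Corollary~\ref{cor:errors-collisions}''. But the lemma is stated and used as a \emph{deterministic} implication: under the hypotheses ``recycling successful in the first $i-1$ iterations'' and ``$\omega_{i-1}\le c_7 n\epsilon$'', the three conclusions hold with certainty. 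This is exactly what the induction in Lemma~\ref{lem:successful-recycling} needs. The point you are missing is that $\omega$ \emph{already counts quantum hash collisions} (see the definition of $\omega_i$ just before Corollary~\ref{cor:errors-collisions}). The paper's proof of Part 3 is therefore pure counting, structurally identical to Part 1: if $W_k\neq 0^{4r}$ with $k=i-t'$ and the block is still $\sS$ on both sides at iteration $i$, then from the last iteration in which $\ellQVCA$ (or $\ellQVCB$) equalled $k$ until iteration $i$, every single iteration is either a type~I/II recovery iteration, or suffers a metadata/quantum-hash collision or transmission error --- all of which are charged to $\omega$. Together with the $\geq t_1$ recovery iterations needed for the pointer to have lagged by $t_1$ in the first place, one gets $\omega_{i-1}\ge t'-1\ge t-1$, contradicting $\omega_{i-1}\le c_7 n\epsilon$ since $c_8>3c_7$. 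The probabilistic work was done once and for all in Lemma~\ref{lem:QH collisions} and packaged into the hypothesis via Corollary~\ref{cor:errors-collisions}; redoing it here both changes the logical form of the lemma and double-counts.

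\textbf{Part 1.} Your inductive argument is also imprecise. You assert that ``by Part 2 any such discrepancy must be confined to positions $> i-t$'', but Part 2 as stated only says the prefix $[1{:}i-t]$ is frozen between iterations $i$ and $i{+}1$; it does not by itself rule out a discrepancy created at position $p\le i-t$ in some earlier iteration $j$ when $\ell_{\RA}^{\,j}=p$. The paper avoids this by arguing directly by contradiction, exactly as in Part 3: if $\RA^i[i-t']\neq\RB^i[i-t']$, split $t'=t_1+t_2$ where $t_2$ is the time since the last visit of $\ell_{\RA}$ to $i-t'$; the $t_1$ gap forces $\ge t_1/2$ recovery iterations up to then, and the unresolved discrepancy over the last $t_2$ iterations forces each of them to be charged to $\omega$, giving $\omega_{i-1}\ge t_1/2+t_2-1\ge t/2-1$, again a contradiction.

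Your Part 2 is essentially the paper's argument.
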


\begin{proof}
	
	Part $1$: 
	
	Toward contradiction, suppose that there exists $t'\in\Br{t,i-1}$ such that $\RA^i\Br{i-t'}\neq \RB^i\Br{i-t'}$. Without loss of generality, assume that $\RA^i\Br{i-t'}=\sM$ and $\RB^i\Br{i-t'}=\sS$. Suppose that the last time Alice's measurement pointer $\ellRA$ was equal to $i-t'$ was $t_2$ iterations earlier, i.e., iteration $i-t_2$. In that iteration, $\ellRA$ has distance $t_1\defeq t'-t_2$ from  $i-t_2$, the iteration number. Note that the distance between the iteration number and $\ellRA$ increases only in (some) recovery iterations and it increases by at most $2$: the distance remains the same if Alice is in case v or case vii. Otherwise, it increases by $1$ if $\ellRA$ does not move and increases by $2$ when it moves back. This implies that in the first $i-t_2$ iterations, there have been at least $t_1/2$ recovery iterations. In the $t_2$ iterations after that, there is an inconsistency in the recycling data which does not get resolved. In any of these iterations one of the following holds:
	\begin{itemize}
		\item A metadata hash collision or a transmission error on metadata messages occurs, or else,
		\item The iteration is a type I recovery iteration and Alice and Bob are still trying to reconcile an inconsistency in their metadata, or else, 
		\item The iteration is a type II recovery iteration. In this case, no metadata transmission error or collision occurs and the iteration is not a type I recovery iteration. So Alice and Bob know each other's recycling data and aware of the inconsistency, they are trying to resolve it.  
	\end{itemize}
	Therefore, in the first $i-1$ iterations, the number of recovery iterations plus the number of iterations suffering from a transmission error or a collision is at least
	\[ t_1/2+t_2-1 \geq t'/2-1 \geq t/2-1 \geq \frac{c_8}{2}n\epsilon-1\enspace,
	\] 
	contradicting~$\omega_{i-1} \leq c_7 n\epsilon$. Note that here we implicitly use the reasonable assumption that $n\epsilon$ is at least a constant.
	
	Part $2$:
	
	Suppose that recycling is successful in the first $i-1$ iterations and $\omega_{i-1}\leq c_7n\epsilon$. Note that by the same argument as in part $1$, at the end of iteration $i+1$,  the difference between the measurement pointers and the iteration number is at most $2\omega_{i-1}+4 \leq 2c_7n\epsilon+4 \leq t$. Therefore, the prefixes of both $\RA$ and $\RB$ of length $i-t$ do not get modified in the next iteration. 
	
	Part $3$:
	
	Note that the difference between the iteration number and $\ellQVCA$ increases only in (some) recovery iterations and it increases by at most $1$: The difference remains the same if Alice is in case ii.B or case vii. Otherwise, $\ellQVCA$ remains unchanged and the distance increases by $1$. The number of recovery iterations in the first $i$ iterations is at most $\omega_{i-1}+1 < t$. Therefore, at the end of the $i$-th iteration we have~$\min\{\ellQVCA,\ellQVCB\} > i-t$. 
	Toward contradiction, suppose that there exists $t'\in\Br{t,i-1}$ such that~$\RA^i\Br{i-t'}=\RB^i\Br{i-t'}=\sS$ and $W_{i-t'} \neq 0^{4r}$. \suppress{...in $\JStwo$ representation at the end of the $i$-th iteration the block $i-t'$ of MES registers is not in the $\ket{\phi^{0,0}}^{\otimes 4r}$ state.} This is due to one of the following scenarios:
	\begin{itemize}
		\suppress{
		\item The corresponding block of MES registers has been used only by one party for communication using QVC.
		}
		\item The corresponding block of MES registers has been used by both parties for communication using QVC, but in different iterations (out-of-sync QVC).
		\item It has been used in the same iteration by both parties for communication using QVC but transmission errors have occurred on the messages.
	\end{itemize}
	In any case, suppose that the last time one of the pointers $\ellQVCA$ or $\ellQVCB$ was equal to $i-t'$ was $t_2$ iterations earlier, i.e., iteration $i-t_2$ and without loss of generality, suppose that $\ellQVCA$ is that pointer. In iteration $i-t_2$, the pointer $\ellQVCA$ has distance $t_1\defeq t'-t_2$ from  $i-t_2$, the iteration number.  This implies that in the first $i-t_2$ iterations, there have been at least $t_1$ recovery iterations. In the $t_2$ iterations after that, the block of MES registers indexed by~$\IndexA\Br{i-t'}$ are not measured by any of the parties. In any of these iterations one of the following holds:
	\begin{itemize}
		\item A metadata hash collision or a transmission error on metadata messages occurs, or else,
		\item The iteration is a type I recovery iteration and Alice and Bob are still trying to reconcile an inconsistency in their metadata, or else, 
		\item A quantum hash collision or a transmission error on quantum hash values occurs, or else,
		\item The iteration is a type II recovery iteration. In this case, no metadata transmission error or collision occurs and the iteration is not a type I recovery iteration. So Alice and Bob know each other's recycling data. So Alice and Bob are both be in case iii or both in case iv.
	\end{itemize}
	\suppress{Note that, if at the $i$-th iteration only one party has used the block of MES registers, then in any of the previous $t_2$ iterations exactly one of the first two scenarios occurs.}
	The above argument implies that in the first $i-1$ iterations, the number of recovery iterations plus the number of iterations suffering from a transmission error or a collision is at least
	\[ t_1+t_2-1 = t'-1 \geq t-1 \geq c_8n\epsilon-1\enspace,
	\] 
	contradicting~$\omega_{i-1} \leq c_7 n\epsilon$.
\end{proof}

We are now ready to prove that except with exponentially small probability recycling is successful in every iteration of the algorithm. Recall that we denote Alice's recycling pointer at the end of iteration $i$ by $\ellNextMESA^{i}$. We use $m_i^\mathrm{A}$ to denote the number of $\sM$ symbols in~$\RA^i\Br{1:\ellNextMESA^{i}}$. Similarly, $\ellNextMESB^{i}$ denotes Bob's recycling pointer at the end of iteration $i$ and the number of $\sM$ symbols in~$\RB^i\Br{1:\ellNextMESB^{i}}$ is denoted by $m_i^\mathrm{B}$.

\begin{lemma} \label{lem:successful-recycling}
	Let $\LQVC=c_9n\epsilon$, where $c_9 > c_7+c_8$. Then with probability at least $1-2^{-\Theta(n\epsilon)}$, recycling is successful throughout the execution of Algorithm~\ref{algo:MainalgorithmQMessage}.
\end{lemma}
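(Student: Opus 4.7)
The plan is to prove successful recycling in every iteration by strong induction on the iteration number~$i$, combined with a union bound over the small probability events of Corollary~\ref{cor:errors-collisions}. Let $\mathcal{E}_i$ denote the event that recycling is successful in iterations $1,\ldots,i$. I will show that $\Pr\br{\mathcal{E}_i \mid \mathcal{E}_{i-1}} \geq 1 - q$ for a single fixed $q = 2^{-\Theta(n\epsilon)}$, so that a union bound over $\Rtotal = O(n\sqrt{\epsilon})$ iterations gives the claimed bound.

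For the base case, consider iterations $i \leq \LQVC$. In this range the \textbf{\textsf{Recycle}} subroutine (Algorithm~\ref{algo:Recycle}) directly sets $\NextMESIndexA = \NextMESIndexB = i$ without consulting $\RA$ or $\RB$, and the block of MESs indexed by $i$ has never been touched by either party, so it is still in the state $\ket{\phi^{0,0}}^{\otimes 4r}$. All three conditions in Definition~\ref{def:successful recycling} hold trivially.

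For the inductive step $i > \LQVC$, condition on $\mathcal{E}_{i-1}$. Corollary~\ref{cor:errors-collisions} gives $\omega_{i-1} \leq c_7 n\epsilon$ except with probability $q = 2^{-\Theta(n\epsilon)}$; I will work conditionally on $\omega_{i-1} \leq c_7 n\epsilon$. Lemma~\ref{lem:recycling_requirements} then yields three crucial facts: (1) $\RA^{i-1}$ and $\RB^{i-1}$ agree on indices $\Br{1:(i-1)-t}$; (2) this prefix does not get modified in iteration $i$; and (3) at every index $k$ in this prefix where both strings carry $\sS$, the MES block indexed by $\IndexA\Br{k}$ is in state $\ket{\phi^{0,0}}^{\otimes 4r}$. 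Moreover, by the inductive hypothesis the index vectors $\IndexA$ and $\IndexB$ agree on this prefix, because every past successful recycling recorded the same index on both sides.

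The heart of the argument, and the main obstacle, is the following counting step. Each $\sM$ symbol in $\RA$ is produced only by the three measurement subroutines \textbf{\textsf{Q-syncMES}}, \textbf{\textsf{rewindRD}}, and \textbf{\textsf{measuresyndrome}}, each of which is invoked only in a type~I or type~II recovery iteration (cases ii.B, iii, or iv). Hence the total number of $\sM$ symbols in $\RA^{i-1}$ is at most $\omega_{i-1}+1 \leq c_7 n\epsilon + 1$, and the same bound holds for $\RB^{i-1}$. Because Alice's pointer $\ellNextMESA$ is monotone nondecreasing and, in each iteration past $\LQVC$, advances to the \emph{first} $\sS$ strictly greater than its previous value, every index in $\br{-t+1,\ldots,\ellNextMESA^i}$ is either an $\sS$ previously consumed by Alice (exactly $i-\LQVC$ such indices, one per recycling iteration past $\LQVC$) or an $\sM$. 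This gives $\ellNextMESA^i + t \leq (i-\LQVC) + (c_7 n\epsilon + 1)$, and choosing $c_9$ large enough that $c_9 > c_7 + c_8 + O(1)$ forces $\ellNextMESA^i \leq (i-1) - t$, so the pointer lies in the stable prefix. An identical argument applies to $\ellNextMESB^i$. Since $\RA$ and $\RB$ agree on this prefix, both pointers skip over the same $\sM$s and stop at the same $\sS$ position, and then $\NextMESIndexA^i = \IndexA\Br{\ellNextMESA^i} = \IndexB\Br{\ellNextMESB^i} = \NextMESIndexB^i \neq \perp$. Condition~(3) of Definition~\ref{def:successful recycling} follows immediately from part~3 of Lemma~\ref{lem:recycling_requirements}. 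A final union bound over the $\Rtotal$ iterations yields the lemma. The delicate part is verifying that the bound $\ellNextMESA^i + t \leq (i-1) - t$ really does follow from the choices of constants; this is where $c_8 > 3 c_7$ and $c_9 > c_7 + c_8$ enter, and a careful accounting of $\sM$ symbols contributed by each recovery case has to be carried out explicitly.
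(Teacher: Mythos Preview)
Your overall strategy---strong induction on $i$, a counting bound on the number of $\sM$ symbols, and an appeal to Lemma~\ref{lem:recycling_requirements}---is exactly the paper's, but the induction hypothesis you carry is too weak to close the loop. You take $\mathcal{E}_{i-1}$ to be only ``recycling successful in iterations $1,\ldots,i-1$'' and then assert that ``both pointers skip over the same $\sM$s and stop at the same $\sS$ position.'' That step needs $\ellNextMESA^{\,i-1}=\ellNextMESB^{\,i-1}$, and this does \emph{not} follow from $\mathcal{E}_{i-1}$ alone: successful recycling gives $\IndexA[\ellNextMESA^{\,j}]=\IndexB[\ellNextMESB^{\,j}]$ for each $j<i$, but block indices repeat in $\IndexA=\IndexB$, so equal indices do not force equal pointer positions. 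The paper's fix is to strengthen the inductive statement $\mathcal{T}_i$ to carry, in addition to successful recycling, both $\ellNextMESA^{\,i}=\ellNextMESB^{\,i}$ and the identity $\ellNextMESA^{\,i}=m_i^{\mathrm A}+(i-\LQVC)$ (with $m_i^{\mathrm A}$ the number of $\sM$'s in $\RA^i[1{:}\ellNextMESA^{\,i}]$). With pointer equality available at step $i-1$ and Lemma~\ref{lem:recycling_requirements} guaranteeing that $\RA$ and $\RB$ agree and are stable on the region the pointers traverse in iteration $i$, the induction goes through cleanly.

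There is also a bookkeeping slip in your count. The pointer $\ellNextMESA$ is initialised to $0$ and only visits positions $\geq 1$, so the initial segment $[-t{+}1,0]$ of $\RA$ (all $\sS$, never consumed) should not enter your dichotomy; the correct inequality is $\ellNextMESA^{\,i}\leq (i-\LQVC)+c_7n\epsilon+1$, not $\ellNextMESA^{\,i}+t\leq\ldots$. This is harmless for the conclusion---you actually need $\ellNextMESA^{\,i}\leq (i-1)-t$, and $\LQVC=c_9n\epsilon$ with $c_9>c_7+c_8$ gives that---but the bound you write at the end, $\ellNextMESA^{\,i}+t\leq (i-1)-t$, is neither what your count yields nor what the argument requires.
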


\begin{proof}
	The proof is based on induction on the iteration number. Note that recycling starts from iteration $\LQVC+1$.
	
	\textbf{Base case} ($i=\LQVC+1$):  Note that the conditions of Definition~\ref{def:successful recycling} are satisfied in the first $\LQVC$ iterations of the algorithm and we have $\IndexA\Br{1:\LQVC}=\IndexB\Br{1:\LQVC}=1:\LQVC$. Therefore, by Corollary~\ref{cor:errors-collisions}, except with probability at most~$q=2^{-\Theta(n\epsilon)}$, we have~$\omega_{\LQVC} \leq c_7 n\epsilon$. Assuming~$\omega_{\LQVC} \leq c_7 n\epsilon$, by Lemma~\ref{lem:recycling_requirements},~$\RA^{\LQVC+1}\Br{1:\LQVC+1-t}=\RB^{\LQVC+1}\Br{1:\LQVC+1-t}$ and for every $k\in \Br{\LQVC+1-t}$ such that $\RA^{\LQVC+1}\Br{k}=\RB^{\LQVC+1}\Br{k}=\sS$, we have $W_{k}=0^{4r}$. \suppress{in $\JStwo$ representation at the end of iteration $\LQVC+1$, the $k$-th block of MES registers is in the $\ket{\phi^{0,0}}^{\otimes 4r}$ state.} Note that the number of $\sM$ symbols in $\RA^{\LQVC+1}\Br{1:\LQVC+1-t}$ and $\RB^{\LQVC+1}\Br{1:\LQVC+1-t}$ is at most the number of type I and type II recovery iterations so far, hence at most $\omega_{\LQVC+1} \leq \omega_{\LQVC}+1 \leq c_7 n\epsilon+1 < \LQVC+1-t$. Therefore, after running the \textsf{Recycle} subroutine, the algorithm does not abort and at the end of iteration $\LQVC+1$, the recycling pointers are equal, i.e., $\ellNextMESA^{\LQVC+1}=\ellNextMESB^{\LQVC+1}$. Together with the fact that $\IndexA\Br{1:\LQVC}=\IndexB\Br{1:\LQVC}$, this implies that the conditions of Definition~\ref{def:successful recycling} are satisfied. Therefore, there exists an event $\mathcal{E}$ of probability at most~$q=2^{-\Theta(n\epsilon)}$ such that if $\lnot\mathcal{E}$ then,
	\begin{itemize}
		\item recycling is successful in iteration $\LQVC+1$,
		\item $\ellNextMESA^{\LQVC+1}=\ellNextMESB^{\LQVC+1}$, and
		\item $\ellNextMESA^{\LQVC+1}=m_{\LQVC+1}^\mathrm{A}+1$ and $\ellNextMESB^{\LQVC+1}=m_{\LQVC+1}^\mathrm{B}+1$.
	\end{itemize}
	
	For $\LQVC < i \leq \Rtotal$, let $\mathcal{T}_i$ be the following statement in terms of the iteration number $i$:
	\begin{itemize}
		\item Recycling is successful in the first $i$ iterations of the algorithm,
		\item $\ellNextMESA^{i} = \ellNextMESB^{i}$, i.e., the recycling pointers are equal at the end of iteration $i$, and
		\item $\ellNextMESA^{i}=m_{i}^\mathrm{A}+i-\LQVC$ and $\ellNextMESB^{i}=m_{i}^\mathrm{B}+i-\LQVC$.
	\end{itemize}

	\textbf{Induction hypothesis:} For $\LQVC < i \leq \Rtotal$, there exists an event $\mathcal{E}_i$ of probability at most~$\br{i-\LQVC}\cdot q$ such that if $\lnot\mathcal{E}_i$ then $\mathcal{T}_i$ holds.
	 
	\textbf{Inductive step:} Assuming $\lnot\mathcal{E}_i$, by Corollary~\ref{cor:errors-collisions}, except with probability at most~$q=2^{-\Theta(n\epsilon)}$, we have~$\omega_{i} \leq c_7 n\epsilon$. Let $\mathcal{E}'$ be the event that~$\omega_{i} > c_7 n\epsilon$. Note that 
	$\Pr \left( \mathcal{E}'| \lnot \mathcal{E}_i\right) \leq q$.
	Suppose further that $\lnot \mathcal{E}'$. Since $\ellNextMESA^{i}=m_{i}^\mathrm{A}+i-\LQVC$, we have
	\[ i-t-\ellNextMESA^{i} = \LQVC-t-m_{i}^\mathrm{A} \geq \LQVC-t-\omega_{i} \geq \Omega(n\epsilon). 
	\]
	Therefore, by induction hypothesis, we also have $i-t-\ellNextMESB^{i} \geq \Omega(n\epsilon)$. By part $1$ of Lemma~\ref{lem:recycling_requirements}, we have $\RA^{i+1}\Br{1:i+1-t}=\RB^{i+1}\Br{1:i+1-t}$.  Since $\omega_{i-1} \leq \omega_{i} \leq c_7 n\epsilon$, by part $2$ of Lemma~\ref{lem:recycling_requirements}, we have $\RA^{i+1}\Br{1:i-t}=\RA^i\Br{1:i-t}$ and $\RB^{i+1}\Br{1:i-t}=\RB^i\Br{1:i-t}$. Therefore, the algorithm does not abort in iteration $i+1$ and we have $\ellNextMESA^{i+1} = \ellNextMESB^{i+1}$. Moreover, $\ellNextMESA^{i+1}=m_{i+1}^\mathrm{A}+(i+1)-\LQVC$ and $\ellNextMESB^{i+1}=m_{i+1}^\mathrm{B}+(i+1)-\LQVC$. Note that since recycling is successful in the first $i$ iterations, at the beginning of iteration $i+1$, we have $\IndexA=\IndexB$. So $\NextMESIndexA^{i+1}=\NextMESIndexB^{i+1}\neq \perp$, i.e., the first and second conditions of Definition~\ref{def:successful recycling} are satisfied for iteration $i+1$.
	
	By part $3$ of Lemma~\ref{lem:recycling_requirements}, for every $k\in \Br{i+1-t}$ such that $\RA^{i+1}\Br{k}=\RB^{i+1}\Br{k}=\sS$, we have $W_{k}=0^{4r}$.\suppress{in $\JStwo$ representation at the end of iteration $i+1$, the $k$-th block of MES registers is in the $\ket{\phi^{0,0}}^{\otimes 4r}$ state.}
	Note that in the strings $\IndexA$ and $\IndexB$, while each index in $\Br{\LQVC}$ may appear several times before the recycling pointers, it can only appear at most once after these pointers. Therefore, the block of MES registers indexed by $\NextMESIndexA^{i+1}$ is indeed in the $\ket{\phi^{0,0}}^{\otimes 4r}$ state when it is recycled in iteration $i+1$ and the third condition of Definition~\ref{def:successful recycling} is also satisfied.
	
	For $\mathcal{E}_{i+1} \defeq \mathcal{E}_i \lor \mathcal{E}'$, we have 
	\[
	\Pr \br{\mathcal{E}_{i+1}} \leq \Pr \br{\mathcal{E}_{i}}+ \Pr \br{\mathcal{E}'|\lnot \mathcal{E}_i} \leq \br{i-\LQVC}\cdot q+q=\br{i+1-\LQVC}\cdot q \enspace.
	\] 
	By the above argument, if $\lnot\mathcal{E}_{i+1}$ then $\mathcal{T}_{i+1}$ holds. Note that for $\LQVC < i \leq \Rtotal$, we have
	\[ \br{i-\LQVC}\cdot q = 2^{-\Theta(n\epsilon)} \enspace.
	\]
\end{proof}

\begin{lemma} \label{lem:Q-potential-increase}
	Assuming successful recycling throughout the execution of Algorithm~\ref{algo:MainalgorithmQMessage}, each iteration with no transmission error or hash collision increases the potential function $\Phi$ defined in Equation~\eqref{eqn:Q-phi} by at least $1$.
\end{lemma}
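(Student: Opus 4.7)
The plan is to carry out a case-by-case analysis over the ten subcases of Algorithm~\ref{algo:MainalgorithmQMessage} (i.A, i.B, ii.A/ii.B, iii, iv, v, vi.A, vi.B, vii), extending the approach of Lemma~\ref{lem:potential increase} to incorporate the recycling-data potential $\Phi_\RD$. The central observation, as in the teleportation-based analysis, is that in the absence of transmission errors or hash collisions both parties arrive at the same case (with ii.A/ii.B being complementary partners executed by the two players, and iii admitting an asymmetry driven by the relative order of $\ellRA$ and $\ellRBtilde$). Moreover, Lemma~\ref{lem:potential-value-vs-data} ensures that upon entering each stage the previous potentials are already zero: entering ii--vii means $\Phi_\MD=0$, entering vi--vii means additionally $\Phi_\RD=0$, and entering vii means $\Phi_\PD=0$ as well. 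This sequential structure isolates, in each case, the particular non-negative component of $\Phi$ that I need to decrease by at least one (or $\Phi_\mathrm{Q}$ that I need to increase by at least one), while all other components move favourably or not at all.

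For the metadata cases i.A and i.B, the analysis of $\Phi_\MD$ is essentially identical to the corresponding cases in Lemma~\ref{lem:potential increase}: the increment of $i$ by one contributes $+2$ to $\Phi_\MD$, while either an $\mdAminus$ or $\mdBminus$ term decreases by at least one (case i.A) or $\mdAplus+\mdBplus$ grows by at least three (case i.B), yielding $\Delta\Phi_\MD\le -1$ and hence $\Delta\Phi\ge 1$. The MES-synchronization cases ii.A/ii.B are the first ones involving the quantum state: together they reduce $u$ by one and raise $b$ by at most one, producing $\Delta\Phi_\mathrm{Q}\ge 8$; meanwhile $\ellQVCA+\ellQVCB$ grows by one and the measurement pointers $\ellRA,\ellRB$ are unchanged so $\Delta\Phi_\RD=1$, and $\Delta\Phi_\PD\le 6$ because the new block of measurement outcomes appears in one party's Pauli data without a matching estimate on the other side. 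The net is $\Delta\Phi\ge 8-0-1-6=1$. Cases iii and iv resolve one unit of recycling-data disagreement or one latent error witnessed by $rd^-$: the measurement pointer decreases by at least one, forcing $\Delta\Phi_\RD\le -1$. Case v (\textsf{extendRD}) mirrors i.B at the level of $\Phi_\RD$, pushing $rd^+$ forward by one while keeping $rd^-$ at zero. Cases vi.A and vi.B carry over verbatim from the teleportation-based protocol with the obvious substitution of $6r$-block Pauli data for $3r$-block Pauli data. Finally, Case vii (\textsf{Q-simulate}) either inverts the last bad block ($b\to b-1$) or appends a fresh good block ($g\to g+1$), giving $\Delta\Phi_\mathrm{Q}=1$ with all other potentials preserved at zero by consistent updates on both sides.

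The delicate step is verifying the coefficient $13$ in the definition of $\Phi_\RD$ and the coefficient $9$ in front of $u$ in $\Phi_\mathrm{Q}$. In cases iii and iv, measuring the block flagged as discrepant or erroneous produces a fresh block of Pauli-data symbols that can inflate $\pdAminus$ or $\pdBminus$ by a bounded amount and can also increase $b$ because the forced measurement alters the quantum state. Simultaneously, the recovery mechanisms for out-of-sync QVC and out-of-sync quantum hashing described in Sections~\ref{subsec:out-of-sync QVC} and~\ref{sec:out-of-sync QH} guarantee that at most a bounded number of additional good-block losses or ugly-block creations are triggered by a single iteration. I would therefore dedicate the main computational effort to tabulating, for each of the ten cases, the worst-case contributions to $\Delta b$, $\Delta u$, $\Delta\pdAminus$, $\Delta\pdBminus$, $\Delta\mdAminus$, $\Delta\mdBminus$, and $\Delta rd^-$, and then verifying that the coefficients $13$ and $9$ suffice to absorb all transient increases that can arise in a single no-error, no-collision iteration touching the recycling or QVC machinery. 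This bookkeeping is the analogue of Lemma~\ref{lem:simpleerrorpotential} in Section~\ref{subsec:polysizeclassicalanalysis}, and once carried out yields $\Delta\Phi\ge 1$ in every such iteration, completing the proof.
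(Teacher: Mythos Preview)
Your overall strategy and your treatment of cases i, ii, v, vi, and vii match the paper's proof. The gap is in cases iii and iv, where two specific misconceptions appear and where you defer the decisive bookkeeping rather than carry it out.

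First, you claim that measuring a block in \textsf{rewindRD} or \textsf{measuresyndrome} ``can also increase $b$ because the forced measurement alters the quantum state.'' It does not: $g$, $b$, $u$ are determined by the bracketed sequence of unitary and Pauli operators acting on the $ABC$ registers in the $\JStwo$ representation, and measuring an MES block neither appends, removes, nor modifies any bracket there. Conditioned on a fixed Pauli error syndrome (the standing assumption throughout the analysis), the measurement merely reveals an already-determined value. Hence $\Phi_\mathrm{Q}$ is \emph{exactly} unchanged in cases iii and iv, and there is nothing for the coefficient $9$ to absorb in these cases.

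Second, you say the new measurement outcomes ``can inflate $\pdAminus$ or $\pdBminus$.'' They cannot: the pointers $\ellPA$, $\ellPAtilde$, $\ellPB$, $\ellPBtilde$ are untouched in cases iii and iv, so $\pdAplus$, $\pdBplus$, $\pdAminus$, $\pdBminus$ are all frozen. What \emph{does} increase is $q_\MA$ and/or $q_\MB$, each by at most one when the block at the measurement pointer was still $\sS$, contributing at most $+12$ to $\Phi_\PD$. This is precisely why the coefficient on $rd^-$ in $\Phi_\RD$ is $13$: the single-unit drop in $rd^-$ yields $\Delta\Phi_\RD=-13$, which absorbs the $+12$ with one to spare. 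Once you see this, there is no elaborate tabulation left to perform; each case closes with a one-line inequality, and the worst-case bookkeeping you allude to from Lemma~\ref{lem:simpleerrorpotential} is about iterations \emph{with} errors or collisions and plays no role in the present lemma. Your paragraph-two estimate $\Delta\Phi_\RD\le -1$ for cases iii/iv is therefore both too weak (you need $-13$) and, combined with the omitted $\Delta\Phi_\PD$ term, would not close the case.
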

 \begin{proof}
 	Note that in an iteration with no error or hash collision Alice and Bob agree on the iteration type. Moreover, if $\Itertype=\MD, \RD$ or $\PD$, they also agree on whether they extend or rewind the data and if $\Itertype=\MES$ (Case ii), then exactly one of them is in sub-case A and the other one in sub-case B. We analyze the potential function in each case keeping in mind the hierarchy of the cases; e.g., Case ii or later cases are encountered only if Alice and Bob have full knowledge of each other's metadata. Lemma~\ref{lem:potential-value-vs-data} guarantees that $\Phi_{\MD}=0$ on entering Case ii, $\Phi_{\MD}=\Phi_{\RD}=0$ on entering Case vi and $\Phi_{\MD}=\Phi_{\RD}=\Phi_{\PD}=0$ on entering Case vii.
 	\begin{itemize}
 		\item Alice and Bob are in Case i.A:
 				\begin{itemize}
 					\item $\Phi_{\RD}$, $\Phi_{\PD}$ and $\Phi_{\mathrm{Q}}$ stay the same.
 					\item $i$ increases by $1$.
 					\item $\mdAplus$ and $\mdBplus$ stay the same.
 					\item None of $\mdAminus$ and $\mdBminus$ increases and at least one decreases by $1$.
 				\end{itemize}
 			Therefore, $\Phi_{\MD}$ decreases by at least $3-2=1$ and $\Phi$ increases by at least $1$.
 		\item Alice and Bob are in Case i.B:
 				\begin{itemize}
 					\item $\Phi_{\RD}$, $\Phi_{\PD}$ and $\Phi_{\mathrm{Q}}$ stay the same.
 					\item $i$ increases by $1$.
 					\item $\mdAminus$ and $\mdBminus$ stay at $0$.
 					\item  At least one of $\ellMA$ or $\ellMB$ is smaller than $i-1$; If only $\ellMA < i-1$, then $\mdAplus$ increases by $2$, and $\mdBplus$ by $1$. The case where only $\ellMB < i-1$ is similar. If both are smaller than $i-1$, then $\mdAplus$ and $\mdBplus$ both increase by $2$. 
 				\end{itemize}
 			Therefore, $\Phi_{\MD}$ decreases by at least $3-2=1$ and $\Phi$ increases by at least $1$.
 		\item Alice is in Case ii.A, Bob is in Case ii.B:
 				\begin{itemize}
 					\item $\Phi_{\MD}$ stays at $0$.
 					\item $rd^+$ and $rd^-$ stay the same.
 					\item $\ellQVCA$ stays the same and $\ellQVCB$ increases by $1$.
 					\item $q_\MA$ stays the same and $q_\MB$ increases by $1$.
 					\item $\pdAplus,\pdAminus,\pdBplus,\pdBminus$ stay the same.
 					\item $g$ stays the same, $b$ increases by at most $1$ and $u$ decreases by $1$.
 				\end{itemize}
 			Therefore, $\Phi_{\RD}$, $\Phi_{\PD}$ and $\Phi_\mathrm{Q}$ increase by $1$, $6$ and at least $8$, respectively. So $\Phi$ increases by at least $1$. 
 		\item Alice is in Case ii.B, Bob is in Case ii.A: This case is similar to the one above.
 		\item Alice and Bob are in Case iii:
 				\begin{itemize}
 					\item $\Phi_{\MD}$ stays at $0$.
 					\item $rd^+$, $\ellQVCA$ and $\ellQVCB$ stay the same.
 					\item $rd^-$ decreases by $1$.
 					\item $\pdAplus,\pdAminus,\pdBplus,\pdBminus$ stay the same.
 					\item None of $q_\MA$ and $q_\MB$ decreases. $q_\MA$ increases by $1$ if $\RA\Br{\ellRA}=\sS$. Similarly, $q_\MB$ increases by $1$ if $\RB\Br{\ellRB}=\sS$.
 					\item $\Phi_{\mathrm{Q}}$ stays the same.
 				\end{itemize}
 			Therefore, $\Phi_{\RD}$ decreases by $13$ and $\Phi_{\PD}$ increases by at most $12$. So $\Phi$ increases by at least $1$.
 		\item Alice and Bob are in Case iv:
 				\begin{itemize}
 					\item $\Phi_{\MD}$ stays at $0$.
 					\item $rd^+$, $\ellQVCA$ and $\ellQVCB$ stay the same.
 					\item $rd^-$ decreases by $1$.
 					\item $\pdAplus,\pdAminus,\pdBplus,\pdBminus$ stay the same.
 					\item None of $q_\MA$ and $q_\MB$ decreases. $q_\MA$ increases by $1$ if $\RA\Br{\ellRA}=\sS$. Similarly, $q_\MB$ increases by $1$ if $\RB\Br{\ellRB}=\sS$.
 					\item $\Phi_{\mathrm{Q}}$ stays the same.		
 				\end{itemize}
 		\item Alice and Bob are in Case v:
 				\begin{itemize}
 					\item $\Phi_{\MD}$ stays at $0$.
 					\item $rd^-$ stays at $0$ and $rd^+$ increases by $1$.
 					\item $\ellQVCA$ and $\ellQVCB$ stay the same.
 					\item $\Phi_{\PD}$ and $\Phi_{\mathrm{Q}}$ stay the same.
 				\end{itemize}
 			Therefore, $\Phi_{\RD}$ decreases by $2$ and $\Phi$ increases by $2$.
 		\item Alice and Bob are in Case vi.A:
 				\begin{itemize}
 					\item $\Phi_{\MD}$ and $\Phi_{\RD}$ stay at $0$.
 					\item $\pdAplus,\pdBplus,q_\MA,q_\MB$ stay the same.
 					\item None of $\pdAminus$ and $\pdBminus$ increases and at least one decreases by $1$.
 					\item $\Phi_{\mathrm{Q}}$ stays the same. 
 				\end{itemize}
 			Therefore, $\Phi_{\PD}$ decreases by at least $1$. So $\Phi$ increases by at least $1$.
 		\item Alice and Bob are in Case vi.B:
 				\begin{itemize}
 					\item $\Phi_{\MD}$ and $\Phi_{\RD}$ stay at $0$.
 					\item $q_\MA$, $q_\MB$ stay the same and $\pdAminus$, $\pdBminus$ stay at $0$.
 					\item At least one of the following holds:
 					$\ellPA < 6q_{\MA}\cdot r$, in which case $\pdAplus$ increases
 					by $1$ (otherwise it remains unchanged), or $\ellPB < 6q_{\MB}\cdot r$, and then $\pdBplus$ increases by $1$ (otherwise it remains unchanged).
 					\item $\Phi_{\mathrm{Q}}$ stays the same.
 				\end{itemize}
 			Therefore, $\Phi_{\PD}$ decreases by at least $1$. So $\Phi$ increases by at least $1$.
 		\item Alice and Bob are in Case vii:
 				\begin{itemize}
 					\item $\Phi_{\MD}$, $\Phi_{\RD}$ and $\Phi_{\PD}$ stay at $0$.
 					\item $u$ stays at $0$.
 					\item If $b\neq0$ then $g$ stays the same and $b$ decreases by $1$, otherwise, $b$ stays at $0$ and $g$ increases by $1$.
 				\end{itemize}
 			Therefore, $\Phi_{\mathrm{Q}}$ increases by $1$ and so does $\Phi$.
 	\end{itemize} 
 So assuming successful recycling throughout the execution of the algorithm, the potential function $\Phi$ increases by at least $1$ in every iteration with no transmission error or hash collision.
 \end{proof}

\begin{lemma} \label{lem:Q-potential-decrease}
	Assuming successful recycling throughout the execution of Algorithm~\ref{algo:MainalgorithmQMessage}, each iteration of the algorithm, regardless of the number of hash collisions and transmission errors, decreases the potential function $\Phi$ by at most $85$.
\end{lemma}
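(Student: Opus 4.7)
The plan is to follow the template of Lemma~\ref{lem:simpleerrorpotential}: for every atomic quantity appearing in Equations~\eqref{eqn:Q-phimd}--\eqref{eqn:Q-phi}, I will establish a single-iteration worst-case change bound by inspecting each subroutine of Algorithm~\ref{algo:MainalgorithmQMessage}, then linearly combine these bounds with the coefficients dictated by the definition of $\Phi$. The hypothesis of successful recycling throughout is exactly what is needed for the representation $\JStwo$ of Eq.~\eqref{eqn:Q-JS2_OoS} to be well defined at the end of every iteration, and hence for $g,b,u$ to be.

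First I would tabulate the easy bounds, readable directly off the pseudo-code: $i$ grows by exactly $1$; each of $\ellQVCA$, $\ellQVCB$, $q_\MA$, $q_\MB$, $b$, $u$ grows by at most $1$; each of $g$, $\mdAplus$, $\mdBplus$, $\pdAplus$, $\pdBplus$, $rd^+$ shrinks by at most $1$; $\mdAminus$ and $\mdBminus$ each grow by at most $3$ (when \textbf{\textsf{extendMD}} pushes $\max\{\ellMA,\ellMAtilde\}$ up by $2$ in the same iteration that $\mdAplus$ drops by $1$); and $\pdAminus$ and $\pdBminus$ each grow by at most $4$ (the analogous statement counted in units of $r$-symbol Pauli-data blocks, using that \textbf{\textsf{Q-simulate}} and \textbf{\textsf{Q-syncMES}} each append at most one new Pauli-data block per iteration).

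The main obstacle will be the bound that $rd^-$ rises by at most $2$ per iteration, since its definition \eqref{eqn:rd+}--\eqref{eqn:rd-} couples three separate failure modes of the recycling data: the lengths $\ellRA,\ellRB$; the agreement prefix of $\RA$ and $\RB$; and the set of still-$\sS$ coordinates $k$ with $W_k\neq 0^{4r}$. The plan is to show that each contributes at most $1$ to $rd^-$ in a single iteration: $\max\{\ellRA,\ellRB\}$ moves by at most $1$ (only \textbf{\textsf{extendRD}} and \textbf{\textsf{rewindRD}} touch the measurement pointers, and each by a single step); the agreement prefix of $\RA,\RB$ can lose at most one coordinate per iteration (only one $\sS$ symbol is appended, and at most one existing coordinate is converted $\sS\mapsto\sM$); and at most one new coordinate can fail the $W_k=0^{4r}$ condition (the fresh QVC block from case vii, or the single out-of-sync block handled inside \textbf{\textsf{Q-syncMES}}). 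These three contributions do not compound, which gives the claimed bound.

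Combining via the coefficients in \eqref{eqn:Q-phimd}--\eqref{eqn:Q-phi} then yields: $\Phi_\MD$ grows by at most $2+1+9+1+9=22$; $\Phi_\RD$ grows by at most $1+1+13\cdot 2+2\cdot 1=30$; $\Phi_\PD$ grows by at most $6+6+1+4+1+4=22$; and $\Phi_\mathrm{Q}$ shrinks by at most $1+1+9=11$. Since $\Phi=\Phi_\mathrm{Q}-\Phi_\MD-\Phi_\RD-\Phi_\PD$, the total decrease in $\Phi$ in any iteration is at most $11+22+30+22=85$, as claimed.
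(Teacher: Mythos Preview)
Your proposal is correct and follows essentially the same approach as the paper: bound the single-iteration change of each atomic quantity and combine via the coefficients in \eqref{eqn:Q-phimd}--\eqref{eqn:Q-phi}, yielding the same component bounds $22$, $30$, $22$, $11$ and hence $85$. One minor inaccuracy: $\ellRA$ is also updated by \textbf{\textsf{measuresyndrome}} and \textbf{\textsf{Q-simulate}}, not only by \textbf{\textsf{extendRD}} and \textbf{\textsf{rewindRD}}, but since exactly one subroutine runs per iteration and each moves the pointer by at most one, your bound on $\max\{\ellRA,\ellRB\}$ is unaffected.
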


\begin{proof}
	In any iteration, $i$ increases by $1$, while $g$, $\mdAplus$, $\mdBplus$, $rd^+$, $\pdAplus$ and $\pdBplus$ decrease by at most $1$; $b$, $u$, $\ellQVCA$, $\ellQVCB$, $q_\MA$ and $q_\MB$ increase by at most $1$; $\mdAminus$ and $\mdBminus$ increase by at most $3$; $rd^-$ increases by at most $2$; and $\pdAminus$ and $\pdBminus$ increase by at most $4$. Hence, $\Phi_{\MD}$, $\Phi_{\RD}$, $\Phi_{\PD}$ increase by at most $22$, $30$ and $22$, respectively, and $\Phi_{\mathrm{Q}}$ decreases by at most $11$. So in total, $\Phi$ decreases by at most $85$.   
\end{proof}

Finally, we are ready to prove the main result of this section.

\setcounter{theorem}{0}
\begin{theorem}[\textbf{Restated}]
	Consider any $n$-round alternating communication protocol $\Pi$ in the plain quantum model, communicating messages over a noiseless channel with an alphabet $\Sigma$ of bit-size $\Theta\br{\log n}$. Algorithm~\ref{algo:MainalgorithmQMessage} is a quantum coding scheme which given $\Pi$, simulates it with probability at least $1-2^{-\Theta\br{n\epsilon}}$, over any fully adversarial error channel with alphabet $\Sigma$ and error rate $\epsilon$. The simulation uses $n\br{1+\Theta\br{\sqrt{\epsilon}}}$ rounds of communication, and therefore achieves a communication rate of $1-\Theta\br{\sqrt{\epsilon}}$.
\end{theorem}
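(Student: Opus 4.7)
The plan is to reduce correctness of the simulation to a potential function argument analogous to the one used in Theorem~\ref{theorem:simplealg}, but with the crucial additional layer that successful recycling must be established before the potential is even well-defined. By linearity it suffices to fix an arbitrary Pauli error $E \in \P_{d,n'}$ of weight at most $\epsilon n'$ chosen by the adversary and to prove correctness for this fixed error syndrome; the adversary's general mixture of such errors then follows. Correctness itself reduces to showing that, at the end of iteration $\Rtotal$, the quantity $g - b - u$ from the $\JStwo$ representation satisfies $g - b - u \ge n/(2r)$, since the noiseless protocol embedding of Section~\ref{sec:nslss} then ensures that the correct output of $\Pi$ has been swapped into the safe registers $\tilde A \tilde B \tilde C$ and is untouched by any remaining bad or ugly blocks.

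The proof is organized around the combined potential $\Phi = \Phi_{\mathrm{Q}} - \Phi_{\MD} - \Phi_{\RD} - \Phi_{\PD}$ of Eq.~\eqref{eqn:Q-phi}, whose component non-negativity properties are given by Lemma~\ref{lem:potential-value-vs-data}. The core step is an induction on the iteration index $i$ that simultaneously proves (i) recycling is successful in iterations $1,\dotsc,i$ in the sense of Definition~\ref{def:successful recycling}, and (ii) $\omega_i \le c_7 n\epsilon$, where $\omega_i$ counts iterations suffering a transmission error, a hash collision, or acting as a type I, II, or III recovery iteration. The base case is immediate because the first $\LQVC$ iterations do no recycling at all. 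For the inductive step, assuming both properties up to iteration $i$, the bounds on metadata, quantum, and Pauli data hash collisions supplied by Lemmas~\ref{lem:MD collisions}, \ref{lem:QH collisions}, and~\ref{lem:PD collisions} — each obtained via an auxiliary sub-potential that separates ``collision-eligible'' iterations from the rest — combine with the recovery iteration bounds of Lemmas~\ref{lem:type I recovery}, \ref{lem:type II recovery}, and~\ref{lem:type III recovery} to give $\omega_{i+1} \le c_7 n\epsilon$ except with probability $2^{-\Theta(n\epsilon)}$. Feeding this into Lemma~\ref{lem:recycling_requirements} then ensures that $\RA, \RB$ agree on a long enough prefix, that their stable region does not get modified further, and that every still-$\sS$ block in that prefix is still in the ideal state $\ket{\phi^{0,0}}^{\otimes 4r}$, which is precisely what is needed for the \textsf{Recycle} subroutine not to abort and for the recycled block to meet all three conditions of Definition~\ref{def:successful recycling} in iteration $i+1$. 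This is exactly the content of Lemma~\ref{lem:successful-recycling}, and taking a union bound over the $\Rtotal$ iterations keeps the total failure probability at $2^{-\Theta(n\epsilon)}$.

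Conditioning on the event that recycling is successful throughout — which happens with probability $1 - 2^{-\Theta(n\epsilon)}$ — the $\JSone$ and $\JStwo$ representations of Subsections~\ref{subsec:Q-1st-rep-sync}--\ref{subsec:Q-2nd-rep-out-of-sync} are valid, so $\Phi$ is well-defined and one may invoke the two workhorse estimates: by Lemma~\ref{lem:Q-potential-increase} any iteration with no transmission error and no hash collision raises $\Phi$ by at least $1$, while by Lemma~\ref{lem:Q-potential-decrease} no iteration ever lowers $\Phi$ by more than $85$. Since the total number of iterations with a transmission error is at most $2n\epsilon$ and the total number of hash collisions (of all three types) is at most $(c_1 + c_3 + c_5)n\epsilon$, choosing $\Rtotal = \lceil n/(2r) \rceil + \kappa n\epsilon$ for a sufficiently large constant $\kappa$ yields $\Phi \ge n/(2r)$ at the end of the simulation. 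Lemma~\ref{lem:potential-value-vs-data} then forces $\Phi_\mathrm{Q} \ge n/(2r)$, so $g - b - 9u \ge n/(2r)$ and a fortiori $g - b - u \ge n/(2r)$, completing correctness. The rate claim follows immediately from inspecting Algorithm~\ref{algo:MainalgorithmQMessage}: each iteration transmits $2r + \Theta(1)$ qudits (a constant number of hash/pointer symbols plus $r$ symbols of either classical information or QVC-encoded quantum data in each direction), so the total communication is $\Rtotal \cdot (2r + \Theta(1)) = n(1 + \Theta(\sqrt{\epsilon}))$.

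The hard part is orchestrating the inductive bookkeeping that underpins Lemma~\ref{lem:successful-recycling}. The three kinds of hash collisions, the three kinds of recovery iterations, and the measurement/recycling pointers are all mutually entangled: a collision in one data structure can trigger recovery iterations that in turn stress another, and if the invariants governing $\RA, \RB$ and $W_k$ ever weaken, the block recycled by \textsf{Recycle} could fail to be in $\ket{\phi^{0,0}}^{\otimes 4r}$, after which the entire $\JStwo$-based potential argument collapses. Calibrating the constants $c_1 < c_2 < \dotsb < c_9$ and the slack parameter $t = c_8 n\epsilon$ against $\LQVC = c_9 n\epsilon$ so that the induction closes, and in particular handling the out-of-sync QVC and out-of-sync quantum hashing scenarios from Subsections~\ref{subsec:out-of-sync QVC} and~\ref{sec:out-of-sync QH} within the same framework, is the delicate technical core; once that is in place the rest of the argument is a clean analogue of the teleportation-based proof in Subsection~\ref{subsec:polysizeclassicalanalysis}.
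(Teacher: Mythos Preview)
Your proposal is correct and follows essentially the same route as the paper: establish successful recycling via the inductive argument packaged in Lemma~\ref{lem:successful-recycling} (using Corollary~\ref{cor:errors-collisions} and Lemma~\ref{lem:recycling_requirements}), then run the potential argument with Lemmas~\ref{lem:Q-potential-increase} and~\ref{lem:Q-potential-decrease} to conclude $g-b-u \ge \Phi_{\mathrm Q} \ge \Phi \ge n/(2r)$. The one omission in your communication count is the $\Theta(n\sqrt{\epsilon})$ cost of the initial \textsf{Robust Entanglement Distribution} step; the paper explicitly adds this term, though it is absorbed into the $n\,\Theta(\sqrt{\epsilon})$ overhead and does not affect the stated rate.
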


\begin{proof}
	Let $\Rtotal=\ceil{\frac{n}{2r}}+86\br{c_1+c_3+c_5+2}n\epsilon$. By Lemma~\ref{lem:successful-recycling}, recycling is successful throughout the execution of the algorithm with probability at least $1-2^{-\Theta\br{n\epsilon}}$.  Assuming successful recycling, by Lemmas~\ref{lem:MD collisions},~\ref{lem:QH collisions} and~\ref{lem:PD collisions}, the total number of iterations with a hash collision is at most $c_1+c_3+c_5$ except with probability $2^{-\Theta\br{n\epsilon}}$. Since the number of iterations is less than $2n$, the total number of iterations with a transmission error is at most $2n\epsilon$. Therefore, by Lemma~\ref{lem:Q-potential-increase}, in the remaining $\Rtotal-\br{c_1+c_3+c_5+2}n\epsilon$ iterations the potential function $\Phi$ increases by at least $1$. The potential function decreases in an iteration only if a hash collision or a transmission error occurs and by Lemma~\ref{lem:Q-potential-decrease}, it decreases by at most $85$. So at the end of the simulation, we have 
	\[
	g-b-u \geq \Phi_{\mathrm{Q}} \geq \Phi \geq \Rtotal-\br{c_1+c_3+c_5+2}n\epsilon-85\br{c_1+c_3+c_5+2}n\epsilon \geq \frac{n}{2r}\enspace.
	\]
	Therefore the simulation is successful. The cost of entanglement distribution is~$\Theta\br{n\sqrt{\epsilon}}$. Moreover, the amount of communication in each iteration is independent of the iteration type and is always $\br{2r+\Theta(1)}$: in every iteration each party sends $\Theta(1)$ symbols to communicate the hash values and the value of the pointers in line $13$ of Algorithm~\ref{algo:MainalgorithmQMessage}; each party sends another $r$ symbols either in line $17$ of Algorithm~\ref{algo:MainalgorithmQMessage}, if $\Itertype\neq\SIM$ or in Algorithm~\ref{algo:Q-simulate}. Hence, we have
	\begin{align*}
	\text{Total number of communicated qudits} &=
	\Theta\br{n\sqrt{\epsilon}} + \Rtotal\cdot \br{2r+\Theta(1)} \\
	&= \Theta\br{n\sqrt{\epsilon}} + \br{\ceil{\frac{n}{2r} + \Theta\br{n\epsilon}}}\br{2r+\Theta(1)} \\
	&= n\br{1+\Theta\br{\sqrt{\epsilon}}}\enspace.    
	\end{align*}
	
\end{proof}

\suppress{
\begin{lemma}
	Let $t=c_2n\epsilon$, then for every $i$, $\LQVC < i \leq \Rtotal$, with probability at least~$1-2^{-\Omega(n\epsilon)}$, at the end of the $i$-th iteration of Algorithm~\ref{algo:MainalgorithmQMessage} we have:
	\begin{enumerate}
		\item $\RA\Br{1:i-t}=\RB\Br{1:i-t}$,
		\item for all $j\in[i-t]$ such that $\RA\Br{j}=\RB\Br{j}=\sS$, the block of MES registers indexed by~$\IndexA\Br{j}$ are in the $\ket{\phi^{0,0}}^{\otimes 4r}$ state,
		\item $\ellNextMESA = \ellNextMESB < i-t$.
	\end{enumerate}
\end{lemma}

\begin{proof}
	The proof is based on a double induction argument. Note that the conditions of Definition~\ref{def:successful recycling} are satisfied in the first $\LQVC$ iterations of the algorithm and we have $\IndexA\Br{1:\LQVC}=\IndexB\Br{1:\LQVC}=1:\LQVC$. Therefore, by Lemmas~\ref{lem:MD collisions}--\ref{lem:type III recovery}, with probability at least~$1-2^{-\Omega(n\epsilon)}$, we have~$\omega_{\LQVC} \leq c_7 n\epsilon$. 
	
	Let $\LQVC < i \leq \Rtotal$ and suppose that recycling is successful in the first $i-1$ iterations of the algorithm. This implies that $\IndexA\Br{1:i-1}=\IndexB\Br{1:i-1}$ and $\ellNextMESA = \ellNextMESB$ at the end of iteration $i-1$. Moreover, with probability at least~$1-2^{-\Omega(n\epsilon)}$, we have~$\omega_{i-1} \leq c_7 n\epsilon$. Assuming~$\omega_{i-1} \leq c_7 n\epsilon$, at the end of the $i$-th iteration we have:
	\begin{enumerate}
		\item $\RA\Br{1:i-t}=\RB\Br{1:i-t}$:
		
			Toward contradiction, suppose that there exists $t'\in\Br{t,i-1}$ such that $\RA\Br{i-t'}\neq \RB\Br{i-t'}$, at the end of the $i$-th iteration. Without loss of generality, assume that $\RA\Br{i-t'}=\sM$ and $\RB\Br{i-t'}=\sS$. Suppose that the last time Alice's measurement pointer $\ellRA$ reached $\RA\Br{i-t'}$ was $t_2$ iterations earlier, i.e., iteration $i-t_2$. In that iteration, $\ellRA$ has distance $t_1\defeq t'-t_2$ from  $i-t_2$, the iteration number. Note that the distance between the iteration number and $\ellMA$ increases only in (some) recovery iterations and it increases by at most $2$: The distance remains the same if Alice is in case v or case vii. Otherwise, it increases by $1$ if $\ellMA$ does not move and increases by $2$ when it moves back. This implies that in the first $i-t_2$ iterations, there have been at least $t_1/2$ recovery iterations. In the $t_2$ iterations after that, there is an inconsistency in the recycling data which does not get resolved. In any of these iterations one of the following holds:
			\begin{itemize}
				\item A metadata hash collision or a transmission error on metadata messages occurs, or else,
				\item The iteration is a type I recovery iteration and Alice and Bob are still trying to reconcile an inconsistency in their metadata, or else, 
				\item The iteration is a type II recovery iteration. In this case, no metadata transmission error or collision occurs and the iteration is not a type I recovery iteration. So Alice and Bob know each other's recycling data and aware of the inconsistency, they are trying to resolve it.  
			\end{itemize}
			Therefore, in the first $i-1$ iterations, the number of recovery iterations plus the number of iterations suffering from a transmission error or a collision is at least
			\[ t_1/2+t_2-1 \geq t'/2-1 \geq t/2-1 \geq \frac{c_2}{2}n\epsilon-1\enspace,
			\] 
			contradicting~$\omega_{i-1} \leq c_7 n\epsilon$.
		
		\item For all $j\in[i-t]$ such that $\RA\Br{j}=\RB\Br{j}=\sS$, the block of MES registers indexed by~$\IndexA\Br{j}$ are in the $\ket{\phi^{0,0}}^{\otimes 4r}$ state:
		
			Toward contradiction, suppose that at the end of the $i$-th iteration, there exists $t'\in\Br{t,i-1}$ such that~$\RA\Br{i-t'}=\RB\Br{i-t'}$ and the block of MES registers indexed by~$\IndexA\Br{i-t'}$ are not in the $\ket{\phi^{0,0}}^{\otimes 4r}$ state. This is due to one of the following:
			\begin{itemize}
				\item The block of MES registers has been used only by one party for communication using QVC.
				\item It has been used by both parties for communication using QVC, but in different iterations (out-of-sync QVC).
				\item It has been used in the same iteration by both parties for communication using QVC but a transmission error has occurred.
			\end{itemize}
			In any case, suppose that the last time one of the pointers $\ellQVCA$ or $\ellQVCB$ was equal to $i-t'$ was $t_2$ iterations earlier, i.e., iteration $i-t_2$ and without loss of generality, suppose that $\ellQVCA$ is that pointer. In iteration $i-t_2$, the pointer $\ellQVCA$ has distance $t_1\defeq t'-t_2$ from  $i-t_2$, the iteration number. Note that the distance between the iteration number and $\ellQVCA$ increases only in (some) recovery iterations and it increases by at most $1$: The distance remains the same if Alice is in case ii.B or case vii. Otherwise, $\ellQVCA$ remains the same and the distance increases by $1$. This implies that in the first $i-t_2$ iterations, there have been at least $t_1$ recovery iterations. In the $t_2$ iterations after that, the block of MES registers indexed by~$\IndexA\Br{j}$ are not measured by any of the parties. In any of these iterations one of the following holds:
			\begin{itemize}
				\item A metadata hash collision or a transmission error on metadata messages occurs, or else,
				\item The iteration is a type I recovery iteration and Alice and Bob are still trying to reconcile an inconsistency in their metadata, or else, 
				\item A quantum hash collision or a transmission error on quantum hash values occurs, or else,
				\item The iteration is a type II recovery iteration. In this case, no metadata transmission error or collision occurs and the iteration is not a type I recovery iteration. So Alice and Bob know each other's recycling data. So Alice and Bob are both be in case iii or both in case iv.
			\end{itemize}
			Note that, if at the $i$-th iteration only one party has used the block of MES registers indexed by~$\IndexA\Br{j}$, then in any of the previous $t_2$ iterations exactly one of the first two scenarios occurs.  
			The above argument implies that in the first $i-1$ iterations, the number of recovery iterations plus the number of iterations suffering from a transmission error or a collision is at least
			\[ t_1+t_2-1 = t'-1 \geq t-1 \geq c_2n\epsilon-1\enspace,
			\] 
			contradicting~$\omega_{i-1} \leq c_7 n\epsilon$.
		
		\item $\ellNextMESA = \ellNextMESB < i-t$.

	\end{enumerate}
	
\end{proof}

We still use $md^+_A,md^+_B,pd^+_A,pd^+_B,md^-_A,md^-_B,pd^-_A,pd^-_B$ defined in Section~\ref{subsec:polysizeclassicalanalysis}. We also use $\Phi_{\MD}$ and $\Phi_{\PD}$ defined in~\eqref{eqn:phimd}\eqref{eqn:phipd}.

\begin{lemma}\label{lem:quantumqmdqpddecrease}
	Throughout the algorithm, it holds that
	\begin{itemize}
		\item $\Phi_{\MD} \leq 0$ with equality if and only if Alice and Bob have full knowledge of each other's metadata, i.e.,   $md_{A}^+ = md_{B}^+ = i$ and $md_{A}^- = md_{B}^- = 0$.
		\item $\Phi_{PD} \leq 0$ with equality if and only if Alice and Bob have full knowledge of each other's Pauli data, i.e., $pd_{A}^+ = pd_{B}^+ = q_{\mathrm{MA}} = q_{\mathrm{MB}}$ and $md_{A}^- = md_{B}^- = 0$.
	\end{itemize}.
\end{lemma}
\begin{proof}
	It follows from the proof of Lemma~\ref{lem:phimdpdnegativelargeclassical}.
\end{proof}

\begin{lemma}\label{lem:quantumqmdqpdincrease}
    Without a hash collision or error, it holds that
	\begin{itemize}
		\item $\Phi_{\mathrm{ MD}}$ increases by at least $1$ in \textsf{Case i.A,i.B} and keeps  unchanged in the remaining cases;
		
		\item $\Phi_{\mathrm{ PD}}$ increases by at least $1$ in \textsf{iii.A,iii.B}; decreases by at most $1$ in \textsf{Case ii.A (ii.B)} and keeps  unchanged the remaining cases.
	\end{itemize}
\end{lemma}

\begin{proof}
	We split the proof into the following cases.
	\begin{itemize}
		\item The \textsf{Case i.A,i.B,ii.A,ii.B,iii.A,iii.B} exactly follows from the proof of Lemma~\ref{lem:potential increase}.
		
		\item In the remaining cases, the meta data is fully synchronized. Thus, $\Phi_{\MD}$ is kept being $0$.
		
		\item In \textsf{Case iv}, the Pauli data is unchanged as the players do not process it.
		
		\item In \textsf{Case v, vi}, the pointers $\ell_{\mathrm{PA}}$  $\ell_{\mathrm{PB}}$ are not updated though the Pauli data is changed. Hence, $pd_A^+$ and $pd_B^+$ do not decrease; $pd_A^-$ and $pd_B^-$ do not increase; $q_{\mathrm{MA}}$ and $q_{\mathrm{MB}}$ are unchanged.
		\item In \textsf{Case vii}, the Pauli data is fully synchronized.
	\end{itemize}

\end{proof}

Define
$q_M\defeq\max\set{q_{\mathrm{MA}},q_{\mathrm{MB}}}$. Note that $q_{\mathrm{MA}}\geq \ell_{\mathrm{QRD}_A}$ and $q_{\mathrm{MB}}\geq \ell_{\mathrm{QRD}_B}$.

The potential function of $\mathrm{QRD}$ is defined similarly to the ones for meta data and Pauli data.

Set
\begin{align}
	&qrd^+\defeq\text{the length of the longest prefix where  $\mathrm{QRD}_A[1,\ell_{\mathrm{ QRD}_A}]$ and $\mathrm{QRD}_B[1,\ell_{\mathrm{ QRD}_B}]$ agree;}\label{eqn:qrd+}\\
	&qrd^-\defeq
	\max\set{\ell_{\mathrm{QRD}_A},\ell_{\mathrm{QRD}_B}}-qrd^+;\label{eqn:qrd-}\\
	&\Phi_{\mathrm{QRD}}\defeq
	qrd^+-qrd^--q_M.\label{eqn:qrdpotential}
\end{align}

\begin{lemma}\label{lem:phiqrdnegative}
	$\Phi_{\mathrm{QRD}}\leq 0$.
\end{lemma}
\begin{proof}
	 Notice that $qrd^-\geq 0$, then $$\Phi_{\mathrm{QRD}}\leq qrd^--q_M\leq \ell_{\mathrm{QRD}_A}-q_M\leq \ell_{\mathrm{QRD}_A}-q_{MA}\leq 0.$$
\end{proof}

As each term occurs in $\Phi_{\mathrm{QRD}}$ changes by at most a constant, we have the following lemma.
\begin{lemma}\label{lem:qrdquantumdecrease}
	Each iteration of algorithm decreases $\Phi_{\mathrm{QRD}}$ by at most a constant regardless errors or hash collisions.
\end{lemma}

\begin{lemma}\label{lem:potentialqrd}
	Without a hash collision or error, each iteration of the algorithm  does not decrease $\Phi_{\mathrm{QRD}}$ in \textsf{case i.A,i.B,iii.A,iii.B,vi,vii}; decreases it by at most $1$ in~\textsf{case ii.A,ii.B} and increases it by at least $1$ in \textsf{case iv, v}.
\end{lemma}

\begin{proof}
	
	\begin{itemize}
		\item In \textsf{Case i.A,i.B,  iii.A,iii.B}, the players do not process QRD and $q_M$ is unchanged.
		
		\item In \textsf{Case ii.A (ii.B)}, $qrd^+$ does not decrease; $qrd^-$ increases by at most $1$; $q_M$ is unchanged.

		\item In \textsf{case iv}, $qrd^+$ and $q_M$ are unchanged. $qrd^-$ decreases by $1$. Thus $\Phi_{\mathrm{QRD}}$ increases by at least $1$.
		
		\item 	In \textsf{case v}, $qrd^+$ increases by at least $1$; $qrd^-=0$ and $q_M$ are both unchanged. Thus $\Phi_{\mathrm{QRD}}$ increases by at least $1$.
		
		\item In \textsf{case vi,vii}, $\mathrm{QRD}$ is fully synchronized and $q_{\mathrm{MA}}=q_{\mathrm{MB}}=q_M$. Thus $\Phi_{\mathrm{QRD}}$ is kept being $0$.
	\end{itemize}
	
\end{proof}

The following lemma upper bounds the number of hash collisions occurring in Algorithm~\ref{algo:MainalgorithmQMessage}. We may adapt the proof of Corollary 4.6 in~\cite{Haeupler:2014}. Here we provide a simpler proof.

\begin{lemma}
	Choosing $r=\Theta\br{1/\sqrt{\epsilon}}$, the number of iterations of Algorithm~\ref{algo:MainalgorithmQMessage} suffering from a hash collision is at most $6n\epsilon$ with probability at least $1-2^{-\Theta\br{n\epsilon}}$.
\end{lemma}
\begin{proof}
	By Algorithm~\ref{algo:Robust Entanglement Distribution}, the hash seeds are uniform and independent unknown to the adversary. From the choice of the parameters in line 3 of Algorithm~\ref{Qalgo:Initialization} and Fact~\ref{fac:stretch}, all $R_{\text{total}}$ hashing steps are $\delta$-close to being independent. Overall, there are at most $R_{\text{total}}$ iterations, each with hash collision $1-\br{1-p}^5=\Theta\br{\frac{1}{n^5}}$. The conclusion follows by the Chernoff bound.
\end{proof}

\begin{lemma}\label{lem:qm+welldefined}
	Choosing $r=\Theta\br{1/\sqrt{\epsilon}}$, it holds with probability at least $1-2^{-\Theta\br{\epsilon n}}$, $\mathrm{SQRD}_A[q_M\mod L_{\mathrm{QVC}}+1]=\mathrm{SQRD}_B[q_M\mod L_{\mathrm{QVC}}+1]$ throughout the execution of Algorithm~\ref{algo:MainalgorithmQMessage}.
\end{lemma}
\begin{proof}
	It is trivial when $q_M\leq L_{\mathrm{QVC}}$. Assume that $q_M> L_{\mathrm{QVD}}$, between two consecutive passes of $\br{q_M\mod L_{\mathrm{QVC}}+1}$-th block,  $\mathrm{SQRD}_A[q_M\mod L_{\mathrm{QVC}}+1]=\mathrm{SQRD}_B[q_M\mod L_{\mathrm{QVC}}+1]$ is checked by $\Theta\br{\sqrt{\epsilon}n}$ times via exchanging $H_{\mathrm{QRD}_A}$ and $H_{\mathrm{QRD}_B}$, among which at most $\Theta\br{\epsilon n}$ checks are corrupted.
\end{proof}

\noindent \textbf{Assumption T}. $\mathrm{SQRD}_A[q_M\mod L_{\mathrm{QVC}}+1]=\mathrm{SQRD}_B[q_M\mod L_{\mathrm{QVC}}+1]$ throughout Algorithm~\ref{algo:MainalgorithmQMessage}.

\begin{lemma}\label{lem:invariancesubspace}
	Assuming \textbf{T}, any pair of the shared state for quantum Vernam cipher in the block marked by $F$ in both $\mathrm{QRD}_A$ and $\mathrm{QRD}_B$  lies in $\mathrm{span}\set{\ket{\phi^{0,k}}:0\leq k\leq d-1}$.
\end{lemma}

\begin{proof}
	As all the shared states are $\ket{\phi}$, initially, the random subsets Alice and Bob sample in \textbf{Q-Quantum-hash} are same. Note that in the quantum Vernam cipher the algorithm only uses the MESs as control-registers, which only introduces relative phases. Therefore, any pair of MESs for quantum Vernam cipher not being measured lies in $\mathrm{span}\set{\ket{\phi^{0,k}}:0\leq k\leq d-1}$. By Lemma~\ref{lem:hashorder}, the the quantum hash operators commute with each other. We may assume the players hash those blocks marked by $F$ first. In \textbf{Q-Quantum-hash}, the algorithm applies a double control-X controlled by a fresh hash state $\ket{\phi}$. The shared states for quantum Vernam cipher still lie in $\mathrm{span}\set{\ket{\phi^{0,k}}:0\leq k\leq d-1}$ by Lemma~\ref{lem:cnotbell}.
\end{proof}

We define
\begin{equation}\label{eqn:tau}
\tau\defeq\min\set{j-1:~q_M< j\leq L_{\mathrm{QVC}}+\min\set{q_{\mathrm{MA}},q_{\mathrm{MB}}}~\mbox{and}~\atop~\mathrm{SQRD}_A[j\mod L_{\mathrm{QVC}}+1]\neq \mathrm{SQRD}_B[j\mod L_{\mathrm{QVC}}+1]},
\end{equation}
Recall the definition in Eq.~\eqref{eqn:u},
\[u=\abs{q_{\mathrm{MA}}-q_{\mathrm{MB}}}.\]

\begin{lemma}\label{lem:nextgoodeprexists}
	Assuming \textbf{T} and choosing $r=\Theta\br{1/\sqrt{\epsilon}}$, it holds with probability $1-2^{-\Theta\br{n\epsilon}}$ that  $u\leq 10n\epsilon$ throughout the Algorithm~\ref{algo:MainalgorithmQMessage}.
	
	 Moreover, there exist $\Theta\br{n\epsilon}$ indexes $j\in[q_M+1,\tau]$ such that $\mathrm{QRD}_A[j]=\mathrm{QRD}_B[j]=F$ and the pairs of MESs for quantum Vernam cipher in $j$-th block are $\ket{\phi}^{\otimes 2r}$.
\end{lemma}

\begin{proof}
	First we notice that each iteration increases $u$ by at most $1$. Thus without error or hash collisions, $u$ increases by at most $2n\epsilon$.
	
	We then study the cases with error or hash collision by induction on the number of iterations $i$ have been executed. It holds trivially in the beginning of the algorithm. Without loss of the generality, we consider Alice's side by analyzing how Alice's operations affect $\tau,q_M$ and the shared states. We consider the following three cases.
	
	\begin{itemize}
		\item In \textsf{case i.A,i.B, iii.A,iii.B, iv, v}, in each iteration, $q_{\mathrm{MA}}$ and $q_{\mathrm{MB}}$ are unchanged  as the players are only processing the classical data and the symbols in $\mathrm{QRD}$ are either unchanged or changed to M from F.
		
		\item In \textsf{case ii. A (ii.B), vi, vii}, $q_{MA}$ increases by at most $1$ and thus $\tau$ and $u$ increases by at most $1$. At most one symbol F in $\mathrm{QRD}$ is changed to M. All the quantum operations are only applied to the blocks outside $[q_M+1,\tau]$ ($q_M$ is after the iteration). Thus the shared blocks of states in $[q_M+1,\tau]$ ($q_M$ is after the iteration) are unchanged.
		
		\item By Lemma~\ref{lem:hashorder}, the hash operations over different MESs commute. Thus those shared $\ket{\phi}$ are not changed  while the players are implementing \textbf{Q-Quantum-hash}.
		\end{itemize}
    From the analysis above, $u$ increases by at most $1$ if there is an error or hash collision. The total amount of errors is $2n\epsilon$ and the total number of hash collisions is at most $6n\epsilon$ with probability at least $1-2^{-\Theta\br{\epsilon n}}$. Thus $u\leq10n\epsilon
    $ with probability at least $1-2^{-\Theta\br{\epsilon n}}$. Note that one error or hash collision leads to at most one block of MESs being measured (the measurement may happen much later after the error occurs). There are still at least $L_{\mathrm{QVC}}-10n\epsilon=\Theta\br{n\epsilon}$ blocks of MESs survived throughout the algorithm.

\end{proof}

Suppose Alice and Bob perform the measurements introduced in \textsf{Qmeasuresyndrome} Alg.~\ref{algo:QmeasureEPR} line 1 to those blocks with index in $[\tau+1,q_M]$ if the block in his or her $\mathrm{QRD}$ is marked by F. Fixing the outcome of the measurements, the joints message state is a pure state By Lemma~\ref{lem:nextgoodeprexists}. Moreover,  fixing the outcome, the joint state collapses to the state of the form JS1 defined in~\eqref{eqn:JS1}. Hence, the players are able to compute the syndrome after exchanging the outcome of the measurement. Therefore, we have the following lemma.

\begin{lemma}\label{lem:Qinvariance}
	
	Assuming \textbf{T}, suppose Alice and Bob perform the measurements introduced in \textbf{Qmeasuresyndrome} Alg.~\ref{algo:QmeasureEPR} line 1 to all the blocks of pairs of MESs with index in $[\tau+1,q_M]$ if the block in his or her $\mathrm{QRD}$ is marked by F. Fixing the outcome of the measurement, the joint state of message state is of the form JS1 defined in~\eqref{eqn:JS1} with with simplification JS2 defined in~\ref{eqn:JS2}. Moreover, the number of good blocks and bad blocks defined in Section~\ref{sec:general-discription-large-alphabet-cleveburhman} are independent of the outcome of the measurement.
\end{lemma}

With the previous lemma, we are able to define the potential function for the quantum phase as Eq.~\eqref{eqn:phiQ}.

As Section 3.4, we use $g,b,u$ defined in Eqs.~\eqref{eqn:g}\eqref{eqn:b}\eqref{eqn:u}.

Set
\begin{equation}\label{eqn:qphiq}
	\Phi_\mathrm{Q}\defeq g-b-10u.
\end{equation}

From Lemma~\ref{lem:Qinvariance}, $\Phi_\mathrm{Q}$ is well defined.

\begin{lemma}\label{lem:phiq}
	Assuming \textbf{T}, in Algorithm~\ref{algo:MainalgorithmQMessage}, each iteration of the algorithm without a hash collision or error increases $\Phi_\mathrm{Q}$ by at least $5$ in \textsf{Case ii.A (ii.B)}; increases by at least $1$ in \textsf{Case vii} and does not decrease in the remaining cases.
\end{lemma}
\begin{proof}

	The proof is split to the following cases.
	\begin{itemize}
		\item In \textsf{Case i.A,i.B,iii.A,iii.B,iv}, $\Phi_\mathrm{Q}$ is unchanged because the players are only processing the classical data.
		\item In \textsf{Case ii.A}(\textsf{ii.B}), $g$ remains the same; $b$ increases by at most $1$; and $u$ decreases by $1$. Thus $\Phi_\mathrm{Q}$ increase by at least $9$.
		
		\item In \textsf{Case v}, $g, u$ are unchanged; $b$ does not increase.
		
		\item In \textsf{Case vi}, $q_{\mathrm{ MA}}$ and $q_{\mathrm{ MB}}$ are synchronized. Thus $u=0$ unchanged. And $g$ is unchanged; $b$ decreases by $1$.
		
		\item In \textsf{Case vii}, $g-b$ increases by $1$ and $u$ is unchanged.
	\end{itemize}
	
\end{proof}

\begin{lemma}\label{lem:phiqdecrease}
	Assuming \textbf{T}, with probability at least $1-2^{-\Theta\br{n\epsilon}}$, each iteration of the algorithm decreases $\Phi_\mathrm{Q}$ by at most constant.
\end{lemma}

\begin{proof}
	The only difference from proofs for the Cleve-Burhman model is that when a block of MESs are reused, it must be $\ket{\phi}^{\otimes 2r}$ high probability. Otherwise, the error in the previous use would be passed to the current use, which could corrupt many good unitary blocks in JS2 within one iteration. The reason is that between two consecutive uses, at least $\Theta\br{\sqrt{\epsilon}n}$ \textbf{Quantum-hash} are implemented. Among them only at most $\Theta\br{\epsilon n}$ are corrupted. It means that this block is survived through $\Theta\br{\sqrt{\epsilon}n}$ quantum hashes. By Lemma~\ref{lem:quantumhash} and our choice of the parameters, it is $\ket{\phi}^{\otimes 2r}$ with probability $1-2^{-\Theta\br{\epsilon n}}$.
\end{proof}

We define the potential function
\begin{equation}\label{eqn:qphi}
\Phi\defeq\Phi_\mathrm{Q}+\Phi_{\MD}+\Phi_{PD}+\Phi_{\mathrm{QRD}}.
\end{equation}

\begin{lemma}\label{lem:qmessagephi}
	With probability at least $1-2^{-\Theta\br{\epsilon n}}$, each iteration of algorithm decreases $\Phi$ by at most constant. Without a hash collision or error each iteration of algorithm increases $\Phi$ by at least $1$ with probability at least $1-2^{-\Theta\br{\epsilon n}}$.
\end{lemma}
\begin{proof}
	Assuming \textbf{T}, we can prove this lemma via the following arguments:

    Note that each term appears in $\Phi_{\mathrm{MD}}, \Phi_{\mathrm{PD}}$ and $\Phi_{\mathrm{QRD}}$ changes by at most a constant regardless the error or hash collision. Combing with Lemma~\ref{lem:phiqdecrease}, we conclude the first part of the lemma.
	
	The second part of the lemma follows from  Lemma~\ref{lem:quantumqmdqpdincrease}, Lemma~\ref{lem:potentialqrd} and Lemma~\ref{lem:phiq}.

The rest part is to verify
\[
(1-2^{\Theta(n\sqrt{\epsilon})})(1-2^{\Theta(n{\epsilon})})=(1-2^{\Theta(n{\epsilon})}).
\]
\end{proof}
\setcounter{theorem}{0}
\begin{theorem}[Restated]
	With probability at least $1-2^{-\Theta\br{n\epsilon}}$, one can simulate $n$-messages noiseless communication protocol of Yao model by using $n(1+\Theta(\sqrt{\epsilon}))$ against any fully
	adversarial error quantum channel, albeit while assuming that the original protocol and the channel operate
	on qudit with $d=\Theta(\log n)$.
	
	In other words, the simulation rate can achieve $1-\Theta(\sqrt{\epsilon})$.
\end{theorem}

\begin{proof}
With probability at least $1-2^{-\Theta\br{n\epsilon}}$, there are at most $6n\epsilon$ errors and hash collisions and \textbf{T} is valid throughout the execution os Algorithm~\ref{algo:MainalgorithmQMessage}.

As $\Phi_{\mathrm{MD}},\Phi_{\mathrm{PD}}$ and $\Phi_{\mathrm{QRD}}$ are at most $0$ by Lemma~\ref{lem:quantumqmdqpdincrease} and Lemma~\ref{lem:phiqrdnegative}, the number of the good blocks in $JS2$
	\[g\geq \Phi_{\mathrm{Q}}\geq\Phi\geq R_{\mathrm{total}}-\Theta(12n\epsilon)=\frac{n}{r}+\Theta\br{n\epsilon}\geq \frac{n}{r},\]  after $R_{\mathrm{total}}$ iterations with probability at least $1-2^{-\Theta\br{n\epsilon}}$, which implies the players have correctly simulated all the blocks in the original protocol.

$g$ can only increase in \textsf{Case vii}, and increases at most $1$ one iteration of Case \textsf{Case vii}. Then there are at most $R_{total}-n/r=\Theta(n\epsilon)$ iterations which execute Line 13 of Algorithm~\ref{algo:MainalgorithmQMessage}(Line 10 and Line 11 of Algorithm~\ref{algo:Mainalgorithm}). Notice that each $\msg$ contains $\Theta(r)$ symbols.

The number of total communication rounds is at most
\begin{equation}
R_{total}(r+\Theta(1))+\Theta(n\epsilon)\cdot \Theta(r)=n(1+\sqrt{\epsilon}),
\end{equation}
where $R_{total}$ denotes the number of iterations; $r$ is the number of rounds to transmit $\msg$, mostly Pauli data; $\Theta(1)$ is the number of rounds to transmit hash values of the Metadata and Pauli data and their lengths.
\end{proof}
\setcounter{theorem}{15}

}






\section{Conclusion}
In this paper, we study efficient simulation of noiseless two-party interactive quantum communication via low noise channels.
For noise parameter $\epsilon$, a lower bound of $1-\Theta(\sqrt{\epsilon})$ on the communication rate is proved in the plain quantum model with large communication alphabets. To achieve this goal, we first study the teleportation-based model in which the parties have access to free entanglement and the communication is over a noisy classical channel. In this model, we show the same lower bound of $1-\Theta(\sqrt{\epsilon})$ in the large alphabet case. We adapt the framework developed for the teleportation-based model to the plain quantum model in which the parties do not have access to pre-shared entanglement and communicate over a noisy quantum channel. We show how quantum Vernam cipher can be used in the interactive communication setting to efficiently recycle and reuse entanglement, allowing us to simulate any input protocol with an overhead of only $1+\Theta(\sqrt{\epsilon})$.
In an upcoming paper, we will show how the same communication rate can be achieved when the communication alphabet is of constant size.

\suppress{

This beats the currently best
known overhead of $1 - O(\sqrt{\epsilon \log \log 1/\epsilon})$, in the corresponding plain \emph{classical} model,
which is also conjectured to be optimal in \cite{Haeupler:2014}.

To achieve this goal, we actually show $1-\Theta(\sqrt{\epsilon})$ is a lower bound for four communication models: \{Large alphabet, Small alphabet\} $\times$ \{teleportation-based model, plain quantum model\}.






\paragraph{Implications of our results}

In this work, we have studied the capacity of noisy quantum channels to implement two-way communication. In particular, we studied the ability of memoryless quantum channels to simulate interactive two-party communication, with the channel available in both directions, but without any assistance by side resources, e.g. classical side channels. As discussed in Section~\ref{sec:problem}, this can be seen as a generalization of channel coding (which is discussed in Section~\ref{sec:channel-coding}), which is then the special case when all communication flows in one direction only.
As discussed in Section~\ref{sec:ecc-inapp},
coding seems much harder in the interactive setting than in the one-way setting.
Not much is known about the two-way quantum capacity.
Despite this, it is not the case for all channels that the unassisted one-way capacity is at least as large as the unassisted two-way capacity.
 For example, the qubit erasure channel with erasure probability
$\frac{1}{2}$ has no $1$-way quantum capacity~\cite{BenettDS:97}. When
the channel can be used in either direction, noisy back classical
communication becomes possible, and one can lower bound the capacity
by $\frac{1}{10}$~\cite{BenettDS:97,LeungLS:2009}. A similar
effect happens to the qubit depolarizing channel
\cite{BDSW96,BNTTU14}.
Thus, comparing memoryless channels in the classical and the quantum setting, the one-way capacity of classical channels is always an upper bound on its two-way capacity, while we see that this does not hold for all quantum channels.
For general memoryless quantum channels, the
$2$-way capacity is only known to be upper bounded by the
entanglement-assisted quantum capacity $Q_E$~\cite{BSST99, BSST02},
which is equal to the quantum feedback capacity~\cite{Bow04}.
This bound is not tight (for example, for very noisy qubit
depolarizing channel, $2$-way capacity vanishes but $Q_E>0$).
Moreover, for the qubit depolarizing channel with noise rate
$\epsilon$, in the low noise regime, $Q_1 = 1 - H\br{\epsilon} + \epsilon \log 3 +
O\br{\epsilon^2}$~\cite{LLS17}. We have established an achievable
rate for the interactive setting of $1-\Theta\br{\sqrt{\epsilon}}$.
If our conjectured optimality holds, the interactive capacity will be
lower then $Q_1$ in the dependence on $\epsilon$.
Other potential quantum advantage due to the
interaction include secret key expansion. These effects enrich
the subject but also add to the challenge of determining the
interactive capacity, and our work presents important progress
in the low-noise regime.

A further implication of our result is that quantum communication
complexity is very robust against transmission noise at low error rate. In particular,
for alternating protocols like those considered in this paper and in
most known protocols for quantum communication complexity, the
overhead goes to one as the noise goes to zero, allowing one to get
the full quantum advantage whenever such an advantage can be obtained.

\paragraph{Open questions.}
Two questions stem directly from our work. First, we conjecture that a rate of $1-O\br{\sqrt{\epsilon}}$ is optimal. Is this conjecture true, and if so, what is the constant hidden in the $O$ notation (up to leading order in $\epsilon$)? Second, what is the optimal rate of communication in the high noise regime, for large $\epsilon$?

Another important direction is concerning the fact that our coding scheme assumes that the protocol to be simulated is alternating, i.e., Alice and Bob alternate in sending qudits to each other. We believe that a lot of the machinery that we have developed should transpose well to study the more general setting where the protocol to be simulated has a more general structure, potentially with messages constructed from different number of qudits in different rounds. Once this is better understood, it would be important to perform a deeper investigation of the relationship between the different flavors of capacities for noisy quantum channels.

In the current work, we already have to deal with many types of synchronization errors at the teleportation, Quantum Vernam Cipher and quantum hashing level, for example. An interesting question from this point is: what about synchronization errors over the channel itself? There has been much interest in the classical interactive coding literature recently towards such type of errors~\cite{braverman2017coding, haeupler2017synchronization, sherstov2017optimal}. How useful would the data structures that we develop here be to study the generalization of such errors to the quantum setting.

Many other interesting directions of research in the quantum setting stem from the other exciting directions
that have been pursued recently in the classical setting,
for example~\cite{BK12,BN13,BrakerskiKN:2014,GMS12,GellesMS:2014,GH14, BravermanK:2017,GhaffariHS:2014,EfremenkoGH:2015,FranklinGOS:2015, HaeuplerV:2017,BenYishaiSK:2017}. We believe that our framework should be extendable to the study of many of these problems in the quantum setting.

Two other important questions that arise specifically in the quantum setting are the following. First, considering a larger fault-tolerant setting due to the inherently fragile nature of quantum data, can we also perform high rate interactive quantum communication when also the local quantum computation is noisy? Second, does quantum communication allow one to evade the classical no-go results obtained for interactive communication in a cryptographic setting~\cite{chung2013knowledge, gelles2015private}? As we have seen in this work, the unique properties of quantum information can be helpful in the interactive communication setting, since we were able to achieve higher communication rate over fully adversarial binary channels in the plain model than the conjectured upper bound in the corresponding plain classical setting.

}

\section*{Acknowledgments}

D.~Leung's research supported in part by an NSERC Discovery grant and a CIFAR
research grant via the Quantum Information Science program; A.~Nayak's research supported in part by NSERC Canada; A.~Shayeghi's research supported in part by NSERC Canada and OGS; D.~Touchette's research supported in part by NSERC, partly via PDF program, CIFAR and by Industry Canada; P.~Yao's research is supported by the National Key R\&D Program of China 2018YFB1003202, National Natural Science Foundation of China (Grant No. 61972191), a China Youth 1000-Talent grant and Anhui Initiative in Quantum Information Technologies Grant No. AHY150100; N.~Yu's research supported in part by the Australian Research Council (Grant
No: DE180100156). Most of this project was done while D.~Touchette was a postdoctoral fellow at Institute for Quantum Computing (IQC) and Perimeter Institute for Theoretical Physics (PI). Part of the work was done while P.~Yao visited PI, and P.~Yao thanks PI for its hospitality. Part of the work was done while N. Yu visited IQC, and N.~Yu thanks IQC for its hospitality.
IQC and PI are supported in part by the Government of Canada and the Province of Ontario.

\bibliographystyle{alpha}
\bibliography{references}

\newcommand{\etalchar}[1]{$^{#1}$}
\begin{thebibliography}{BDSW96}

\bibitem[AA03]{aaronson:2003}
Scott Aaronson and Andris Ambainis.
\newblock Quantum search of spatial regions.
\newblock In {\em Foundations of Computer Science, 2003. Proceedings. 44th
  Annual IEEE Symposium on}, pages 200--209. IEEE, 2003.

\bibitem[ABY17]{BenYishaiSK:2017}
Young-Han~Kim Assaf Ben-Yishai, Ofer~Shayevitz.
\newblock Interactive coding for markovian protocols.
\newblock In {\em Proceedings of the 55th Annual Allerton Conference on
  Communication, Control, and Computing}, Allerton '17, page to appear, 2017.

\bibitem[Ari09]{Arikan:2009}
Erdal. Arikan.
\newblock Channel polarization: A method for constructing capacity-achieving
  codes for symmetric binary-input memoryless channels.
\newblock {\em IEEE Transactions on Information Theory}, 55(7):3051--3073, July
  2009.

\bibitem[BBJ{\etalchar{+}}94]{BBJMPSW94}
Charles~H Bennett, Gilles Brassard, Richard Jozsa, Dominic Mayers, Asher Peres,
  Benjamin Schumacher, and William~K Wootters.
\newblock Reduction of quantum entropy by reversible extraction of classical
  information.
\newblock {\em Journal of Modern Optics}, 41(12):2307--2314, 1994.

\bibitem[BCW98]{BCW98}
Harry Buhrman, Richard Cleve, and Avi Wigderson.
\newblock Quantum vs. classical communication and computation.
\newblock In {\em Proceedings of the 30th Annual ACM Symposium on Theory of
  Computing}, pages 63--68. ACM, 1998.

\bibitem[BDSW96]{BDSW96}
Charles~H Bennett, David~P DiVincenzo, John~A Smolin, and William~K Wootters.
\newblock Mixed-state entanglement and quantum error correction.
\newblock {\em Phys. Rev. A}, 54(5):3824--3851, 1996.

\bibitem[BE17]{BravermanK:2017}
Mark Braverman and Klim Efremenko.
\newblock List and unique coding for interactive communication in the presence
  of adversarial noise.
\newblock {\em SIAM Journal on Computing}, 46(1):388--428, 2017.

\bibitem[BGK{\etalchar{+}}15]{BGK+15}
Mark Braverman, Ankit Garg, Young~Kun Ko, Jieming Mao, and Dave Touchette.
\newblock Near-optimal bounds on bounded-round quantum communication complexity
  of disjointness.
\newblock In {\em Proceedings of the 2015 IEEE 56th Annual Symposium on
  Foundations of Computer Science (FOCS)}, FOCS '15, pages 773--791,
  Washington, DC, USA, 2015. IEEE Computer Society.

\bibitem[BK12]{BK12}
Zvika Brakerski and Yael~Tauman Kalai.
\newblock Efficient interactive coding against adversarial noise.
\newblock In {\em Proceedings of the 53rd Annual IEEE Symposium on Foundations
  of Computer Science}, pages 160--166. IEEE, 2012.

\bibitem[BKN14]{BrakerskiKN:2014}
Zvika Brakerski, Yael~Tauman Kalai, and Moni Naor.
\newblock Fast interactive coding against adversarial noise.
\newblock {\em J. ACM}, 61(6):35:1--35:30, December 2014.

\bibitem[BN13]{BN13}
Zvika Brakerski and Moni Naor.
\newblock Fast algorithms for interactive coding.
\newblock In {\em Proceedings of the 24th Annual ACM-SIAM Symposium on Discrete
  Algorithms}, pages 443--456. Society for Industrial and Applied Mathematics,
  2013.

\bibitem[BNT{\etalchar{+}}19]{BNTTU14}
Gilles Brassard, Ashwin Nayak, Alain Tapp, Dave Touchette, and Falk Unger.
\newblock Noisy interactive quantum communication.
\newblock {\em SIAM Journal on Computing}, 48(4):1147--1195, 2019.

\bibitem[Bom15]{Bombin15}
Hector Bombin.
\newblock Gauge color codes: optimal transversal gates and gauge fixing in
  topological stabilizer codes.
\newblock {\em New Journal of Physics}, 17(8):083002, 2015.

\bibitem[BR14]{BR11}
Mark Braverman and Anup Rao.
\newblock Toward coding for maximum errors in interactive communication.
\newblock {\em IEEE Trans. Inform. Theory}, 60(11):7248--7255, 2014.

\bibitem[CRSS97]{CRSS97}
A.~R. Calderbank, E.~M. Rains, P.~W. Shor, and N.~J.~A. Sloane.
\newblock Quantum error correction via codes over gf(4).
\newblock In {\em Proceedings of IEEE International Symposium on Information
  Theory}, pages 292--, Jun 1997.

\bibitem[CS96]{CS96-goodcode}
A.~R. Calderbank and Peter~W. Shor.
\newblock Good quantum error-correcting codes exist.
\newblock {\em Phys. Rev. A}, 54:1098--1105, Aug 1996.

\bibitem[Dev05]{Dev05}
Igor Devetak.
\newblock The private classical capacity and quantum capacity of a quantum
  channel.
\newblock {\em IEEE Trans. Inform. Theory}, 51(1):44--55, 2005.

\bibitem[Die82]{Dieks82}
DGBJ Dieks.
\newblock Communication by {EPR} devices.
\newblock {\em Phys. Lett. A}, 92(6):271--272, 1982.

\bibitem[DSS98]{DSS98}
David DiVincenzo, Peter Shor, and John Smolin.
\newblock Quantum-channel capacity of very noisy channels.
\newblock {\em Physical Review A}, 57(2):830--839, 1998.

\bibitem[EGH15]{EfremenkoGH:2015}
Klim Efremenko, Ran Gelles, and Bernhard Haeupler.
\newblock Maximal noise in interactive communication over erasure channels and
  channels with feedback.
\newblock In {\em Proceedings of the 2015 Conference on Innovations in
  Theoretical Computer Science}, ITCS '15, pages 11--20, New York, NY, USA,
  2015. ACM.

\bibitem[FGOS15]{FranklinGOS:2015}
M.~Franklin, R.~Gelles, R.~Ostrovsky, and L.~J. Schulman.
\newblock Optimal coding for streaming authentication and interactive
  communication.
\newblock {\em IEEE Transactions on Information Theory}, 61(1):133--145, Jan
  2015.

\bibitem[FM04]{FMa:2004}
Keqin Feng and Zhi Ma.
\newblock A finite gilbert-varshamov bound for pure stabilizer quantum codes.
\newblock {\em IEEE Transactions on Information Theory}, 50(12):3323--3325, Dec
  2004.

\bibitem[G{\etalchar{+}}17]{Gelles17}
Ran Gelles et~al.
\newblock Coding for interactive communication: A survey.
\newblock {\em Foundations and Trends{\textregistered} in Theoretical Computer
  Science}, 13(1--2):1--157, 2017.

\bibitem[GH14]{GH14}
Mohsen Ghaffari and Bernhard Haeupler.
\newblock Optimal error rates for interactive coding {ii}: Efficiency and list
  decoding.
\newblock In {\em Proceedings of the 55th Annual IEEE Symposium on Foundations
  of Computer Science}, pages 394--403. IEEE, 2014.

\bibitem[GHS14]{GhaffariHS:2014}
Mohsen Ghaffari, Bernhard Haeupler, and Madhu Sudan.
\newblock Optimal error rates for interactive coding i: Adaptivity and other
  settings.
\newblock In {\em Proceedings of the Forty-sixth Annual ACM Symposium on Theory
  of Computing}, STOC '14, pages 794--803, New York, NY, USA, 2014. ACM.

\bibitem[GMS11]{GMS12}
Ran Gelles, Ankur Moitra, and Amit Sahai.
\newblock Efficient and explicit coding for interactive communication.
\newblock In {\em Proceedings of the 52nd Annual IEEE Symposium on Foundations
  of Computer Science}, pages 768--777. IEEE, 2011.

\bibitem[GMS14]{GellesMS:2014}
R.~Gelles, A.~Moitra, and A.~Sahai.
\newblock Efficient coding for interactive communication.
\newblock {\em IEEE Transactions on Information Theory}, 60(3):1899--1913,
  March 2014.

\bibitem[Hae14]{Haeupler:2014}
Bernhard Haeupler.
\newblock Interactive channel capacity revisited.
\newblock In {\em Proceedings of the 2014 IEEE 55th Annual Symposium on
  Foundations of Computer Science}, FOCS '14, pages 226--235, Washington, DC,
  USA, 2014. IEEE Computer Society.

\bibitem[Has09]{Hastings2009}
Matthew~B Hastings.
\newblock Superadditivity of communication capacity using entangled inputs.
\newblock {\em Nature Physics}, 5(4):255, 2009.

\bibitem[HDW02]{HdW:2002}
Peter H{\o}yer and Ronald De~Wolf.
\newblock Improved quantum communication complexity bounds for disjointness and
  equality.
\newblock In {\em STACS}, pages 299--310. Springer, 2002.

\bibitem[Hol98]{Hol98}
Alexander~S Holevo.
\newblock The capacity of the quantum channel with general signal states.
\newblock {\em IEEE Trans. Inform. Theory}, 44(1):269--273, 1998.

\bibitem[HV17]{HaeuplerV:2017}
Bernhard Haeupler and Ameya Velingker.
\newblock Bridging the capacity gap between interactive and one-way
  communication.
\newblock In {\em Proceedings of the Twenty-Eighth Annual ACM-SIAM Symposium on
  Discrete Algorithms}, SODA '17, pages 2123--2142, Philadelphia, PA, USA,
  2017. Society for Industrial and Applied Mathematics.

\bibitem[JRS03]{JainRS:09}
Rahul Jain, Jaikumar Radhakrishnan, and Pranab Sen.
\newblock A lower bound for the bounded round quantum communication complexity
  of set disjointness.
\newblock In {\em Foundations of Computer Science, 2003. Proceedings. 44th
  Annual IEEE Symposium on}, pages 220--229. IEEE, 2003.

\bibitem[KNTZ07]{KNTZ07}
Hartmut Klauck, Ashwin Nayak, Amnon {Ta-Shma}, and David Zuckerman.
\newblock Interaction in quantum communication.
\newblock {\em IEEE Trans. Inform. Theory}, 53(6):1970--1982, 2007.

\bibitem[KR13]{KR13}
Gillat Kol and Ran Raz.
\newblock Interactive channel capacity.
\newblock In {\em Proceedings of the 45th Annual ACM Symposium on Theory of
  Computing}, pages 715--724. ACM, 2013.

\bibitem[Leu02]{Leung:2002}
Debbie~W. Leung.
\newblock Quantum vernam cipher.
\newblock {\em Quantum Info. Comput.}, 2(1):14--34, December 2002.

\bibitem[Llo97]{Lloyd97}
Seth Lloyd.
\newblock Capacity of the noisy quantum channel.
\newblock {\em Phys. Rev. A}, 55(3):1613--1622, 1997.

\bibitem[NC00]{NC00}
Michael~A Nielsen and Isaac~L Chuang.
\newblock {\em Quantum computation and quantum information}.
\newblock Cambridge University Press, Cambridge, UK, 2000.

\bibitem[NN93]{NaorNaor}
Joseph Naor and Moni Naor.
\newblock Small-bias probability spaces: Efficient constructions and
  applications.
\newblock {\em SIAM Journal on Computing}, 22(4):838--856, 1993.

\bibitem[Raz99]{Raz99}
Ran Raz.
\newblock Exponential separation of quantum and classical communication
  complexity.
\newblock In {\em Proceedings of the 31st Annual ACM Symposium on Theory of
  Computing}, pages 358--367. ACM, 1999.

\bibitem[RK11]{Regev:2011}
Oded Regev and Bo'az Klartag.
\newblock Quantum one-way communication can be exponentially stronger than
  classical communication.
\newblock In {\em Proceedings of the forty-third annual ACM symposium on Theory
  of computing}, pages 31--40. ACM, 2011.

\bibitem[Sch92]{Sch92}
Leonard~J Schulman.
\newblock Communication on noisy channels: A coding theorem for computation.
\newblock In {\em Proceedings of the 33rd Annual IEEE Symposium on Foundations
  of Computer Science}, pages 724--733. IEEE, 1992.

\bibitem[Sch93]{Sch93}
Leonard~J Schulman.
\newblock Deterministic coding for interactive communication.
\newblock In {\em Proceedings of the 25th Annual ACM Symposium on Theory of
  Computing}, pages 747--756. ACM, 1993.

\bibitem[Sch96]{Sch96}
Leonard~J Schulman.
\newblock Coding for interactive communication.
\newblock {\em IEEE Trans. Inform. Theory}, 42(6):1745--1756, 1996.

\bibitem[Sha48]{Shannon48a}
C.~E. Shannon.
\newblock A mathematical theory of communication.
\newblock {\em Bell System Tech. J.}, 27:379--423, 623--656, 1948.

\bibitem[Sho02]{Shor02}
Peter~W Shor.
\newblock The quantum channel capacity and coherent information.
\newblock Lecture notes, MSRI Workshop on Quantum Computation, 2002.

\bibitem[Sto02]{Stolte:2002}
Norbert Stolte.
\newblock {\em Rekursive Codes mit der Plotkin-Konstruktion und ihre
  Decodierung}.
\newblock PhD thesis, TU Darmstadt, Fachbereich Elektrotechnik und
  Informationstechnik,, 2002.

\bibitem[SW97]{SW97}
Benjamin Schumacher and Michael~D Westmoreland.
\newblock Sending classical information via noisy quantum channels.
\newblock {\em Phys. Rev. A}, 56(1):131--138, 1997.

\bibitem[SY08]{SY09}
Graeme Smith and Jon Yard.
\newblock Quantum communication with zero-capacity channels.
\newblock {\em Science}, 321(5897):1812--1815, 2008.

\bibitem[Wat16]{Wat08}
John Watrous.
\newblock Theory of quantum information.
\newblock Draft of a book, 2016.

\bibitem[Wil13]{Wilde11}
Mark~M Wilde.
\newblock {\em Quantum information theory}.
\newblock Cambridge University Press, Cambridge, UK, 2013.

\bibitem[WZ82]{WZ82}
William~K Wootters and Wojciech~H Zurek.
\newblock A single quantum cannot be cloned.
\newblock {\em Nature}, 299(5886):802--803, 1982.

\end{thebibliography}

\end{document}